\crefname{algorithm}{protcol}{protocol}
\Crefname{algorithm}{Protocol}{Protocol}
\tikzset{
    ncbar angle/.initial=90,
    ncbar/.style={
        to path=(\tikztostart)
        -- ($(\tikztostart)!#1!\pgfkeysvalueof{/tikz/ncbar angle}:(\tikztotarget)$)
        -- ($(\tikztotarget)!($(\tikztostart)!#1!\pgfkeysvalueof{/tikz/ncbar angle}:(\tikztotarget)$)!\pgfkeysvalueof{/tikz/ncbar angle}:(\tikztostart)$)
        -- (\tikztotarget)
    },
    ncbar/.default=0.5cm,
}
\tikzset{square left brace/.style={ncbar=0.5cm}}
\tikzset{square right brace/.style={ncbar=-0.5cm}}
\tikzset{round left brace/.style={ncbar=0.5cm,out=-115,in=115}}
\tikzset{round right brace/.style={ncbar=0.5cm,out=-65,in=65}}
\newif\ifFOCS
\Crefname{figure}{Fig.}{Fig.}
\theoremstyle{remark}
\newtheorem{remark}{Remark}
\theoremstyle{definition}
\newtheorem{definition}{Definition}
\theoremstyle{plain}
\newtheorem{claim}{Claim}
\newtheorem{lemma}{Lemma}
\newtheorem{theorem}{Theorem}
\newtheorem{proposition}{Proposition}
\newtheorem{corollary}{Corollary}
\newcommand{\newuser}[3]{%
  \expandafter\newcommand\csname todo#1\endcsname[2][]{%
    \quitvmode
    \texorpdfstring{\todo[inline,color=#3,##1]{\textbf{#2:} ##2}\xspace}{(TODO: #2: ##2)}%
  }
}
\newcolumntype{?}{!{\vrule width 1pt}}
\DeclarePairedDelimiter{\lrag}{\lag}{\rag}
\newcommand{\err}{\mathsf{err}}
\newcommand{\polylog}[1]{\mathrm{polylog}{#1}\xspace}
\renewcommand{\poly}[1]{{\mathrm{poly}{#1}}}
\newcommand{\tO}{\tilde{O}}
\mathchardef\mhyphen="2D
\let\lag\langle
\let\rag\rangle
\crefname{conjecture}{conjecture}{Conjecture}
\crefname{corl}{corollary}{Corollary}
\crefname{question}{question}{Question}
\DeclareOldFontCommand{\bf}{\normalfont\bfseries}{\mathbf}
\DeclareOldFontCommand{\it}{\normalfont\itshape}{\mathit}
\DeclareOldFontCommand{\rm}{\normalfont\rmfamily}{\mathrm}
\DeclareOldFontCommand{\sc}{\normalfont\scshape}{\@nomath\sc}
\DeclareOldFontCommand{\sf}{\normalfont\sffamily}{\mathsf}
\DeclareOldFontCommand{\sl}{\normalfont\slshape}{\@nomath\sl}
\DeclareOldFontCommand{\tt}{\normalfont\ttfamily}{\mathtt}
\def\gkeywords{}
\def\gabstract{}
\newcommand\makealltitles{
\ifnum\style=\sigalternate
    {
        \begin{abstract}{\gabstract}\end{abstract} \maketitle%
    }
\else
  {\maketitle%
  \ifnum\style=\lncs
    {\begin{abstract}{\gabstract \keywords{\gkeywords}}\end{abstract}}
  \else
    {\begin{abstract}{\gabstract}\end{abstract}
      \ifnum\style=\ieeetr
      \begin{IEEEkeywords}
        {\gkeywords}
      \end{IEEEkeywords}
      \fi
    }
  \fi
  }
\fi
}
\mathchardef\mhyphen="2D
\newcommand{\ignore}[1]{}
\newcommand{\prot}[1]{\ensuremath{(\cP_{#1}, \cV_{#1})}\xspace}
\newcommand{\protU}[1]{\ensuremath{(\cP{#1}, \cV{#1})}\xspace}
\newcommand{\cL}{\mathcal{L}}
\newcommand{\cM}{\mathcal{M}}
\newcommand{\cR}{\mathcal{R}}
\newcommand{\cS}{\mathcal{S}}
\newcommand{\hyb}{\ensuremath{\mathsf{Hyb}}\xspace}
\let\oldprocedure\procedure
\renewcommand\procedure[3][]{\hspace{0pt}\oldprocedure[#1]{#2}{#3}}
\NewDocumentCommand\inputanon{m}{
\ifnum\paperversion=\anonymous \input{#1} \fi
}
\NewDocumentCommand\inputcam{m}{
\ifnum\paperversion=\cameraready \input{#1} \fi
}
\NewDocumentCommand\inputfull{m}{
\ifnum\paperversion=\fullversion \input{#1} \fi
}
\newcommand{\cB}{\mathcal{B}}
\newcommand{\cU}{{\ensuremath{\mathcal{U}}\xspace}}
\newcommand{\cP}{{\ensuremath{\mathsf{P}}\xspace}}
\newcommand{\cV}{{\ensuremath{\mathsf{V}}\xspace}}
\newcommand{\EE}{\mathbb{E}}
\newcommand{\KK}{\mathbb{K}}
\newcommand{\GF}{\mathbb{GF}}
\newcommand{\cksum}{\ensuremath{\mathsf{cksum}}\xspace}
\newcommand{\Deltac}{\ensuremath{\Delta_c}\xspace}
\newcommand{\pval}{\textsf{PVAL}\xspace}
\newcommand{\pvalF}{\textsf{PVAL}_\FF\xspace}
\newcommand{\PVAL}{\textsf{PVAL}\xspace}
\newcommand{\eps}{\ensuremath{\epsilon}\xspace}
\newcommand{\GKR}{\ensuremath{\textsf{GKR}}\xspace}
\newcommand{\tc}{\ensuremath{\textsf{tc}}\xspace}
\renewcommand{\AND}{\ensuremath{\textsf{AND}}\xspace}
\newcommand{\ADD}{\ensuremath{\textsf{ADD}}\xspace}
\newcommand{\MULT}{\ensuremath{\textsf{MULT}}\xspace}
\renewcommand{\RR}{\ensuremath{\textsf{RR}}\xspace}
\newcommand{\DcRR}{\ensuremath{\Delta_c\textsf{RR}}\xspace}
\newcommand{\RRrow}{{\ensuremath{\Delta_c\RR}\xspace}}
\newcommand{\pvalU}{{\ensuremath{\bm j}\xspace}}
\newcommand{\pvalT}{{\ensuremath{T}\xspace}}
\newcommand{\pvalv}{{\ensuremath{v}\xspace}}
\newcommand{\cksumU}{{\ensuremath{\bm U}\xspace}}
\newcommand{\cksumT}{{\ensuremath{T_\cksum}\xspace}}
\newcommand{\nround}{{\ensuremath{{\rho}}\xspace}}
\newcommand{\K}{{\ensuremath{{M}}\xspace}}
\newcommand{\rowball}{{\ensuremath{{\cB_{d, \FF}}}\xspace}}
\newcommand{\batchK}{{\ensuremath{{k}}\xspace}}
\newcommand{\ncol}
{{\ensuremath{{L}}\xspace}}
\newcommand{\lncol}
{{\log{\ensuremath{{L}}\xspace}}}
\newcommand{\RRS}{{\ensuremath{{\mathcal{Q}}}\xspace}}
\newcommand{\lgK}{{\ensuremath{{m}}\xspace}}
\newcommand{\cksumv}{{\ensuremath{\bm S}\xspace}}
\newcommand{\GKRcirc}{{\ensuremath{{\Cmid}}}\xspace}
\renewcommand{\secpar}{{\ensuremath{{\sigma}}\xspace}}
\newcommand{\bUP}{\textsf{Batch-UP}\xspace}
\newcommand{\NP}{\textsf{NP}\xspace}
\newcommand{\UP}{\textsf{UP}\xspace}
\newcommand{\bUIP}{\textsf{Batch-UIP}\xspace}
\newcommand{\UIP}{\textsf{UIP}\xspace}
\newcommand{\IP}{\textsf{IP}\xspace}
\newcommand{\IOP}{\textsf{IOP}\xspace}
\newcommand{\IPP}{\textsf{IPP}\xspace}
\newcommand{\UIPP}{\textsf{UIPP}\xspace}
\newcommand{\TS}{{\ensuremath{\mathsf{DTISP}(T, S)\xspace}}}
\newcommand{\LDE}{\textsf{LDE}\xspace}
\newcommand{\depth}{\ensuremath{{\mathsf{depth}}\xspace}}
\newcommand{\size}{\ensuremath{{\mathsf{size}}\xspace}}
\newcommand{\Vtime}{\ensuremath{{\mathsf{Vtime}}\xspace}}
\newcommand{\Ptime}{\ensuremath{{\mathsf{Ptime}}\xspace}}
\newcommand{\old}{\ensuremath{{(j - 1)}\xspace}}
\newcommand{\transc}{\ensuremath{{\bm a}\xspace}}
\newcommand{\answer}{\ensuremath{{\bm a}\xspace}}
\newcommand{\transcMat}{\ensuremath{{\bm a}\xspace}}
\newcommand{\answerMat}{\ensuremath{{\bm a}\xspace}}
\newcommand{\new}{\ensuremath{{(j)}\xspace}}
\newcommand{\I}{\ensuremath{{\mathcal{I}}\xspace}}
\newcommand{\C}{\ensuremath{{\Phi\xspace}}}
\newcommand{\cSold}{\ensuremath{{\cS^{\old}}\xspace}}
\newcommand{\Cold}{\ensuremath{{\C^{\old}}\xspace}}
\newcommand{\paramold}{\ensuremath{{\lrag \cSold, \lrag \Cold}\xspace}}
\newcommand{\parammid}{\ensuremath{{\lrag \cSmid, \lrag \Cmid}\xspace}}
\newcommand{\Cnew}{\ensuremath{{\C^{\new}}\xspace}}
\newcommand{\cSnew}{\ensuremath{{\cS^{\new}}\xspace}}
\newcommand{\cSmid}{\ensuremath{{\cS_{\Distance}}\xspace}}
\newcommand{\Cmid}{\ensuremath{{\C_{\Distance}}\xspace}}
\newcommand{\paramnew}{\ensuremath{{\lrag \cSnew, \lrag\Cnew}\xspace}}
\newcommand{\paramC}[1]{\ensuremath{{\lrag{\cS^{(#1)}}, \lrag{\C^{(#1)}}}\xspace}}
\newcommand{\param}[1]{\ensuremath{{\lrag{\cS{#1}}, \lrag{\C#1}}\xspace}}
\newcommand{\protphaseone}{Generating $\Delta_c$-Distance via Checksums.}
\newcommand{\protphasetwo}{Instance Reduction for $\Delta_c$-Distance.}
\newcommand{\iterinput}{{\textbf{Input:}  $\bm x = (x_1,\ldots,x_{\batchK})$ are the UIP statements. $\param{}$ are the descriptions for the claim $(\cS, \C) \in \cL'$}}
\newcommand{\iteroutput}{{$\param{'}$, new descriptions for the claim $(\cS', \C') \in \cL'$}}
\newcommand{\itermidone}{{$\bm x,\lrag{\cSmid}$, $\lrag{\Cmid}$}}
\newcommand{\itermidthree}{{descriptions $\lrag \RRS$ of a set of rows $\RRS \subsetneq [\K]$,
and $\lrag{\C'}$ of a predicate $\C'$}}
\newcommand{\distcircone}{{$\C(\transcMat[\cS, :]) = 1$ is true.
    Note that the submatrix $\transcMat[\cS,:]$ is indexed by the subset $\cS$ on $\transcMat$. 
    (This is valid because $\cS \subset \cS^\hyb$.)}}
\newcommand{\distcirctwo}{{For all $(i, \bm q) \in \cS^{\hyb}$,
    $\cV(x_i, \bm q, \transc'[(i, \bm q), :]) = 1$,
    and that for every $j \in [\ell]$,
    the $(j\cdot a)$-th prefix of $\transcMat[(i, \bm q^\hyb_j), :]$ and $\transcMat[(i, \bm q), :]$ are identical.}}
\newcommand{\distcircthree}{{$\cksum_{\bm \cksumU}(\transcMat) = (\cksumv_1,\ldots,\cksumv_\ell)$.}}
\newcommand{\distcircfour}{{$\transcMat \in \bin^{\K \times \ncol}$ (i.e. all elements strictly lies in $\GF(2)$.)}}
\newcommand{\printchecksumchecks}{{\begin{enumerate}
    \item \distcircone
    \item \distcirctwo 
    \item \distcircthree
    \item \distcircfour
\end{enumerate}
}}
\newcommand{\Fbits}{{\ensuremath{\polylog{\abs{\FF}}}}}
\newcommand{\ippinput}{{\ensuremath{\transcMat^{\cSmid}}}}
\newcommand{\Flog}{{\ensuremath{\log{\abs{\FF}}}}}
\newcommand{\FboundT}{{\ensuremath{32 \cdot 2^{\secpar}(\pvalT \log \K \log \ncol)^C}}}
\newcommand{\Fboundconcrete}{{\ensuremath{32 \cdot 2^{\secpar} \cdot (16 \log(\batchK) \cdot \secpar a \ell \lgK)^{C_0}}}}
\newcommand{\FboundconcreteT}{{\ensuremath{32 \cdot 2^{\secpar} (d a \ell \lgK)^C}}}
\newcommand{\Tbound}{{\ensuremath{8d\ncol\lgK}}}
\newcommand{\dbound}{{\ensuremath{\ceil{16 \cdot \log (\batchK \ell) \cdot \secpar} \cdot \ell}}}
\newcommand{\dboundm}{{\ensuremath{16\lgK \secpar}}}
\newcommand{\cbound}{{\ensuremath{8C_1C_3 + 2\log_n (\batchK S(n)) + 1}}}
\newcommand{\Distance}{{\textsf{{Dist}}\xspace}}
\newcommand{\Reduce}{{\textsf{{Reduce}}\xspace}}
\newcommand{\DCMainStmt}{{
Suppose $\secpar$, $d$, $\lgK$, $\lncol \in \NN$, $\K = 2^{\lgK}$,
and $\FF$ is a field.
Let $\tO$ be omitting $\polylog(\abs{\FF}, \K, \ncol)$ factors.
Suppose $\pvalT \ge \Tbound = \tO (d\ncol)$.
Let $\bm \pvalU = (\bm \pvalU_1, \ldots, \bm \pvalU_\pvalT) \in (\FF^{\lgK + \lncol})^\pvalT$ and $\bm \pvalv = (v_1, \ldots, v_\pvalT) \in \FF^\pvalT$.

There exists a constant $C$, 
and a public-coin unambiguous \IPP $\prot{\RRrow}$ for $\pval(\bm \pvalU, \bm \pvalv)$,
where both the prover and verifier gets the input $\pvalU, \bm \pvalv$,
while the prover additionally gets the input $\transcMat \in \FF^{\K \times \ncol}$,
such that if both parties additionally gets parameters $\secpar, d \in \NN$ and $\FF$,
that satisfy the following preconditions:
\begin{itemize}
    \item $d \ge \dboundm$,
    \item $\abs{\FF} \ge \FboundT$,
\end{itemize}
then the protocol outputs a succinct description $\lrag \RRS$ of 
\itermidthree,
\begin{itemize}[label=-]
    \item \textbf{Prescribed Completeness:} If $\cV_{\RRrow}$ interacts with $\cP_{\RRrow}$,
    then $\C'(\transcMat[\RRS, :]) = 1$ 
    iff $\transcMat \in \pval(\bm \pvalU, \bm \pvalv)$.
    \item $2^{-\secpar-2}$-\textbf{Unambiguity:} 
    Suppose $\Delta_c(\pval(\bm \pvalU,\bm 0)) \geq 4d$,
    then for any (unbounded) cheating prover strategy $\cP^*$ that deviates from $\cP_\RRrow$ first in round $r^* \in [\ell_{\RRrow}]$,
    with probability at least $1 - 2^{-\secpar - 2}$ over the verifier's remaining coins,
    either $\cV_\RRrow$ rejects,
    or $\C'(\transcMat[\RRS, :]) = 0$.
    \item \textbf{Reduced Query:} 
    The subset of rows $\RRS \subsetneq [\K]$ has size $\abs{\RRS} \le \ceil{8\secpar \cdot \frac{\K}{d}}$.
    \item Regardless of the prover's strategy, the distribution of $\RRS$ only depends on the verifier's random coins.
\end{itemize}
The complexity of the protocol is as follows.
\begin{itemize}
    \item $\ell_{\RRrow} = \tO(1)$.
    \item $a_{\RRrow} = b_{\RRrow} = \tO(\ncol + \poly(d))$.
    \item $\Ptime_{\RRrow} = \poly(\K \ncol, \pvalT \cdot \Flog)$.
    \item $\Vtime_{\RRrow} = \tO(\ncol + \poly(d))$.
\end{itemize}
The bit-lengths are $\abs{\lrag{\RRS}} = \tO(\poly(d))$ and $\abs{\lrag{\C'}} = \tO(\ncol + \poly(d))$.
Let $G_{\RRS}(i, \lrag \RRS)$ be the circuit that returns the $i$-th element in $\RRS$,
and $C'(\transcMat, \lrag {\C'})$ be the circuit that computes $\C'(\transcMat)$,
then they satisfy the following.
\begin{itemize}
    \item $\size(G_{\RRS}) = \tO(\poly(d))$.
    \item $\depth(G_{\RRS}) = \tO(1)$.
    \item $\size(C') = \tO(\abs{\RRS} \cdot \ncol)$.
    \item $\depth(C') = \tO(1)$.
\end{itemize}
The verifier's verdict circuit (which outputs 0 iff it rejects amidst the protocol) satisfies
\begin{itemize}
    \item $\size(\cV_{\RRrow}) = \tO(\ncol + \poly(d))$.
    \item $\depth(\cV_{\RRrow}) = \tO(1)$.
\end{itemize}
}}
\title{Efficiently Batching Unambiguous Interactive Proofs}
\author{Bonnie Berger}
\author{Rohan Goyal}
\author{Matthew M. Hong}
\author{Yael Tauman Kalai}
\affil{MIT}
\date{}
\begin{document}
\maketitle

\begin{abstract}
    We show that if a language $\mathcal{L}$ admits a public-coin unambiguous interactive proof (UIP) with round complexity $\ell$, where $a$ bits are communicated per round, then the \emph{batch language} $\mathcal{L}^{\otimes k}$,
    i.e. the set of $k$-tuples of statements all belonging to $\cL$,
    has an unambiguous interactive proof with round complexity $\ell\cdot\mathsf{polylog}(k)$,
    per-round communication of $a\cdot \ell\cdot\mathsf{polylog}(k) + \poly(\ell)$ bits,
    assuming the verifier in the $\UIP$ has depth bounded by $\mathsf{polylog}(k)$.  
    Prior to this work, the best known batch $\UIP$ for $\mathcal{L}^{\otimes{k}}$ required communication complexity at least
    $(\mathsf{poly}(a)\cdot k^{\epsilon} + k) \cdot \ell^{1/\epsilon}$
    for any arbitrarily small constant $\epsilon>0$ (Reingold-Rothblum-Rothblum, STOC 2016). 
    
    As a corollary of our result, 
    we obtain a \emph{doubly efficient proof system},
    that is, 
    a proof system whose proving overhead is polynomial in the time of the underlying computation,
    for any language computable in polynomial space and in time at most $n^{O\left(\sqrt{\frac{\log n}{\log\log n}}\right)}$. 
    This expands the state of the art of doubly efficient proof systems: 
    prior to our work, such systems were known for languages computable in polynomial space and in time $n^{({\log n})^\delta}$ for a small $\delta>0$ significantly smaller than $1/2$ (Reingold-Rothblum-Rothblum, STOC 2016).
\end{abstract}

\thispagestyle{empty}
\newpage
\enlargethispage{1cm}
\tableofcontents
\addtocontents{toc}{\protect\thispagestyle{empty}}
\thispagestyle{empty}
\newpage
\pagenumbering{arabic}
\section{Introduction}
\label{sec:intro}

Verification is one of the most fundamental concepts in computer science and is the basis for the definition of the complexity class 
$\NP$. In the mid-eighties, a flurry of works expanded the notion of proofs beyond the classical notion of a mathematical proof to interactive proofs~\cite{GolMicRac89,BabaiM88}, multi-prover interactive proofs~\cite{STOC:BGKW88,FOCS:BabForLun90}, probabilistically checkable proofs (PCPs) \cite{STOC:BFLS91,FOCS:FGLSS91,FOCS:AroSaf92,FOCS:ALMSS92},
interactive PCPs~\cite{ICALP:KalRaz08},
and interactive oracle proofs~\cite{TCC:BenChiSpo16,STOC:ReiRotRot16}.
This line of work has proven instrumental in cryptography and complexity theory, leading to breakthroughs in hardness of approximation and to the emergence of fundamental concepts in cryptography, such as zero-knowledge proofs \cite{GolMicRac89}, which underlie many cryptographic primitives today.
It is also instrumental in constructing ``succinct proofs'' used in many blockchain applications.

\paragraph{Interactive proofs} 
Interactive proof systems were shown to be extremely powerful in the celebrated 
$\IP=\mathsf{PSPACE}$ theorem \cite{JACM:LFKN92,Shamir92}
Specifically, it was proven that the correctness  of any time-$T$ space-$S$ computation could be verified by a $\poly(S,n)$ time verifier, via an interactive proof.  
The main drawback of this result is that the time required by the prover to convince the verifier of the correctness is $2^{O(S\cdot \log S)}$, rendering it impractical for real-world applications.\footnote{
We note that in those works (and at that time) the primary focus was on the power of the proof systems themselves, where the verifier was assumed to run in polynomial time, while the prover was thought of as being all-powerful. In fact, in \cite{BabaiM88} the all-powerful prover was even named after the famous wizard Merlin.
}

\paragraph{Doubly efficient interactive proofs.}  
The work of \cite{JACM:GolKalRot15} initiated the study of \emph{doubly efficient} interactive proofs, in which the prover’s runtime is required to be at most 
$p(T)$, where 
$T$ is the time needed to carry out the underlying computation and $p$ is a polynomial independent of $T$,    
while the verifier runs in time significantly less than $T$. They showed that the correctness of any computation that can be performed by a (log-space uniform) circuit of depth $D$
and size $T$
can be proven via an interactive proof whose communication complexity is 
$D\cdot\polylog(T)$, the verifier runs in time 
$D\cdot\polylog(T)+ \tilde{O}(n)$, and the prover runs in time 
$\poly(T)$.\footnote{In particular, this implies an improved 
$\IP=\mathsf{PSPACE}$ theorem where the communication complexity is 
$\poly(S)$, the verifier runs in time
$\poly(S)+ 
\tilde{O}(n)$, and the prover runs in time
$2^{O(S)}$.} In a breakthrough result, Reingold, Rothblum, and Rothblum \cite{STOC:ReiRotRot16} constructed a doubly efficient (constant round) interactive proof for every language in time $T=n^{(\log n)^\delta}$ and polynomial space, for a sufficiently small constant $\delta>0$. 
Since this work which was posted nearly 10 years ago, no improvements have been made to this fundamental problem.
One of the building blocks in \cite{STOC:ReiRotRot16} is a doubly efficient \emph{batch unambiguous interactive proof}, which is the focus of our work.

\paragraph{Doubly efficient batch interactive proofs.} Another important line of work is the one that focuses  on doubly efficient \emph{batch interactive proofs} \cite{STOC:ReiRotRot16,RRR18,TCC:RotRot20}, where the goal is to prove many statements “at the price of one.” Concretely, suppose we want to verify statements 
$x_1,\ldots,x_k$, each of which belongs to a language that admits an interactive proof (i.e., a language in 
$\mathsf{PSPACE}$). 
Observing that the total space needed to run $k$ computations grows only by an (additive) $\polylog(k)$ factor, it follows, from $\IP = \mathsf{PSPACE}$, that all 
$k$ statements can also be proven via a single (batch) interactive proof with 
$\polylog(k)$ overhead in communication complexity. This naturally raises the question:

\begin{quote} 
Which languages in \IP can be batched \emph{doubly-efficiently},
with $\polylog(k)$ overhead in communication complexity,
while preserving the efficiency of the prover and verifier? \end{quote}

\noindent
The work of \cite{STOC:BKPRV24} proved that if a language $\cL \in \NP$ has a batch interactive proof with polylogarithmic overhead,
where the prover runs in polynomial time given the witnesses (we refer to such proof as a \emph{doubly efficient batch interactive proof}),
then $\cL$ also has a \emph{statistical witness indistinguishable} (WI) proof system. This result can be viewed as an indication that not all of $\NP$ has doubly efficient batch interactive proofs, since it seems likely that not all $\NP$ languages admit statistical WI proofs.

Indeed, all known positive results in this regime focus either on $\UP$, which is the class of $\NP$ languages for which each instance $x \in \cL$ has a \emph{unique} witness, or more  generally, on the class $\UIP$, consisting of all languages $\cL$ that have an \emph{unambiguous} interactive proofs. 

An unambiguous interactive proof for a language $\cL$ is an interactive proof system in which, 
for every $x\in\cL$ and for every possible verifier's message there is only one answer that a prover can send that will later allow it convince the verifier to accept conditioned on this answer (with non-negligible probability). In other words, even though the prover may be all-powerful, there is at most one way for it to produce an accepting proof.  We refer the reader to \Cref{sec:prelim:UIP} for the formal definition.

For the case of $\UP$, it was shown in \cite{TCC:RotRot20} that there exists a doubly efficient batch interactive proof for any language $\cL \in \UP$, where proving $x_1, \ldots, x_k \in \cL$ requires communication complexity $m \cdot \polylog(k)$, 
where $m$ is the length of a single witness, and the prover runs in polynomial time given all the witnesses.

It was shown by \cite{STOC:ReiRotRot16} that for every language $\cL$ that has a public-coin $\UIP$ consisting of $\ell$ rounds, where in each round the prover and verifier send $a$ bits, there exists a batch unambiguous interactive proof for proving that $x_1,\ldots,x_k\in \cL$, where the communication complexity is at least $(\mathsf{poly}(a)\cdot k^\eps + k) \cdot \ell^{1/\eps}$, the number of rounds increase only by a constant factor (that grows exponentially with $1/\eps$), and the prover's and verifier's runtime increase by $\poly(k,n,\ell)$, where $n=|x_i|$. 
It is critical that the underlying protocol being batched is public-coin\footnote{Note that the Goldwasser-Sipser transformation from any interactive proof to a public-coin interactive proof does not preserve prover efficiency \cite{STOC:GolSip86,STOC:Vadhan00}.}, i.e. that the verifier's messages are simply uniformly random coins.

\subsection{Our Results}

We construct a batch unambiguous interactive proof for any language  $\cL$ that has an $\ell$-round  public-coin unambiguous interactive proof, where the communication complexity of the batch interactive proof for proving $x_1,\ldots,x_k\in\cL$ grows by a factor of $\ell\cdot \polylog(k)$,  the round complexity grows by a factor of $\polylog(k)$, and the runtime of the prover and verifier grow polynomially in $\ell, n, k$ where $n=|x_i|$.

\begin{theorem}[Our \bUIP, Informal]
\label{thm:informal-bUIP}
Let $\cL$ be a language with a public-coin \emph{unambiguous interactive proof} system that on input $x$ of length $n$ (and where the prover may have an additional input $w$)
runs in $\ell = \ell(n)$ rounds, in each round $a=a(n)$ bits are sent, 
and the verifier's verification circuit is a log-space uniform boolean circuit of depth at most $\polylog(k)$.
Then there is a public-coin unambiguous interactive proof $\prot{\bUIP}$ for $\cL^{\otimes k}$, that on input  $(x_1,\ldots, x_\batchK)$, each of length $n$
(and where the prover may have additional input $w_1,\ldots,w_\batchK$),
runs in 
$\ell \cdot \polylog(\batchK)$ rounds, where in each round $a\cdot \ell \cdot \polylog(\batchK) + \poly(\ell)$ bits are sent.  Moreover, the prover's computational complexity increases multiplicatively by $\poly(\batchK, \ell, n)$ and the verifier's computational complexity becomes 
$(\ell \cdot \Vtime + \batchK n a^2\cdot\poly(\ell)) \cdot \polylog(\batchK, n, a, \ell)$,
where $\Vtime$ is the verifier runtime of the base protocol.
\end{theorem}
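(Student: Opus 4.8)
The plan is to recast $\cL^{\otimes k}$ as a single claim about a \emph{transcript matrix} and verify that claim with a recursive \emph{instance-reduction} protocol, following the Reingold--Rothblum--Rothblum template but engineered so that each reduction step shrinks the instance by a \emph{polylogarithmic} factor (rather than an $\eps$-th power) while provably preserving unambiguity. Since $\prot{\cL}$ is public-coin, run all $k$ copies on a single shared coin stream, so that the $\ell$ rounds of prover messages of copy $i$ form the $i$-th row of a matrix $\transcMat\in\bin^{k\times a\ell}$, subject to prefix-consistency constraints relating rows that share a coin prefix. The batch verifier never reads $\transcMat$ in full; instead it maintains throughout the protocol a claim $(\cS,\C)$ asserting that some matrix consistent with everything committed so far has its restriction to a tracked row-set $\cS$ satisfying a predicate $\C$, while the base-verifier check $\cV_\cL(x_i,\cdot,\transcMat[i,:])=1$ holds on the relevant rows; initially $\cS=[k]$ and $\C$ says ``every row is an accepting transcript''. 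The assumptions are used exactly here: because $\cV_\cL$ is log-space uniform of depth $\polylog(k)$, this combined predicate on $\transcMat$ is log-space uniform of depth $\polylog(k)$ and size $\tO(k n a^2\poly(\ell))$ (with the base-verifier checks either folded in or deferred to the few rows read at the end), and by unambiguity of $\prot{\cL}$ it is satisfied by a \emph{unique} $\transcMat$ in the yes-case and by \emph{no} consistent $\transcMat$ in the no-case.

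\textbf{One reduction step: checksums, $\GKR$, and $\pval$.} A single step has three sub-steps. (i) \emph{Amplify ``one bad row'' to ``large Hamming distance''.} The verifier sends random coefficients and the prover commits to the resulting linear checksum (random linear combinations of the rows of $\transcMat$), which is folded into $\C$; since the checksum is linear over a large field, the matrices satisfying the augmented predicate form a variety of small row-codimension, so except with probability $2^{-\secpar}$ either $\transcMat$ satisfies $\C$ or it is $\ge 4d$-far in row-Hamming distance from every matrix that does --- the $\Delta_c\ge 4d$ promise used below. (ii) \emph{Reduce to $\pval$.} Run the doubly-efficient $\GKR$ interactive proof (public-coin and unambiguous) on the depth-$\polylog(k)$ predicate circuit with input $\transcMat$; it reduces ``$\transcMat$ satisfies the circuit'' to a $\pval$ claim that the low-degree extension of $\transcMat$ takes prescribed values at $\tO(d\cdot a\ell)$ random points, and one checks that this reduction transports the $\Delta_c\ge 4d$ promise and catches any prover deviation except with probability $2^{-\secpar}$. (iii) \emph{Collapse rows.} Invoke the single-step $\DcRR$-on-rows reduction (our main technical lemma): split each evaluation point into a row part $\bm g_t\in\FF^{\log k}$ and a column part, have the prover send the per-column partial evaluations and the verifier re-fold them through the Lagrange kernel to get a $\pval$ claim on each length-$k$ \emph{column} of $\transcMat$ at the common points $\{\bm g_t\}$, and run the Reingold--Rothblum Hamming-proximity $\IPP$ on all columns \emph{with shared coins} so that they collapse to a single row-set $\RRS\subseteq[k]$ with $|\RRS|=O(\secpar k/d)$ and a new predicate $\C'$ with $\C'(\transcMat[\RRS,:])=1$ iff the $\pval$ claim held (prescribed completeness) and a first prover deviation exposed with probability $1-2^{-\secpar-2}$ (unambiguity, using $\Delta_c\ge 4d$); the round, communication, and circuit bounds are the $\tO(1)$ rounds and $\tO(a\ell+\poly(d))$ bits from that lemma.

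\textbf{Iteration and accounting.} Since $(\RRS,\C')$ is again a claim of the same form on only $O(\secpar k/d)$ rows, iterate. Taking $\secpar=\polylog(k)$ (so the accumulated error is $k^{-\omega(1)}$) and $d=\polylog(k)$ large enough that each iteration divides the row count by $\Omega(\log k)$, after $T=O(\log k/\log\log k)$ iterations the tracked set has $O(1)$ rows, and the verifier reads those rows and evaluates the residual (size-$\poly(\ell)$, depth-$\polylog(k)$) predicate and base-verifier checks directly. Composing with the base protocol's $\ell$ rounds, each batched by a $\polylog(k)$-round sub-step, gives $\ell\cdot\polylog(k)$ rounds, $a\ell\cdot\polylog(k)+\poly(\ell)$ bits per round (the $\tO(a\ell)$ from transmitting and reading $k$-by-$a\ell$-shaped objects, the $\poly(\ell)$ from the $\ell$-round prefix-consistency checks), a multiplicative prover blowup of $\poly(k,\ell,n)$ (dominated by $\GKR$ on the size-$\tO(k n a^2\poly(\ell))$ circuit and by low-degree-extension reconstructions), and verifier time $(\ell\cdot\Vtime+k n a^2\poly(\ell))\cdot\polylog(k,n,a,\ell)$.

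\textbf{Correctness and the main obstacle.} Completeness follows from the ``iff'' clause at each step applied to the honest $\transcMat$; public-coin-ness is preserved at every step; and unambiguity (hence soundness) follows by composing the unambiguity of $\prot{\cL}$, the completeness-preserving checksum, the unambiguity of $\GKR$, and the $2^{-\secpar-2}$-unambiguity of the reduction lemma, union-bounding over the $\ell\cdot\polylog(k)$ rounds. I expect the crux --- and the source of the improvement over the $k^\eps$ bound of \cite{STOC:ReiRotRot16} --- to be the quantitative distance bookkeeping: the $\Delta_c\ge 4d$ promise must be re-established at \emph{every one} of the $\polylog(k)$ iterations with $d$ only polylogarithmic, while the field-size requirements, the $\tO(d\cdot a\ell)$ query count, and the $\tO(a\ell+\poly(d))$ per-round communication are all kept from ever growing to $k^{\Omega(1)}$. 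The real work is showing that the checksum amplifies Hamming distance by exactly the right amount and that \emph{both} the $\GKR$ reduction and the column-wise $\RR$ proximity test transport that promise rather than erode it; the rest is careful composition and complexity accounting.
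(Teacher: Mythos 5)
There is a genuine gap in your step (i), and it is the load-bearing one. You claim that a linear checksum (random linear combinations of the rows of $\transcMat$, committed and folded into $\C$) by itself creates the $\Delta_c\ge 4d$ promise: ``the matrices satisfying the augmented predicate form a variety of small row-codimension, so except with probability $2^{-\secpar}$ either $\transcMat$ satisfies $\C$ or it is $\ge 4d$-far.'' This does not work. Once the initial coin stream $\bm q$ has been sent, the (unbounded) prover knows every verifier coin that $\C$ depends on, so for any instance $x_i\notin\cL$ it can fabricate an accepting row $\transcMat^*[i,:]$ (accepting transcripts for false statements exist in abundance once the coins are fixed — unambiguity only says there is a \emph{unique} continuation that the prescribed prover would send, not that no accepting transcript exists). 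The prover then commits to the checksum of this $\transcMat^*$, which is $\Delta_c$-close to the honest $\transcMat$ but satisfies the augmented predicate $\C\wedge(\cksum=s)$ exactly. Your ``small row-codimension variety'' argument pins down \emph{which} matrix the prover committed to, but it does nothing to rule out that matrix being a $\Delta_c$-close \emph{accepting} one. No distance is created, and the subsequent $\GKR$ and $\RR$ steps then happily certify the prover's fake matrix.

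What is missing is precisely the mechanism the paper calls \emph{random continuation} (\Cref{lem:Lxq}, \Cref{alg:phase1}, \Cref{sec:phase1}). The distance-generation protocol does not merely checksum the existing transcripts; it draws a \emph{fresh} coin sequence $\bm q'$ and forces the prover to (implicitly, via per-round checksums) produce, for each tracked instance $(i,\bm q)$, the $\ell$ hybrid transcripts with coins $(\bm q_{\le j},\bm q'_{>j})$, together with a predicate that also checks prefix consistency among these hybrids. Because $\bm q'_{>r^*}$ is fresh \emph{after} the round $r^*$ where the prover first deviates, the base protocol's unambiguity forces the deviating hybrid to be rejected with probability $1-\epsilon$. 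Only then is the $\Delta_c$-distance guarantee established (\Cref{clm:phase1}), and only then do the $\GKR$ distance-preservation lemma (\Cref{lem:RVW}) and the column-wise $\RR$ proximity test (\Cref{lem:DcRR}) have anything to propagate. This is also where the $\ell$ factor in the instance count (and hence the choice $d=\ell\cdot\polylog(\batchK,\ell)$, not $d=\polylog(k)$ as you set) actually comes from: the random continuation turns each tracked pair into $\ell$ hybrid rows, and $d$ must be large enough to offset that blowup while still shrinking the row set. Your steps (ii) and (iii) and the general iterate-and-accumulate framing are on the right track and match the paper's $\GKR$-to-$\pval$ reduction and $\Delta_c$-$\RR$ instance reduction, but without the random-continuation step the soundness argument you are composing simply has nothing to invoke the base protocol's unambiguity against.
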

\begin{remark} 
    The assumption that the verification circuit is log-space uniform is without loss of generality by the Cook-Levin reduction~\cite{cook1971complexity,levin1973universal} since any efficient verifier, as a Turing machine, can be simulated by a log-space uniform boolean circuit of a comparable size.
    The additional assumption that the verifier's verdict circuit is of low depth is relatively mild, since we can generically flatten out this circuit by having the prover send the values of all the wires in the verdict circuit. 
    This increases the communication complexity (in a single round).
    \label{rmk:Vdepth}
\end{remark}
We provide a high-level overview of our batch $\UIP$ protocol in \Cref{sec:overview}.  We note that it uses a component of the batch $\UP$ protocol of \cite{TCC:RotRot20}, which we refer to as the Instance Reduction protocol.  We need to generalize this protocol to hold with respect a distance measure, which we call execution-wise distance. We elaborate on this in \Cref{sec:overview:RR}.

\paragraph{Applications to doubly efficient interactive proofs}
We show how one can use \Cref{thm:informal-bUIP} to improve the state-of-the-art on doubly efficient interactive proofs.
\begin{theorem}[Our Doubly-efficient Interactive Proof, Informal]
    Every language computable in time $T(n) \le n^{O\left(\sqrt{\frac{\log n}{\log\log n}}\right)}$ and space $S(n) = \poly(n)$ has a (public-coin) doubly efficient interactive proof,
    with verifier runtime and communication complexity as $\poly(n)\cdot S(n))$ and prover runtime $T \cdot \poly(n)$.
\end{theorem}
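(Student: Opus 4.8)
The plan is to follow the bootstrapping template of \cite{STOC:ReiRotRot16} that turns a doubly efficient \emph{batch} unambiguous proof into a doubly efficient interactive proof for bounded-space computations, and to plug our improved batch $\UIP$ (\Cref{thm:informal-bUIP}) into it in place of the $(\poly(a)\,k^{\eps}+k)\cdot\ell^{1/\eps}$-communication batch protocol used there. Concretely, fix $\cL$ computable by a deterministic machine $M$ in time $T=T(n)$ and space $S=S(n)=\poly(n)$, and for a time bound $t$ let $\cL_t$ be the language of triples $(x,c,c')$ such that $M$, run on input $x$ from configuration $c$, reaches $c'$ within $t$ steps (configurations have length $O(S)$); note $x\in\cL$ iff $(x,c_{\mathrm{init}}(x),c_{\mathrm{acc}})\in\cL_T$, which the verifier can write down in time $O(n+S)$. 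We build public-coin $\UIP$s $\Pi_0,\ldots,\Pi_r$ for $\cL_{T_0},\cL_{T_0 B},\ldots,\cL_{T_0 B^{r}}=\cL_T$, where the branching $B$, base time $T_0$, and depth $r$ satisfy $T_0 B^{r}=T$ and are fixed at the end. For $\Pi_0$ the verifier simply runs $M$ for $T_0$ steps; since the next level needs a shallow verdict circuit we flatten it as in \Cref{rmk:Vdepth}, i.e. the prover sends the $\poly(T_0)$ wire values of the (log-space uniform) $T_0$-step transition circuit and the verifier checks local gate consistency, yielding a depth-$O(1)$, size-$\poly(T_0)$ verdict circuit at the cost of $\poly(T_0)$ bits in one round.

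For the inductive step $\Pi_i\Rightarrow\Pi_{i+1}$, observe that a computation witnessing $(x,c_0,c_B)\in\cL_{T_0 B^{i+1}}$ passes through a \emph{unique} sequence of intermediate configurations $c_0,c_1,\ldots,c_B$ (sampled every $T_0 B^{i}$ steps), and that $(x,c_0,c_B)\in\cL_{T_0 B^{i+1}}$ iff the $B$-tuple $\bigl((x,c_0,c_1),\ldots,(x,c_{B-1},c_B)\bigr)$ lies in $\cL_{T_0 B^{i}}^{\otimes B}$. So the prover commits succinctly to $c_0,\ldots,c_B$ via the low-degree extension of the sequence (the verifier never reads all $\Theta(BS)$ bits), the parties run the batch $\UIP$ of \Cref{thm:informal-bUIP} applied to $\Pi_i$ with $k=B$ on this tuple, and the requirement that consecutive instances share the same $c_j$ and that $c_0,c_B$ match the input is folded into the batch statement as a few evaluation constraints on that low-degree extension --- this is exactly the role of the execution-wise-distance generalization of the Instance Reduction protocol of \cite{TCC:RotRot20} discussed in \Cref{sec:overview:RR}. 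The resulting $\Pi_{i+1}$ is public-coin; it is unambiguous because $M$ is deterministic (the intermediate configurations are \emph{forced}) and \Cref{thm:informal-bUIP} is unambiguous; and its verdict circuit is the conjunction of the shallow batch verdict circuit with a few low-degree-extension evaluations, hence of $\polylog$ depth, so the hypothesis of \Cref{thm:informal-bUIP} is again met at the next level. Running $\Pi_r$ on $(x,c_{\mathrm{init}}(x),c_{\mathrm{acc}})$ gives the desired doubly efficient (in fact unambiguous) interactive proof for $\cL$.

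It then remains to balance parameters. Writing $\ell_i,a_i$ for the round count and per-round communication of $\Pi_i$, \Cref{thm:informal-bUIP} with $k=B$ gives roughly $\ell_{i+1}\le\ell_i\cdot\polylog(B)+O(1)$ and $a_{i+1}\le a_i\cdot\ell_i\cdot\polylog(B)+\poly(\ell_i)+\poly(S)$, the verifier time at level $i+1$ is $\bigl(\ell_i\Vtime_i+B\cdot\poly(S)\cdot a_i^{2}\cdot\poly(\ell_i)\bigr)\cdot\polylog(n)$, and the prover time multiplies by $\poly(B,\ell_i,n)$ per level (the $\poly(S)$ terms and the poly-in-$B$ prover blow-up coming from the low-degree-extension bookkeeping). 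Unrolling from $\ell_0=O(1)$, $a_0=\poly(T_0)$ gives $\ell_r=\polylog(B)^{O(r)}$ and $a_r=\polylog(B)^{O(r^2)}\cdot\poly(T_0,S)$, hence total communication and verifier time $\polylog(B)^{O(r^2)}\cdot\poly(T_0,S)$, and prover time $\poly(T)$ (which suffices for double efficiency, and is $T\cdot\poly(n)$ with a tighter accounting of the per-level overheads). To keep communication and verifier time $\poly(n)\cdot S$ we need $r^{2}\log\log B=O(\log n)$; subject to this, $T=T_0 B^{r}$ is maximized by taking $\log B=\Theta(\log n)$ (so $B=\poly(n)$), $T_0=\poly(n)$, and $r=\Theta\!\bigl(\sqrt{\log n/\log\log n}\bigr)$ --- which also keeps $\ell_r=n^{o(1)}$. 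This yields $\log T=\Theta(r\log B)=\Theta\!\bigl(\log n\cdot\sqrt{\log n/\log\log n}\bigr)$, i.e. $T\le n^{O(\sqrt{\log n/\log\log n})}$, as claimed.

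The step I expect to be the main obstacle is making the inductive step ``plug in cleanly'': the reduction from $\cL_{T_0 B^{i+1}}$ to a length-$B$ batch over $\cL_{T_0 B^{i}}$ must simultaneously (i) preserve unambiguity --- which it does because determinism of $M$ forces the intermediate configurations and \Cref{thm:informal-bUIP} preserves unambiguity; (ii) keep the ``glue'' succinct, so the verifier only ever touches $\polylog$-many evaluations of the low-degree extension of the configuration sequence, which is precisely what the execution-wise-distance variant of the Instance Reduction protocol provides; and (iii) hand \Cref{thm:informal-bUIP} a verdict circuit of $\polylog$ depth at every level, which holds since both the flattened base circuit and the batch verdict circuits are shallow. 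Granting these, the remaining work is the somewhat delicate bookkeeping above that keeps communication, round complexity, prover time, and verifier time all within budget while maximizing $T$; the binding tension is between the geometric growth $\polylog(B)^{\Theta(r)}$ of the round count and the resulting $\polylog(B)^{\Theta(r^{2})}$ growth of the per-round communication, whose balance point is $r=\Theta(\sqrt{\log n/\log\log n})$.
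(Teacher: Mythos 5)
Your overall strategy is essentially the paper's: define $\cL_t$ as ``$t$-step transition of $M$ from configuration to configuration,'' observe that $\cL_{Bt}$ factors as a $B$-fold batch of $\cL_t$ through forced intermediate configurations, apply the batch $\UIP$ of \Cref{thm:informal-bUIP} at each level, and balance parameters with $B=\poly(n)$ and $r=\Theta(\sqrt{\log n/\log\log n})$ levels to keep the $\polylog(B)^{\Theta(r^2)}$ growth of per-round communication polynomial in $n$. This is the Augmentation Lemma (\Cref{lem:aug}) plus the induction of \Cref{prop:induction}, and your unambiguity argument (determinism forces the intermediate configurations; any deviation either introduces a false $\cL_t$ instance or deviates inside the batch protocol) is the same.

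The one substantive departure is that you have the prover commit to $c_0,\ldots,c_B$ via a low-degree extension and claim the verifier ``never reads all $\Theta(BS)$ bits.'' This is not what the paper does, and it is not supported by what you cite: the batch $\UIP$ of \Cref{thm:informal-bUIP} (formally \Cref{thm:basic-bUIP}) has the verifier reading all $\batchK$ instances explicitly --- its verifier time has a $\batchK n$ term where $n$ is the per-instance length, here $O(S)$, so one cannot invoke it holographically as stated. Moreover, the claim that this ``is exactly the role of the execution-wise-distance generalization of the Instance Reduction protocol'' is a misreading: $\Delta_c$-distance governs how rows of the (checksummed) prover-transcript matrix are folded inside the \pval reduction, and has nothing to do with how the verifier accesses the batch of \emph{statements}. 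Fortunately, the detour is unnecessary: with your own parameter choice $B=\poly(n)$, $S=\poly(n)$, the $B$ configurations together are $\poly(n)$ bits, so the prover can simply send them in the clear (as the paper does), the verifier reads them, the consecutive-endpoint consistency is enforced syntactically by how the verifier forms the batch $(x,w_{(i-1)t},w_{it})$, and \Cref{thm:informal-bUIP} applies verbatim. Dropping the LDE commitment and the claim about execution-wise distance makes your argument match the paper's and removes the gap; the parameter bookkeeping you carried out already absorbs the resulting $O(B\cdot S)$ additive term.
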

\begin{corollary}[Our Doubly Efficient $\IP$, Informal]
\label{thm:informal-deIP}
Every language decidable in time $T(n)=n^{O\left(\sqrt{\frac{\log n}{\log\log n}}\right)}$ and polynomial space has a (public-coin) doubly efficient interactive proof.
\end{corollary}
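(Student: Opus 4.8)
The corollary is immediate from the preceding theorem: there the verifier runs in time $\poly(n)\cdot S(n)$, which is $\poly(n)$ since $S(n)=\poly(n)$, the communication is $\poly(n)$, and the prover runs in time $T\cdot\poly(n)=\poly(T)$, so the protocol is by definition doubly efficient. The content is therefore in the theorem, and the plan is to run the Reingold--Rothblum--Rothblum \cite{STOC:ReiRotRot16} recursion, but with \Cref{thm:informal-bUIP} plugged in wherever they batch an unambiguous protocol. Fix a Turing machine $M$ deciding the language in time $T=T(n)$ and space $S=S(n)=\poly(n)$, and a parameter $k$ to be chosen below. On a fixed input $x$, configurations of $M$ (state, heads, work tape) have size $O(S)$ and the map ``configuration after $\tau$ further steps'' is well defined. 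For $0\le j\le L:=\log_k T$ let $\cL_j$ be the language of triples $(c,c',\tau)$ with $\tau=T/k^j$ such that $M$ started from $c$ reaches $c'$ in exactly $\tau$ steps; instances have length $n_j=O(S)=\poly(n)$. I build public-coin $\UIP$s $\Pi_j$ for $\cL_j$ by downward induction on $j$.

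The base case $\cL_L$ is a single step of $M$, which the verifier checks itself: $\Pi_L$ has $O(1)$ rounds, zero communication, verifier time $\poly(n)$, and -- crucially -- a verdict circuit that is log-space uniform of depth $\polylog(n)$, since one transition of $M$ is computed by a shallow circuit. Inductively, to verify $(c_0,c_k,\tau)\in\cL_j$ the prover first sends the $k-1$ intermediate configurations $c_1,\dots,c_{k-1}$ (one round, $\tilde O(kS)$ bits); then correctness of $(c_0,c_k,\tau)$ is equivalent to $\bigwedge_{i=1}^{k}(c_{i-1},c_i,\tau/k)\in\cL_{j+1}$, i.e. to membership of a $k$-tuple in $\cL_{j+1}^{\otimes k}$, which the parties verify by invoking \Cref{thm:informal-bUIP} on the base protocol $\Pi_{j+1}$ (first flattening its verdict circuit via \Cref{rmk:Vdepth} if its depth ever exceeds $\polylog(k)$). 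Since a single configuration is uniquely determined by $(c_0,\tau)$ and $M$'s determinism, any incorrect configuration message makes one of the $k$ block claims false and hence makes the (sound) batch protocol reject with high probability; combined with the unambiguity guarantee of \Cref{thm:informal-bUIP}, this shows $\Pi_j$ is an unambiguous, public-coin protocol, completing the induction.

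The second step is to track, down the $L$ levels, the round count $\ell_j$, per-round communication $a_j$, prover time $P_j$, and verifier time and depth, using the complexity bounds of \Cref{thm:informal-bUIP}: rounds multiply by $\polylog(k)$; per-round communication becomes $a_{j+1}\cdot\ell_{j+1}\cdot\polylog(k)+\poly(\ell_{j+1})$ plus the additive $\tilde O(kS)$ from the configuration round; and the prover runs each of the $k$ sub-provers once plus $\poly(k,n,\ell)$ overhead. Unwinding gives $\ell_0=(\polylog(k))^{\Theta(L)}$; since the per-round communication at level $j$ picks up a factor $\ell_{j+1}\cdot\polylog(k)$ while $\ell_{j+1}$ itself already carries $\approx L-j-1$ polylog factors, the total communication and the verifier time come out to $\tilde O(kS)\cdot(\polylog(k))^{\Theta(L^2)}$, while $P_0=k^{L}\cdot P_L+(\text{lower order})=T\cdot\poly(n)$. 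Now choose $k=\poly(n)$, as large as the additive $\tilde O(kS)=\poly(n)$ term permits: then $L=\log_k T=\Theta(\log T/\log n)$, and every $\ell_j,a_j,P_j$ and the verifier time are polynomial in $n$ provided $(\polylog(k))^{\Theta(L^2)}=(\log n)^{\Theta(L^2)}$ is $\poly(n)$, i.e. $L^{2}\log\log n=O(\log n)$, i.e. $L=O(\sqrt{\log n/\log\log n})$, i.e. $T=n^{O(\sqrt{\log n/\log\log n})}$ -- exactly the claimed regime -- with communication, round complexity, and verifier time $\poly(n)$ (indeed $\poly(n)\cdot S(n)$) and prover time $T\cdot\poly(n)$.

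The main obstacle is this last parameter accounting. One has to verify that the blow-up really is $(\polylog(k))^{\Theta(L^2)}$ and nothing worse: the additive $\poly(\ell_{j+1})$ per-round terms, the $\tilde O(kS)$ configuration messages propagated up through $\Theta(L)$ later levels of $\ell\cdot\polylog(k)$-overhead, and the $\batchK\, n\, a^2\,\poly(\ell)$ term in the verifier's running time must all stay subsumed in it, and the exponent of $k$ in $n$ must be pinned down so that $P_0=T\cdot\poly(n)$ with an absolute polynomial. Tied to this is the depth bookkeeping: \Cref{thm:informal-bUIP} may be reapplied at level $j-1$ only because $\Pi_j$'s verdict circuit is log-space uniform of depth $\polylog(k)$, a property that must be maintained inductively, using the flattening of \Cref{rmk:Vdepth} only where the one-shot communication cost of flattening (itself $\poly(n)$ in the target regime) is affordable. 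Getting these bounds, together with the routine but fiddly encoding of configurations and single-step transitions as shallow uniform circuits, is precisely where replacing the earlier $(\poly(a)\,k^{\epsilon}+k)\,\ell^{1/\epsilon}$ batch overhead by our $a\cdot\ell\cdot\polylog(k)+\poly(\ell)$ converts the previous $n^{(\log n)^\delta}$ bound into $n^{O(\sqrt{\log n/\log\log n})}$.
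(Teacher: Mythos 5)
Your proof takes essentially the same route as the paper: define transition languages $\cL_\tau$ for $\cM$, observe that an $\cL_{k\tau}$ instance decomposes into a batch of $k$ $\cL_\tau$ instances after the prover sends intermediate configurations, apply \Cref{thm:informal-bUIP} inductively while tracking the $(\polylog k)^{\Theta(L^2)}$ blowup in round count and per-round communication, and balance $k=\poly(n)$ against $L=\log_k T$ to get $L=O(\sqrt{\log n/\log\log n})$, hence $T=n^{O(\sqrt{\log n/\log\log n})}$. This matches the paper's \Cref{lem:aug} and \Cref{prop:induction} (with $\batchK=T^\delta$ and $\gamma=1/\delta$) up to indexing direction, and your identification of the $L^2$ exponent and the inductive depth maintenance as the crux is exactly where the paper's careful bookkeeping lives.
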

For more details, see \Cref{thm:TS,cor:TS}. This improves upon the doubly efficient interactive proof from \cite{STOC:ReiRotRot16}, which is only for languages computed in polynomial space and time $T=n^{(\log n)^\delta}$ for a small constant $\delta>0$ (which is significantly smaller than $1/2$).

\subsection{Additional Related Work}

The area of proof systems has become a popular area of research with numerous papers published every year.  
Most of these works are in the computational setting, where soundness is guaranteed to hold only against computationally bounded cheating provers. We do not elaborate on these works here, since they are less relevant for our result.  We emphasize that our work is in the information theoretic setting, where soundness holds against any (even computationally unbounded) cheating prover. Most of the work in this regime is focused on guaranteeing privacy, such as zero-knowledge or witness indistinguishability. Again, we do not  elaborate on these works since these are less relevant for our result. 

A proof model which turns out to be useful for us, is that of \emph{interactive proofs of proximity} ($\IPP$) \cite{STOC:RotVadWig13}, where a verifier has oracle access to a long input, and it interacts with a prover to check whether the  input is close to one that satisfies a given property, while running in sublinear time. 
The non-interactive variant was explored in \cite{ITCS:GurRot15},
and the significance of the round complexity of $\IPP$s was studied in \cite{ITCS:GurRot17}.
Although we do not focus on sublinear verification in our work, such proofs serve as a stepping stone to obtain our result.
\section{Technical Overview}
\label{sec:overview}

Our batch UIP protocol takes inspiration from the  \bUP protocols of \cite{RRR18,TCC:RotRot20}.  

\paragraph{The batch $\UP$ protocol from \cite{RRR18,TCC:RotRot20}.} The intuition underlying these protocols stems from the simple observation that if we are given $k$ instances $x_1,\ldots,x_k$ that are $d$-far from being in $\cL^{\otimes k}$, in the sense that at least $d$ of the instances are not in $\cL$, then by choosing $k\cdot \polylog(k)/d$ of the instances at random, we have the guarantee that at least one of the selected instances is not in $\cL$ with probability $1-\mathsf{negl}(k)$. In other words, distance from $\cL^{\otimes k}$ allows us to shrink the number of instances. Indeed, in the batch $\UP$ protocol of \cite{TCC:RotRot20},
they create distance and then shrink the number of instances, and they do this iteratively.  
To create distance, they use the $\GKR$ protocol.

\paragraph{The $\GKR$ protocol}  The $\GKR$ protocol \cite{JACM:GolKalRot15} is an unambiguous interactive protocol between a prover who wishes to prove to the verifier that a circuit $C$ on input~$x$ outputs a bit $b$. The communication complexity and the number of rounds grow linearly with the depth of $C$, denoted by $D$, and poly-logarithmically with the size of $C$, denoted by $S$. It has the desired properties of being public-coin and the verdict function requires access to the input $x$ at a single point in the low-degree extension of~$x$.\footnote{The low-degree extension is a specific linear error correcting code (similar to the Reed-Muller encoding).  We define it formally in \Cref{sec:LDE} but the details can be ignored here. }
In particular, if the verifier is given oracle access to the low-degree extension of the input $x$,
then its runtime is $D\cdot\polylog(S)$, which may be poly-logarithmic in the input length. 

The $\GKR$ protocol can be thought of as a reduction from verifying that $C(x)=b$ to  verifying that the low-degree extension of $x$ at point $j$ is equal to a value $v$, where $j$ is a uniformly random point determined by the random messages sent by the verifier in the $\GKR$ protocol, and $v$ is  determined by the answers sent by the prover in $\GKR$ protocol. 
This is also denoted by $x\in \pval(j,v)$, a notation that originated in \cite{STOC:RotVadWig13}.   
The $\GKR$ protocol has the guarantee that if $C(x)\neq b$ then with high probability $x\notin \pval(j,v)$, where we can make this probability as close as we want to~$1$, and obtain probability $1-\mu$ at the price of the communication complexity growing polynomially with $\log ({\mu}^{-1})$. 
Importantly,
the \GKR protocol is run in the \emph{holographic} mode,
namely that the verifier never access the input $x$ directly
--- it just outputs claims about the low-degree extension of $x$.

The batch $\UP$ protocol of \cite{TCC:RotRot20} starts by running the $\GKR$ protocol on the batch verification circuit that has the instance $\bm x=(x_1,\ldots,x_k)$ hardwired (supposedly in $\cL^{\otimes{k}}$), and takes as input a witness $\bm w=(w_1,\ldots,w_k)$, and outputs~$1$ if and only if $\bm w$ is a valid witness for $\bm x\in\cL^{\otimes k}$.  Note that the depth of the batch verification circuit grows logarithmically with $k$, and hence the $\GKR$  protocol has communication complexity that grows only logarithmically in~$k$ (as desired).  It has the guarantee that if $\bm x\notin\cL^{\otimes k}$ then for every $\bm w= (w_1,\ldots,w_k)$,  with high probability $\bm w\notin \pval(j,v)$. 
 The batch $\UP$ protocol runs this $\GKR$ protocol $T=d\cdot m\cdot \polylog(k)$ times, where $d=\polylog(k)$ is the desired distance, in the sense that $d$ witnesses will need to change in order to satisfy the resulting $\PVAL$ constraint, and $m$ is the length of a single witness.  Running the protocol $T$ times in parallel reduces the soundness error to be exponentially small in~$T$.  They obtain the guarantee that if $\bm x\notin \cL^{\otimes k}$ then for any $\bm w=(w_1,\ldots,w_k)$,  with high probability, which is exponentially (in $T$) close to~$1$, it holds that $\bm w\notin \cap_{i=1}^{T}\pval( j_i,v_i)$.  From now on, we use the condensed notation \[\pval(\bm j, \bm v)=\cap_{i=1}^{T}\pval( j_i,v_i),
\]
where $\bm j=(j_1,\ldots,j_T)$ and $\bm v=(v_1,\ldots, v_T)$.  They then rely on the union bound to argue that even if a single $x_i$ is not in $\cL$, 
with high probability,
every $\bm w$ that is $d$-close to the unambiguous witness,\footnotemark{} in the sense that it differs from the unambiguous one on at most $d$ witnesses, satisfies that $\bm w\notin \pval(\bm j,\bm v)$, and hence obtain the desired distance guarantee!  
\footnotetext{If $x_i \notin \cL$, we define the unambiguous witness to be $w_i = 0^m$.}

We emphasize that if $\bm x \notin \cL^{\otimes \batchK}$ then
the batch verification circuit computes the identically zero function, and in particular outputs zero for every witness vector that is close to the unambiguous one.  This is precisely what allows us to obtain distance from running the $\GKR$ protocol.  As we shall see shortly, this will no longer be the case in the $\UIP$ setting, which brings with it new technical challenges.

Note that after creating distance by running the $\GKR$ protocol $T$ times, the new instances are no longer batch instances, rather are instances of the form $\bm x=(x_1,\ldots,x_k)$ together with a joint $\pval$ constraint $\Phi=\Phi_{\bm j,\bm v}$ on the corresponding witnesses $\bm w=(w_1,\ldots,w_k)$.  The guarantee they obtain is that (with high probability) the unambiguous witness $\bm w$ corresponding to $\bm x$ is $d$-far from satisfying the constraint $\Phi$, where distance is defined in terms of the minimum number of witnesses that need to be modified in order for the constraint $\Phi$ to be satisfied. In other words, they consider a new $\UP$ language $\cL'$, where $(\bm x,\Phi)\in\cL'$ if and only if the unambiguous witness $\bm w$ corresponding to $\bm x\in\cL^{\otimes k}$ satisfies the $\pval$ constraint $\Phi(\bm w)=1$.  They obtain the guarantee that 
 $(\bm x,\Phi)$ is $d$-far from $\cL'$ in the sense that $(\bm x',\Phi)\notin \cL'$, for every $\bm x'$ that differs from $\bm x$ on at most $d$ instances.

They then show how to convert any such instance $(\bm x,\Phi)$ that is $d$-far from $\cL'$ into a new instance $(x_{i_1},\ldots,x_{i_{k/{d}}},\Phi')$ that is not in $\cL'$, but without any distance guarantee.\footnote{We note that the instance reduction protocol from \cite{TCC:RotRot20} needs the stronger distance guarantee that at least $d\cdot m$ bits of the unambiguous witness of $\bm x$ need to change in order to satisfy the constraint $\Phi$.  This distance guarantee is indeed promised by the $\GKR$ protocol. We chose to present distance in terms of the number of witnesses that need to be changed since it is consistent with our approach. 
 See \Cref{sec:DistanceOver}.} Recall that $d$ was chosen so that $d={\polylog(k)}$, and hence the number of instances decreased by a factor of $\polylog(k)$. 

They continue as above, and use the $\GKR$ protocol to convert this instance into a new instance of the form $(x_{i_1},\ldots,x_{i_{k/d}},\Phi'')$, where $\Phi''$ is a new $\pval$ instance, with the guarantee that if $(x_{i_1},\ldots,x_{i_{k/d}},\Phi')\notin \cL'$ then $(x_{i_1},\ldots,x_{i_{k/d}},\Phi'')$ is $d$-far from $\cL'$. 
To obtain this reduction, they run the  $\GKR$ protocol (repeated in parallel $T$ times) with respect to the circuit that has $(x_{i_1},\ldots,x_{i_{k/d}},\Phi')$ hardwired and on input $(w_{i_1},\ldots,w_{i_{k/d}})$ outputs~$1$ if and only if these are valid witnesses of $(x_{i_1},\ldots,x_{i_{k/d}})$ and they satisfy the constraint $\Phi'$. They continue to iterate between creating distance and shrinking the number of instances until they are left with $\polylog(k)$ number of instances, at which point the prover simply sends them over.

\paragraph{Our batch $\UIP$ protocol}
We refer to the verifier messages in the $\UIP$ protocol as \emph{queries} and the prover messages as \emph{answers}. 
Our batch $\UIP$ protocol follows a similar blueprint to the batch $\UP$ protocol of \cite{RRR18,TCC:RotRot20}, but for $\UIP$ as opposed to $\UP$, which brings with it several challenges that we elaborate on below. 
The protocol starts by creating distance.

\subsection{Creating Distance}
\label{sec:DistanceOver}

In the $\UIP$ setting, the instances $x_1,\ldots,x_k$ may lack witnesses certifying membership in $\cL$, since $\cL$ need not belong to $\NP$.  
Hence, 
we cannot apply the $\GKR$ protocol on the batch verification circuit that takes as input the witnesses (as such witnesses do not exist).  
Instead, in our $\UIP$ setting, distance is created by running a checksum over the $k$ $\UIP$ protocols. 

\paragraph{Checksum over the $\UIP$s} 
We begin by performing a checksum over the $k$ $\UIP$ instances, 
following the batch $\UIP$ construction of \cite{STOC:ReiRotRot16}.\footnote{We note that in \cite{STOC:ReiRotRot16} the checksum was not done directly on the $\UIP$ but rather the $\UIP$ was converted to an $\mathsf{IOP}$ and the checksum was done on the $\mathsf{IOP}$. We elaborate on this in \Cref{overview:RRR}.}  
Namely, we run the $k$ $\UIP$ protocols in parallel, where the verifier uses the same queries in all the executions and the prover sends its answers via a checksum.  

We assume that the underlying $\UIP$ protocol is public-coin and hence it can be run in the following way. 
Denoting by $\ell$ the number of rounds in the underlying $\UIP$ protocol, in each round $r\in[\ell]$ of the batch $\UIP$ protocol,
the verifier sends a random message $q_r\in\{0,1\}^a$ as in a single $\UIP$ protocol.  The prover generates the $k$ answers $\bm a_r=(a_{r,1},\ldots,a_{r,k})\in\left(\{0,1\}^a\right)^k$, but rather than sending $\bm a_r$ (which is too long), it sends a checksum of $\bm a_r$ (which should be thought of as a ``digest'' of $\bm a_r$). 
Formally, the prover sends the low-degree extension of $\bm a_r$ at $T=d\cdot a\cdot \polylog(k)$ random points, where $d=\polylog(k)$, 
and $a$ is the length of each of the prover's messages in a single execution.  
These points are sampled by the verifier at the beginning of the protocol. 

Now we show that distance is created: 
if the checksummed transcript of answers only deviate from the unambiguous one by at most $d$ executions, 
then at least one execution in the transcript will be rejecting with overwhelming probability.
Suppose the cheating prover sends checksums that correspond to answers deviating in at most $d$ executions, 
then for every $r\in [\ell]$, 
by the checksum's error-correcting property,
it is possible to extract from the $r$-th round checksums the entire set of $r$-th round answers.
By unambiguity,
at least one of the extracted answers will lead to a rejecting transcript with overwhelming probability when $(x_1,\ldots,x_k)\notin \cL^{\otimes k}$.

However, note that the verifier cannot verify the validity of the answers given by the prover, since the answers were not sent in the clear, but rather in the form of a checksum. To this end, we use the $\GKR$ protocol, this time not to create distance, but rather to allow the verifier to verify the validity of the answers.  
Specifically, as we describe below, the $\GKR$ protocol is used to convert the check-summed transcript with the distance guarantee mentioned above, into a $\pval$ claim with an analogous distance guarantee.  

\begin{remark}
    We note that this is where our approach differs substantially from \cite{STOC:ReiRotRot16}, which verified correctness by having the prover open some of the bits in {\em all} $k$ instances (as opposed to using a succinct interactive proof, such as $\GKR$, to verify correctness).  
    As a result,
    denoting the number of iterations in their protocol as $1/\epsilon$ (for some $\epsilon > 0$),
    their batch $\UIP$ incurs an additive factor of $k$ as well as a multiplicative factor of $(1/\epsilon)^{\poly(1/\epsilon)}$ to the communication complexity. 
    We elaborate on this difference in \Cref{overview:RRR}.
\end{remark}

\paragraph{The $\GKR$ protocol} In order to verify the check-summed $\UIP$s, the prover and verifier run the $\GKR$ protocol, this time on the batch verification circuit that has hardwired into it the instances $\bm x=(x_1,\ldots,x_k)$, the verifier's queries $\bm q=(q_1,\ldots,q_\ell)$, and a constraint $\Phi$ on the prover's answers (as defined by the checksum).  The circuit takes as input all the prover's answers $\bm a= (\bm a_1,\ldots,\bm a_\ell)$ and outputs~$1$ if and only if for every $j\in[k]$ the transcript $(q_1, a_{1,j},\ldots q_\ell,  a_{\ell,j})$ is accepted with respect to $x_j$, and the answers $\bm a$ satisfy $\Phi$.  

The distance guarantee of the check-summed protocol implies that with high probability, every transcript of answers $\bm a= (\bm a_1,\ldots,\bm a_\ell)$ that is $d$-close to the unambiguous one is rejected by the batch verification circuit, where here (and throughout our work) we define $d$-closeness to mean that it differs in at most $d$ executions. 

\begin{remark}
    We emphasize that our definition of $d$-closeness is not the standard Hamming distance (which is bit-wise distance), but rather it is an execution-wise distance.  
    Since our definition of distance deviates from the one used in \cite{TCC:RotRot20} (which is Hamming distance) we need to prove that the instance reduction protocol from \cite{TCC:RotRot20} works with our notion of execution-wise distance.  We elaborate on this in \Cref{sec:overview:RR}.
\end{remark}

After running the $\GKR$ protocol (in parallel) the parties obtain $\bm j$ and $\bm v$ such that for every $\bm a$ that is $d$-close to the unambiguous one, with high probability, $\bm a\notin \pval(\bm j,\bm v)$. We then rely on the union bound, following the approach in the $\UP$ setting,   to argue that with high probability for every $\bm a$ that is $d$-close to the unambiguous one it holds that $\bm a\notin \pval(\bm j,\bm v)$. 
 Therefore, we can apply an instance reduction protocol (similar to the one from \cite{TCC:RotRot20}, which we elaborate on in \Cref{sec:overview:RR}), which reduces the problem to verifying that $(x_{i_1},\ldots,x_{i_{k/d}},\bm q,\Phi')\in \cL'$ for some $\Phi'$, where the language $\cL'$ is defined so that $(x_{i_1},\ldots,x_{i_{k/d}},\bm q,\Phi')\in \cL'$  if and only if the unambiguous answers, corresponding to the instances $x_{i_1},\ldots,x_{i_{k/d}}$ and queries  $\bm q$, satisfy $\Phi'$.\footnote{We note that $\cL'$ is not necessarily in $\NP$ (as was not $\cL$).}

The guarantee we have is that if $(x_1,\ldots,x_k,\bm q, \Phi)$ is $d$-far from $\cL'$, in the sense that every answer $\bm a'$ that is $d$-close to the unambiguous one, does not satisfy $\Phi$, then with high probability $(x_{i_1},\ldots,x_{i_{k/d}},\bm q, \Phi')\notin \cL'$.  We note that this instance does not have a distance guarantee, and generating distance for instances in $\cL'$ introduces new challenges that we elaborate on below.   

\paragraph{Obtaining distance for $\cL'$}  It is tempting to try and continue as in the $\UP$ setting,  and apply the $\GKR$ protocol to create distance.  The problem is that at this point the cheating prover knows all the verifier's queries $\bm q$ in advance, and thus can easily generate accepting answers $\bm a$.  Moreover, a cheating prover can generate answers $\bm a$ that are close to the unambiguous ones and which cause the $\GKR$ circuit to accept! 
Hence, the $\GKR$ protocol cannot be used to create distance.  

We emphasize that the problem stems from the fact that the prover knows all the queries $\bm q$ ahead of time, and thus can easily generate accepting answers even for $x\notin \cL$.  We get around this problem by adding the following random continuation protocol from \cite{STOC:ReiRotRot16}.  

\paragraph{Random continuation to the rescue} 
  
In the random continuation protocol, the verifier chooses fresh queries $q'_1,\ldots,q'_\ell\gets \{0,1\}^a$, and the prover and verifier run a checksum of $\ell\cdot k/d$ $\UIP$ protocols, where in the $(i\cdot k/d +j)$'th protocol, where $i\in\{0,1,\ldots,\ell-1\}$ and $j\in\{1,\ldots,k/d\}$, the verifier sends messages $(q_1,\ldots,q_i,q'_{i+1},\ldots,q'_\ell)$ and the prover replies with answers corresponding to the $j$'th instance.  Thus, the prover and verifier run $\ell$ $\UIP$ protocols per instance, and thus a total of $\ell\cdot k/d$ $\UIP$ protocols, all run in a checksum.

Each of the prover's checksums consist of $T=d'\cdot a\cdot \polylog(k)$ points in the low-degree extension of each vector of answers, 
where here we take $d'=\ell\cdot d$. 
\begin{remark}
The reason for this choice of distance $d'$ is that the instance reduction protocol, applied after the distance is generated, reduces the number of instances by a factor of~$d'$.  However, the random continuation increases the number of protocols by a factor of $\ell$, and hence to ensure that the number of instances goes down by a factor of~$d$, we need to take $d'=\ell\cdot d$. 
\end{remark}

One can argue (as above) that if the prover deviates from the (unambiguous) random continuation protocol on at most $d'$ of the executions (where in each execution it sends a message of length $a$) then one can efficiently extract the messages of the prover from the check-summed transcript. Therefore, if there exists $i\in [\ell]$ such that the $i$'th round check-summed answers sent by the prover, corresponding to $(q_1,\ldots,q_i)$, were not the unambiguous ones, then by the unambiguity property the prover will fail to provide accepting answers for random continuation $(q'_{i+1},\ldots, q'_\ell)$, even if the corresponding instance is in $\cL$.  This implies that if there exists $i\in [\ell]$ such that the $i$'th check-summed answers corresponding to $(q_1,\ldots,q_i)$, are not the unambiguous ones, and the prover deviates from the unambiguous answers in at most $d'$ executions,  then we can efficiently extract from the check-summed transcript a rejecting transcript of answers sent by the prover. 
This gives us the desired distance guarantee,
and we can now use the $\GKR$ protocol, as above, to reduce this to a $\pval$ instance $\pval(\bm j, \bm v)$ with the guarantee for every $\bm w$ that is $d'$-close to the unambiguous one, $\bm w\notin \pval(\bm j, \bm v)$.  Hence, we can use the instance reduction protocol (similar to the one given in \cite{TCC:RotRot20}) to reduce the number of instances from $\ell\cdot k/d$ to $\ell\cdot k/d\cdot 1/d'=k/(d^2)$.

\begin{remark}
    We note that in our actual batch $\UIP$ protocol, for the sake of simplicity, we use distance $d'=\ell\cdot \polylog(k)$ from the beginning of the protocol, and do not treat the initial reduction, from a batch instance $(x_1,\ldots,x_k)\in \cL^{\otimes k}$ to an instance in $\cL'$, differently.
\end{remark}

\subsection{Instance Reduction}\label{sec:overview:RR}
\paragraph{The Need for Execution-wise Distance}   
Recall that the distance guarantee provided by our distance generation protocol, described in \Cref{sec:DistanceOver}, is that if we start with a false statement on say $k'$ instances, then after the distance generation protocol (which includes the checksum protocol followed by the $\GKR$ protocol), we obtain a $\pval$ constraint $\pval(\bm j, \bm v)$ that is not satisfied by any 
transcript of answers that differs on at most $d'$ executions from the prescribed transcript of answers (with respect to fixed verifier's queries). For the next step, we would like to use the instance reduction protocol from \cite{TCC:RotRot20} to further reduce this to a false claim on a subset of $k'/d'$ instances (at the cost of losing our distance guarantee).

Unfortunately, the \cite{TCC:RotRot20} protocol is guaranteed to produce a false claim on a subset of $k'/d'$ instances, only if the original $\pval$ constraint $\pval(\bm j, \bm v)$ is not satisfied by any 
transcript that differs from the prescribed transcript in at most $(d'\cdot \ell\cdot a)$-bits!  Specifically, the issue is that these two protocols use a different distance measure, and these are not compatible: One guarantees execution-wise distance while the other requires Hamming distance. 

We note that being $d'$-far in execution distance trivially implies being $d'$-far in Hamming distance.  However, to reduce the number of instances by a $d'$-factor, the \cite{TCC:RotRot20} protocol requires the Hamming distance to be $d'\cdot \ell\cdot a$, where $\ell\cdot a$ is the length of a single transcript.  This requirement is much stronger than requiring the execution distance to be~$d'$.  In other words, if we only have the guarantee that the execution distance is $d'$, then applying \cite{TCC:RotRot20} will result with the number of instances being $k'\cdot \frac{d'}{\ell\cdot a}$
, which offers no decrease in the number of instances, unless we take $d'$ to be larger than $\ell\cdot a$, in which case we invest $d'\cdot \ell\cdot a\cdot \polylog(k)$ in communication to decrease the number of instances by a factor of $d'$. This will result in a suboptimal batch $\UIP$ result. While further optimizations are possible here, it seems that any purely Hamming distance based approach will result in a larger overhead than what we achieve.

Instead, we choose to make the two distance measures compatible by changing the instance reduction protocol to guarantee that if the $\pval$ constraint is not satisfied by any transcript that is $d'$-close in execution distance to the prescribed one, then it results with a false claim on a subset of $k'/d'$ instances.  We next elaborate on how this is done.

\paragraph{Overview of the original instance reduction protocol for $\PVAL$ from \cite{TCC:RotRot20}}
The verifier in the instance reduction protocol of \cite{TCC:RotRot20} takes as input $\bm j, \bm v$ that defines the $\pval$ constraint.
In the YES case, the claim $\bm w \in \pval(\bm j, \bm v)$ is true, where  $\bm w$ is the implicit
vector of unique witnesses $\bm w = (w_1,\ldots,w_k) \in (\bin^m)^k$.
In the NO case, $\bm w$ as a string in $\bin^{k \cdot m}$ is $d$-far in Hamming distance from $\pval(\bm j,\bm v)$ with $d = m \cdot \polylog(k)$.

They show how to subsample a $\frac{1}{d}$ fraction of the instances together with a new false $\pval$ claim on the corresponding witnesses.  
This is not straightforward since the initial constraint is a global constraint, and cannot be applied to each witness $w_i$ separately.  
The authors in \cite{TCC:RotRot20} propose to transform the global constraint into a bunch of localized constraints,
each defined over $\frac{1}{d}$ fraction of the instances. 
In order to obtain these localized claims,
the \cite{TCC:RotRot20} protocol proceeds in at most $\log k$ iterations as follows:
\begin{enumerate}[label=\arabic*.]
    \item The multi-linearity of $\pval(\bm j,\bm v)$
    allows the prover to break the $\pval$ claim into two halves: $\pval_0$ and $\pval_1$, where each claim is over $k/2$ witnesses.  Specifically, the claim $\pval_b$ is over ${\bm w}_b$ where ${\bm w}_0$ and ${\bm w}_1$ are defined by: 
    \ifFOCS
    \begin{align*}
    {\bm w}_0&=(w_1,\ldots, w_{k/2}),\\
    \text{ and } {\bm w}_1&=(w_{k/2+1},\ldots, w_{k}).
    \end{align*}
    \else
    \[{\bm w}_0=(w_1,\ldots, w_{k/2})~~\mbox{ and }~~ {\bm w}_1=(w_{k/2+1},\ldots, w_{k}).\]
    \fi

    This results in $(\bm w_0,\bm w_1)$ being $d$-far from the new claim $\pval_0\times \pval_1$. 
    For this overview, assume that these $w_0$ and $w_1$ are both $d/2$ far from the respective $\pval$ instances.

    \item 
    The next step would be to try to fold these two claims into a single claim of ``size $k/2$'' {\em while keeping the absolute distance to be~$d$}.
    
    A naive folding by taking a random linear combination of $\bm w_0$ and $\bm w_1$ along with the corresponding linear combination of $\pval_0$ and $\pval_1$ may reduce the absolute distance to $d/2$ since the errors in ${\bm w}_0$ and ${\bm w}_1$ may overlap.  Instead \cite{TCC:RotRot20} applies a more sophisticated folding, as follows.
    
    \item The verifier uses a \textit{pairwise-independent} family $\pi$ to transform $\bm w_1$ and $\pval_1$ into a new claim $\pval_1'$ about $\pi(\bm w_1)$. The purpose is to scramble the $d/2$ errors that prevent $\bm w_1$ from satisfying $\pval_1$.
    
    \item The verifier ``folds'' the two claims into one,
    by taking a random linear combination of $\bm w_0$ and $\pi(\bm w_1)$,
    and a random linear combination of $\pval_0$ and $\pval_1'$
    to get a new claim that is also roughly $d$-far.
    
\end{enumerate}
Each iteration allows us to reduce a $d$-far claim to another $d$-far claim but on a ``folded'' instance of half the size.
The verifier repeats this until it is left with a $d$-far claim over roughly $d$ ``folded'' witnesses. 
Then the verifier kindly asks the prover to send the batch of ``folded'' witnesses $\bm w^* = (\bm w^*_1, \ldots, \bm w^*_{d})$ in the clear,
which requires $d\cdot m$ communication.
Then it checks whether the batch of the ``folded'' witnesses $\bm w^*$ satisfies the final ``folded'' $\pval$ claim. 
If not, it rejects immediately. 
Otherwise, 
in the NO case,
$\bm w^\ast = (\bm w^\ast_1,\ldots, \bm w^\ast_{d})$ is still $d$-far from any $\bm w^*$ that satisfies the final $\pval$ claim,
so the sent $\bm w^*$ must differ from $\bm w^\ast$ on at least $d$ instances.
Therefore,
the verifier can indeed subsample constantly many of these instances $\bm w^*_i$ to detect an inconsistency with high probability, and the inconsistency depends only on the ($k/d$) witnesses folded in $\bm w^*_i$.

\paragraph{Working with execution-wise distance} 
We generalize (and tweak) the protocol from \cite{TCC:RotRot20} for Hamming distance to work for the execution-wise distance.  
The modification of the folding step is illustrated in \Cref{fig:folding}.
To this end, 
    we view the batch of transcripts $\bm a$ as a matrix where each row corresponds to a transcript of a single instance (\Cref{fig:a-matrix}).
    We split $\bm a$ into its top half submatrix $\bm a_0$ and bottom half submatrix $\bm a_1$ (\Cref{fig:a-matrix-split}).
    Execution-wise distance is preserved in the folding step by choosing a sufficiently random permutation $\pi$ to scramble each column of $\bm a_1$ independently (\Cref{fig:a-matrix-permuted}).
    (Note that this is different from scrambling the entire $\bm a_1$, which moves errors across columns, as in the original protocol from \cite{TCC:RotRot20}.)

    \ifFOCS
    \input{FOCS-permutation-picture.tex}
    \else
    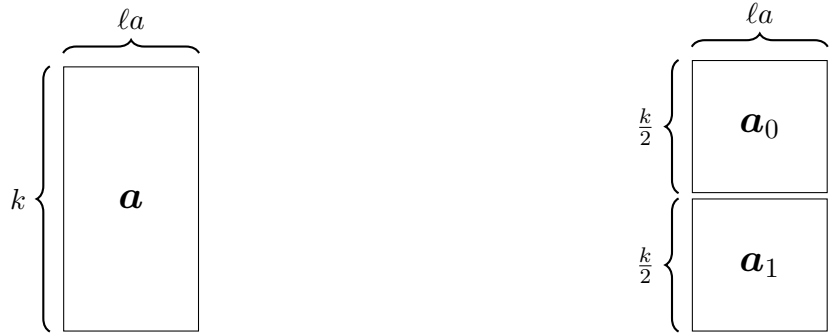
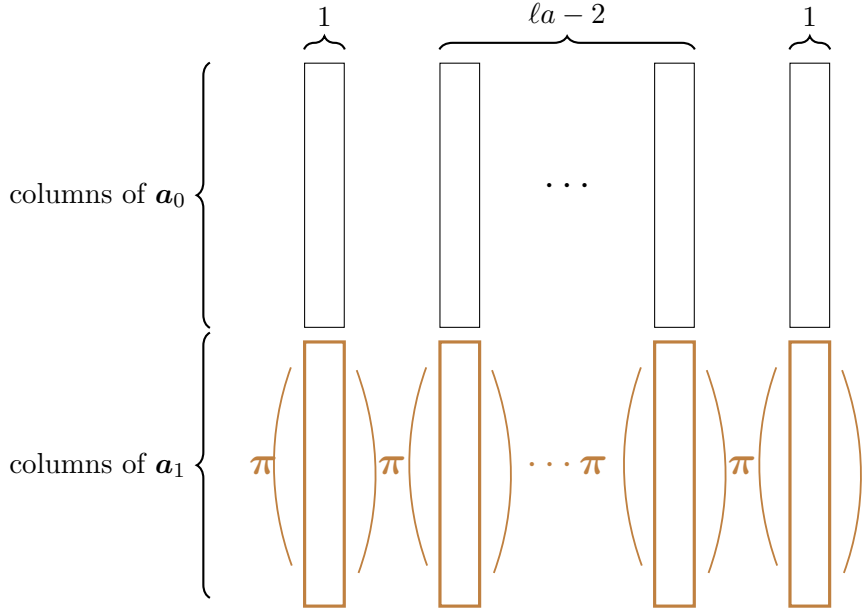
\begin{figure}[H]
    \begin{subfigure}[b]{0.5\textwidth}
        \centering
        \begin{tikzpicture}
            \matrix(m)[matrix of math nodes,
                    nodes={draw, minimum width=51pt, minimum height=100pt, font=\fontsize{15}{18}\selectfont},
                    ]{
                    \bm a\\
                };
            \draw[decorate,
                decoration={
                    mirror,
                    brace,
                    amplitude = 5pt ,
                    raise = 5pt,
                },
                thick,
            ] (m-1-1.north west) -- (m-1-1.south west) node [midway, left=10pt] {$k$};
            \draw[decorate,
                decoration={
                    brace,
                    amplitude = 5pt ,
                    raise = 5pt,
                },
                thick,
            ] (m-1-1.north west) -- (m-1-1.north east) node [midway, above=10pt] {$\ell a$};
        \end{tikzpicture}
        \subcaption{$\bm a$'s rows are the transcripts of the instances.}
        \label{fig:a-matrix}
        \end{subfigure}
        \begin{subfigure}[b]{0.5\textwidth}
            \centering
        \begin{tikzpicture}[decoration=brace]
            \matrix(m)[matrix of math nodes,
                    row sep=2pt,
                    nodes={draw, minimum width=51pt, minimum height=50pt, font=\fontsize{15}{18}\selectfont}]{
                    \bm a_0 \\
                    \bm a_1 \\
                };
            \draw[decorate,
                decoration={
                    mirror,
                    brace,
                    amplitude = 5pt ,
                    raise = 5pt,
                },
                thick,
            ] (m-1-1.north west) -- (m-1-1.south west) node [midway, left=10pt] {$\frac{k}{2}$};
            \draw[decorate,
                decoration={
                    mirror,
                    brace,
                    amplitude = 5pt ,
                    raise = 5pt,
                },
                thick,
            ] (m-2-1.north west) -- (m-2-1.south west) node [midway, left=10pt] {$\frac{k}{2}$};
            \draw[decorate,
                decoration={
                    brace,
                    amplitude = 5pt ,
                    raise = 5pt,
                },
                thick,
            ] (m-1-1.north west) -- (m-1-1.north east) node [midway, above=10pt] {$\ell a$};
        \end{tikzpicture}
        \subcaption{$\bm a$ split into two halves $\bm a_0$ and $\bm a_1$.}
        \label{fig:a-matrix-split}
    \end{subfigure}
    \begin{subfigure}[b]{\textwidth}
        \centering
        \begin{tikzpicture}[decoration=brace]
            \matrix(m)[matrix of math nodes,
                    row sep=2pt,
                    nodes={draw, minimum width=15pt, minimum height=100pt, font=\fontsize{15}{18}\selectfont},
                    column 1/.style={nodes={minimum width=30pt}},
                    column 2/.style={nodes={minimum width=15pt}}, 
                    column 3/.style={nodes={minimum width=35pt}},
                    column 4/.style={nodes={minimum width=15pt}},
                    column 5/.style={nodes={minimum width=65pt}},
                    column 6/.style={nodes={minimum width=15pt}},
                    column 7/.style={nodes={minimum width=35pt}},
                    column 8/.style={nodes={minimum width=15pt}},
                    column 9/.style={nodes={minimum width=30pt}}]{
                    |[draw=none]| \ & \ & |[draw=none]| \ & \ & |[draw=none]| \cdots & \ & |[draw=none]| \ & \ & |[draw=none]| \ \\
                    |[draw=none]| \bm{\textcolor{brown}{\pi}} & |[draw=brown, line width=1.2pt]| & |[draw=none]| \bm{\textcolor{brown}{\pi}} & |[draw=brown, line width=1.2pt]| & |[draw=none]| \textcolor{brown}{\cdots \bm\pi} & |[draw=brown, line width=1.2pt]| \ & |[draw=none]| \bm{\textcolor{brown}{\pi}} & |[draw=brown, line width=1.2pt]| & |[draw=none]| \ \\
                };
                \draw[brown, line width=0.7pt, bend right=20, shorten >=13pt, shorten <=14pt] (m-2-1.north east) to (m-2-1.south east);
                \draw[brown, line width=0.7pt, bend left=20, shorten >=14pt, shorten <=12pt] (m-2-2.north east) to (m-2-2.south east);
                \draw[brown, line width=0.7pt, bend right=20, shorten >=13pt, shorten <=14pt] (m-2-3.north east) to (m-2-3.south east);
                \draw[brown, line width=0.7pt, bend left=20, shorten >=14pt, shorten <=12pt] (m-2-4.north east) to (m-2-4.south east);
                \draw[brown, line width=0.7pt, bend right=20, shorten >=13pt, shorten <=14pt] (m-2-5.north east) to (m-2-5.south east);
                \draw[brown, line width=0.7pt, bend left=20, shorten >=14pt, shorten <=12pt] (m-2-6.north east) to (m-2-6.south east);
                \draw[brown, line width=0.7pt, bend right=20, shorten >=13pt, shorten <=14pt] (m-2-7.north east) to (m-2-7.south east);
                \draw[brown, line width=0.7pt, bend left=20, shorten >=14pt, shorten <=12pt] (m-2-8.north east) to (m-2-8.south east);
            \draw[decorate,
                decoration={
                    mirror,
                    brace,
                    amplitude = 5pt ,
                    raise = 5pt,
                },
                thick,
            ] (m-1-1.north west) -- (m-1-1.south west) node [midway, left=10pt] {columns of $\bm a_0$};
            \draw[decorate,
                decoration={
                    mirror,
                    brace,
                    amplitude = 5pt ,
                    raise = 5pt,
                },
                thick,
            ] (m-2-1.north west) -- (m-2-1.south west) node [midway, left=10pt] {columns of $\bm a_1$};
            \draw[decorate,
            decoration={
                brace,
                amplitude = 5pt,
                raise = 5pt,
            },
            thick,
            ] (m-1-2.north west) -- (m-1-2.north east) node [midway, above=10pt] {$1$};
            \draw[decorate,
            decoration={
                brace,
                amplitude = 5pt,
                raise = 5pt,
            },
            thick,
            ] (m-1-4.north west) -- (m-1-6.north east) node [midway, above=10pt] {$\ell a - 2$};
            \draw[decorate,
                decoration={
                    brace,
                    amplitude = 5pt,
                    raise = 5pt,
                },
                thick,
            ] (m-1-8.north west) -- (m-1-8.north east) node [midway, above=10pt] {$1$};
        \end{tikzpicture} 
        \subcaption{The same permutation \bm{\textcolor{brown}{$\pi$}}$ : \bin^{\frac{k}{2}} \to \bin^{\frac{k}{2}}$ is applied to the columns of $\bm a_1$.}
        \label{fig:a-matrix-permuted}
    \end{subfigure}
    \caption{Our folding step works with execution-wise distance by permuting each column randomly.}
    \label{fig:folding}
\end{figure}
    \fi

    We show that this protocol (as well as the \cite{TCC:RotRot20} protocol) also has unambiguous soundness when the underlying field is large. 
    Thus, it can also be used as a building block for a \UIP protocol, leading to many of our applications.

\subsection{Efficiency Gain over the RRR16 Protocol}\label{overview:RRR}
The \bUIP protocol of \cite{STOC:ReiRotRot16} has communication complexity at least 
\[\batchK \cdot (\ell/\epsilon)^{\poly(1/\epsilon)} + \poly(a) \cdot (\ell/\epsilon)^{\poly(1/\epsilon)} \cdot \batchK^{\epsilon}\]
for an arbitrarily small constant $\epsilon > 0$.
On a closer inspection,
their protocol allows the choice of a slightly sub-constant $\epsilon = \epsilon(\batchK)$,
at the cost of creating a super-polynomial dependency on $\ell$ (i.e. the terms involving $\ell^{1/\epsilon}$) in the communication complexity.
In contrast,
our \bUIP always achieves a communication complexity of $O(a \cdot \ell^2 \cdot \polylog(\batchK) + \poly(\ell))$.

Both \bUIP protocols use the idea of running many protocols in a checksum and running the random continuation protocol (also in a checksum).  These are used to enforce that the cheating prover will be caught if it deviates from the protocol only in a few instances. Nevertheless, our protocol is very different in how the verifier catches the prover if it indeed only deviated in a few instances. 
While the verifier in \cite{STOC:ReiRotRot16} checks explicitly on its own that this is the case, by asking the prover to reveal some of these check-summed messages,  
the verifier in our protocol delegates these checks  back to the prover (via the \GKR protocol).
This reduces the overall costs significantly.

Specifically,
in the protocol of \cite{STOC:ReiRotRot16}, the prover first encodes the base \UIP into an (unambiguous) \emph{Interactive Oracle Proof} (\IOP).
In such a system,
the prover and verifier interact,
with the verifier sending random coins and receiving answers from the prover.
The verifier does not read the prover's messages immediately,
and only after the interaction is complete, it makes its verdict based on a set of random queries into very few locations of the stored transcript.
Similar to our usage,
the prover and verifier run the $\batchK$ instances of the \IOP implicitly,
with the prover only sending checksums of the underlying messages.

Importantly, after the interaction,
in \cite{STOC:ReiRotRot16}, the verifier sends the set of query locations that it would like to check in the \IOP to the prover,
and the prover responds 
with {\em all} the messages on these locations.
In addition to checking that all $\batchK$ instances of the \IOP are accepting,
the verifier also confirms that the checksums on those messages agree with what the prover committed to during the interaction. Note that this requires communicating  $\Omega(\batchK)$ bits.

The guarantee they obtain is that if originally there was at least one false statement, then if the verifier accepts the openings then it must be the case that the prover deviated on many instances of the \IOP,
so the verifier can subsample the instances to reduce the number of instances.

At this point, the random queries of the verifier in the original $\IOP$ have been revealed through the interaction,
and to remedy this, the verifier in \cite{STOC:ReiRotRot16} runs the \IOP corresponding to a random continuation of the base \IOP (constructed from the base \UIP) in the next distance-creation (checksum) phase.
The random continuation always incurs a multiplicative factor of $\ell$ blowup in the \IOP,
so if this process is iterated for $1/\epsilon$ times, the transcript length of the final \IOP is $\poly(a)\cdot \ell^{1/\epsilon}$.

In our protocol,
the verifier avoids this accumulation of the factor $\ell$ in the transcript length by instead ``flattening'' the transcripts produced by the random continuation. Namely, 
instead of treating the implicit transcripts as $\batchK$ transcripts of the random continuation protocol (where in each round the prover sends a message of length $a \cdot \ell$),
it treats them as $\batchK \cdot \ell$ instances of the \emph{base \UIP}.
This is desirable because if we generate a distance of $d \ge \ell\cdot\polylog(\batchK)$ by the checksum,
we can offset this factor of $\ell$ in the number of instances in the instance reduction step.
A caveat is that the verifier should additionally check a joint constraint over each of the $\ell$ hybrid transcripts derived from the same instance by the random continuation,
namely that they have shared prefixes.
This check can be delegated to the prover via the \GKR protocol.
This optimization is not possible under the framework of \IOP of \cite{STOC:ReiRotRot16},
since a priori the \IOP of the random continuation protocol cannot be flattened into $\ell$ independent \IOP's.

\subsection{Doubly Efficient Interactive Proofs}
Our \bUIP protocol can be used to obtain new (and more powerful) doubly-efficient interactive proofs for bounded-space computations,
by applying a similar technique as that presented in \cite{STOC:ReiRotRot16}.
Intuitively, any \bUIP protocol can be used to obtain a doubly-efficient interactive proof for poly-space computations.

Formally,
let $\cL\in \TS$ be a language that is decidable by a Turing machine $\cM$ in time $T(n)$ and space $S(n)$.
Let $t = t(n) \le T(n)$ be a parameter,
and let $\cL_t$ be the set of all tuples $(x, w_1, w_2) \in \bin^n \times \bin^{O(S(n))} \times \bin^{O(S(n))} $ such that the Turing machine $\cM$ transitions from configuration $w_1$ to configuration $w_2$ on input $x$ in exactly $t$ steps.

Crucially,
for every $\batchK = \batchK(n) \in \NN$,
the language $\cL_{\batchK \cdot t}$ can be expressed as a batch of $\cL_{t}$.
Specifically, for input $(x, w_0, w_{\batchK \cdot t})$:
\begin{itemize}
    \item The prover runs $\cM$ on input $x$ for $\batchK \cdot t$ steps,
    and obtains the intermediate configurations $w_1, w_2, \ldots, w_{\batchK \cdot t}$. (Note this step can be skipped if the prover already has the intermediate configurations.)
    \item The prover sends the $\batchK$ intermediate configurations $w_t, w_{2t}, \ldots, w_{\batchK \cdot t}$ to the verifier.
    \item Both parties run the \bUIP protocol on the $\batchK$ instances $\{(x, w_{(i-1) \cdot t}, w_{i \cdot t})\}_{i=1}^{\batchK}$ of $\cL_{t}$,
    and the verifier accepts if and only the \bUIP protocol accepts.
\end{itemize}
Therefore,
an efficient batching protocol such as our protocol allows the verifier to verify $(\batchK \cdot t)$-step transitions by batch-verifying $t$-step transitions,
at the cost of a small (polylogarithmic) increase in complexity.
By repeating this process with a careful balancing of the parameters,
the verifier gradually increases the transition length $t$ that it can verify,
until it reaches $\cL_{T(n)}$.

We show that with $\batchK = \poly(n)$,
this process can be repeated for $O(\sqrt{\frac{\log n}{\log \log n}})$ times
before the communication complexity becomes superpolynomial,
obtaining a doubly-efficient interactive proof for languages in $\TS$ where $T = n^{O\left(\sqrt{\frac{\log n}{\log \log n}}\right)}$ and $S = \poly(n)$,
where the prover runtime is $T \cdot \poly(n)$,
and the verifier runtime is $\poly(n)$.  

\section{Preliminaries}
We adhere to the following convention when possible.
\begin{itemize}
    \item Lower case letters $a,b$  mean scalars, 
        while bolded lower case letters $\bm a, \bm b$ mean vectors. 
        Upper case bold letters like $\bm A, \bm B$ are matrices.
    \item For an integer $n$, we denote by $[n]$ the set $\{1,2,\ldots,n\}$.
        We let $\FF$ denote a finite field.
        $\GF(2)$ denotes the finite field with 2 elements.
    \item
        For a vector $\bm u \in \FF^n$,
        and a subset $\cS \subset [n]$,
        $\bm u\vert_\cS \in \FF^{\abs{\cS}}$ denotes the subvector of $\bm u$ indexed by $\cS$.
        For a sequence of vectors $\set{\bm u_i}_{i \in \I}$, where each $\bm u_i\in \FF^n$, 
        the notation $(\bm u_i)_{i \in \I} \in \FF^{n \times |\I|}$ represents the matrix whose columns are the vectors $\bm u_i$ for every $i \in \I$.
    \item
        Given a matrix $\bm A = (\bm a_1,\ldots,\bm a_n) \in \FF^{m \times n}$, 
        $\bm a_j \in \FF^m$ denotes its $j$-th column.  
        We also use $\bm A[i, :]$ to denote the $i$-th row of $A$.
        For a subset $\cS \subset [m]$,
        $\bm A[\cS,:]$ denotes the submatrix of $\bm A$ consisting of rows indexed by $\cS$.
        We also consider matrices $A \subset \FF^{k \times n}$ whose rows are indexed by an ordered set of $k$ pairs,
        $\mathcal{T} = \{(u^1, v^1), \ldots, (u^k, v^k)\}$,
        and we denote by $\bm A[(u^i, v^i), :]$ the row of $\bm A$ indexed by $(u^i, v^i) \in \mathcal{T}$,
        and for $\cS \subset \mathcal{T}$, $\bm A[\cS, :]$ denotes the submatrix of $\bm A$ consisting of rows indexed by $\cS$.
    \item
        $\Delta(\bm u,\bm v)$ is the (absolute) Hamming distance between the vectors $\bm u$ and $\bm v$.
        Denote by $\Delta(\bm u)$ the vector $\bm u$'s \emph{Hamming weight},
        defined to be the number of non-zero elements in $\bm u$.
        For any $d \in \NN$,
        two vectors $\bm u, \bm v$ are $d$-close (in Hamming distance) if $\Delta(\bm u,\bm v) \le d$.
        On a linear space $\cU\subset \FF^a$,
        let $\Delta(\cU) \coloneqq \min_{\bm u \in \cU, \bm u \neq \bm 0}\Delta(\bm u)$.
    \item
        Given a Boolean circuit $\cV$,
        $\size(\cV)$ is the number of gates in the circuit,
        and $\depth(\cV)$ is the depth of the circuit. 
\end{itemize}

An important distance notion that we consider is the ``column distance,'' denoted by \emph{$\Delta_c$}, between matrices.
\begin{definition}[$\Delta_c$-distance]
\label{def:Deltac-dist}
Let $\FF$ be a finite field.
For matrices $\bm A = (\transcMat_1,\ldots, \transcMat_\ncol) \in \FF^{\K \times \ncol}$ and $\bm B = (\bm b_1, \ldots, \bm b_\ncol)\in \FF^{\K \times \ncol}$,
the \emph{$\Delta_c$-distance} between $\bm A$ and $\bm B$, denoted by  $\Delta_c(\bm A, \bm B)$, is the maximum Hamming distance between corresponding columns of $\bm A$ and $\bm B$,
i.e. 
\[\Delta_c(\bm A, \bm B) = \max_{i \in [\ncol]}\Delta(\transcMat_i, \bm b_i).
\]
If $\Delta_c(\bm A, \bm B) \le d$,
we say $\bm A$ and $\bm B$ are \emph{$\Delta_c$-$d$-close},

Furthermore, let $d \in \NN$ be some distance parameter,
then given $\bm A \in \FF^{\K \times \ncol}$, 
its \emph{$\Delta_c$-$d$-ball} is defined to be 
\[\rowball(\bm A) \coloneqq \set{\bm A' \in \FF^{\K \times \ncol}: \Delta_c(\bm A, \bm A') \le d}.
\]
\end{definition}

\begin{remark}
  Observe that $\Delta_c$ is a (non-tight) lower bound on how many rows have to be modified to transform one matrix into another.  
\end{remark}

\subsection{Low Degree Extensions and Polynomial Valuation (\pval)}
\label{sec:LDE}
All finite fields $\FF$ considered in this work are always \emph{constructible} in the following sense.
\begin{definition}[Constructible Field Ensemble]
We say a field ensemble $\FF = (\FF_n)_{n \in \NN}$ is \emph{constructible} if every element in $\FF_n$ has a
$O(\log{\abs{\FF_n}})$-bit representation,
and addition, multiplication, and inverses can be computed in $\polylog(\abs{\FF_n})$ time given the representations.  
\end{definition}
It is well known that for every $S = S(n)$,
constructible field ensembles $\FF = (\FF_n)$ with $\abs{\FF_n} = \Theta(S)$ exist.
Moreover,
we make the (mild) assumption that addition, multiplication and inverses can be computed in $\tO(\log \abs{\FF_n})$ time,
where $\tO$ omits $\poly\log\log(\abs{\FF_n})$ factors. 
This implies that degree-$d$ polynomials in $\FF[X]$ can be evaluated in $\tO(d\log\abs{\FF_n})$ time (with $\tO$ omitting $\polylog(d, \log \abs{\FF_n})$ factors)\footnote{Both these assumptions can also be relaxed by further delegating operations over large fields. For simplicity, we do not deal with that here.}. These properties are satisfied in fields that support FFT. (See table $8.6$ in \cite{vzGG13})

We utilize the Schwartz-Zippel lemma,
a basic fact about low-degree polynomials over finite fields.
\begin{lemma}[Schwartz-Zippel Lemma, \cite{Zippel79,Schwartz80}]
    Let $\FF$ be a field,
    let  $m \in \NN$, and let $f \in \FF[X_1,\ldots,X_m]$ be a non-zero polynomial of total degree $d$. 
    Then 
    
    \[\Pr_{\bm r\gets \FF^m}[f(\bm r) = 0] \le \frac{d}{\abs{\FF}}.\]
    \label{lem:SZ}
\end{lemma}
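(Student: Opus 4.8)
The plan is to prove the Schwartz--Zippel lemma by induction on the number of variables $m$, treating the univariate case as the base case and peeling off one variable at a time in the inductive step.

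For the base case $m=1$, I would invoke the fundamental fact that a nonzero univariate polynomial $f \in \FF[X_1]$ of degree $d$ has at most $d$ roots in $\FF$ (since $\FF$ is a field, hence an integral domain, so $(X_1 - \alpha)$ divides $f$ for each distinct root $\alpha$, and a product of $d+1$ such coprime linear factors would have degree exceeding $d$). Consequently $\Pr_{r \gets \FF}[f(r) = 0] \le d/\abs{\FF}$, which is the claimed bound. If $f$ is a nonzero constant, it has no roots and the bound holds trivially.

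For the inductive step, assume the statement holds for all polynomials in at most $m-1$ variables. Given a nonzero $f \in \FF[X_1,\ldots,X_m]$ of total degree $d$, write it as a polynomial in $X_m$ with coefficients in $\FF[X_1,\ldots,X_{m-1}]$:
\[
f(X_1,\ldots,X_m) \;=\; \sum_{i=0}^{k} X_m^{\,i}\cdot g_i(X_1,\ldots,X_{m-1}),
\]
where $k$ is the largest exponent of $X_m$ whose coefficient is nonzero, so $g_k \not\equiv 0$. By inspecting monomials, $\deg(g_k) \le d - k$. Now sample $\bm r = (r_1,\ldots,r_m) \gets \FF^m$ and split on the event $E$ that $g_k(r_1,\ldots,r_{m-1}) = 0$. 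By the induction hypothesis applied to $g_k$, $\Pr[E] \le (d-k)/\abs{\FF}$. Conditioned on $\lnot E$, the restricted polynomial $f(r_1,\ldots,r_{m-1},X_m) \in \FF[X_m]$ is nonzero of degree exactly $k$, so by the base-case argument $\Pr[f(\bm r) = 0 \mid \lnot E] \le k/\abs{\FF}$. A union bound then gives
\[
\Pr[f(\bm r) = 0] \;\le\; \Pr[E] + \Pr\big[f(\bm r)=0 \mid \lnot E\big] \;\le\; \frac{d-k}{\abs{\FF}} + \frac{k}{\abs{\FF}} \;=\; \frac{d}{\abs{\FF}},
\]
completing the induction.

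There is no real obstacle here; the only point requiring a little care is the bookkeeping on degrees --- verifying that $\deg(g_k) \le d-k$ so that the induction hypothesis is applied with the correct degree parameter, and handling the degenerate sub-cases (constant $f$, or $k = 0$) so the union bound argument is valid in all cases. Everything else is routine.
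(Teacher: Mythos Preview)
Your proof is correct and is precisely the standard inductive argument for the Schwartz--Zippel lemma. The paper itself does not give a proof of this lemma at all---it simply states it with a citation to \cite{Zippel79,Schwartz80} and uses it as a black box---so there is nothing to compare against; your argument is exactly the classical one found in those references.
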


Given a subset $H \subset \FF$
and an integer $\lgK \in \NN$,
the \emph{low-degree extension} (LDE) of a function $f : H^\lgK \to \FF$ is the unique $(\abs{H} - 1)$-individual-degree polynomial $\hat {f}: \FF^{\lgK} \to \FF$ such that for all $\bm s \in H^\lgK$, $\hat {f}(\bm s) = f(\bm s)$.
In this work we focus on the special case when $H = \{0,1\}$,
where $\hat {f}$ is multilinear over $\FF^\lgK$,
and is called the multilinear extension of $f$.
Abusing the notation, 
given a string $x \in \FF^{2^\lgK}$,
$\hat{x}: \FF^\lgK \to \FF$ denotes the multilinear extension of the function $f_{x} : \{0,1\}^\lgK \to \FF$ whose function table is $x$.
On a sequence $\bm \pvalU = (\bm \pvalU_1, \ldots, \bm \pvalU_\pvalT)\in (\FF^\lgK)^{\pvalT}$ of length $\pvalT$,
we use the shorthand $\hat x(\bm \pvalU) \coloneqq (\hat x(\bm \pvalU_1),\ldots,\hat x(\bm \pvalU_\pvalT)) \in \FF^\pvalT$.

Consider the following \emph{Polynomial Valuation} (\pval) set, which is an affine subspace over $\FF$, first defined in~\cite{STOC:RotVadWig13}.

\begin{definition}[The \pval set]
\label{def:pval}
    Let $m, \pvalT \in \NN$.
    The set $\pvalF(\bm \pvalU, \bm \pvalv) \subset \FF^{2^\lgK}$ is parameterized by the sequences 
    $\bm \pvalU = (\pvalU_1,\ldots,\pvalU_\pvalT) \in (\FF^m)^\pvalT$ and $\bm \pvalv = (\pvalv_1,\ldots,\pvalv_\pvalT) \in \FF^\pvalT$. 
    It consists of all strings $x \in \FF^{2^m}$ whose corresponding multilinear extension $\hat x: \FF^m \to \FF$ satisfies $\hat x(\bm \pvalU) = \bm \pvalv$.
\end{definition}
Note that $\pval(\bm \pvalU, \bm 0)$ is a linear subspace of $\FF^{2^\lgK}$,
so we can define $\Delta(\pval(\bm \pvalU, \bm 0))$ to be the minimum Hamming distance of a non-zero vector in $\pval(\bm \pvalU, \bm 0)$, 
and the following proposition holds.
\begin{proposition}[Random \pval kernel has large distance]
    Let $\secpar, d, \lgK \in \NN$.
    Suppose $\FF$ is a field of characteristic $2$,
    $\abs{\FF} \ge 2 \cdot \lgK$,
    and $T \ge 2d(\lgK + \Flog) + \secpar$.
    Then,
    \[\Pr_{\bm \pvalU  \gets (\FF^\lgK)^T}[\Delta(\pval(\bm \pvalU, \bm 0)) \le 2d]\leq 2^{-\secpar}.\]
    \label{lem:pval}
\end{proposition}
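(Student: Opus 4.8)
The plan is to run the standard argument that a random parity-check matrix yields a code of large minimum distance, the only twist being that the rows of the relevant matrix here are not uniform but are ``almost uniform'' in the sense supplied by Schwartz--Zippel (\Cref{lem:SZ}). I would begin by recasting the bad event combinatorially. Let $\chi_b(\bm z) \coloneqq \prod_{j=1}^{\lgK}\bigl(b_j z_j + (1-b_j)(1-z_j)\bigr)$ be the multilinear Lagrange indicator of a point $b \in \{0,1\}^{\lgK}$, so that $\hat x(\bm z) = \sum_{b \in \{0,1\}^{\lgK}} x_b\,\chi_b(\bm z)$ for every $x \in \FF^{2^{\lgK}}$; then $x \in \pval(\bm\pvalU,\bm 0)$ exactly when $\sum_{b} x_b\,\chi_b(\bm\pvalU_i)=0$ for all $i \in [T]$. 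Hence $\Delta(\pval(\bm\pvalU,\bm 0)) \le 2d$ holds iff there exist a nonempty set $S \subseteq \{0,1\}^{\lgK}$ with $\abs{S} \le 2d$ and a nonzero $x \in \FF^{S}$ with $\sum_{b \in S} x_b\,\chi_b(\bm\pvalU_i) = 0$ for all $i$. Since the $\chi_b$ are linearly independent over $\FF$ (because $\chi_b(b')$ equals $1$ if $b=b'$ and $0$ otherwise, for $b,b' \in \{0,1\}^{\lgK}$), for every such $x$ the polynomial $P_x \coloneqq \sum_{b \in S} x_b\,\chi_b$ is a \emph{nonzero} multilinear polynomial in $\lgK$ variables, hence of total degree at most $\lgK$, with $P_x(\bm\pvalU_i)=0$ for all $i$.

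The rest is a union bound. For a fixed $x$ as above, the points $\bm\pvalU_1,\dots,\bm\pvalU_T$ are i.i.d.\ uniform over $\FF^{\lgK}$, so by \Cref{lem:SZ} each event $P_x(\bm\pvalU_i)=0$ has probability at most $\lgK/\abs{\FF} \le 1/2$ (using $\abs{\FF}\ge 2\lgK$), whence $\Pr[\,P_x(\bm\pvalU_i)=0 \text{ for all } i\,] \le 2^{-T}$. Summing over all $S$ with $1\le\abs{S}\le 2d$ and all nonzero $x\in\FF^{S}$ — of which there are at most $\abs{\FF}^{\abs{S}}$, or, counting only one-dimensional directions, at most $2\abs{\FF}^{\abs{S}-1}$ — and using the crude bound $\binom{2^{\lgK}}{w}\le 2^{\lgK w}$, I get
\[
\Pr\bigl[\Delta(\pval(\bm\pvalU,\bm 0)) \le 2d\bigr] \;\le\; \sum_{w=1}^{2d}\binom{2^{\lgK}}{w}\,\abs{\FF}^{w}\,2^{-T} \;\le\; \sum_{w=1}^{2d} 2^{\,w(\lgK + \Flog)}\,2^{-T}.
\]
Since $T \ge 2d(\lgK + \Flog) + \secpar$ and $w\le 2d$, every summand is at most $2^{-\secpar}$, and as the summands form a geometric progression with ratio $2^{\lgK+\Flog}\ge 2$ their sum is $O(2^{-\secpar})$; replacing the count of kernel vectors by the count of kernel directions, together with $\abs{\FF}\ge 2\lgK$, absorbs the leftover constant and yields the claimed $2^{-\secpar}$.

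The only part that needs genuine care is this last bit of bookkeeping: one must check that the $\Theta(d)$-term union bound and the $\abs{\FF}^{w}$ factor are swallowed by the slack in $T \ge 2d(\lgK+\Flog)+\secpar$, which is exactly why one should union over kernel \emph{directions} rather than all kernel vectors and invoke $\abs{\FF}\ge 2\lgK$ at the end. I do not expect any conceptual obstacle here — everything else is just the Schwartz--Zippel estimate combined with $\binom{2^{\lgK}}{w}\le 2^{\lgK w}$. (Note that the argument never uses that $\FF$ has characteristic $2$; that hypothesis is inherited from the context in which the proposition is applied.)
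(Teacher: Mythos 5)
Your proposal is correct and follows essentially the same approach as the paper: bound, via Schwartz--Zippel, the probability that a fixed nonzero low-weight vector lies in $\pval(\bm\pvalU,\bm 0)$, then union-bound over all such vectors. The Lagrange-basis decomposition and the per-weight geometric sum are cosmetic extras; the paper instead fixes $x$ directly (its multilinear extension $\hat x$ is already a nonzero degree-$\lgK$ polynomial) and uses the single upper bound $\binom{2^\lgK}{2d}\abs{\FF}^{2d}$ on the count of vectors of weight at most $2d$, which makes the final inequality $2^{2d(\lgK+\Flog)-T}\le 2^{-\secpar}$ fall out with no leftover constant to absorb.
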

\begin{proof}
    Fix a non-zero $\bm x \in \FF^{2^\lgK}$
    with Hamming weight at most $2d$.
    Since $\hat{\bm x}$  
    is a multilinear polynomial with total-degree $\lgK$ in $\FF[X_1,\ldots,X_\lgK]$,
    by \Cref{lem:SZ},
    the probability that $\hat x(\bm \pvalU_i) = 0$ is at most $\lgK/\abs{\FF}$.
    As $\pvalU$ contains $T$ independent samples,
    all of these are zero with probability at most $(\lgK/\abs{\FF})^{T}$.
    Therefore, $\bm x \in \pval(\bm \pvalU, \bm 0)$ with probability at most $(\lgK/\abs{\FF})^{T}$.

    There are at most $\binom{2^\lgK}{2d}\abs{\FF}^{2d}$ choices for $\bm x \in \FF^{2^\lgK}$ with Hamming weight at most $2d$,
    so by a union-bound and the assumption $T \geq 2d(\lgK + \Flog) + \secpar$,
    the probability that some of them are in $\pval(\bm \pvalU, \bm 0)$ is at most 
    \[
        \binom{2^\lgK}{2d}\abs{\FF}^{2d}\left(\frac{\lgK}{\abs{\FF}}\right)^{T} \le 2^{2dm} \cdot \abs{\FF}^{2d} \cdot 2^{-\pvalT}\le 2^{-\secpar}.\qedhere
    \]
\end{proof}
\subsection{Succinct Descriptions of Sets and Predicates}
\label{sec:desc}
\begin{definition}[Uniform Arithmetic Circuits]
    Let $\FF$ be a field.
    Let $C = \set{C_n : \FF^n \to \FF^m}_{n \in \NN}$ be a family of arithmetic circuits,  
    consisting of fan-in 2 \ADD and \MULT gates over a field $\FF$.
    For any $f = f(n)$, we say that $C$ is $f$-space uniform if there exists a fixed $O(f(n))$-space Turing machine $\cM$ that on input $1^n$ outputs the full description of the circuit $C_n$.
    When $n$ is clear from the context, we omit the subscript $n$ and write $C$ instead of $C_n$.
\end{definition}
An important special case is when $f(n) = \log(n)$, 
in which case we say that $C$ is log-space uniform.
\begin{remark}
    Any $f$-space uniform Boolean circuit $C$ can be trivially extended to by a $f$-space uniform arithmetic circuit $C'$ of the same size and depth over any field $\FF$.
\end{remark}
The following \emph{succinct descriptions} of sets can be used to recover the entire set.
\begin{definition}[Descriptions of Sets]
    A bit string $\lrag \cS \in \bin^B$ is a description of a set $\cS = \set{s_1, \ldots, s_k} \subset \bin^p$
    if there exists a (multi-output) $p$-space uniform
    $G: [k]\times\bin^{B} \to \bin^{p}$ of fan-in 2,
    such that $G(i, \lrag \cS) = s_i$ for all $i \in [k]$.
    The description is succinct if $\abs{\lrag \cS} = B < k \cdot p$.
    \label{def:setdesc}
\end{definition}

Similarly,
\emph{succinct descriptions} of functions can be used to configure a uniform circuit family to implement the function.
\begin{definition}[Descriptions of Functions]
    Let $\FF$ be a field.
    We say that $\lrag \C \in \bin^B$ is a description of a function $\C :\FF^n \to \FF$ if there exists a log-space uniform circuit $C:\FF^{n + B} \to \FF$ of fan-in 2 such that $C(x, \lrag{\C}) = \C(x)$ for all $x \in \FF^n$.
    The description is succinct if $\abs{\lrag \C} = B < \size(\C)$.
    \label{def:circdesc}
\end{definition}
The Turing machines that generate the uniform $G$ and $C$ in \Cref{def:setdesc,def:circdesc} take in their ``shape parameters'' $(k, 1^B, 1^p)$ and $(1^n, 1^B)$ as input, respectively.

\subsection{Unambiguous Interactive Proof (\UIP)}\label{sec:prelim:UIP}
An $(\ell, a, b, \Ptime, \Vtime, \Sigma)$-protocol is a \emph{public-coin,
$\ell$-round interactive protocol, with alphabet  $\Sigma$,
per-round prover message length $a$ and per-round verifier message length $b$, and prover runtime $\Ptime$ and verifier runtime $\Vtime$}.
Specifically, such a protocol consists of a pair of interacting Turing machines $\prot{}$ each having as input a string $x$ of length $n = \abs{x}$.
The machines may also take in other parameters as additional input,
such as parameters controlling the soundness error,\footnote{In particular, the prover might have some extra information that helps make it more efficient, for example the $\UP$ witnesses in case the language was in $\UP$.}
but we omit them from the notation for simplicity.
The machine $\cP$ is deterministic and runs in time $\Ptime$, and is called \emph{the prover}, while $\cV$ is probabilistic, runs in time $\Vtime$, and is called \emph{the verifier}.
$\cP$ and $\cV$ are the \emph{two parties} of the protocol.
We omit the specification of $\Sigma$ when $\Sigma = \bin$.

In each round $j \in [\ell]$ of the protocol:
\begin{enumerate}
    \item $\cV$ sends a random message $q_j \gets_R \Sigma^b$ to $\cP$, referred to as \emph{queries}.
    \item $\cP$ responds with a message $a_j \in \Sigma^a$ determined by the prescribed next-message function, referred to as \emph{answers}, which (abusing notation) is denoted by 
    \[a_j \coloneqq \cP(x, j, (q_1, \ldots, q_j)) \in \Sigma^a.\]
\end{enumerate}
We make the simplifying assumption that $\cV$ never rejects in the middle of an execution.  We abuse notation and denote the verdict circuit of $\cV$, that reads the entire transcript and decides if to accept or reject, also by $\cV$: 
\[\cV(x, (q_1, \ldots, q_\ell), (a_1, \ldots,a_\ell)) \in \bin.\]
We denote the sequence of verifier random coins by $\bm q \coloneqq (q_1,\ldots, q_\ell)$, 
and for $j \in [\ell]$ we denote by 
\[\bm q_{\le j} \coloneqq (q_1,\ldots, q_j)~~\mbox{ and }~~ 
\bm q_{> j} \coloneqq (q_{j+1},\ldots, q_\ell).
\]
Finally,
let $\cP(x, \bm q) \coloneqq (a_1, \ldots, a_\ell)$,
where $a_j = \cP(x, j, \bm q_{\le j})$ are the prescribed messages.

The \emph{total communication complexity} of the protocol is the number of bits exchanged between the prover and the verifier, i.e. $(a + b)\ell\cdot\log(\abs{\Sigma})$.

Let $\cL$ be a language.  We next define the notion of an 
unambiguous interactive proof of $\cL$.

\begin{definition}[$\epsilon$-unambiguous ($\ell$, $a$, $b$, $\Ptime$, $\Vtime$, $\Sigma$)-\UIP]
\label{def:UIP}
An $(\ell, a, b, \Ptime, \Vtime, \Sigma)$-protocol $\prot{}$ is an \emph{unambiguous interactive proof} (\UIP) with an unambiguity error $\epsilon$
for a language $\cL$, 
if it satisfies the following completeness and unambiguity conditions.
\begin{itemize}[leftmargin=*,label=-]
    \item \textbf{Prescribed Completeness:}
    For any $x \in \bin^n$ and
    any verifier coins $\bm q = (q_1,\ldots, q_\ell)\in(\Sigma^b)^\ell$,
    the answers $\answer = \cP(x, \bm q) \in(\Sigma^a)^\ell$ satisfy that $\cV(x, \bm q, \answer) = 1$ iff $x \in \cL$.
    \begin{remark}\label{remark:prescribed-comp}
    Completeness is usually defined only for $x \in \cL$. \emph{Prescribed completeness} also requires the prescribed prover to send rejected messages on $x \notin \cL$. We assume that when $x \notin L$, $\cP$ sends the default string $\bm 0 \in \Sigma^a$ in every round, and $\cV$ rejects this transcript.
    \end{remark}
    
    \item \textbf{$\epsilon$-Unambiguity}:
    For any $x\in\bin^n$, any $j^*\in[\ell]$,  any prefix $\bm q_{\le j^*} = (q_1,\ldots,q_{j^*})$, 
    and any (computationally unbounded) prover strategy $\cP^*$ 
    that deviates from $\cP$ firstly in round $j^* \in [\ell]$, namely
    \ifFOCS
    \begin{align*}
    \cP^*(x, j^*, \bm q_{\le j^*}) &\neq \cP(x, j^*, \bm q_{\le j^*}),\\
    \text{ and }\cP^*(x, j, \bm q_{\le j}) &= \cP(x, j, \bm q_{\le j}) \text{ for all }j < j^*,
    \end{align*}
    \else
    \[\cP^*(x, j^*, \bm q_{\le j^*}) \neq \cP(x, j^*, \bm q_{\le j^*})~~\mbox{ and }~~\cP^*(x, j, \bm q_{\le j}) = \cP(x, j, \bm q_{\le j}) \text{ for all }j < j^*,\]
    \fi
    we have that $\cP^*$ is accepted with probability at most $\epsilon = \epsilon(n)$ over the remaining random coins of \cV.
    Namely,
    \[
    \Pr[\cV(x, (\bm q_{\le j^*}, \bm q_{> j^*}), \answer) = 1] \le \epsilon,
    \]
    where the probability is taken over $\bm q_{>j^*}$ $=(q_{j^* + 1}, \ldots, q_\ell)$, and where $\answer = \cP^*(x, (\bm q_{\le j^*}, \bm q_{> j^*}))$.
\end{itemize}
\end{definition}
Note that by \Cref{remark:prescribed-comp}, unambiguity implies standard soundness.

\subsubsection{The \GKR protocol}
We restate the main result from \cite{JACM:GolKalRot15} (in the language of \pval)\footnote{In what follows we formulate the \GKR protocol as a reduction (as opposed to a proof system). 
Specifically, at the end of the protocol the verifier either rejects or outputs a $\pval$ claim.}.
We note that unambiguity is proven in \cite{STOC:JKKZ21}.
\begin{theorem}[The \GKR protocol]
    \label{lem:GKR}
    Let $\FF$ be a field,
    and let $\C: \FF^n \to \FF$ be an arithmetic circuit with addition and multiplication gates of fan-in 2 over $\FF$,
    with description $\lrag{\C}$.
    Let $G$ be the log-space uniform circuit that outputs $\C(x)$ on input $x \in \FF^n$ and $\lrag{\C} \in \bin^{\abs{\lrag{\C}}}$.
    Denote the depth and size of $G$ by $D = D(n) \ge \log n$, 
    and $S = S(n) \ge n$.

    There exists an $(\ell, a, b, \Ptime, \Vtime)$-protocol $(\cP_{\GKR},\cV_{\GKR})$
    such that the following holds.
    The prover and verifier get the description $\lrag \C$,
    and the prover additionally gets the input $x \in \FF^n$.
    Let $\lgK = \ceil{\log n}$.
    At the end,
    either $\cV_\GKR$ rejects,
    or both parties output a point $\pvalU \in \FF^\lgK$ and a value $\pvalv \in \FF$, such that:
    \begin{itemize}[label=-]
        \item \textbf{Prescribed Completeness:} If both parties follow the protocol then $\hat x(\pvalU) = \pvalv$ iff $\C(x) = 1$.
        \item \textbf{$\epsilon_{\GKR}$-Unambiguity:} For any (unbounded) cheating prover strategy $\cP^*$ that deviates from $\cP_\GKR$ firstly in round $j^* \in [\ell]$, 
        \ifFOCS
        \begin{align*}
        &\,{}\epsilon_{\GKR}(D, \log S, \abs{\FF}) \\
        &\coloneqq \Pr[\cV_\GKR\text{ accepts}~~\wedge~~ \hat x(\pvalU) = \pvalv] \\
        &= O\left(\frac{D \log S}{\abs{\FF}}\right) 
        \end{align*}
        \else
        \[
        \epsilon_{\GKR}(D, \log S, \abs{\FF}) \coloneqq \Pr[\cV_\GKR\text{ accepts}~~\wedge~~ \hat x(\pvalU) = \pvalv] \\
        = O\left(\frac{D \log S}{\abs{\FF}}\right)
        \]
        \fi
        where the probability is taken over the verifier's remaining coins (after round $j^*$).
        In other words,
        with probability at least $1-\epsilon_\GKR(D, \log S, \abs{\FF})$,
        either $\cV_{\GKR}$ rejects
        or $x \notin \pval(\pvalU, \pvalv)$.
        \item Regardless of the prover's strategy, $\pvalU$ is uniformly distributed in $\FF^\lgK$, 
        and only depends on the random coins of $\cV_\GKR$.
    \end{itemize}
    The complexity of the protocol is as follows (with $\tO$ ignoring polylogarithmic factors in $D, \log S$):
    \begin{itemize}
        \item $\ell = O(D \cdot \log S)$. 
        \item $a = O(\Flog)$.
        \item $b = O(\Flog)$.
        \item $\Ptime = \tO(\poly(S)\cdot \Fbits)$.
        \item $\Vtime = \tO(D \log S\cdot \Flog  + \abs{\lrag{\C}}\cdot\Flog)$ 
        (and $\cV$ does not access $x$). 
    \end{itemize}
    Furthermore,
    the verifier's verdict (on whether to output the strings $\pvalU, \bm \pvalv$ or reject at the end) has the following complexities (with $\tO$ ignoring polylogarithmic factors in $D, \log S, \log \abs{\FF}$):
    \begin{itemize}
        \item $\size(\cV) = \tO(D \log S\cdot \Flog + \abs{\lrag{\C}} \cdot \Flog)$. 
        \item $\depth(\cV) = \tO(1)$.
    \end{itemize}
\end{theorem}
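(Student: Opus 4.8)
The plan is to instantiate the classical \GKR protocol of \cite{JACM:GolKalRot15}, run it in holographic (input-oblivious) mode so that it outputs a \pval claim rather than a verdict, and then invoke the unambiguity analysis of \cite{STOC:JKKZ21}; the bulk of the effort is the bookkeeping needed to match the stated complexities, given that $G$ is only log-space uniform.

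First I would normalize $G$ into a layered arithmetic circuit of depth $O(D)$ and size $\poly(S)$ over $\FF$ (padding each layer to a power of two), with layer $0$ the single output gate and layer $D$ the input layer holding the bits of $x$ followed by those of $\lrag{\C}$. Writing $\hat W_i:\FF^{k_i}\to\FF$, $k_i = O(\log S)$, for the multilinear extension of the $i$-th layer's gate-value function, the wiring predicates $\mathrm{add}_i,\mathrm{mult}_i$ of $G$ give, for every $\bm z$, the identity $\hat W_{i-1}(\bm z) = \sum_{\bm b,\bm c\in\{0,1\}^{k_i}}\bigl(\widehat{\mathrm{add}}_i(\bm z,\bm b,\bm c)(\hat W_i(\bm b)+\hat W_i(\bm c)) + \widehat{\mathrm{mult}}_i(\bm z,\bm b,\bm c)\,\hat W_i(\bm b)\hat W_i(\bm c)\bigr)$, both sides being multilinear in $\bm z$ and agreeing on the Boolean cube. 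The protocol anchors at the claim $\hat W_0 = 1$ (i.e.\ $\C(x)=1$); given a claim $\hat W_{i-1}(\bm z_{i-1}) = v_{i-1}$, it applies the sum-check protocol to the individual-degree-$\le 2$ summand over its $2k_i$ variables, after which the verifier evaluates $\widehat{\mathrm{add}}_i,\widehat{\mathrm{mult}}_i$ at the challenge point by itself and is left with two claims $\hat W_i(\bm b^\ast)=v'$, $\hat W_i(\bm c^\ast)=v''$, which it collapses to one claim $\hat W_i(\bm z_i)=v_i$ via the standard line reduction (the prover sends the restriction of $\hat W_i$ to the line through $\bm b^\ast,\bm c^\ast$, a univariate of degree $O(\log S)$; the verifier checks the two endpoints and picks a fresh random point on the line). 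Iterating $i=1,\dots,D$ lands on a single claim $\hat W_D(\bm r_D)=v_D$ about the input layer; using its block structure and the fact that $\cV$ knows $\lrag{\C}$, the verifier subtracts off the ($\abs{\lrag{\C}}$-term) contribution of $\widehat{\lrag{\C}}$ that it can compute itself, leaving exactly $\hat x(\pvalU)=\pvalv$ — the exported \pval claim. \textbf{Prescribed completeness} is then the standard \GKR completeness analysis: if $\C(x)=1$, an honest run preserves every invariant and the exported claim holds; if $\C(x)\ne 1$, the anchoring at $1$ is inconsistent with the true output value, so either an intermediate check fails (the verifier rejects) or the discrepancy is propagated to the input layer and $\hat x(\pvalU)\ne\pvalv$. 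Unwinding the line reduction shows $\bm r_D$, hence $\pvalU$, is a uniform function of the (uniform, independent) sum-check and line challenges and depends only on the verifier's coins.

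The step I expect to be the main obstacle is \textbf{unambiguity}, which is exactly what goes beyond the textbook soundness analysis and is the content of \cite{STOC:JKKZ21}. The key point is that both sub-protocols are unambiguous: in a sum-check round with running target $u$, the only message not eventually rejected with overwhelming probability is the true partial-sum polynomial, since any other degree-$\le 2$ polynomial with the same value of $g(0)+g(1)$ differs from it at a uniform challenge except with probability $\le 2/\abs{\FF}$ by \Cref{lem:SZ}, and the line polynomial is uniquely pinned down by the two endpoint claims, so a wrong one is detected at the random point on the line except with probability $O(\log S/\abs{\FF})$. Consequently, if a cheating $\cP^\ast$ first deviates from $\cP_\GKR$ in round $j^\ast$, then either it fails a local consistency check (the verifier rejects outright) or it creates a nonzero low-degree discrepancy that a later random challenge kills except with the above probabilities; a union bound over the $O(D\log S)$ sum-check challenges (degree $O(1)$ each) and the $O(D)$ line challenges (degree $O(\log S)$ each) gives total failure probability $O(D\log S/\abs{\FF}) = \epsilon_{\GKR}$. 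When the cheat is ``killed'' in this sense, the broken chain of claims, read through the layer identities and the multilinearity of $\hat x$, forces $\hat x(\pvalU)\ne\pvalv$, i.e.\ $x\notin\pval(\pvalU,\pvalv)$, as required.

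Finally I would do the complexity accounting. Rounds: $D$ layers times $O(\log S)$ sum-check rounds plus one line-reduction round per layer, i.e.\ $O(D\log S)$; per-round communication $a=b=O(\Flog)$. The prover, to produce the sum-check messages, evaluates each $\hat W_i$ on $\poly(S)$ points, i.e.\ $\poly(S)$ field operations in all, giving $\Ptime = \tO(\poly(S)\cdot\Fbits)$. The verifier checks $O(D\log S)$ short arithmetic messages and evaluates $O(\log S)$-many wiring-predicate multilinear extensions of $G$, each in $\polylog(S)$ time — this is the technical core of \cite{JACM:GolKalRot15}'s treatment of log-space uniform circuits, where the wiring predicate is shown to factor through the transition function of the underlying $O(\log S)$-space Turing machine and hence to have $\polylog(S)$-time-evaluable extensions — plus one $\tO(\abs{\lrag{\C}}\cdot\Flog)$-time evaluation of $\widehat{\lrag{\C}}$, for a total $\Vtime = \tO(D\log S\cdot\Flog + \abs{\lrag{\C}}\cdot\Flog)$; in particular $\cV$ never reads $x$. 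The verdict circuit recomputes these checks from the full transcript and the verifier's coins; the checks are mutually independent (the running value entering a check is itself just a transcript polynomial evaluated at a known challenge, not the outcome of an earlier check), each is a shallow arithmetic computation, and the verdict is their conjunction together with the single $\widehat{\lrag{\C}}$-evaluation, giving $\size(\cV) = \tO(D\log S\cdot\Flog + \abs{\lrag{\C}}\cdot\Flog)$ and $\depth(\cV) = \tO(1)$.
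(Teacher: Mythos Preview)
Your proposal is correct and matches the paper's approach, which is simply to cite \cite{JACM:GolKalRot15} for the protocol and its complexity, and \cite{STOC:JKKZ21} for unambiguity, without giving any proof; the only content the paper adds is a brief remark (immediately following the theorem) explaining the post-processing that reduces the claim $\widehat{x\Vert\lrag{\C}}(\pvalU')=\pvalv'$ to $\hat x(\pvalU)=\pvalv$ by having the verifier compute and subtract off the $\widehat{\bm 0^n\Vert\lrag{\C}}$ contribution, which is exactly what you describe at the end of your first paragraph. Your sketch is thus strictly more detailed than what the paper provides.
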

\begin{remark}[The role of the circuit description ${\lrag{\C}}$]
    After running the original \GKR protocol from \cite{JACM:GolKalRot15} to the circuit $G : \FF^{n + \abs{\lrag{\C}}} \to \FF$ and input $(x, \lrag{\C})$ (which verifies the claim $\C(x) = 1$),
    the verifier obtains $\pvalU' \in \FF^{\ceil{\log(n + \abs{\lrag \C})}}$, 
    $\pvalv' \in \FF$ and has to verify the claim $\widehat{x \Vert \lrag{\C}}(\pvalU') = \pvalv'$ (where $\widehat{x \Vert \lrag{\C}}$ denotes the \LDE of the pair $(x, \lrag{\C})$ as a string under some appropriate encoding). 
    The protocol in \Cref{lem:GKR} additionally reduces this to verifying $\hat{x}(\pvalU) = \pvalv$ for some $\pvalU, \pvalv$.
    This is without additional communication as follows:
    using $\lrag{\C}$, the verifier first computes $v_0=\widehat{\bm 0^n \Vert \lrag{\C}}(\pvalU')$,
    and subtracts it from $\pvalv'$,
    which yields the claim $\widehat{x \Vert \bm 0^{\abs{\lrag \C}}}(\pvalU') = \pvalv' - v_0$.
    This is equivalent to checking that $\hat{x}(\pvalU) = \frac{\pvalv' - v_0}{\chi(\pvalU')} \in \FF$ where $\chi(\pvalU')$ is a factor that depends only on $\pvalU'$ and computable by the verifier on its own.
    The post-processing procedure amounts to the additional $O(\abs{\lrag \C} \cdot \Flog)$ verifier runtime.
\end{remark}
\paragraph{The \GKR protocol is $\Delta_c$-distance-preserving}
Let $\prot{\GKR}$ be the \GKR protocol from \Cref{lem:GKR}.
In \cite{STOC:RotVadWig13},
the authors observed that when $\prot{\GKR}$ is parallel-repeated $\pvalT$ times for a large enough $\pvalT$,
it is $\Delta_c$-distance-preserving in the following sense:
suppose the input $x$ is $d$-far from satisfying $\C$,
and $\bm \pvalU = (\pvalU_1,\ldots,\pvalU_\pvalT)$ and $\bm \pvalv = (\pvalv_1,\ldots,\pvalv_\pvalT)$ are the outputs of $\pvalT$ parallel repetitions of $\prot{\GKR}$,
then $x$ is $d$-far from the set $\pval(\bm \pvalU, \bm \pvalv)$.
Note that this increases the overall cost of the protocol by a factor of $\pvalT$.

This observation generalizes to $\Delta_c$-distance,
captured in the following lemma.
\begin{lemma}[\GKR is $\Delta_c$-Distance-Preserving]
    \label{lem:RVW}
    Let $\K, \ncol$, $d\in \NN$ and assume $\K = 2^\lgK$ for some $\lgK \in \NN$.
    Let $\FF$ be a field of characteristic 2.
    Let $\C: \FF^{\K \times \ncol} \to \bin$ be a log-space uniform arithmetic circuit over $\FF$,
    with addition and multiplication gates of fan-in 2,
    of size $S$ and depth $D$,
    and with description $\lrag{\C}$.

    Let $\transcMat \in \FF^{\K \times \ncol}$.
    Denote by $\cS\subset \FF^{\K \times \ncol}$ the set of matrices accepted by $\C$, 
    and suppose that
    \[
    \abs{\cS \cap \rowball(\transcMat)} \le 1,
    \]
    i.e. at most one element in $\rowball(\transcMat)$ satisfies the circuit. (See \Cref{def:Deltac-dist} for the definition of $\rowball$.)

    When applying $\prot{\GKR}$ to the claim $\C(\transcMat) = 1$\footnotemark{} with $\pvalT \in \NN$ parallel repetitions,
    we obtain $\pvalU \in (\FF^{\lgK + \log \ncol})^\pvalT, \bm \pvalv \in \FF^\pvalT$, with the following guarantee (in addition to the ones stated in \Cref{lem:GKR}, with all complexities but the round complexity multiplied by $\pvalT$).
    \footnotetext{Recall that 
    both $\cP_\GKR$ and $\cV_\GKR$ get the description $\lrag{\C}$. 
    Additionally, $\cP_\GKR$ gets the input $\transcMat$.} 

    \paragraph{Unambiguous Distance Preservation:} 
    For every prover $\cP^*$,
    the distance $\Delta_c(\transcMat, \cS)$ is preserved by the interaction $(\cP^*, \cV_\GKR)$ in the following sense:  
    \begin{itemize}[label=*]
        \item If $\Delta_c(\transcMat, \cS) \le d$ (i.e. $\cS \cap \rowball(\transcMat) \neq \varnothing$)
        and $\cP^*$ answers according to the prescribed strategy $\cP_\GKR$  corresponding to the unique $\transcMat^*\in \cS \cap \rowball(\transcMat)$,
        then the output $(\pvalU, \bm \pvalv)$ satisfies
        \ifFOCS
        \begin{align*}
            &\Pr\begin{bmatrix}
            \begin{aligned}
            &\Delta_c(\transcMat, \pval(\pvalU, \bm \pvalv)) \le d \text{ and }\\
            &\transcMat^* \text{ remains the unique element}\\
            &\text{in } \rowball(\transcMat) \cap \pval(\pvalU, \bm \pvalv) \\
            \end{aligned}
            \end{bmatrix}\\
            &\ge 1 - (\epsilon_{\GKR}(D, \log S, \abs{\FF}))^T \cdot\left(\binom{\K}{d} \abs{\FF}^d\right)^\ncol.
        \end{align*}
        \else
        \begin{align*}
            &\Pr[\Delta_c(\transcMat, \pval(\pvalU, \bm \pvalv)) \le d \text{ and } \transcMat^* \text{ remains the unique element in } \rowball(\transcMat) \cap \pval(\pvalU, \bm \pvalv)] \\
            &\ge 1 - (\epsilon_{\GKR}(D, \log S, \abs{\FF}))^T \cdot\left(\binom{\K}{d} \abs{\FF}^d\right)^\ncol.
        \end{align*}
        \fi
        \item On the other hand, if $\Delta_c(\transcMat, \cS) > d$ (i.e. $\cS \cap \rowball(\transcMat) = \varnothing$) or $\cP^*$ does not answer according to $\cP_\GKR$ corresponding to the unique $\transcMat^*\in \rowball(\transcMat)$, 
        then the output $(\pvalU, \bm \pvalv)$ satisfies
        \ifFOCS
        \begin{align*}
            &\Pr[\Delta_c(\transcMat, \pval(\pvalU, \bm \pvalv)) > d] \\
            &\ge 1 - \left( \vphantom{\binom{\K}{d}}\epsilon_{\GKR}(D, \log S, \abs{\FF})\right.\\
            &+ \left. (\epsilon_{\GKR}(D, \log S, \abs{\FF}))^T \cdot\left(\binom{\K}{d} \abs{\FF}^d\right)^\ncol\right).
        \end{align*}
        \else
        \begin{align*}
            &\Pr[\Delta_c(\transcMat, \pval(\pvalU, \bm \pvalv)) > d] \\
            &\ge 1 - \left(\epsilon_{\GKR}(D, \log S, \abs{\FF}) + (\epsilon_{\GKR}(D, \log S, \abs{\FF}))^T \cdot\left(\binom{\K}{d} \abs{\FF}^d\right)^\ncol\right).
        \end{align*}
        \fi
    \end{itemize}
    Here $\epsilon_{\GKR}(D, \log S, \abs{\FF}) = O(\frac{D\log S}{\abs{\FF}})$ is the soundness error of one \GKR protocol.
\end{lemma}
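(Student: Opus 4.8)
The plan is to combine the unambiguity of the ($\pvalT$-fold parallel-repeated) \GKR protocol of \Cref{lem:GKR} with a Schwartz--Zippel union bound over $\rowball(\transcMat)$, in the spirit of \Cref{lem:pval}. Two standing facts drive everything: (i) \GKR is applied to the claim ``$\C(\transcMat)=1$'' with $\transcMat$ regarded as a string of length $\K\ncol=2^{\lgK+\log\ncol}$, so its multilinear extension is $\hat\transcMat:\FF^{\lgK+\log\ncol}\to\FF$ and every output point lies in $\FF^{\lgK+\log\ncol}$; and (ii) with $\pvalT$ parallel repetitions, $\bm\pvalU=(\pvalU_1,\dots,\pvalU_\pvalT)$ is a tuple of $\pvalT$ \emph{independent uniform} elements of $\FF^{\lgK+\log\ncol}$ regardless of the prover's strategy (\Cref{lem:GKR}), whereas $\bm\pvalv$ is a deterministic function of the prover's messages. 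I will also use that $|\rowball(\transcMat)|\le(\binom{\K}{d}\abs{\FF}^d)^\ncol$, since each of the $\ncol$ columns is perturbed in at most $d$ of its $\K$ coordinates, and I write $\epsilon_\GKR$ for the single-run \GKR soundness error; the per-point Schwartz--Zippel error $(\lgK+\log\ncol)/\abs{\FF}$ is at most $\epsilon_\GKR$ (since $D\ge\log(\K\ncol)$, a single \GKR run already incurs at least this much error).

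\emph{First bullet (prescribed prover).} By hypothesis $\cP^*$ runs the prescribed \GKR prover for the true claim $\C(\transcMat^*)=1$, where $\transcMat^*$ is the unique element of $\cS\cap\rowball(\transcMat)$. Prescribed completeness gives $\hat\transcMat^*(\pvalU_i)=\pvalv_i$ for all $i$, so $\transcMat^*\in\pval(\bm\pvalU,\bm\pvalv)$; combined with $\transcMat^*\in\rowball(\transcMat)$ this yields $\Delta_c(\transcMat,\pval(\bm\pvalU,\bm\pvalv))\le d$ unconditionally. For uniqueness, observe that now $\bm\pvalv=\hat\transcMat^*(\bm\pvalU)$ is a \emph{fixed} function of $\bm\pvalU$, so any $\transcMat'\in\rowball(\transcMat)\cap\pval(\bm\pvalU,\bm\pvalv)$ with $\transcMat'\ne\transcMat^*$ makes the nonzero multilinear polynomial $\widehat{\transcMat'-\transcMat^*}$ (of total degree $\lgK+\log\ncol$) vanish at all $\pvalU_i$; by \Cref{lem:SZ} and independence this has probability at most $\epsilon_\GKR^\pvalT$, and a union bound over the $\le(\binom{\K}{d}\abs{\FF}^d)^\ncol$ choices of $\transcMat'$ gives the claimed failure bound.

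\emph{Second bullet (cheating prover, or $\transcMat$ far from $\cS$).} I would split $\rowball(\transcMat)$ according to whether a matrix satisfies $\C$; by hypothesis at most one matrix $\transcMat^*\in\rowball(\transcMat)$ has $\C(\transcMat^*)=1$ (and none when $\Delta_c(\transcMat,\cS)>d$, in which case the first term below is vacuous). For that $\transcMat^*$: the hypothesis is that $\cP^*$ deviates from $\cP_\GKR$ for $\transcMat^*$, so invoking \GKR unambiguity in a copy where $\cP^*$'s messages first depart from the prescribed ones gives $\Pr[\cV_\GKR\text{ accepts}\wedge\transcMat^*\in\pval(\bm\pvalU,\bm\pvalv)]\le\epsilon_\GKR$ --- this is the lone $\epsilon_\GKR$ term. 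For every $\transcMat'\in\rowball(\transcMat)$ with $\C(\transcMat')=0$, the input claim ``$\C(\transcMat')=1$'' is false, so soundness of the $\pvalT$-fold parallel repetition of \GKR (standard exponential error amplification for public-coin protocols, using the round-by-round soundness of sum-check) gives $\Pr[\cV_\GKR\text{ accepts}\wedge\transcMat'\in\pval(\bm\pvalU,\bm\pvalv)]\le\epsilon_\GKR^\pvalT$; union-bounding over the $\le(\binom{\K}{d}\abs{\FF}^d)^\ncol$ such matrices contributes the $\epsilon_\GKR^\pvalT\cdot(\binom{\K}{d}\abs{\FF}^d)^\ncol$ term. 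A final union bound over these two contributions shows that with the stated probability, either $\cV_\GKR$ rejects or no matrix of $\rowball(\transcMat)$ lies in $\pval(\bm\pvalU,\bm\pvalv)$, i.e.\ $\Delta_c(\transcMat,\pval(\bm\pvalU,\bm\pvalv))>d$. The complexity claims (all complexities but the round complexity scaled by $\pvalT$) are immediate from parallel-repeating \Cref{lem:GKR}.

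The step I expect to be the main obstacle is obtaining the amplified $\epsilon_\GKR^\pvalT$ (rather than $\epsilon_\GKR$) in the second bullet: a union bound over the $\approx(\binom{\K}{d}\abs{\FF}^d)^\ncol$ nearby matrices is only affordable against the amplified error, so one must carefully argue that a cheating prover is ``caught independently'' in each of the $\pvalT$ parallel copies on every false claim $\C(\transcMat')=1$ --- which is where the public-coin / round-by-round structure of \GKR is essential --- while still paying the single unamortized $\epsilon_\GKR$ incurred by detecting a deviation away from the unique satisfying matrix $\transcMat^*$. A secondary point requiring care is the column-wise ($\Delta_c$) bookkeeping --- replacing the single Hamming ball of the \cite{STOC:RotVadWig13,TCC:RotRot20} argument by a product of $\ncol$ per-column balls --- and verifying that $\transcMat^*$ remains the \emph{unique} nearby satisfying point rather than merely \emph{one} such point.
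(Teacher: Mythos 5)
Your proof is correct and its overall structure matches the paper's: a union bound over the $\le\left(\binom{\K}{d}\abs{\FF}^d\right)^\ncol$ matrices of $\rowball(\transcMat)$ supplies the amplified $\eps_\GKR^\pvalT$ term, while a single invocation of \GKR's $\eps_\GKR$-unambiguity, applied to the parallel copy in which $\cP^*$ first departs from the prescribed strategy for $\transcMat^*$, supplies the unamortized $\eps_\GKR$ in the second bullet. The one place you route differently is the uniqueness claim in the first bullet: the paper bounds $\Pr[\transcMat' \in \pval(\bm\pvalU,\bm\pvalv)]$ for each $\transcMat'$ with $\C(\transcMat')=0$ by invoking the $\pvalT$-fold soundness of \GKR on the false claim $\C(\transcMat')=1$, whereas you exploit that $\bm\pvalv = \hat\transcMat^*(\bm\pvalU)$ is a deterministic function of $\bm\pvalU$ when the prover is prescribed, and bound $\Pr\bigl[\widehat{\transcMat'-\transcMat^*}(\bm\pvalU) = \bm 0\bigr]$ directly by Schwartz--Zippel together with independence of the $\pvalU_i$ across the $\pvalT$ repetitions. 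Both derivations give the same bound; your variant is slightly more self-contained but requires absorbing the per-point Schwartz--Zippel error $(\lgK+\log\ncol)/\abs{\FF}$ into $\eps_\GKR = O(D\log S/\abs{\FF})$, which does hold since $D \ge \log(\K\ncol)$, whereas the paper's phrasing sidesteps any constant-factor bookkeeping inside the $O(\cdot)$ by treating \GKR soundness as a black box.
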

\begin{proof}
    Fix $\transcMat' \in \rowball(\transcMat)$, and consider the following two cases:
    \begin{enumerate}[label = Case \arabic*, leftmargin=6em]
        \item $\C(\transcMat') = 0$.
            
            The claim $\C(\transcMat') = 1$ is false.
            For any prover strategy $\cP^*$,
            by \GKR's standard soundness amplification,
            $\Pr[\transcMat' \in \pval(\pvalU, \bm \pvalv)] \le \left(\eps_{\GKR}(D, \log S, \abs{\FF})\right)^\pvalT$.
            \label{apd:lem:obs1}
        \item $\C(\transcMat') = 1$.
        
            If the prover $\cP^*$ deviates from the prescribed prover $\cP$ with respect to $\transcMat^*$ in some round $j^*$, 
            then it must be deviating in one of the $\pvalT$ parallel executions of \GKR.
            By the unambiguity of \GKR \footnote{Note that $\cP^*$ does not necessarily deviate in all $\pvalT$ executions, so the error term is not necessarily amplified.},
            $\Pr[\transcMat' \in \pval(\pvalU, \bm \pvalv)] \le \eps_{\GKR}(D, \log S, \abs{\FF})$.
            \label{apd:lem:obs2}
    \end{enumerate}
    Crucially, note that there are at most $(\binom{\K}{d} \abs{\FF}^d)^\ncol$ points $\transcMat'$ in $\rowball(\transcMat)$.
    Let $E$ denote the event that $\exists \transcMat' \in \rowball(\transcMat) \text{ s.t. } \C(\transcMat') = 0 \text{ and } \transcMat' \in \pval(\pvalU, \bm \pvalv)$.
    By \ref{apd:lem:obs1} and a union bound,
    \begin{align*}
    \Pr[E] \le \epsilon' \coloneqq \eps_{\GKR}(D, \log S, \abs{\FF})^\pvalT \cdot \left(\binom{\K}{d} \abs{\FF}^d\right)^\ncol.
    \end{align*}
    Using this bound on $E$, we have
    \begin{itemize}[label=*]
        \item When $\cS \cap \rowball(\transcMat) = \set{\transcMat^*}$ and $\cP^*$ answers according to the prescribed strategy $\cP_\GKR$ corresponding to $\transcMat^*$:

        In this case, 
        any other $\transcMat' \in \rowball(\transcMat)$ satisfies $\C(\transcMat') = 0$.
        Additionally, by the prescribed completeness of the \GKR protocol,
        $\transcMat^* \in \pval(\pvalU, \bm \pvalv)$.
        Therefore,
        \begin{align*}
            &\Pr\begin{bmatrix}
                \begin{aligned}
                &\Delta_c(\transcMat, \pval(\pvalU, \bm \pvalv)) \le d \text{ and } \\
                &\transcMat^* \text{ remains the unique element}\\
                &\text{in } \rowball(\transcMat) \cap \pval(\pvalU, \bm \pvalv)
            \end{aligned}
            \end{bmatrix}\\
            &\ge \Pr\left[\transcMat \in \pval(\pvalU, \bm \pvalv) \text{ and } \overline{E}\right]\\
            &= 1- \epsilon'.
        \end{align*}
        \item When $\cS \cap \rowball(\transcMat) = \set{\transcMat^*}$ but $\cP^*$ does not answer according to $\cP_\GKR$ corresponding to $\transcMat^*$:
        
        By \ref{apd:lem:obs2},
        the probability that $\transcMat^* \in \pval(\pvalU, \bm \pvalv)$ is at most $\eps_{\GKR}(D, \log S, \abs{\FF})$.
        Therefore,
        \begin{align*}
        &\Pr[\rowball(\transcMat) \cap \pval(\pvalU, \bm \pvalv) = \varnothing] \\
        &\ge 1 - (\Pr[\transcMat^* \in \pval(\pvalU, \bm \pvalv)] + \Pr[E])\\
        &\ge 1 - \left(\eps_{\GKR}(D, \log S, \abs{\FF}) + \epsilon'\right). 
        \end{align*}

        \item When $\cS \cap \rowball(\transcMat) = \varnothing$,
        no matter what $\cP^*$ does,
        \begin{align*}
        \Pr[\rowball(\transcMat) \cap \pval(\pvalU, \bm \pvalv) = \varnothing] 
        &\ge 1 - \Pr[E]\\
        &\ge 1 - \epsilon'.  \qedhere
        \end{align*}
    \end{itemize}
\end{proof}

\subsection{Almost \texorpdfstring{$d$}{d}-wise independent permutations} \label{subsec: affine perms}
We begin by defining \textit{almost $d$-wise independent permutations}:
\begin{definition}[Almost $d$-wise Independent Permutations] \label{def: almost independent permutations}
    A distribution $\Pi$ on $S_N$\footnote{The symmetric group on $N$ elements i.e. the set of all permutations} is $\eta$-almost $d$-wise independent if for all $1\le x_1 <x_2<\ldots <x_d\le N$, the distribution $(\pi(x_1),\ldots,\pi(x_d))_{\pi\sim \Pi}$ is at most $\eta$ far from the uniform distribution on distinct $d$-tuples over $[N]$ in $\mathsf{TV}$ distance.
\end{definition}

We will be using a family of almost $d-$wise independent permutations on $\{0,1\}^\lgK$ in \Cref{sec:DcRR}, which has a \textit{pairwise independence} property and has succinct descriptions. The properties we will desire from our permutation family are:
\begin{itemize}
    \item Almost $d$-wise independence.
    \item $\pi\sim \Pi$ have succinct descriptions, short seeds, and can be efficiently implemented.
    \item $\pi^{-1}$ for $\pi\sim \Pi$ have succinct descriptions, short seeds, and can be efficiently implemented. 
\end{itemize}

The question of sampling from almost $d$-wise independent permutations has been well studied \cite{Gowers_1996,ICALP:HMMR04,BH08,KNR09,C:GHKO25,SODA:GreHePel25}. From this line of work, we get the following result:

\begin{theorem}[Theorem $2$ in \cite{SODA:GreHePel25}] \label{thm: random reversible circuits are independent}
    For any $\lgK$, and $d\le 2^{\lgK/50}$, a random reversible circuit with $\tilde{O}(d\lgK\cdot \log(1/\eta))$ width-2 gates computes an $\eta$-almost $d$-wise independent permutation where $\tilde{O}$ hides $\polylog(d,\lgK)$ factors.
\end{theorem}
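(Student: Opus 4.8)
The plan is to realize $\prot{\RRrow}$ as the composition of the section's two sub-steps, a \emph{Split} step and an \emph{instance-reduction} step. \textbf{Split.} Write each input point as $\bm\pvalU_t=(\bm a_t,\bm b_t)$ with $\bm a_t\in\FF^{\lgK}$ indexing rows and $\bm b_t\in\FF^{\lncol}$ indexing columns. Since the multilinear extension of a $\K\times\ncol$ matrix decomposes, at any point with column-part $\bm b_t$, as the $\chi_{\bm c}(\bm b_t)$-weighted sum over $\bm c\in\bin^{\lncol}$ of the $\ncol$ column multilinear extensions evaluated at $\bm a_t$, the prover supplies these per-column evaluations and the parties run a \GKR instance (\Cref{lem:GKR}) on the $\polylog$-depth circuit that computes this weighted sum and checks it against $\bm\pvalv$; the \GKR residual is folded into the values, leaving $\ncol$ per-column \pval claims over a \emph{common} tuple of $\pvalT'=\tO(d\lgK)$ points in $\FF^{\lgK}$. \textbf{Instance reduction.} The parties then run the Rothblum--Rothblum Hamming-distance instance-reduction $\IPP$ of \cite{TCC:RotRot20} on all $\ncol$ columns \emph{in lockstep, with shared verifier coins}: at fold $i$ the current matrix is split on one row-variable into its top and bottom halves and refolded into a single half-height matrix by (i) applying one almost-$d$-wise-independent permutation $\pi_i$, sampled as a random reversible circuit of $\tO(d\lgK\secpar)$ gates (\Cref{thm: random reversible circuits are independent}, \Cref{def: almost independent permutations}), \emph{to each column of the bottom half independently}, as in \Cref{fig:folding}, and (ii) taking a uniformly random $\FF$-linear combination of the two halves, updating the \pval points accordingly. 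After $\lgK-O(\log d)$ folds the matrix has $\Theta(d)$ rows; the verifier samples $s=8\secpar$ of them and outputs $\lrag\RRS$, describing the $\le\lceil 8\secpar\K/d\rceil$ original rows that feed those $s$ folded rows, together with $\lrag{\C'}$ encoding all verifier coins and the final folded \pval claim, so that $\C'(\transcMat[\RRS,:])$ recomputes the $s$ folded rows from $\transcMat[\RRS,:]$ and the coins and accepts iff they satisfy the final claim.

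\emph{Prescribed completeness} is essentially definitional: on an honest interaction the \GKR residual is exactly equivalent to the original claim (\Cref{lem:GKR}), each fold is an exact deterministic function of $\transcMat$ and the verifier coins, and $\C'$ recomputes precisely those folded rows, so $\C'(\transcMat[\RRS,:])=1$ iff the final folded \pval claim holds iff, unwinding the exact chain of reductions, $\transcMat\in\pval(\bm\pvalU,\bm\pvalv)$.

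\emph{Unambiguity} is the heart of the proof and, I expect, the main obstacle. I would run a first-deviation (round-by-round) analysis, spreading a $2^{-\secpar-2}$ error budget over the $\tO(1)$ rounds. A first deviation inside the Split step either corrupts the claimed column evaluations---caught with overwhelming probability by \GKR-unambiguity and Schwartz--Zippel (\Cref{lem:GKR,lem:SZ}) over a field with $\abs{\FF}\ge\FboundT$---or produces column claims that $\transcMat$ genuinely fails, in which case the hypothesis $\Delta_c(\pval(\bm\pvalU,\bm0))\ge4d$ together with the triangle inequality for $\Delta_c$ makes $\transcMat$ essentially $\Delta_c$-$2d$-isolated with respect to the resulting \pval set, putting us in the ``far'' regime. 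The core lemma to establish is that each fold \emph{preserves $\Delta_c$-distance}: if before the fold the true folded matrix is $\Delta_c$-$2d$-far from satisfying the current (possibly prover-corrupted) \pval claim, it remains so afterwards except with probability $2^{-\secpar-2}/\lgK$. This rests on three ingredients: (a) because $\pi_i$ is $\eta$-almost-$d$-wise independent with $\eta$ exponentially small---achievable with $\tO(d\lgK\secpar)=\tO(\poly(d))$ gates, using $\lgK\secpar\le d/16$ from $d\ge\dboundm$---the at most $d$ error coordinates in any fixed column land in distinct top/bottom positions with overwhelming probability, so the random linear combination cannot annihilate them; this is exactly why the permutation acts columnwise rather than globally (contrast \cite{TCC:RotRot20} and \Cref{fig:folding}); (b) the Rothblum--Rothblum union-bound argument (as in the proof of \Cref{lem:RVW}), adapted to the $\Delta_c$-ball and run column-by-column, which together with the per-fold analysis is where $\abs{\FF}\ge\FboundT$ is consumed; and (c) the fact that \pval-folding is itself \emph{unambiguous}---not merely sound---over large fields, so a deviating prover cannot steer the folded claim to one the true folded matrix does satisfy. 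Telescoping over the $\tO(\lgK)$ folds and then invoking the final subsampling---the true final folded matrix, on some column, disagrees with every matrix satisfying the final claim on $\ge d$ of its $\Theta(d)$ rows, so $s=8\secpar$ uniform samples hit an inconsistency with probability $\ge1-2^{-\secpar-2}$---shows that in both the YES and NO cases a deviating prover forces $\cV_{\RRrow}$ to reject or $\C'(\transcMat[\RRS,:])=0$, except with probability $2^{-\secpar-2}$.

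\emph{Reduced query, coin-determined $\RRS$, and complexity.} By construction $\abs{\RRS}\le\lceil 8\secpar\K/d\rceil$, and $\RRS$ is a function only of $\pi_1,\dots,\pi_{\lgK}$ and the $s$ sample indices---verifier coins alone---so its distribution is prover-independent. The Split step is a \GKR instance on a $\polylog$-depth circuit and the instance-reduction step has $\lgK$ folds of $O(1)$ rounds, so $\ell_{\RRrow}=\tO(1)$. Communication and verifier time stay at $\tO(\ncol+\poly(d))$---crucially independent of $\pvalT$---because the $\pvalT$ original constraints are delegated through \GKR rather than transmitted or scanned: the verifier only maintains the shrinking folded \pval claim (size $\tO(\ncol)$ once folded to $\Theta(d)$ rows) and the permutation descriptions ($\tO(\poly(d))$ bits). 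The prover performs the polynomial-size folding and the \GKR proofs, so $\Ptime_{\RRrow}=\poly(\K\ncol,\pvalT\cdot\Flog)$. Finally, $\lrag\RRS$ is the list of permutation descriptions and sample indices, giving $\size(G_\RRS)=\tO(\poly(d))$ and $\depth(G_\RRS)=\tO(1)$ (a balanced composition of $\lgK$ inverse-permutation evaluations); $\lrag{\C'}$ additionally carries the final \pval claim, so $\abs{\lrag{\C'}}=\tO(\ncol+\poly(d))$, and $C'$ reads $\abs{\RRS}$ rows, applies $\lgK$ layers of coin-determined linear folding maps, and checks the final claim, giving $\size(C')=\tO(\abs{\RRS}\cdot\ncol)$ and $\depth(C')=\tO(1)$; the same accounting yields $\size(\cV_{\RRrow})=\tO(\ncol+\poly(d))$ and $\depth(\cV_{\RRrow})=\tO(1)$. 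The hardest step throughout is the columnwise $\Delta_c$-distance-preservation lemma, and carrying it out \emph{unambiguously} while keeping $\FboundT$ large enough for the adapted union bound.
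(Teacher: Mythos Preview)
You have mismatched the statement and the proof. The statement labeled \Cref{thm: random reversible circuits are independent} is a \emph{cited} result (Theorem~2 in \cite{SODA:GreHePel25}) about random reversible circuits on $\{0,1\}^{\lgK}$: a random sequence of $\tilde O(d\lgK\log(1/\eta))$ width-2 gates computes an $\eta$-almost $d$-wise independent permutation. The paper does not prove this theorem at all; it is imported as a black box and used only to instantiate the permutation family $\Pi_{\lgK,d,\eta}$ that feeds into \Cref{thm: random set intersection is tiny} and then into the gap-amplification \Cref{lem: Batch RR}.

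Your proposal, by contrast, is a sketch of the \emph{entire} $\prot{\RRrow}$ construction and analysis---essentially \Cref{thm:DcRR-main} (equivalently \Cref{lem:DcRR}). That is a completely different statement. Nothing in your write-up addresses the actual content of \Cref{thm: random reversible circuits are independent}: there is no mention of reversible circuits, gate counts, mixing-time or spectral-gap arguments, or any mechanism by which a product of random width-2 gates approaches $d$-wise independence. If the task is to supply a proof for \Cref{thm: random reversible circuits are independent}, the correct response is that the paper offers none and defers to \cite{SODA:GreHePel25}; if you wish to prove it yourself, you would need the machinery of that paper (e.g., spectral analysis of the random-reversible-circuit walk), which is entirely disjoint from the \pval/\IPP material you wrote.
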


Observe that since the sampled circuit is a random reversible circuit, it is always a permutation, the inverse is easy to compute, and the seed is just the set of gates which can also be described with $\tilde{O}(nk\cdot \log(1/\eta))$ bits and the permutation is also efficient to compute.

\begin{definition}[$\Pi_{\lgK, d, \eta}$]
    We let $\Pi_{\lgK, d, \eta}$ be the distribution from \Cref{thm: random reversible circuits are independent}. 
\end{definition}

\begin{definition}[Uniform distributions]
    Let $\mathcal U_{\lgK, d}$ be the uniform distribution of distinct $d-$tuples over $[2^\lgK]$. Let $\mathcal U'_{\lgK, d}$ be the uniform distribution of $d-$tuples over $[2^\lgK]$.
\end{definition}

\begin{theorem} \label{thm: random set intersection is tiny}
    Let $A_0, A_1 \subseteq \{0,1\}^{\lgK}$ such that $|A_0|, |A_1|\le d$ such that $d\le 2^{\lgK/50}$. Then, \[\Pr_{\pi \sim \Pi_{\lgK, d, \eta}}[|A_0\cap \pi(A_1)|\ge \epsilon d]\le \eta+ \mathrm{exp}(-\epsilon d/6)\]
    if $\epsilon d \ge 2$.
\end{theorem}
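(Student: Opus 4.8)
The plan is to prove this in two moves: first replace the almost $d$-wise independent permutation by a genuinely uniform one, paying $\eta$ in total variation distance, and then bound the size of the random intersection under the uniform model by a direct union bound.

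For the first move, note that the event $\{|A_0 \cap \pi(A_1)| \ge \epsilon d\}$ depends on $\pi$ only through the tuple $(\pi(x))_{x \in A_1}$, which has length $|A_1| \le d$. Since total variation distance cannot increase under marginalization (project a distinct $d$-tuple onto any $|A_1|$ of its coordinates), \Cref{def: almost independent permutations} implies that the law of $(\pi(x))_{x \in A_1}$ under $\pi \sim \Pi_{\lgK, d, \eta}$ is within $\eta$ in total variation of $\mathcal U_{\lgK, |A_1|}$, the uniform distribution on distinct $|A_1|$-tuples over $[2^\lgK]$. (Here one uses $N = 2^\lgK \ge d$ to extend $A_1$ to a full $d$-set; this holds since $d \le 2^{\lgK/50}$.) Hence
\[
\Pr_{\pi \sim \Pi_{\lgK, d, \eta}}\bigl[|A_0 \cap \pi(A_1)| \ge \epsilon d\bigr] \;\le\; \eta \;+\; \Pr_{(y_x)_{x \in A_1} \sim \mathcal U_{\lgK, |A_1|}}\Bigl[\bigl|\{x \in A_1 : y_x \in A_0\}\bigr| \ge \epsilon d\Bigr],
\]
and it remains to bound the second term, which I call $p$.

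Put $N = 2^\lgK$, $m = |A_1| \le d$, and $k = \lceil \epsilon d\rceil$. If $k > m$ then $p = 0$, so assume $k \le m$. Union-bounding over the $k$-subsets $S \subseteq A_1$, and using that for a fixed such $S$ the chance that every $y_x$ with $x \in S$ lands in $A_0$ equals $\prod_{j=0}^{k-1}\frac{|A_0|-j}{N-j} \le (|A_0|/N)^k \le (d/N)^k$, I get
\[
p \;\le\; \binom{m}{k}\,(d/N)^k \;\le\; \binom{d}{k}\,(d/N)^k \;\le\; \Bigl(\frac{ed}{k}\Bigr)^{\!k}\Bigl(\frac{d}{N}\Bigr)^{\!k} \;\le\; \Bigl(\frac{ed}{\epsilon N}\Bigr)^{\!k},
\]
using $\binom{d}{k}\le(ed/k)^k$ and $k \ge \epsilon d$. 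Now the hypotheses enter: $\epsilon d \ge 2$ gives $1/\epsilon \le d/2$, and $d \le 2^{\lgK/50}$ gives $d/N \le 2^{-49\lgK/50}$, so $\frac{ed}{\epsilon N} \le \frac{e}{2}\, d\cdot 2^{-49\lgK/50} \le \frac{e}{2}\,2^{-48\lgK/50} \le e^{-1/6}$ as soon as $\lgK \ge 1$. Combined with $k \ge \epsilon d$ this gives $p \le e^{-\epsilon d/6}$, and adding the $\eta$ from the first move finishes the bound.

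Finally one discharges the degenerate cases. If $\epsilon > 1$, then $|A_0 \cap \pi(A_1)| \le \min(|A_0|,|A_1|) \le d < \epsilon d$ holds deterministically, so the probability is $0$; and if $\epsilon \le 1$ then $d \ge \epsilon d \ge 2$, hence $2^{\lgK/50} \ge d \ge 2$ forces $\lgK \ge 50$, so the computation above applies. The argument is otherwise routine; the only delicate point — and thus the main obstacle — is the constant bookkeeping that follows the union bound, namely ensuring the base $ed/(\epsilon N)$ sits comfortably below $e^{-1/6}$, which is exactly where the slack between $d$ and $2^{\lgK/50}$ in the hypothesis gets spent. (Alternatively one could run a Chernoff tail bound for the hypergeometric variable $|\{x : y_x \in A_0\}|$, whose mean is at most $d^2/N$; the crude union bound already delivers the stated exponent with room to spare.)
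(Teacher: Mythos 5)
Your proof is correct, and it takes a genuinely different route from the paper's. The paper proves the bound by moment domination: it couples the uniform distinct $d$-tuple $\mathcal U_{\lgK,d}$ with the with-replacement distribution $\mathcal U'_{\lgK,d}$, shows that products of the indicator variables $X_a$ are dominated termwise ($\EE[\prod_{a\in A} X_a] \le \EE[\prod_{a\in A} X'_a]$), and then runs a multiplicative Chernoff argument on the moment generating function with mean $\mu \le d^2/2^{\lgK}$. You instead pay the $\eta$ via marginalization of the $d$-wise TV guarantee onto the $|A_1|$ coordinates indexed by $A_1$ (a step the paper elides: it quotes the TV bound directly without noting $|A_1|$ may be strictly less than $d$, whereas your projection argument handles this cleanly), and then bound the overflow event by a union bound over $k$-subsets of $A_1$, using $\binom{d}{k}\le(ed/k)^k$ and $k\ge \epsilon d$. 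The union-bound route is more elementary and makes the role of the slack $d \le 2^{\lgK/50}$ completely explicit, since the whole proof reduces to checking $ed/(\epsilon N) \le e^{-1/6}$; the paper's moment/Chernoff route has the advantage of being the standard template (and would degrade more gracefully if one wanted thresholds closer to the mean), but its constant bookkeeping is harder to verify. Your discharge of the degenerate cases ($\epsilon > 1$, and $\epsilon d \ge 2$ forcing $\lgK \ge 50$) is also correct and slightly more careful than what appears in the paper.
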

\begin{proof}
    Observe that since the indicator random variable for $|A_0\cap \pi(A_1)|\ge \epsilon d$ is Bernoulli, and the TV distance, we have that 
    \ifFOCS
    \begin{align*}
    &\Pr_{\pi \sim \Pi_{\lgK, d, \eta}}[|A_0\cap \pi(A_1)|\ge \epsilon d]\\
    &\le \eta+ \Pr_{\pi \sim \mathcal U_{\lgK, d}}[|A_0\cap \pi(A_1)|\ge 1+\epsilon d].
    \end{align*}
    \else
    \[\Pr_{\pi \sim \Pi_{\lgK, d, \eta}}[|A_0\cap \pi(A_1)|\ge \epsilon d]\le \eta+ \Pr_{\pi \sim \mathcal U_{\lgK, d}}[|A_0\cap \pi(A_1)|\ge 1+\epsilon d].\]
    \fi

    For any $a\in A_1$, let $X_a, X_a'$ be the indicator random variables for $\pi(a)\in A_0$ where $\pi \sim \mathcal U_{\lgK, d}$ and $\mathcal U_{\lgK, d}'$ respectively. Additionally, let $X=\sum X_a$ and $X'=\sum X_a'$. Observe that for any $A\subset A_0$, 
    \ifFOCS
    \begin{align*}
    \EE[\Pi_{a\in A} X_a] &= \Pr[\forall a\in A, X_a=1] \\
    &= \prod_{j=0}^{|A|-1} \frac{|A_0|-j}{2^{\lgK}-j} \\
    &\le \left(\frac{|A_0|}{2^{\lgK}}\right)^{|A|}\le \EE[\Pi_{a\in A} X_a'].
    \end{align*}
    \else
    \[\EE[\Pi_{a\in A} X_a] = \Pr[\forall a\in A, X_a=1] = \prod_{j=0}^{|A|-1} \frac{|A_0|-j}{2^{\lgK}-j} \le \left(\frac{|A_0|}{2^{\lgK}}\right)^{|A|}\le \EE[\Pi_{a\in A} X_a'].\] 
    \fi

    Thus, for any integral $c\ge 0$, we have that $\EE[X^c]\ge \EE[X'^c]$. Now, since we get Chernoff bound by applying Markov Inequality on the moment generating function, we must have that \[\Pr[X\ge \mu(1+\frac{\epsilon d}{\mu}-1)]\le e^{(1-\epsilon d)\cdot \frac{\frac{\epsilon d}{\mu}-1}{1+\frac{\epsilon d}{\mu}}}\le e^{-\epsilon d/6}\] where $\mu = \EE[X]=\frac{|A_0|\cdot |A_1|}{2^{\lgK}}\le \frac{d^2}{2^{\lgK}}\le 1$.
\end{proof}

\subsection{Unambiguous Interactive Proofs of Proximity (\IPP)}

Defined in \cite{EKR04,STOC:RotVadWig13}, Interactive Proofs of Proximity ($\IPP$) for the \pval problem are a key ingredient in several prior works (for example, \cite{STOC:RotVadWig13,RRR18,TCC:RotRot20}). 
They are also used in our protocols. 
Here, we define a slight variant of the standard IPP with the added unambiguity property.

\begin{definition}[\cite{EKR04,STOC:RotVadWig13} (Unambiguous) Interactive Proof of Proximity]
\label{def:IPP}
    Let $d \in \NN$.
    A pair of interacting Turing machines $(\cP, \cV)$, 
    where $\cP$ is deterministic and takes as input a string $x \in \bin^n$ and $\cV$ has oracle access to $x$,
    is an \emph{unambiguous interactive proof of proximity with distance $d$} ($d$-\IPP) for a language $\cL$, 
    under distance $\mathsf{dist}$, with perfect completeness and unambiguity error $\eps = \eps(n)$ if:
    \begin{itemize}
        \item \textbf{Prescribed Completeness:} For all $x \in \bin^n$, the verifier accepts in the interaction $(\cP(x), \cV^x(1^n))$ iff $\mathsf{dist}(x, \cL) < d$.
        \item \textbf{$\eps$-Unambiguity:} For all $x \in \bin^n$, and all provers $\cP^\ast$ that deviates from $\cP$ in some round $j^*$, 
        \[
        \Pr[\cV \text{ accepts the interaction with }\cP^*] \le \eps(n),
        \]
        where the probability is taken over the verifier's random coins after round $j^*$.
    \end{itemize}
    The query complexity $q(n)$ is the number of queries the verifier makes to $x$, the communication and round complexity of the protocol are as usual. 
\end{definition}
\section{Our \bUIP}
Let $\batchK \in \NN$.
For a language $\cL$,
let its batched version $\cL^{\otimes \batchK}$ be 
\ifFOCS
\begin{align*}
\cL^{\otimes \batchK} \coloneqq 
\begin{Bmatrix}
\begin{aligned}
&(x_1,\ldots, x_\batchK) : \\
&\forall i \in [\batchK], x_i \in \cL \text{ and } \abs{x_1} = \ldots = \abs{x_\batchK}
\end{aligned}
\end{Bmatrix}.
\end{align*}
\else
\[
\quad\cL^{\otimes \batchK} \coloneqq \set{(x_1,\ldots, x_\batchK) : \forall i \in [\batchK], x_i \in \cL \text{ and } \abs{x_1} = \ldots = \abs{x_\batchK}}.
\]
\fi
If $\cL$ has a \UIP (\Cref{def:UIP}), then $\cL^{\otimes k}$ has a $\UIP$ obtained by running the underlying $\UIP$ for $\cL$ for each of the $\batchK$ instances in parallel.
This introduces a factor of $\batchK$ overhead.
Our main result is a better batching protocol.
For simplicity, assume the alphabet is $\Sigma = \bin$.
\begin{theorem}[Formal version of \Cref{thm:informal-bUIP}]
    \label{thm:basic-bUIP}
    Let $\cL$ be a language with a (public-coin) $\epsilon$-unambiguous\\
    ($\ell$, $a$, $b$, $\Ptime$, $\Vtime$)-\UIP $\prot{}$,
    where the verifier verdict $\cV$ is a log-space uniform boolean circuit 
    with size $S$ and depth $D$. 
    We also make the simplifying assumptions that the verifier never rejects in the middle of the \UIP, 
    and that $b \le a$.
    
    Let $\bm x = (x_1,\ldots, x_\batchK) \in (\bin^n)^\batchK$ be a batch of statements in $\cL^{\otimes \batchK}$ to be verified, let $\secpar = \secpar(\batchK)\in\mathbb{N}$ be an unambiguity parameter.
    Then there exists an $\epsilon'$-unambiguous $(\ell', a', b', \Ptime', \Vtime', \bin)$-\UIP for the language $\cL^{\otimes \batchK}$, 
    denoted by $\protU{'}$ (\Cref{alg:basic-bUIP}),
    with the following properties, where $\tO$ hides $\poly(\secpar) \cdot\polylog(\batchK, n, a, \ell)$ factors:
    \begin{itemize}
        \item The unambiguity error is $\epsilon' = \tO(\epsilon + 2^{-\secpar})$.
        \item The round complexity is $\ell' = \tO(\ell \cdot D \log S)$.
        \item The prover and verifier message length per round is $a' = b' = \tO(a \ell + \poly(\ell))$.  
        \item The prover runtime is $\Ptime' = \tO(\batchK \cdot \ell \cdot \Ptime + \poly(\batchK, a, \ell, S))$.
        \item The verifier runtime is $\Vtime' = \tO(\ell \cdot \Vtime + a\ell^2(\batchK n + D\log S) + a^2 \cdot \poly(\ell))$.
    \end{itemize}
    Furthermore, 
    the verifier's verdict circuit is log-space uniform and has the following properties:
    \begin{itemize}
        \item $\size(\cV') = \tO(\ell\cdot S + a\ell^2(\batchK n + D\log S) + a^2\cdot \poly(\ell))$.
        \item
        $\depth(\cV') =  D + \tO(1)$.
    \end{itemize}
\end{theorem}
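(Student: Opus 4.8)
The plan is to build $\protU{'}$ by iterating three reductions from \Cref{sec:overview}---a checksum-with-random-continuation \emph{distance-creation} step, the $\Delta_c$-distance-preserving \GKR reduction of \Cref{lem:RVW}, and an execution-wise instance-reduction protocol $\prot{\RRrow}$ for \pval (generalizing \cite{TCC:RotRot20} from Hamming to $\Delta_c$-distance; see \Cref{sec:overview:RR})---and then to push completeness, unambiguity, and the stated complexity bounds through the iteration. Throughout I carry a ``generalized batch claim'' $(x_1,\dots,x_{\batchK'},\bm q,\Phi)$, read as: the \emph{unique} base-\UIP answers of $x_1,\dots,x_{\batchK'}$ on the partially fixed query vector $\bm q$, laid out as a $\K\times\ncol$ matrix $\transcMat$ with $\K=\batchK'\ell$ and $\ncol=\ell a$ (one row per \emph{flattened} base-\UIP execution, one column per transcript coordinate), satisfy the predicate $\Phi$. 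The first iteration is the degenerate case $\bm q=\varnothing$, $\Phi\equiv 1$, i.e.\ $\cL^{\otimes\batchK}$ itself. I set a per-iteration unambiguity budget $2^{-\secpar}$, a distance parameter $d=\tO(\ell)$, and a number of checksum/\GKR evaluations $T=\tO(\poly(a,\ell))$, all polynomial in $\ell,a,\secpar,\log\batchK$ and chosen so as to simultaneously satisfy the hypotheses of \Cref{lem:pval} (large \pval-kernel distance), of \Cref{lem:RVW}, and of $\prot{\RRrow}$; in particular $\abs{\FF}$ is large enough that the error term $\epsilon_{\GKR}^T\cdot(\binom{\K}{d}\abs{\FF}^d)^{\ncol}$ of \Cref{lem:RVW} is at most $2^{-\secpar}$.

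One iteration runs in three phases. \textbf{(i) Distance creation.} The verifier draws fresh coins $\bm q'$ and the parties run, inside a checksum, the random-continuation protocol of \cite{STOC:ReiRotRot16} over the $\K=\batchK'\ell$ flattened executions (instance $i$ and cut $j\in\{0,\dots,\ell-1\}$ use the query prefix $(q_1,\dots,q_j,q'_{j+1},\dots,q'_\ell)$); per round the prover sends only the low-degree extension of that round's $\K$-by-$a$ answer block at $T$ points. By the distance of the checksum kernel (\Cref{lem:pval}), any $\Delta_c$-$d$-ball contains at most one checksum-consistent matrix; and by $\epsilon$-unambiguity of the base \UIP, applied along the random continuation's shared-prefix structure, that single candidate (when it exists) contains a rejecting base-\UIP transcript whenever the current claim is NO or the prover has deviated in this phase. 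Hence the batch-verification circuit $\Psi$---testing that every flattened execution is an accepting base-\UIP transcript, that each instance's $\ell$ hybrid transcripts agree on their shared query prefixes, that the checksum holds, and that the carried-in $\Phi$ holds---has \emph{no} accepting matrix in the relevant $\Delta_c$-$d$-ball in those cases. \textbf{(ii) \GKR.} The parties run the \GKR protocol of \Cref{lem:GKR} on $\Psi$---a log-space-uniform circuit of size $\tO(\batchK'\ell S+a^2\poly(\ell))$ and depth $D+\tO(1)$ hardwiring the instances, $\bm q$, the shared-prefix constraints and the checksum---repeated $T$ times in parallel. By \Cref{lem:RVW} the resulting claim $\transcMat\in\pval(\bm j,\bm v)$ inherits the guarantee that $\transcMat$ is $\Delta_c$-$d$-far from every $\Psi$-accepting matrix, and hence from $\pval(\bm j,\bm v)$, except with probability $\le 2^{-\secpar}$ (this is what pins down $\abs{\FF}$ and $T$); moreover $\bm j$ is uniform, so \Cref{lem:pval} gives $\Delta(\pval(\bm j,\bm 0))\ge 4d$ except with probability $2^{-\secpar}$. \textbf{(iii) Instance reduction.} The parties invoke $\prot{\RRrow}$ on $\pval(\bm j,\bm v)$; its preconditions on $d$ and $\abs{\FF}$ hold by construction, and it returns a row set $\RRS\subsetneq[\K]$ with $\abs{\RRS}\le\ceil{8\secpar\K/d}$ and a predicate $\Phi'$ over $\transcMat[\RRS,:]$ such that, except with probability $2^{-\secpar-2}$, the verifier rejects or $\Phi'(\transcMat[\RRS,:])=0$ whenever $\transcMat\notin\pval(\bm j,\bm v)$. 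Regrouping the surviving rows as the next iteration's generalized batch claim with constraint $\Phi'$, the instance count drops from $\batchK'$ to $\batchK'/\polylog(\batchK)$: the factor-$\ell$ blow-up of the random continuation is exactly absorbed by the factor $\ell$ inside $d=\tO(\ell)$.

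After $t=O(\log\batchK/\log\log\batchK)=\polylog(\batchK)$ iterations only $\polylog(\batchK)$ instances remain; the prover sends their at most $\ell\cdot\polylog(\batchK)$ transcripts in the clear and the verifier checks directly that each is accepting and that all accumulated \GKR, instance-reduction, and \pval constraints hold. \emph{Completeness} follows by chaining prescribed completeness: if $\bm x\in\cL^{\otimes\batchK}$ the prescribed prover is forced at every step and all checks pass; if some $x_i\notin\cL$ the prescribed prover fails the first checksum, and this propagates (via the ``NO $\mapsto$ NO'' direction of each reduction) to a rejected final transcript. \emph{Unambiguity:} a cheating $\cP^*$ first deviates in some iteration $\tau$; prescribed completeness keeps the claim NO through iterations $1,\dots,\tau-1$, and in iteration $\tau$ phase (i)'s guarantee (which applies whether $\cP^*$'s deviation occurs in phase (i), (ii), or (iii)), together with \Cref{lem:RVW} and $\prot{\RRrow}$'s unambiguity, forces the reduced claim to be false or the verifier to reject except with probability $\tO(2^{-\secpar})$; charging the base $\epsilon$ once per iteration (inside the checksum-extraction argument) and union-bounding over the $\polylog(\batchK)$ iterations yields $\epsilon'=\tO(\epsilon+2^{-\secpar})$. \emph{Complexity:} summing, over the $\polylog(\batchK)$ iterations, the per-iteration costs of \Cref{lem:GKR} (size-$\tO(\batchK\ell S+a^2\poly(\ell))$, depth-$(D+\tO(1))$ circuit, $T$ parallel repetitions), of \Cref{lem:RVW}, and of $\prot{\RRrow}$ (with $\ncol=\ell a$, $d=\tO(\ell)$), plus the final in-the-clear step ($\ell\cdot\polylog(\batchK)$ transcripts and as many copies of $\cV$), gives $\ell'=\tO(\ell D\log S)$, $a'=b'=\tO(a\ell+\poly(\ell))$, $\Ptime'=\tO(\batchK\ell\Ptime+\poly(\batchK,a,\ell,S))$, $\Vtime'=\tO(\ell\Vtime+a\ell^2(\batchK n+D\log S)+a^2\poly(\ell))$; the verdict circuit $\cV'$ is the composition of the log-space-uniform, $\tO(1)$-depth \GKR and instance-reduction verdicts with the depth-$D$ final transcript checks, joined by an $O(\log\log\batchK)$-depth AND-tree, so $\size(\cV')=\tO(\ell S+a\ell^2(\batchK n+D\log S)+a^2\poly(\ell))$, $\depth(\cV')=D+\tO(1)$, and $\cV'$ is log-space uniform.

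\textbf{Main obstacle.} The delicate part is keeping the distance \emph{semantics} consistent across the three phases, since \Cref{lem:RVW} and $\prot{\RRrow}$ operate with $\Delta_c$ (execution-wise) distance rather than Hamming distance. One must verify: (a) that $\Psi$ has at most one accepting matrix in every $\Delta_c$-$d$-ball---this is where the large checksum-kernel distance of \Cref{lem:pval} is used, and where the flattening together with the \emph{delegated} shared-prefix constraint is essential, since a prover who (after phase (i)) already knows $\bm q$ could otherwise plant a second nearby accepting matrix; (b) that when $\cP^*$'s phase-(i) checksum corresponds to a large or ill-defined deviation, the $\pval$ instance handed to phase (iii) is still $\Delta_c$-$d$-far from every $\Psi$-accepting matrix, so rejection still follows; and (c) that the cancellation of the random continuation's $\ell$-factor against $d=\tO(\ell)$ persists iteration after iteration, so that $\polylog(\batchK)$ iterations suffice. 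Choosing $d$, $T$ and $\abs{\FF}$ to meet the preconditions of \Cref{lem:pval}, \Cref{lem:RVW} and $\prot{\RRrow}$ at once while holding the per-round communication to $\tO(a\ell+\poly(\ell))$ is the crux.
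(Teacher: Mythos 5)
Your proposal follows the paper's approach: iterate a checksum/random-continuation distance-creation step, a $\Delta_c$-distance-preserving \GKR reduction (\Cref{lem:RVW}), and a $\Delta_c$-compatible instance-reduction protocol for \pval, for $\polylog(\batchK)$ rounds, then check the small residual batch explicitly; the complexity bookkeeping matches the paper's. Two imprecisions worth flagging before you formalize. First, your ``generalized batch claim'' $(x_1,\dots,x_{\batchK'},\bm q,\Phi)$ carries a single query vector $\bm q$, but after one iteration the surviving rows are hybrid executions $(i,\bm q^{\hyb}_j)$ with \emph{different} coin sequences per row, so a single $\bm q$ does not accommodate the state; the paper's fix is to carry a set $\cS$ of pairs $(i,\bm q)$ and the associated language $\cL'_{\bm x}$ (\Cref{def:int_claim}), so that the distance-creation step maps $\cS\mapsto\cS^{\hyb}$ and instance-reduction maps $\cS^{\hyb}\mapsto\cS^{\hyb}[\RRS,:]$ cleanly. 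Second, in your unambiguity argument the phrase ``prescribed completeness keeps the claim NO through iterations $1,\dots,\tau-1$'' handles only the soundness case; for unambiguity you must also cover $\bm x\in\cL^{\otimes\batchK}$, where the claim stays \emph{true} through those iterations and becomes false (or rejected) only at iteration $\tau$ by the unambiguity of the first subprotocol in which $\cP^*$ deviates. The rest of your reasoning does handle that case, but the sentence as written would, in a full writeup, leave a gap; replace it with ``prescribed completeness preserves the claim's prescribed truth value through iterations $1,\dots,\tau-1$.''
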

The unambiguity parameter $\secpar$ controls the unambiguity error $\epsilon'$,
which we show to be concretely upper-bounded by $2\log \batchK \cdot(\epsilon + 2^{-\secpar})$ in \Cref{sec:pf-bUIP}.
Increasing $\secpar$ imposes an overhead factor of $O(2^\secpar)$ in the prover's message length and the prover's and verifier's runtime.

If $D\log S = \polylog(\batchK, n)$, then \Cref{thm:informal-bUIP} follows by choosing $\secpar = \log \frac{1}{\epsilon}$. 
With $\tO$ hiding $\polylog(\secpar^{-1}, \batchK, n, a, \ell)$ factors,
\begin{itemize}
    \item $\epsilon' = O(\epsilon \log \batchK)$.\footnote{We require $\epsilon = O(1/\log\batchK)$ for the unambiguity error to be non-trivially small.}
    \item $\ell' = \tO(\ell)$,
    \item $a' = b' = \tO(a \ell)$,
    \item $\Ptime' = \tO(\batchK \cdot \ell \cdot \Ptime + \poly(\batchK, a, \ell, S))$,
    \item $\Vtime' = \tO(\ell \cdot \Vtime +\batchK n a\ell^2 + a^2 \cdot \poly(\ell))$. Note that \Cref{thm:informal-bUIP} uses the looser bound $\batchK n a \poly(\ell)$ for the second and third term.
\end{itemize}

\subsection{Our Ingredients}  
\label{sec:iter}

\paragraph{Our Construction}

Our \bUIP protocol consists of two main ingredients: a \emph{Distance Generation Protocol} and an  \emph{Instance Reduction Protocol}. 
It iteratively applies these two protocols, where in each iteration, 
the \emph{Distance Generation Protocol} $\prot{\Distance}$ is applied,
followed by the application of \emph{Instance Reduction Protocol} $\prot{\Reduce}$.

The Distance Generation Protocol (\Cref{clm:phase1}) ensures that if the $k$-fold input instance was not in the language,
then after running it,
the prover and verifier obtain a related instance that is far from the language.
While the Reduction Protocol (\Cref{clm:phase2}) has the guarantee that as long as the given $k$-fold input instance is far from the language,
it produces a significantly smaller $k'$-fold instance not in the language 
(but without any distance guarantees).
Therefore, as long as the overall shrinkage is significant, say larger than a factor of $2$,
the parties can repeat the two sub-protocols $\nround = O(\log \batchK)$ times,
and end up having roughly $\batchK \cdot 2^{-\nround} = \polylog(\batchK)$ instances that can be explicitly checked.

In the initial stage,
the verifier samples a random $\bm q \gets \bin^{b\ell}$,
and sends the entire $\bm q$ to the prover.
This $\bm q$, along with the statements $(x_1,\ldots, x_\batchK)$,
implicitly define a set of $\batchK$ ``prescribed answers'':
For each $i$, denote by $\transc^{x_i, \bm q} = \cP(x_i, \bm q) \in \bin^{a\ell}$ the answers the base \UIP's prescribed prover $\cP$ would send,
on input statement $x_i$ upon receiving $\bm q$ from the verifier.

We denote by $\C^{(0)}$ the boolean circuit that has $\bm q$ and $(x_1,\ldots,x_k)$ hardwired. It takes the input $\transcMat^{\bm q} \coloneqq \set{\transcMat^{x_i, \bm q}}_{i \in [\batchK]}$
and outputs $1$ iff $\cV(x_i, \bm q, \transcMat^{x_i, \bm q}) = 1$ for all $i \in [\batchK]$.
We treat $\transcMat^{\bf q}$ as a matrix in $\bin^{\batchK \times (a\ell)}$ whose $i$'th row is $\transc^{x_i, \bm q}$,
and call $\transcMat^{\bm q}$ the \emph{prescribed matrix} corresponding to $\bm q$.

Let $\cL'_{\bm x}$  be the \emph{associated language} that consists of all instances $(\cS, \C)$,
where 
\[\cS = \set{(i^1, \bm q^1), \ldots, (i^g, \bm q^g)}\]
is a set of $g$ pairs in $[\batchK] \times \bin^{b\ell}$,
such that the prescribed matrix $\transcMat^{\cS} \in \bin^{g \times (a\ell)}$, consisting of rows $\set{\transc^{x_{i^j}, \bm q^j}}_{j \in [g]}$ where $\transc^{x_{i^j}, \bm q^j}=\cP(x_{i^j},\bm q^j)$,  satisfies the constraint $\C$.
Note that the language $\cL'_{\bm x}$ is not necessarily in $\NP$,
but as we shall see in \Cref{sec:final-check},
there is a \UIP for explicitly checking it.
We mention that 
the prescribed matrix $\transcMat^{\cS}$ replaces the role of the implicit vector of unique witnesses in the case of \bUP.

It follows from the prescribed completeness of the base \UIP that for $\cS^{(0)} \coloneqq \set{(i, \bm q)}_{i \in [\batchK]}$, 
\[
(\cS^{(0)},\C^{(0)}) \in \cL'_{\bm x} ~\mbox{ iff }~\bm x \in \cL^{\otimes \batchK}.
\]
In each iteration,
the following two sub-protocols are run:
\begin{enumerate}
    \item The Distance Generation Protocol is applied to $(\cS^{(0)}, \C^{(0)}) \in \cL'_{\bm x}$ with distance parameter $d = \ell \cdot \polylog(\batchK, \ell) \in \NN$.  At the end of this protocol, the verifier obtains a set $\cSmid$ containing $\K = |\cS^{(0)}| \cdot \ell$ pairs, where initially $|\cS^{(0)}|=|\{(i,\bm q)\}_{i\in\batchK}|=\batchK$,
        as well as a predicate $\Cmid$ that is evaluated over the prescribed matrix $\transcMat^{\cSmid}$.
        The guarantee of our Distance Generation Protocol is that when $(\cS^{(0)}, \C^{(0)}) \notin \cL'_{\bm x}$,
        the prescribed matrix $\transcMat^{\cSmid}$ is $\Delta_c$-$d$-far (\Cref{def:Deltac-dist}) from satisfying the predicate $\Cmid$.

    \item Then, the Instance Reduction protocol with shrinkage parameter $d = \ell \cdot \polylog(\batchK, \ell) \in \NN$ transforms the claim $(\cSmid, \Cmid) \in \cL'_{\bm x}$ to a new claim $(\cS^{(1)}, \C^{(1)}) \in \cL'_{\bm x}$,
        where $\cS^{(1)}$ is a subset of $\cSmid$ with size $\K' \le \K/d$.
        The Instance Reduction Protocol guarantees that $(\cS^{(1)}, \C^{(1)}) \notin \cL'_{\bm x}$ if the prescribed matrix $\transcMat^{\cSmid}$ is $\Delta_c$-$d$-far (\Cref{def:Deltac-dist}) from satisfying $\Cmid$.
        Importantly, the verifier never needs to explicitly access the prescribed matrix $\ippinput$.
\end{enumerate}

We set the distance/shrinkage parameter to be $d > 2 \ell$ to offset the intermediate factor of $\ell$.
In other words,
\[\abs{\cSmid}\leq \ell\cdot\abs{\cS^{(0)}}~~\mbox{ and }~~\abs{\cS^{(1)}}\leq \frac{\abs{\cSmid}}{d} < \frac{\abs{\cS^{(0)}}}{2},
\]
and thus each iteration shrinks the number of instances at least by a factor of~$2$, which implies that the parties need to iterate at most $\nround = \floor{\log \batchK}$ times,
until they can check the final claim $(\cS^{(\nround)}, \C^{(\nround)}) \in \cL'_{\bm x}$,
defined over $\abs{\cS^{(\nround)}} \le \batchK/2^\nround$ pairs.
The final check can be done by explicitly running the \UIP for verifying $(\cS^{(\nround)}, \C^{(\nround)}) \in \cL'_{\bm x}$ (as formally proven in \Cref{lem:final-check} below).

\paragraph{The associated language $\cL'_{\bm x}$}
We formally set up the notations to define the language $\cL'_{\bm x}$.
\begin{definition}[Prescribed transcript relation $\cR$]
    \label{def:prescribed}
    Let $\cR$ be the binary relation containing pairs $((x, \bm q), \transc) \in (\bin^n \times \bin^{b\ell}) \times \bin^{a\ell}$, such that 
    \begin{itemize}
        \item $\cV(x, \bm q, \transc) = 1$, and 
        \item 
        $\transc$ is the \emph{prescribed prover transcript}, i.e. $\transc = \cP(x, \bm q) \in \bin^{a\ell}$ is the prescribed prover messages $(\answer_1^{x, \bm q},\ldots,\answer_\ell^{x, \bm q})$ sent by $\cP$ on input $x$ when interacting with $\cV$ that uses $\bm q$ as its random coins.\footnote{Recall that if $x \notin \cL$, $\transc^{x, \bm q} = \cP(x, \bm q)$ is the default all-zero string, and $\cV(x, \bm q, \transc^{x, \bm q}) = 0$.}
    \end{itemize}
    \textbf{Notations}
    (Prescribed prover messages $\transc^{x, \bm q}$, $\answer_j^{x, \bm q}$, and matrices $\transcMat^{\cS}, \answerMat^{\cS}_j$)
    For any $x \in \bin^n$ and any  $\bm q \in (\bin^{b})^\ell$,
    the shorthand $\transc^{x, \bm q} \in \bin^{a\ell}$ stands for the (unique) prescribed prover messages 
    corresponding to $(x, \bm q)$.  Namely, $\transc^{x, \bm q}:=\cP(x, \bm q)$.
    Similarly,
    for every $j \in [\ell]$,
    the shorthand $\answer_j^{x, \bm q}$ means the $j$-th round prover message in 
    $\transc^{x, \bm q} \coloneqq (\answer_1^{x, \bm q},\ldots,\answer_\ell^{x, \bm q})$.
    More generally, given a batch of statements $\bm x = (x_1,\ldots, x_\batchK) \in (\bin^n)^\batchK$,
    and a subset of $g$ pairs $\cS \subset [\batchK] \times (\bin^b)^\ell$,
    we denote by $\transcMat^{\cS} \coloneqq \left((\transc^{x_i, \bm q})_{(i, \bm q) \in \cS}\right)^\top$ the matrix whose rows are $\transc^{x_i, \bm q} \in \bin^{a\ell}$.
    For every $j \in [\ell]$,
    $\answerMat^{\cS}_j \coloneqq \left((\answer_j^{x_i, \bm q})_{(i, \bm q) \in \cS}\right)^\top$ is the matrix
    whose rows are $\answer_j^{x_i, \bm q} \in \bin^a$,
    the $j$-th round prescribed message in $\transc^{x_i, \bm q}$.
\end{definition}

Note that $\cR$ is not necessary a $\UP$ (or even $\NP$) relation,
but for every $x \in \cL$ and every $\bm q \in \bin^{b\ell}$,
there is a single $\transc$ that satisfies $((x, \bm q), \transc) \in \cR$.
As we shall see in \Cref{lem:Lxq},
when $x \in \cL$, 
there exists a \UIP for certifying that $\transc$ is indeed the (unique) $\transc^{x, \bm q}$ for which $((x, \bm q), \transc) \in \cR$.
Therefore, in this sense,
we can treat $\transc^{x, \bm q}$ as the unique witness for the relation $((x, \bm q), \transc) \in \cR$.

We now formally define the associated language $\cL'_{\bm x}$.
For $(\cS, \C) \in \cL'_{\bm x}$ to hold,
not only should all remaining $x_i$'s be in $\cL$,
a joint constraint $\C$ should also be satisfied on the prescribed matrix $\transcMat^{\cS}$ associated with $\cS$.
\begin{definition}[The associated language $\cL'_{\bm x}$]
    \label{def:int_claim}
    Let $\bm x \in (\bin^n)^\batchK$,
    $\cS \subset [\batchK] \times \bin^{b\ell}$ be a set of $g \coloneqq \abs{\cS}$ pairs,
    and $\C$ be a boolean circuit that takes $\transcMat^{\cS}$ as input.
    Let $\I \coloneqq \set{i: (i, \bm q) \in \cS}$.
    Then 
    \[(\cS, \C) \in \cL'_{\bm x}~~\mbox{ iff }~~ \C(\transcMat^{\cS}) = 1~~\wedge~~\bm x |_{\I} \in \cL^{\otimes \abs{\I}}.
    \]
\end{definition}
    
\subsubsection{The Distance Generation Protocol}
We give the full specification of $\prot{\Distance}$ and its analysis in \Cref{sec:phase1},
which has the guarantees stated in \Cref{clm:phase1} below.

The protocol takes as input $\param{}$,
an unambiguity parameter $\secpar \in \NN$,
a distance parameter $d \in \NN$ and a field parameter $\FF$ with $\abs{\FF} \ge 2^\secpar \cdot \polylog(\batchK, n, a, \ell)$,
and outputs $\parammid{}$ such that $(\cSmid, \Cmid) \in \cL'_{\bm x}$ iff $(\cS, \C) \in \cL'_{\bm x}$.
It has a slightly increased instance size $\abs{\cSmid} = \abs{\cS} \cdot \ell$,
but has the guarantee that if $\C(\bm a^\cS)=0$ then $\ippinput$ is 
$\Delta_c$-$d$-far from satisfying the circuit $\Cmid$,\footnotemark
\footnotetext{Recall that $\ippinput$ is $\Delta_c$-$d$-far from satisfying $\Cmid$, if for every $\bm a^* \in \rowball(\ippinput)$, it holds that $\Cmid(\bm a^*) = 0$.} where $\bm a^\cS$ and $\ippinput$ are the prescribed matrices corresponding to $\cS$ and $\cSmid$, respectively.

\begin{lemma}[$\Delta_c$-Distance Generation Protocol]
    \label{clm:phase1}
    Let $\cL$ be a language with an $\epsilon$-unambiguous ($\ell$, $a$, $b$, $\Ptime$, $\Vtime$)-\UIP $\prot{}$.
    Let $S$ and $D$ be the size and depth of the decision circuit $\cV$.

    There exists a protocol \prot{\Distance} (\Cref{alg:phase1}) that takes as input $\bm x, \param{}$,
    where $\bm x \in (\bin^n)^{\batchK}$,
    $\cS \subset [\batchK] \times \bin^{b\ell}$,
    and $\C : \bin^{\abs{\cS} \times a \ell} \to \bin$,
    an unambiguity parameter $\secpar \in \NN$,
    a distance parameter $d \in \NN$,
    a field $\FF$ with $\abs{\FF} \ge 2^\secpar \cdot \polylog(\batchK, n, a, \ell)$,
    and outputs \itermidone{} such that $\abs{\cSmid} = |\cS| \cdot \ell$,
    and the following holds.
    \begin{itemize}[label=-]
        \item \textbf{Prescribed Completeness:}
        
        In an honest execution, $(\cSmid, \Cmid) \in \cL'_{\bm x}$ iff $(\cS, \C) \in \cL'_{\bm x}$.
        \item \textbf{$(\epsilon + 2^{-\secpar})$-Unambiguous Generation of $\Delta_c$-Distance:} 
        For every cheating prover $\cP^*$ that deviates from $\cP_\Distance$ first in round $j^*$,
        conditioned on the first $j^*$ rounds of the protocol,
        \ifFOCS
        \begin{align*}
        &\Pr\left[\exists \transcMat^* \in \rowball(\ippinput)\text{ s.t. } \Cmid(\transcMat^*) = 1\right] \\
        &\le \epsilon + 2^{-\secpar},
        \end{align*}
        \else
        \[
        \Pr\left[\exists \transcMat^* \in \rowball(\ippinput)\text{ s.t. } \Cmid(\transcMat^*) = 1\right] \le \epsilon + 2^{-\secpar},
        \]
        \fi
        where $\parammid{}$ denotes the output of the interaction between $\cV_\Distance$ and $\cP^*$, and where the probability is over the remaining coins of $\cV_\Distance$.

        Furthermore,
        for any (honest or cheating) prover $\cP^*$, 
        $\ippinput$ is likely the only input in $\rowball(\ippinput)$ that satisfies $\Cmid$.
        Formally, 
        \ifFOCS
        \begin{align*}
        &\Pr\begin{bmatrix}
            \begin{aligned}
            &\exists \transcMat^* \in \rowball(\ippinput):~ \transcMat^* \ne \ippinput\\
            &\wedge \GKRcirc(\transcMat^*) = 1
            \end{aligned}
            \end{bmatrix} \\
        &\le \epsilon + 2^{-\secpar},
        \end{align*}
        \else
        \[
        \Pr\left[\exists \transcMat^* \in \rowball(\ippinput):~ \transcMat^* \ne \ippinput ~\wedge~ \GKRcirc(\transcMat^*) = 1\right] \le \epsilon + 2^{-\secpar},
        \]
        \fi
        where the probability is over the random coins of $\cV_\Distance$.
    \end{itemize}
    Let $\tO$ be hiding $\polylog(\abs{\FF}, \batchK, n, a, \ell)$ factors.
    Denote by $|\cS|=g$,
    then
    \begin{itemize}
        \item $\ell_\Distance = O(\ell)$.
        \item $a_\Distance = \tO(d a)$.
        \item $b_\Distance = \tO(\max(b, d)) = \tO(d a)$, with the simplifying assumption that $b \le a$.
        \item $\Ptime_\Distance = \tO(g \ell \cdot \Ptime + dga\ell^2 + \abs{\lrag \C} + \abs{\lrag \cS} + \batchK \cdot n)$.
        \item $\Vtime_\Distance = \tO(da\ell + \abs{\lrag\C} + \abs{\lrag{\cS}} + \batchK \cdot n)$.
    \end{itemize}
    The bit lengths of the descriptions are $\abs{\lrag{\cSmid}} = \tO(\abs{\lrag\cS} + a\ell)$, $\abs{\lrag{\Cmid}} = \tO(\abs{\lrag\C} + \abs{\lrag{\cS}} + da\ell + \batchK \cdot n)$.
    Let $G(i, \lrag \cS)$ be the circuit that returns the $i$-th element in $\cS$,
    and $C(x, \lrag \C)$ be the circuit that returns $\C(x)$,
    then the new circuits $G_\Distance(i, \lrag \cSmid)$ and $C_\Distance(x, \lrag \Cmid)$,
    satisfy the following.
    \begin{itemize}
        \item $\size(G_\Distance) = \size(G) + b\ell + \tO(1)$.
        \item $\depth(G_\Distance) = \depth(G) + \tO(1)$.
        \item $\size(C_\Distance) = \tO(g \ell \cdot \size(G) + \size(C) + g \cdot S + gda\ell^2)$.
        \item $\depth(C_\Distance) \le \max(\depth(C), \depth(G) + D) + \tO(1)$.
    \end{itemize}
    The verifier never rejects this protocol,
    i.e. $\cV_\Distance$ always outputs 1.
\end{lemma}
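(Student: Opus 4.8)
The plan is to realize $\prot{\Distance}$ as a \emph{random-continuation} step followed by a \emph{checksum} step, taking $\Cmid$ to be the low-depth, succinctly-described circuit that enforces every check the honest prover is supposed to pass (reducing $\Cmid$ to an actual \pval claim via \Cref{lem:RVW} is deferred to the instance-reduction protocol). Concretely, on input $\param{}$ with $\cS=\{(i^1,\bm q^1),\dots,(i^g,\bm q^g)\}$ the verifier draws fresh continuation coins $\bm q'=(q'_1,\dots,q'_\ell)$ and $\cksumT$ \LDE evaluation points $\cksumU$ over an extension field $\KK\supseteq\FF$ with $\abs{\KK}\ge 2^\secpar\cdot\polylog(\batchK,n,a,\ell)$ (with $\cksumT=\tO(d)+\secpar$ dictated by \Cref{lem:pval}), forms the hybrid queries $\bm q^\hyb_j\coloneqq(q_1,\dots,q_j,q'_{j+1},\dots,q'_\ell)$ for $(i,\bm q)\in\cS$ and $j\in[\ell]$, and sets $\cSmid\coloneqq\{(i,\bm q^\hyb_j)\}$, so $\abs{\cSmid}=g\ell$; the description $\lrag{\cSmid}$ just packs $\lrag{\cS}$ with $\bm q'$. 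In round $r=1,\dots,\ell$ the verifier reveals $q'_r$ and the prover answers with $\cksumv_r=\cksum_{\cksumU}(\answerMat^{\cSmid}_r)$, the \LDE at $\cksumU$ of the $r$-th round block of the prescribed matrix $\transcMat^{\cSmid}$; the honest prover computes this by running $\cP$ on the $g\ell$ pairs in $\cSmid$, costing $\tO(g\ell\cdot\Ptime)$. Finally $\Cmid$ is the circuit accepting a candidate matrix $\transcMat$ iff (i) its $\cS$-submatrix satisfies the old $\C$, (ii) every row is an accepting base-\UIP transcript for the corresponding $(x_i,\bm q^\hyb_j)$ and, across the hybrids of a common instance, their $(ja)$-bit prover-message prefixes coincide, (iii) its \LDE values at $\cksumU$ equal the committed $\cksumv$, and (iv) it is Boolean. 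The verifier never rejects; it only outputs $(\bm x,\lrag{\cSmid},\lrag{\Cmid})$. \emph{Prescribed completeness} is immediate from \Cref{def:int_claim}: in an honest run $\Cmid(\transcMat^{\cSmid})=1$ exactly when every $x_i$ in the relevant index set lies in $\cL$ and $\C(\transcMat^{\cS})=1$, i.e.\ exactly when $(\cS,\C)\in\cL'_{\bm x}$. The efficiency and circuit size/depth bounds are bookkeeping: $\cksumT$ is fixed by \Cref{lem:pval}, the prover/verifier work is $\cP$/$\cV$ plus constructing the descriptions, and $\depth(C_\Distance)$ comes from stacking $\cV$ (depth $D$) beneath the generator of $\C$.

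\paragraph{The distance guarantee.} This is the crux, and I would isolate two facts. \textbf{(a)~The checksum is injective on $\Delta_c$-$2d$-balls.} If two matrices both $\Delta_c$-$d$-close to $\transcMat^{\cSmid}$ share the same $\cksumU$-checksum, then their difference has every column of Hamming weight $\le 2d$ and lying in $\pval(\cksumU,\bm 0)$; by \Cref{lem:pval} (Schwartz--Zippel, union-bounded over the $\binom{\K}{2d}\abs{\KK}^{2d}$ low-weight column patterns and over the $a\ell$ columns) this difference is $\bm 0$ with probability $\ge 1-2^{-\secpar}$ over $\cksumU$; call this event $E_{\mathrm{cks}}$. \textbf{(b)~A deviation propagates to a fresh-coin continuation.} Suppose $\cP^*$ first deviates in round $j^*\in[\ell]$, i.e.\ sends a wrong $\cksumv_{j^*}$ while $\cksumv_{<j^*}$ were correct, and suppose for contradiction that some $\transcMat^*\in\rowball(\transcMat^{\cSmid})$ with $\transcMat^*\ne\transcMat^{\cSmid}$ satisfies $\Cmid$. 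On $E_{\mathrm{cks}}$ there is a unique $\Delta_c$-$d$-close matrix carrying the committed checksums, so $\transcMat^*$ is it and agrees with $\transcMat^{\cSmid}$ on all rounds $<j^*$. Let round $r_0\le j^*$ and a row be a first place where $\transcMat^*$ departs from the prescribed transcript. If that row's round-$r_0$ query is a fresh coin $q'_{r_0}$, then that row is a base-\UIP transcript that follows the prescribed prover through round $r_0-1$ and deviates at round $r_0$, hence is rejecting with probability $\ge 1-\epsilon$ over $q'_{r_0+1},\dots,q'_\ell$ by $\epsilon$-unambiguity, contradicting check~(ii). If instead that query is an original coin (so the row is $(i,\bm q^\hyb_j)$ with $j\ge r_0$), then by the prefix check the hybrid $(i,\bm q^\hyb_{r_0})$ must agree with it on its first $r_0$ rounds, hence also deviates at round $r_0$, while its rounds $>r_0$ run on the fresh coins $q'_{r_0+1},\dots,q'_\ell$, so the same unambiguity step applies to $(i,\bm q^\hyb_{r_0})$ (and if $r_0=\ell$ there is no continuation, but unambiguity in the last round is deterministic, so a non-prescribed final message is rejecting outright). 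Hence $\Pr[\exists\,\transcMat^*\ne\transcMat^{\cSmid}\text{ in }\rowball(\transcMat^{\cSmid})\text{ with }\Cmid(\transcMat^*)=1]\le\epsilon+2^{-\secpar}$, which is the ``Furthermore'' clause; and since after a deviation $\transcMat^{\cSmid}$ itself fails check~(iii), we conclude that with probability $\ge 1-(\epsilon+2^{-\secpar})$ \emph{no} matrix in $\rowball(\transcMat^{\cSmid})$ satisfies $\Cmid$ --- exactly the claimed unambiguous generation of $\Delta_c$-distance. (For an honest prover on a NO-instance the conclusion is direct: $\transcMat^{\cSmid}$ fails (i) or (ii), and on $E_{\mathrm{cks}}$ no other close matrix can carry the true checksums.)

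\paragraph{Where the difficulty lies.} The delicate step is (b), and in particular the need to run it with \emph{execution-wise} ($\Delta_c$) rather than Hamming distance: $\Delta_c$-$d$-closeness does not bound the number of altered rows, so one cannot ``extract the few altered executions'' and argue row-by-row. The fix is the observation that \emph{any} alteration, however spread out, must touch some row at a round that is governed by a fresh continuation coin --- which is exactly why the hybrids are chained by the prefix-equality check and why the contradiction must be routed through $\bm q^\hyb_{r_0}$, the hybrid that shares precisely the deviating prefix, rather than an arbitrary one. A second, bookkeeping-level obstacle is the interleaving of the coins $q'_r$, the prover's round-$r$ checksums, and the fact that $\transcMat^{\cSmid}$ itself depends on $\bm q'$: one fixes the prover strategy, locates the first round whose committed checksum disagrees with the true one (well-defined once $\bm q'_{\le j^*}$ is revealed), and bounds the propagated-rejection probability conditioned on the revealed prefix, so that a single appeal to $\epsilon$-unambiguity and a single appeal to \Cref{lem:pval} deliver the $\epsilon+2^{-\secpar}$ error. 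The remaining items --- the explicit $G_\Distance$ and $C_\Distance$, the description lengths, and the size/depth accounting --- I expect to be routine.
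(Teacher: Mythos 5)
Your construction and completeness argument match the paper's exactly: the random-continuation hybrids, the per-round LDE checksums, and the four-part circuit $\Cmid$ are the same as in \Cref{alg:phase1}, and the appeal to \Cref{lem:pval} for the checksum's unique-decoding property is identical to the paper's \Cref{lem:unique-decoding}.

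The one place you diverge is in how the unambiguity bound is proved. The paper factors the argument through the Transcript Checker $\protU{^\tc}$ (\Cref{lem:Lxq}): it constructs, from the committed checksums and unique decoding, an explicit deterministic cheating prover $\cP^\tc_*$ for $\protU{^\tc}$ that deviates first at round $r^*$, observes that event $E$ implies $\cP^\tc_*$ is accepted, and then invokes the already-proved $\epsilon$-unambiguity of $\protU{^\tc}$. You instead reason directly about the extracted matrix $\transcMat^*$: you locate a deviating row at round $j^*$, split into the two cases depending on whether the round-$j^*$ query on that row is fresh or original, and in the latter case reroute via the hybrid $(i,\bm q^\hyb_{j^*})$ using the prefix-consistency check before invoking base-\UIP unambiguity. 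This is a genuine unrolling of the Transcript Checker's unambiguity proof into the present argument, and it is sound, but it forces you to redo the bookkeeping that $\protU{^\tc}$ packages: you must verify that the extracted per-row answers are indeed a deterministic function of the row's query prefix (they are, because each $\cksumv_r$ and hence each decoded $\answerMat^*_r$ depends only on $\bm q'_{\le r}$ and the conditioned-on prefix), that the deviating row is determined once $\bm q'_{\le j^*}$ is fixed, and that the Case-2 reroute is only invoked under the prefix-consistency hypothesis that $\Cmid(\transcMat^*)=1$ supplies. You flag these points informally in your ``where the difficulty lies'' paragraph, but they must be nailed down for the appeal to $\epsilon$-unambiguity to be legitimate. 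The paper's abstraction via $\protU{^\tc}$ localizes exactly this bookkeeping into \Cref{lem:Lxq}, at the cost of one extra lemma; your inlined version is shorter but more delicate. One small imprecision: you write ``round $r_0\le j^*$'' for the first departure, but unique decoding and the minimality of $j^*$ force $r_0=j^*$ exactly; nothing breaks, but stating it as an equality would remove a loose thread. Likewise, the extension field $\KK$ you introduce is unnecessary --- the field $\FF$ is already chosen large enough.
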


\subsubsection{The Instance Reduction Protocol}
The full specification of $\prot{\Reduce}$ and its analysis are in \Cref{sec:phase2}.
It has the following properties.

The protocol takes as input $\parammid$,
as well the parameters $\secpar, d \in \NN$ and the field $\FF$ (i.e. the same parameters as $\prot{\Distance}$),
and outputs $\param{'}$ such that $\abs{\cS'} \approx \frac{\abs{\cSmid}}{d}$,
with the guarantee that $(\cS', \C') \in \cL'_{\bm x}$ only if $\ippinput$ is $\Delta_c$-$d$-close to satisfying $\Cmid$.
Importantly, $\cV_\Reduce$ never accesses $\ippinput$ explicitly during this protocol.

\begin{lemma}[Instance Reduction Protocol for $\Delta_c$-Distance]
    \label{clm:phase2}
    Let $\cL$ be a language with an \\$\epsilon$-unambiguous $(\ell, a, b, \Ptime, \Vtime)$-\UIP $\prot{}$,
    where the verifier's decision $\cV$ is a log-space uniform boolean circuit of size $S$ and depth $D$.

    There exists a constant $C$,
    and a protocol \prot{\Reduce} (\Cref{alg:Reduce}) that takes as input $\bm x, \parammid$,
    where $\bm x \in (\bin^n)^{\batchK}$,
    $\cSmid \subset [\batchK] \times \bin^{b \ell}$,
    $\Cmid : \bin^{\abs{\cSmid} \times a \ell} \to \bin$,
    an unambiguity parameter $\secpar \in \NN$,
    a distance parameter $d \in \NN$,
    a field $\FF$ that satisfies $\abs{\FF} \ge \FboundconcreteT$,
    and outputs $\param{'}$,
    with $\abs{\cS'} = \tO(\frac{\abs{\cSmid}}{d})$ (where $\tO$ hides $\polylog(\abs{\FF}, \batchK, n, a, \ell)$ factors).
    Let $\K \coloneqq \abs{\cSmid} \le \batchK \cdot \ell$,
    and suppose $\K = 2^{\lgK}$ for $\lgK \in \NN$.
    If additionally, the technical condition $d \ge \dboundm$ holds,
    then the protocol has the following guarantees.
    \begin{itemize}[label=-]
        \item \textbf{Prescribed Completeness:} \\If $\cV_\Reduce$ interacts with $\cP_\Reduce$,
        then $(\cS', \C') \in \cL'_{\bm x}$ iff $(\cSmid, \Cmid) \in \cL'_{\bm x}$.
        \item \textbf{$2^{-\secpar}$-Unambiguous Distance Preservation:} 
        Suppose for every $\transcMat^* \in \rowball(\ippinput) \setminus \set{\ippinput}$, $\Cmid(\transcMat^*) = 0$, 
        then for every cheating prover $\cP^*$ that deviates from $\cP_\Reduce$ first in round $j^*$,
        let the output of the interaction between $\cV_\Reduce$ and $\cP^*$ be $\param{'}$,
        then
        \[
        \Pr[\cV_\Reduce \text{ accepts } \wedge~ \C'(\transcMat^{\cS'}) = 1] \le 2^{-\secpar}, 
        \]
        where the probability is over $\cV_\Reduce$'s remaining coins. 
    \end{itemize}
    The protocol has the following complexities.
    \begin{itemize}
        \item $\ell_\Reduce = \tO(\ell \cdot \depth(C_\Distance) \log \size(C_\Distance))$.
        \item $a_\Reduce = b_\Reduce = \tO(da + \poly(d))$.
        \ifFOCS
        \item $\begin{aligned}[t]
            \Ptime_\Reduce = \poly(&\Flog, \batchK, a, \ell, d, \size(C_\Distance), \\
            &\abs{\lrag{\cS_\Distance}},\abs{\lrag{\C_\Distance}}, n).
        \end{aligned}$
        \else
        \item $\Ptime_\Reduce = \poly(\Flog, \batchK, a, \ell, d, \size(C_\Distance), \abs{\lrag{\cS_\Distance}},\abs{\lrag{\C_\Distance}}, n)$. 
        \fi
        \ifFOCS
        \item $\begin{aligned}[t]
            \Vtime_\Reduce = \tO(&d a\ell \cdot (\depth(C_\Distance) \log \size(C_\Distance)\\
            &+ \abs{\lrag{\Cmid}}) + \poly(d) \\
            &+ \abs{\lrag{\cSmid}} + \batchK \cdot n).
        \end{aligned}$
        \else
        \item $\Vtime_\Reduce = \tO(d a\ell \cdot (\depth(C_\Distance) \log \size(C_\Distance) + \abs{\lrag{\Cmid}}) + \poly(d) + \abs{\lrag{\cSmid}} + \batchK \cdot n)$.
        \fi
    \end{itemize}

    The bit lengths of the descriptions are $\abs{\lrag{\cS'}} = \tO(\abs{\lrag\cSmid} + \poly(d))$, $\abs{\lrag{\C'}} = \tO(da\ell + \poly(d))$.
    Furthermore, suppose:
    \begin{itemize}
        \item $G_\Distance(i, \lrag {\cS_\Distance})$ is the circuit that returns the $i$-th element in $\cS_\Distance$, and
        \item $C_\Distance(x, \lrag {\C_\Distance})$ is the circuit that returns $\C_\Distance(x)$.
    \end{itemize}
    The new circuits $G'(i, \lrag {\cS'})$ and $C'(x, \lrag {\C'})$ satisfy the following.
    \begin{itemize}
        \item $\size(G') = \size(G_\Distance) + \tO(\poly(d))$.
        \item $\depth(G') = \depth(G_\Distance) + \tO(1)$.
        \item $\size(C') = \tO(\frac{\K}{d} \cdot a \ell)$.
        \item $\depth(C') = \tO(1)$.
    \end{itemize}
    And the verifier's verdict circuit (which outputs 0 iff it rejects amidst the protocol) satisfies:
    \begin{itemize}
        \ifFOCS
        \item $\begin{aligned}[t]
            \size(\cV_\Reduce) = \tO(&d a\ell \cdot (\depth(C_\Distance) \log \size(C_\Distance)\\
            &+ \abs{\lrag{\Cmid}}) + \poly(d) \\
            &+ \abs{\lrag{\cSmid}} + \batchK \cdot n).
        \end{aligned}$
        \else
        \item $\size(\cV_\Reduce) = \tO(d a\ell \cdot (\depth(C_\Distance) \log \size(C_\Distance) + \abs{\lrag{\Cmid}}) + \poly(d) + \abs{\lrag{\cSmid}} + \batchK \cdot n)$.
        \fi
        \item $\depth(\cV_\Reduce) = \tO(1)$.
    \end{itemize}
\end{lemma}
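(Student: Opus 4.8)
The plan is to realize $\prot{\Reduce}$ as the composition of the $\Delta_c$-distance-preserving \GKR reduction (\Cref{lem:RVW}) with the unambiguous \IPP for \pval under the $\Delta_c$-metric from \Cref{sec:DcRR}, namely $\prot{\RRrow}$. On input $\bm x, \parammid$, the prover would first reconstruct the prescribed matrix $\ippinput$ (its $(i,\bm q)$-th row is $\cP(x_i,\bm q)$, computable from $\bm x$ and $\langle\cSmid\rangle$); then both parties run $\pvalT$ parallel repetitions of \GKR on the claim $\Cmid(\ippinput)=1$ --- instantiating \Cref{lem:RVW} with the log-space uniform circuit $C_\Distance$ realizing $\langle\Cmid\rangle$ --- to obtain a \pval instance $(\pvalU,\bm \pvalv)$ over $\FF^{\K\times a\ell}$ with $\K=\abs{\cSmid}$; finally they run $\prot{\RRrow}$ on $(\pvalU,\bm \pvalv)$ with the prover additionally holding $\ippinput$, producing a succinct $\langle\RRS\rangle$ for a row set $\RRS\subsetneq[\K]$ with $\abs{\RRS}\le\lceil 8\secpar\K/d\rceil$ and a predicate description $\langle\C''\rangle$. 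I would set $\cS':=\{\,\text{the }i\text{-th element of }\cSmid : i\in\RRS\,\}$ and $\C':=\C''$, re-indexed to read $\transcMat^{\cS'}$; as $\cS'\subseteq\cSmid$, the circuits $G',C'$ and their sizes/depths come from composing $G_\RRS,C''$ (from \Cref{lem:GKR} and $\prot{\RRrow}$) with the lookup circuit $G_\Distance$, and the stated complexities follow by adding the ($\pvalT$-repeated) costs of \Cref{lem:RVW} to those of $\prot{\RRrow}$.

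\textbf{Completeness} I would prove by chaining biconditionals: \GKR's prescribed completeness (\Cref{lem:GKR}) is perfect, so an honest interaction gives $\ippinput\in\pval(\pvalU,\bm \pvalv)$ iff $\Cmid(\ippinput)=1$, and then $\prot{\RRrow}$'s prescribed completeness gives $\C'(\transcMat^{\cS'})=1$ iff $\ippinput\in\pval(\pvalU,\bm \pvalv)$, whence $\C'(\transcMat^{\cS'})=1\iff\Cmid(\ippinput)=1$. Since $\cSmid$ is built in \Cref{clm:phase1} so that $\Cmid$ incorporates all base-verifier checks $\cV(x_i,\bm q,\cdot)$, we have $\bm x|_{\I}\notin\cL^{\otimes\abs{\I}}\Rightarrow\Cmid(\ippinput)=0$; combined with $\I'\subseteq\I$ this yields $(\cS',\C')\in\cL'_{\bm x}\iff(\cSmid,\Cmid)\in\cL'_{\bm x}$, and neither sub-protocol rejects honestly.

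\textbf{Unambiguity.} Here I would assume the hypothesis $\Cmid(\transcMat^*)=0$ for all $\transcMat^*\in\rowball(\ippinput)\setminus\{\ippinput\}$, which is exactly the ``at most one in-ball satisfying matrix'' precondition of \Cref{lem:RVW}, and use two facts. First, as $\pvalU$ is uniform in $(\FF^{\lgK+\log(a\ell)})^\pvalT$ independently of the prover (\Cref{lem:GKR}), a column-aware Schwartz--Zippel union bound over the $\le(\K^{4d}\abs{\FF}^{4d})^{a\ell}$ matrices whose columns all have weight below $4d$ --- the $\Delta_c$-analogue of \Cref{lem:pval} --- gives $\Delta_c(\pval(\pvalU,\bm 0))\ge 4d$ except with probability $\le 2^{-\secpar-3}$, under $\pvalT\ge\Tbound$ and $\abs{\FF}\ge\FboundconcreteT$, which is what $\prot{\RRrow}$ requires. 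Second, if the \GKR prover deviated then the second bullet of \Cref{lem:RVW} forces $\ippinput\notin\pval(\pvalU,\bm \pvalv)$ except with probability $\epsilon_{\GKR}+(\epsilon_{\GKR})^{\pvalT}(\binom{\K}{d}\abs{\FF}^d)^{a\ell}\le 2^{-\secpar-3}$, while if it was honest then $\ippinput\in\pval(\pvalU,\bm \pvalv)\iff\Cmid(\ippinput)=1$. Then for a cheating $\cP^*$ first deviating in round $j^*$ I would split: if $j^*$ is in the \GKR phase, $\ippinput\notin\pval(\pvalU,\bm \pvalv)$ w.h.p. by the second fact, and then in $\prot{\RRrow}$ either $\cP^*$ stays honest and prescribed completeness forces $\C'(\transcMat^{\cS'})=0$, or it deviates and $\prot{\RRrow}$'s $2^{-\secpar-2}$-unambiguity forces a rejection or $\C'(\transcMat^{\cS'})=0$; if $j^*$ is in the $\prot{\RRrow}$ phase, then $\cP^*$ was honest in \GKR, so $\ippinput\in\pval(\pvalU,\bm \pvalv)\iff\Cmid(\ippinput)=1$, and $\prot{\RRrow}$'s $2^{-\secpar-2}$-unambiguity (first deviation inside $\prot{\RRrow}$) again forces a rejection or $\C'(\transcMat^{\cS'})=0$. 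Summing the three failure events gives $\Pr[\cV_\Reduce\text{ accepts}\wedge\C'(\transcMat^{\cS'})=1]\le 2^{-\secpar}$.

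\textbf{The hardest part} will be the unambiguity case where $\cP^*$'s first deviation lies in the \GKR phase: there $\prot{\RRrow}$'s unambiguity cannot be invoked against $\cP^*$ directly (its first deviation is not inside $\prot{\RRrow}$), so I must argue that the \GKR phase has already ``poisoned'' the \pval instance via \Cref{lem:RVW}, whose two-sided conclusion (distance preserved exactly when the prover is honest for the unique in-ball matrix, distance strictly exceeded otherwise) has to be matched carefully against the single-satisfying-matrix hypothesis --- which is precisely why \Cref{clm:phase1} is stated to produce that uniqueness property. A secondary point of care is the $\Delta_c$-kernel bound $\Delta_c(\pval(\pvalU,\bm 0))\ge 4d$, which is not supplied by \Cref{lem:GKR} and needs the column-aware counting above rather than a direct appeal to \Cref{lem:pval}; the remaining parameter bookkeeping --- verifying $\pvalT\ge\Tbound$, $\abs{\FF}\ge\FboundconcreteT$, $d\ge\dboundm$ simultaneously dominate every error term and meet the preconditions of \Cref{lem:RVW} and $\prot{\RRrow}$ --- and the complexity accounting are routine.
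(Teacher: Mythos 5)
Your proposal is correct and takes essentially the same approach as the paper: compose the $\Delta_c$-preserving \GKR{} reduction (\Cref{lem:RVW}) with the $\Delta_c$ \pval{} instance-reduction protocol (\Cref{lem:DcRR}), chain prescribed completeness, establish $\Delta_c(\pval(\pvalU,\bm 0))\ge 4d$ by a column-aware Schwartz--Zippel union bound (the paper's internal Claim), and union-bound the three error sources. Your case-split on which phase contains $j^*$ is a more careful unpacking of the paper's terser ``by the unambiguity of \GKR{} and \RRrow{} and the last claim'' sentence, and your identification of the \GKR-first-deviation case as the delicate spot matches exactly why \Cref{clm:phase1} supplies the uniqueness precondition that \Cref{lem:RVW} needs.
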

The number of instances is concretely $\abs{\cS'} \le \ceil{8\secpar\frac{\abs{\cSmid}}{d}}$.

\subsubsection{The Explicit \UIP for the associated language}
We give the full specification of the \UIP $\prot{\cL'_{\bm x}}$ that explicitly checks $(\cS, \C) \in \cL'_{\bm x}$ in \Cref{sec:final-check}.
This protocol has a cost proportional to $\abs{\cS}$.
Therefore,
it is only run at the end of the \bUIP protocol,
when $\abs{\cS}$ is small.
\begin{lemma}[The \UIP for $\cL'_{\bm x}$]
    \label{lem:final-check}
    Let $\cL$ be a language with an $\epsilon$-unambiguous $(\ell, a, b, \Ptime, \Vtime)$-\UIP $\prot{}$,
    where the verifier verdict $\cV$ is a log-space uniform boolean circuit of size $S$ and depth $D$.

    There exists an $\epsilon$-unambiguous ($\ell'$, $a'$, $b'$, $\Ptime'$, $\Vtime'$)-\UIP $\prot{\cL'_{\bm x}}$ for verifying $(\cS, \C) \in \cL'_{\bm x}$.
    Both parties have input $\bm x$ and $\param{}$,
    where $\abs{\cS} = g$,
    and in the end $\cV_{\cL'_{\bm x}}$ either accepts or rejects.

    Let $G(i, \lrag \cS)$ be the circuit that returns the $i$-th element in $\cS$,
    and $C(x, \lrag \C)$ be the circuit that returns $\C(x)$.
    The protocol has the following complexities.
    \begin{itemize}
        \item $\ell_{\cL'_{\bm x}} = \ell + 1$.
        \item $a_{\cL'_{\bm x}} = g \ell \cdot a$.
        \item $b_{\cL'_{\bm x}} = b$.
        \item $\Ptime_{\cL'_{\bm x}} = O(g\ell \cdot \Ptime)$.
        \item $\Vtime_{\cL'_{\bm x}} = O(g\ell \cdot \Vtime + \size(C) + g\size(G))$.
    \end{itemize}
    Let $\tO$ hide $\polylog(g, \ell)$ factors.
    The verifier's decision circuit $\cV_{\cL'_{\bm x}}$ has the following complexities:
    \begin{itemize}
        \item $\size(\cV_{\cL'_{\bm x}}) = \tO(g \ell \cdot S + \size(C) + g\size(G))$.
        \item $\depth(\cV_{\cL'_{\bm x}}) = \max(\depth(G) + D, \depth(C)) + \tO(1)$.
    \end{itemize}
\end{lemma}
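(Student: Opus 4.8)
The plan is to reduce the claim $(\cS,\C)\in\cL'_{\bm x}$, for $\cS=\{(i^1,\bm q^1),\ldots,(i^g,\bm q^g)\}$, to simultaneously certifying, for each $j\in[g]$, that a prover-claimed string $\transc^j$ is the prescribed base-\UIP transcript $\cP(x_{i^j},\bm q^j)$ --- that is, that $((x_{i^j},\bm q^j),\transc^j)\in\cR$ --- and then having the verifier evaluate $\C$ on the matrix whose rows are the $\transc^j$'s. This per-instance certification is the single-instance unambiguous proof for the prescribed-transcript relation $\cR$ (see \Cref{lem:Lxq}), which I would realize as a \emph{random-continuation ladder} over the base \UIP; the final-check protocol runs these $g$ certifications in parallel, sharing one stream of fresh verifier coins. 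Concretely: in round $1$ the prover sends $\transc^1,\ldots,\transc^g\in\bin^{a\ell}$ (recovering each $(i^j,\bm q^j)$ from $\lrag{\cS}$ via the circuit $G$). In rounds $2,\ldots,\ell+1$ the verifier sends fresh coins $\tilde q_1,\ldots,\tilde q_\ell$, one $b$-bit block per round, \emph{shared across all $j$ and all rungs}; and for each $j$ the prover runs in parallel the executions (``rungs'' $s=0,\ldots,\ell-1$) of the base \UIP on $x_{i^j}$ with query sequence $(q^j_1,\ldots,q^j_s,\tilde q_{s+1},\ldots,\tilde q_\ell)$, answering each round with the prescribed base-prover message; rung $\ell$ is identified with $\transc^j$. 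Then $\cV_{\cL'_{\bm x}}$ accepts iff (i) the base verdict $\cV$ accepts every rung's transcript with respect to $x_{i^j}$ and that rung's query sequence; (ii) consecutive rungs agree on the rounds covered by their common query prefix (rung $s$ with rung $s{+}1$ on rounds $1,\ldots,s$, and rung $\ell{-}1$ with $\transc^j$ on rounds $1,\ldots,\ell{-}1$); (iii) $\cV(x_{i^j},\bm q^j,\transc^j)=1$ for all $j$; and (iv) $\C=1$ on the matrix of the $\transc^j$'s, evaluated via $C$ on $\lrag{\C}$.

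Completeness is immediate: if $(\cS,\C)\in\cL'_{\bm x}$ then every $x_{i^j}\in\cL$, so prescribed completeness of the base \UIP makes all the $\cV$-checks pass, the ladders are consistent because the prescribed round-$r$ message depends only on the first $r$ queries, and $\C(\transcMat^{\cS})=1$; conversely, if $(\cS,\C)\notin\cL'_{\bm x}$ then either some $x_{i^j}\notin\cL$ --- whence rung $0$ of that instance has the prescribed prover send the default rejected transcript and check (i) fails --- or $\C(\transcMat^{\cS})=0$ and check (iv) fails. The stated complexities are routine bookkeeping: $\ell+1$ rounds; per-round prover message dominated by the $g$ ladders, $g\ell a$; per-round verifier message $b$; the prescribed prover runs $O(g\ell)$ base-prover executions (cost $O(g\ell\,\Ptime)$) together with single evaluations of $G$ (for all $j$) and $C$; and the verifier, like its verdict circuit, runs $O(g\ell)$ copies of $\cV$, one copy of $C$, $g$ copies of $G$, and $O(g\ell)$ equality checks --- giving $\Vtime_{\cL'_{\bm x}}=O(g\ell\,\Vtime+\size(C)+g\,\size(G))$, $\size(\cV_{\cL'_{\bm x}})=\tO(g\ell\,S+\size(C)+g\,\size(G))$, and $\depth(\cV_{\cL'_{\bm x}})=\max(\depth(G)+D,\depth(C))+\tO(1)$, the last because evaluating a rung's verdict first recovers $(i^j,\bm q^j)$ via $G$ and then runs $\cV$ (depth $\depth(G)+D$), in parallel with the depth-$\depth(C)$ evaluation of $\C$.

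The heart of the proof, and the step I expect to be the main obstacle, is proving $\epsilon$-unambiguity with the \emph{same} $\epsilon$ as the base \UIP, rather than an $\epsilon$ inflated by a union bound over the $\Theta(g\ell)$ sub-executions. Fix a cheating $\cP^*$ that deviates from the prescribed strategy first at round $j^*$ of $\prot{\cL'_{\bm x}}$ and condition on its first $j^*$ rounds. Among the $\Theta(g\ell)$ base-\UIP sub-executions (the $g$ claimed transcripts $\transc^j$, i.e.\ the ``$\ell$-th rungs'', together with the lower rungs), let rung $s_0$ of instance $j^\dagger$ be the one whose transcript first departs from prescribed at the earliest base-\UIP round $\rho^\dagger$. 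Since nothing deviates before round $j^*$, and the round-$\rho$ message of rung $s<\ell$ is sent at round $\rho+1$ of $\prot{\cL'_{\bm x}}$ while a $\transc^j$ is sent at round $1$, one checks that rung $s_0$'s rounds $1,\ldots,\rho^\dagger-1$ are all prescribed. There are two cases. If $\rho^\dagger<s_0$, then $s_0\ge 1$, rung $s_0-1$ is (by minimality of $\rho^\dagger$) prescribed on round $\rho^\dagger$, and there both rungs $s_0-1$ and $s_0$ use the same fixed query $q^{j^\dagger}_{\rho^\dagger}$, so rung $s_0-1$ carries the prescribed round-$\rho^\dagger$ value while rung $s_0$ does not; hence the consistency check between them fails and $\cV_{\cL'_{\bm x}}$ rejects with certainty. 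If $\rho^\dagger\ge s_0$, then rung $s_0$'s queries on rounds $\rho^\dagger+1,\ldots,\ell$ are exactly the fresh coins $\tilde q_{\rho^\dagger+1},\ldots,\tilde q_\ell$, all revealed at rounds $>j^*$ and hence still uniform under the conditioning, so applying the base \UIP's $\epsilon$-unambiguity to rung $s_0$ (first deviation at round $\rho^\dagger$, fixed prefix $(q^{j^\dagger}_1,\ldots,q^{j^\dagger}_{s_0},\tilde q_{s_0+1},\ldots,\tilde q_{\rho^\dagger})$) bounds the probability that $\cV$ accepts that rung's transcript, hence that $\cV_{\cL'_{\bm x}}$ accepts, by $\epsilon$. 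The special sub-case $s_0=\ell$, $\rho^\dagger=\ell$ --- a $\transc^j$ prescribed on rounds $1,\ldots,\ell-1$ but wrong at round $\ell$ --- is exactly why check (iii) is present: it is a base-\UIP deviation at the last round $j^*=\ell$, so base-\UIP unambiguity forces $\cV(x_{i^j},\bm q^j,\transc^j)=0$. The one genuinely delicate ingredient is the index bookkeeping above: laying out the ladder so that each rung splits its queries into a fixed prefix drawn from $\bm q^j$ and a fresh suffix drawn from the shared $\tilde q$'s, precisely so that every first deviation either lands in the fresh region of some rung (yielding $\epsilon$-unambiguity with still-uniform remaining coins) or contradicts a pairwise consistency check --- but this uses nothing beyond the base-\UIP unambiguity definition.
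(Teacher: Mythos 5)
Your proposal is correct and takes essentially the same approach as the paper: the paper's protocol has the prover send the claimed transcripts $\transcMat^\cS$ and then certify each row in parallel via the Transcript Checker $\protU{^\tc}$ of Proposition~\ref{lem:Lxq} (the Random Continuation protocol), and you simply inline that protocol as your ``random-continuation ladder'' and re-derive its $\epsilon$-unambiguity from scratch via a case analysis on the earliest departure, with the two cases (deterministic failure of a consecutive-rung consistency check when the departure round lies in a rung's fixed-query prefix, vs.\ base-\UIP unambiguity over the fresh-coin suffix otherwise) matching the paper's two cases for $\protU{^\tc}$. The construction, the non-inflation of the unambiguity error, and all the complexity bookkeeping coincide with the paper's; the only nit is that ``rung $s_0-1$ is prescribed on round $\rho^\dagger$'' needs an explicit tie-break (take $s_0$ minimal among deviating rungs of $j^\dagger$) and, when the first deviation is at round $1$, the earliest-departure rung is a function of the subsequent coins, so the argument should be phrased as the standard partition over which rung first departs --- the same implicit step the paper also leaves to the reader.
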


\subsection{Our Construction of \bUIP}
\label{sec:pf-bUIP}
We show how to prove \Cref{thm:basic-bUIP} assuming \Cref{clm:phase1,clm:phase2,lem:final-check}.
\begin{proof}[Proof of \Cref{thm:basic-bUIP}]
Denote the sub-protocols as $\prot{\Distance}$, $\prot{\Reduce}$, and $\prot{\cL'_{\bm x}}$.
The full batching protocol $\protU{'}$ is given in \Cref{alg:basic-bUIP}.

\begin{algorithm}
  \setstretch{1.2}
  \caption{Our \bUIP $\protU{'}$.}
  \label{alg:basic-bUIP}
    \textbf{Input Parameters:} $\secpar \in \NN$.\\
    \textbf{Input:} $\batchK \in \NN$, and $\bm x = (x_1,\ldots,x_{\batchK})$ are statements.\\
    \textbf{Other Parameters:} $d = \dbound$ is the distance parameter,
    $\FF$ is a characteristic-2 finite field of size $\abs{\FF} \ge \Fboundconcrete$,
    (where $C_0$ is some universal constant),
    and $\nround = \floor{\log \batchK}$ is the number of iterations.\\
    \textbf{Ingredients:} 
    \begin{itemize}[topsep=-0.02em,itemsep=-0.2em]
        \item The \emph{Distance Generation Protocol}, $\prot{\Distance}$ (\Cref{clm:phase1}).

        It takes in a claim $\param{}$ over $\abs{\cS}$ active pairs and output an intermediate claim $\parammid$ over $\abs{\cSmid} = \K = g \ell$ active pairs.

        \item The \emph{Instance Reduction Protocol}, ($\cP_\Reduce$, $\cV_\Reduce$) (\Cref{clm:phase2}).

        It takes in the claim $\parammid$ over $\K$ active pairs and output a reduced claim $\param{'}$ over $\abs{\cS'} \le \K/d$ active pairs.
        \item The $\UIP$ $\protU{'}$ for explicitly checking the final claim $(\cS^{(\nround)}, \C^{(\nround)}) \in \cL'_{\bm x}$
        (\Cref{lem:final-check}).
    \end{itemize}
  \begin{algorithmic}[1]
    \State $\cV'$ samples a random $\bm q \gets \bin^{b\ell}$, and send it entirely to $\cP'$.
    \State Both parties let $\cS^{(0)} \gets \set{(i, \bm q)}_{i \in [\batchK]}$, 
    and $\C^{(0)}$ be the predicate that checks every row $\transc^{x_i, \bm q}$ in $\transcMat^{\cS^{(0)}}$ are accepted, i.e. $\cV(x_i, \bm q, \transc^{x_i, \bm q}) = 1$ for every $i \in [\batchK]$.
    \State Define $\lrag{\cS^{(0)}} = (\bm q)$,
    and $G^{(0)}(\lrag{\cS^{(0)}})$ which enumerates $\cS^{(0)}$ is the circuit that outputs $(i, \bm q)$ for every $i \in [\batchK]$.
    \State Define $\lrag{\C^{(0)}} = (\bm x, \bm q)$, and $C^{(0)}(x, \lrag{\C^{(0)}})$ be the circuit that checks $\C^{(0)}(x) = 1$.
    \For{$j = 1,\ldots, \nround$}
        \State Run $\prot{\Distance}$ on $\paramold$ with parameters $\secpar$, $d$, and $\FF$,
        obtaining $\lrag{\cSmid^\old}$, $\lrag{\Cmid^\old}$.
        \State Run ($\cP_\Reduce$, $\cV_\Reduce$) on $\lrag{\cSmid^\old}$ and $\lrag{\Cmid^\old}$ with parameters $\secpar$, $d$, and $\FF$,
        obtaining $\paramnew$.
    \EndFor
    \State Run the \UIP $\prot{\cL'_{\bm x}}$ explicitly on $\bm x$ and $\paramC{\nround}$ to check $(\cS^{(\nround)}, \C^{(\nround)}) \in \cL'_{\bm x}$.
    \State $\cV'$ accepts iff $\cV_{\cL'_{\bm x}}$ accepts.
    \end{algorithmic}
\end{algorithm}

\paragraph{Prescribed Completeness.}
Note that $(\cS^{(0)}, \C^{(0)}) \in \cL'_{\bm x}$ iff $\C^{(0)}(\transcMat^{\cS^{(0)}}) = 1$ and $\bm x \in \cL^{\otimes \batchK}$.
The former condition is true iff $\cV(x_i, \bm q, \transc^{x_i, \bm q}) = 1$ for every $(i, \bm q) \in \cS^{(0)}$, which is true iff $\bm x \in \cL^{\otimes \batchK}$ because of the prescribed completeness of $\prot{}$.
By invoking the prescribed completeness of $\prot{\Distance}$ and $\prot{\Reduce}$ and $\prot{\cL'_{\bm x}}$,
we get that $\cV'$ accepts iff $\bm x \in \cL^{\otimes \batchK}$.

\paragraph{Unambiguity.}
Let $\cP^*$ be a prover strategy that deviates from $\cP'$ in some round,
then it must be deviating in one of the $\nround$ applications of $\prot{\Distance}$,
one of the $\nround$ applications of $\prot{\Reduce}$, 
or in the final $\prot{\cL'_{\bm x}}$.
The probability that $\cV'$ accepts is upper bounded by the probability that unambiguity breaks in any of the subsequent rounds,
which,
by a union bound,
is at most $\epsilon' = \nround \cdot (\epsilon + 2^{-\secpar} + 2^{-\secpar}) + \epsilon < (\nround + 1)\epsilon + 2\cdot \nround 2^{-\nround} < 2 \log \batchK \cdot (\epsilon + 2^{-\secpar})$.

\paragraph{Number of Intermediate Instances}
We show by induction that $\abs{\cS^{(j)}} \le \ceil{\batchK \cdot 2^{-j}}$,
for $j \in [\nround]$.
For the base case, $\abs{\cS^{(0)}} = \batchK$ by definition.
Assuming the inductive hypothesis,
by the guarantees of $\prot{\Distance}$,
$\abs{\cSmid^{(j)}} = \abs{\cS^{(j)}} \cdot \ell \le \ceil{\batchK \cdot 2^{-j}} \cdot \ell$.
We conclude by the guarantees of $\prot{\Reduce}$,
$\abs{\cS^{(j+1)}} \le \ceil{8\secpar\abs{\cSmid^{(j)}} / d} \le \frac{\ceil{8\secpar \cdot \batchK \cdot 2^{-j} \cdot \ell}}{\dbound} < \ceil{\batchK \cdot 2^{-j-1}}$.

\paragraph{Description Lengths}
We first bound the bit lengths of all descriptions.\\
With $\tO$ hiding $\polylog(\abs{\FF}, \batchK, n, a, \ell)$ factors,
for all $1 \le j \le \nround < \log \batchK = \tO(1)$,
and
\begin{itemize}
    \ifFOCS
    \item $\begin{aligned}[t]
        \abs{\lrag{\cS^{\new}}} &= \tO(\abs{\lrag{\cSmid^{\old}}} + \poly(d))\\
        &= \tO(\abs{\lrag\cS^{\old}} + a \ell + \poly(d)) \\
        &= \ldots = \tO(j \cdot (a\ell + \poly(d))) \\
        &= \tO(a\ell + \poly(d)), \text{ because }j = \tO(1).
    \end{aligned}$ 
    \else
    \item $\begin{aligned}[t]
        \abs{\lrag{\cS^{\new}}} &= \tO(\abs{\lrag{\cSmid^{\old}}} + \poly(d))\\
        &= \tO(\abs{\lrag\cS^{\old}} + a \ell + \poly(d)) \\
        &= \ldots = \tO(j \cdot (a\ell + \poly(d))) = \tO(a\ell + \poly(d)), \text{ because }j = \tO(1).
    \end{aligned}$ 
    \fi
    \item $\abs{\lrag{\C^{\new}}} = \tO(a\ell +  \poly(d))$.
    \item $\abs{\lrag{\Cmid^{\new}}} = \tO(\abs{\lrag\C^{\new}} + \abs{\lrag{\cS^{\new}}} + da\ell + \batchK \cdot n) = \tO(da\ell + \poly(d) + \batchK \cdot n)$.
\end{itemize}
Note that $\abs{\lrag{\C^{(0)}}} = \tO(\batchK \cdot n + a\ell)$,
because $\lrag{\C^{(0)}} = (\bm x, \bm q)$.

\paragraph{Intermediate Circuit Sizes and Depths}
For all $1 \le j \le \nround \le \log \batchK = \tO(1)$,
let $G^{\new}(s, \lrag{\cSnew})$ be the circuit that returns the $s$-th element in $\cSnew$,
and let $C^{\new}(x, \lrag{\Cnew})$ be the circuit that returns $\Cnew(x)$.
For the base case,
we already have $\size(G^{(0)}) = \tO(\batchK + a\ell)$ and $\depth(G^{(0)}) = O(1)$,
and $\size(C^{(0)}) = \tO(\batchK \cdot S)$ and $\depth(C^{(0)}) = D + \tO(1)$.
By the guarantees of \Cref{clm:phase1,clm:phase2},
we calculate the following:
\begin{itemize}
    \item $\size(G^{\new}) = \size(G^{\old}_{\Distance}) + \tO(\poly(d)) = \size(G^{\old}) + \tO(a \ell + \poly(d)) = \ldots = \tO(\batchK + a\ell + \poly(d))$.
    \item $\depth(G^{\new}) = \depth(G^{\old}_{\Distance}) + \tO(1) = \ldots = \tO(1)$.
    \item $\size(C^{\new}) = \tO(\abs{\cS^{\new}} \cdot a \ell)$.
    \item $\depth(C^{\new}) = \tO(1)$.
    \ifFOCS
    \item $\begin{aligned}[t]
        \size(C^{\new}_{\Distance}) &= \tO\left(\abs{\cS^{\new}} \cdot \ell \cdot \size(G^{\new}) + \size(C^{\new}) \right.\\
        &\quad \left. \vphantom{\abs{\cS^{\new}}} + \abs{\cS^{\new}} \cdot S + gda\ell^2\right)\\
        &= \tO\left(\abs{\cS^{\new}} \cdot \ell\cdot(\batchK + a \ell + \poly(d)) \right.\\
        &\quad \left. \vphantom{\abs{\cS^{\new}}} + \abs{\cS^{\new}} \cdot (a \ell + S)+ gda\ell^2\right)\\
        &= \tO(\batchK^2\ell + \batchK a \ell^2 + \batchK a \ell \\
        &\quad + \batchK \cdot \ell\cdot \poly(d) + \batchK S + gda\ell^2).
    \end{aligned}$
    \else
    \item $\begin{aligned}[t]
        \size(C^{\new}_{\Distance}) &= \tO(\abs{\cS^{\new}} \cdot \ell \cdot \size(G^{\new}) + \size(C^{\new}) + \abs{\cS^{\new}} \cdot S + gda\ell^2)\\
        &= \tO(\abs{\cS^{\new}} \cdot \ell\cdot(\batchK + a \ell + \poly(d)) + \abs{\cS^{\new}} \cdot (a \ell + S)+ gda\ell^2)\\
        &= \tO(\batchK^2\ell + \batchK a \ell^2 + \batchK a \ell + \batchK \cdot \ell\cdot \poly(d) + \batchK S + gda\ell^2).
    \end{aligned}$
    \fi
    \ifFOCS
    \item $\begin{aligned}[t]
        \depth(C^{\new}_{\Distance}) &= \max(\depth(C^{\old}), \depth(G^{\old}) \\
        &\quad + D) + \tO(1) = \ldots = D + \tO(1).
    \end{aligned}$
    \else
    \item $\depth(C^{\new}_{\Distance}) = \max(\depth(C^{\old}), \depth(G^{\old}) + D) + \tO(1) = \ldots = D + \tO(1)$.
    \fi
\end{itemize}
Therefore,
\ifFOCS
$\depth(C^{\new}_{\Distance})\log(\size(C^{\new}_{\Distance})=(D + \tO(1))(\log S + \tO(1))$ $= D\log S + \tO(1)$.
\else
$\depth(C^{\new}_{\Distance})\log(\size(C^{\new}_{\Distance})=$ $(D + \tO(1))(\log S + \tO(1)) = D\log S + \tO(1)$.
\fi

The technical conditions stated in \Cref{clm:phase2} are satisfied by the choice of parameters:
\begin{enumerate}
    \item 
    Since we choose $\abs{\FF}$ to be at least $\Fboundconcrete$, 
    it satisfies the field size requirement of $\abs{\FF} \ge \FboundconcreteT$ as long as $C_0 \ge 2C$,
    where $C$ is the constant in \Cref{clm:phase2}.

    \item
    Fix $j \in [\nround]$,
    by the above analysis, 
    $\abs{\cSmid^{\old}} = \abs{\cSold} \cdot \ell \le \ceil{\batchK \cdot 2^{1-j} \cdot \ell}$.
    To apply $\prot{\Reduce}$, the condition $d \ge\dboundm$ must hold,
    where $m \coloneqq \log \abs{\cSmid^{\old}} \le \log (\batchK \ell)$.
    Indeed,
    $d = \dbound > \dboundm$.
\end{enumerate}

We rely on $d = \dbound = \tO(\ell)$ and $\nround \le \log \batchK = \tO(1)$ to simplify the complexities as follows.

The overall cost equals the cost of running the two sub-protocols $\Distance$ and $\Reduce$ for $\nround$ times,
plus the cost of running the final \UIP $\prot{\cL'_{\bm x}}$ once.
Note that the size of the final $\abs{\cS^{(\nround)}}$ is at most $\batchK \cdot 2^{-\nround} = \tO(1)$.
We use notations such as $\Ptime_{\Distance, j}$ to denote the time complexity of running $\Distance$ in the $j$-th round (and similarly for $\Reduce$ and $\cL'_{\bm x}$).
\begin{itemize}
    \item $\ell' = \sum_{j = 1}^\nround(\ell_{\Distance, j} + \ell_{\Reduce, j}) + \ell_{\cL'_{\bm x}}) = \tO(\nround \cdot \ell D \log S) = \tO(\ell D \log S)$.
    \item $a' = b' = \max(a_\Distance, a_\Reduce, a_{\cL'_{\bm x}}) = \tO(\max(d a, da + \poly(d), \abs{\cS^{(\nround)}}\ell \cdot a)) = \tO(a \ell + \poly(\ell))$.
    \ifFOCS
    \item $\begin{aligned}[t]
        \Ptime' &= \tO(\sum_{j = 1}^\nround (\Ptime_{\Distance, j} + \Ptime_{\Reduce, j}) + \Ptime_{\cL'_{\bm x}})\\
        &= \tO\left(\sum_{j = 1}^\nround((\batchK \cdot 2^{1-j}) \cdot \ell \cdot \Ptime + d (\batchK \cdot 2^{1-j})a \ell^2\right.\\
        &\quad \left. \vphantom{\sum_{j = 1}^\nround} + \abs{\lrag{\Cnew}} + \abs{\lrag{\cSmid}} + \batchK \cdot n)\right) \\
        &\quad + \tO\left(\sum_{j = 1}^\nround\poly(d, \batchK \cdot 2^{1-j}, a, \ell, \size(C_{\Distance}), \right.\\
        &\quad \left. \vphantom{\sum_{j = 1}^\nround} \abs{\lrag{\Cnew}}, \abs{\lrag{\cSmid}}), n)\right)\\
        &\quad + \tO(\ell \cdot \Ptime)\\
        &=  \tO(\batchK \cdot \ell \cdot \Ptime + \poly(\batchK, a, \ell, S)).
    \end{aligned}$
    \else
    \item $\begin{aligned}[t]
        \Ptime' &= \tO(\sum_{j = 1}^\nround((\batchK \cdot 2^{1-j}) \cdot \ell \cdot \Ptime + d (\batchK \cdot 2^{1-j})a \ell^2 + \abs{\lrag{\Cnew}} + \abs{\lrag{\cSmid}} + \batchK \cdot n)) \\
        &\quad+ \tO(\sum_{j = 1}^\nround\poly(d, \batchK \cdot 2^{1-j}, a, \ell, \size(C_{\Distance}), \abs{\lrag{\Cnew}}, \abs{\lrag{\cSmid}}), n)\\
        &\quad + \tO(\ell \cdot \Ptime)\\
        &=  \tO(\batchK \cdot \ell \cdot \Ptime + \poly(\batchK, a, \ell, S)).
    \end{aligned}$
    \fi
    \ifFOCS
    \item $\begin{aligned}[t]
        \Vtime' &= \sum_{j = 1}^{\nround} \Vtime_{\Distance,j} + \sum_{j = 1}^{\nround} \Vtime_{\Reduce,j}\\
        &\quad + \Vtime_{\cL'_{\bm x}}\\
        &= \tO(\nround \cdot d a \ell + 
        \abs{\lrag{\cSnew}} + \abs{\lrag{\Cnew}} + \batchK \cdot n)\\
        &\quad + \tO\left(\sum_{j = 0}^{\nround - 1}\left(\vphantom{\abs{\lrag{\Cmid^{\new}}}}da\ell(\depth(C_\Distance) \right.\right.\\
        &\quad\quad\quad \quad \quad \left.\cdot \log \size(C_\Distance) + \abs{\lrag{\Cmid^{\new}}}\right)\\
        &\quad\quad\quad + \poly(d) + \left. \vphantom{\sum_{j = 0}^{\nround - 1}}\abs{\lrag{\cSmid^{\new}}} + \batchK \cdot n)\right)\\
        &\quad + \tO\left(\abs{\cS^{(\nround)}} \ell \cdot \Vtime + \size(\C^{(\nround)}) +\right.\\
        &\quad\quad\quad\left. \vphantom{\abs{\cS^{(\nround)}}}\abs{\cS^{(\nround)}} \size(G^{(\nround)})\right)\\
        &=\tO(\ell \cdot \Vtime + a \ell^2(\batchK n + D\log S) \\
        &\quad \quad \quad+ a^2\cdot \poly(\ell)). 
    \end{aligned}$
    \else
    \item $\begin{aligned}[t]
        \Vtime' &= \sum_{j = 1}^\nround \Vtime_{\Distance,j} + \sum_{j = 1}^\nround \Vtime_{\Reduce,j} + \Vtime_{\cL'_{\bm x}}\\
        &= \tO(\nround \cdot d a \ell + 
        \abs{\lrag{\cSnew}} + \abs{\lrag{\Cnew}} + \batchK \cdot n)\\
        &\quad + \tO\left(\sum_{j = 0}^{\nround - 1}(da\ell(\depth(C_\Distance) \log \size(C_\Distance) + \abs{\lrag{\Cmid^\new}}) + \poly(d) + \abs{\lrag{\cSmid^\new}}\right.\\
        &\quad\quad\quad+\left.\vphantom{\sum_{j = 0}^{\nround - 1}}\batchK \cdot n\right)\\
        &\quad + \tO(\abs{\cS^{(\nround)}} \ell \cdot \Vtime + \size(\C^{(\nround)}) + \abs{\cS^{(\nround)}} \size(G^{(\nround)}))\\
        &= \tO\left(\nround d a \ell + \batchK \cdot n + \nround da\ell(D\log S + da\ell + \poly(d) + \batchK \cdot n) + \ell\Vtime + \batchK + a \ell\right)\\
        &=\tO(\ell \cdot \Vtime + a \ell^2(\batchK n + D\log S) + a^2\cdot \poly(\ell)). 
    \end{aligned}$
    \fi
\end{itemize}
Finally,
the verifier's verdict circuit $\cV'$ rejects iff any of the sub-protocols rejects.
(Note that $\cV_\Distance$ never rejects, so there are no terms corresponding to these sub-protocols.)
\begin{itemize}
    \ifFOCS
    \item $\begin{aligned}[t]
    \size(\cV') &\le \sum_{j = 1}^{\nround}\size(\cV_{\Reduce,j}) + \size(\cV_{\cL'_{\bm x}})\\
    &= \tO((\ell \cdot S + a\ell^2(\batchK n + D\log S)) + \\
    &\quad \quad \quad a^2\cdot \poly(\ell)).
    \end{aligned}$
    \else
    \item $\begin{aligned}[t]
        \size(\cV') &\le \sum_{j = 1}^{\nround}\size(\cV_{\Reduce,j}) + \size(\cV_{\cL'_{\bm x}})\\
        &= \tO((\ell \cdot S + a\ell^2(\batchK n + D\log S)) + a^2\cdot \poly(\ell)).
    \end{aligned}$
    \fi
    \ifFOCS
    \item $\begin{aligned}[t]
        \depth(\cV') &= \max\left(\set{\depth(\cV_{\Reduce,j})}_{j \in [\nround]}, \right.\\
        &\quad \quad \quad \left. \vphantom{\depth(\cV_{\Reduce,j})} \depth(\cV_{\cL'_{\bm x}})\right) + O(\log \nround)\\
        &= \max\left(\set{\tO(1)}_{j \in [\nround]}, \right.\\
        &\quad \quad \quad \max(\depth(G^{(\nround)}) + D, \\
        &\quad \quad \quad \left.\vphantom{\depth(\cV_{\Reduce,j})} \depth(C^{(\nround)}))\right) + O(\log \nround)\\
        &= D + \tO(1).
    \end{aligned}$
    \else
    \item $\begin{aligned}[t]
        \depth(\cV') &= \max(\set{\depth(\cV_{\Reduce,j})}_{j \in [\nround]}, \depth(\cV_{\cL'_{\bm x}})) + O(\log \nround)\\
        &= \max(\set{\tO(1)}_{j \in [\nround]}, \max(\depth(G^{(\nround)}) + D, \depth(C^{(\nround)}))) + O(\log \nround)\\
        &= D + \tO(1).
    \end{aligned}$
    \fi
\end{itemize}
And moreover,
$\cV'$ is log-space uniform because $\cV'$ is simply the \AND's of the $\nround$ $\cV_\Reduce$ and the final $\cV_{\cL'_{\bm x}}$, and all of these are themselves log-space uniform.
\end{proof}

\subsection{Proof of \Cref{clm:phase1}: \protphaseone}
\label{sec:phase1}

\subsubsection{Additional Ingredients}
\paragraph{A helper \UIP for verifying $\transc = \transc^{x,\bm q}$}
Recall that $\transc^{x,\bm q}$ is the prescribed transcript of $\prot{}$ on the statement $x$ when verifier uses random coin $\bm q$ (\Cref{def:prescribed}).
There exists a \UIP, 
which we call a \emph{Transcript Checker} protocol,
that allows a verifier to certify that some transcript $\transc \in \bin^{a\ell}$ indeed equals $\transc^{x, \bm q}$.

The \emph{Random Continuation Protocol} is such a Transcript Checker.
In this protocol,
the verifier samples a \emph{fresh} random coin sequence $\bm q' = (q'_1, \ldots, q'_{\ell}) \in (\bin^b)^\ell$,
and challenges the prover to succeed in answering $\ell$ hybrid runs of $\prot{}$ on the input $x_i$,
where the $j$-th hybrid run uses the hybrid coin sequence $\bm q^{\hyb}_j \coloneqq (\bm q_{\le j}, \bm q_{> j}')$.
The proof of \Cref{lem:Lxq} is straightforward and deferred to \Cref{sec:part2}.
\begin{proposition}[Transaction Checker; c.f. \emph{Random Continuation} in Lemma 6.3 of \cite{STOC:ReiRotRot16}]
    \label{def:tauchecker}
    \label{lem:Lxq}
    Let $\cL$ be a language with an $\epsilon$-unambiguous $(\ell, a, b, \Ptime, \Vtime)$-\UIP $\prot{}$.
    There exists an $\epsilon$-unambiguous $(\ell, \ell \cdot a, b, \ell\cdot\Ptime, \ell\cdot\Vtime)$-\UIP $\protU{^\tc}$,
    where both parties take $((x, \bm q), \transc)$ as input,
    and the verifier decides whether to accept or reject.
    We call $\protU{^\tc}$ the \emph{Transcript Checker},
    and it satisfies the following properties:
    \begin{itemize}[label=-]
        \item As a \UIP for checking $((x, \bm q), \transc) \in \cR$, it satisfies prescribed completeness and $\epsilon$-unambiguity.
        \item \textbf{Transcript Structure.} 
        The messages of $\cP^\tc$ can be exactly broken down into $\ell$ prescribed messages of the base \UIP $\prot{}$.
        Namely, the $r$-th round message of $\cP^\tc$ on input $((x, \bm q), \transc)$ is $\bm m_r = \left((\answerMat^{x, \bm q^\hyb_j}_r)_{j \in [\ell]}\right)^\top$,
        where $\answerMat^{x, \bm q^\hyb_j}_r = (\cP(x, (\bm q^\hyb_j)_{< r}))$ is the $r$-th round message of $\cP$.
        Therefore, 
        the entire transcript of $\cP^\tc$ equals $(\bm m_1, \ldots, \bm m_{\ell}) = \left((\answerMat^{x, \bm q^\hyb_j})_{j \in [\ell]}\right)^\top$.

        \item \textbf{Check Structure.} 
        The verifier $\cV^\tc$ on a given transcript $\transcMat = \left((\transcMat^{x, \bm q^\hyb_j})_{j \in [\ell]}\right)^\top$
        outputs 1 iff (1) for every $j \in [\ell]$,
        the corresponding $\cV(x, \bm q^\hyb_j, \transcMat^{x, \bm q^\hyb_j}) = 1$,
        and (2) for each $j \in [\ell]$,
        the $(j \cdot a)$-th prefix of $\transcMat^{x, \bm q^\hyb_j}$ is the same as the $(j \cdot a)$-th prefix of $\transcMat^{x, \bm q}$.
    \end{itemize}
\end{proposition}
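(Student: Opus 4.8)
The plan is to realize $\protU{^\tc}$ as the \emph{random continuation} protocol of \cite{STOC:ReiRotRot16}. On common input $((x,\bm q),\transc)$ with $\bm q=(q_1,\ldots,q_\ell)$, the verifier $\cV^\tc$ draws fresh coins $\bm q'=(q'_1,\ldots,q'_\ell)\gets(\bin^b)^\ell$, revealing $q'_r$ in round $r$; for $j\in[\ell]$ put $\bm q^\hyb_j\coloneqq(\bm q_{\le j},\bm q'_{>j})$, so that $\bm q^\hyb_\ell=\bm q$. The two parties run the $\ell$ hybrid executions of the base \UIP on $x$ (hybrid $j$ using coins $\bm q^\hyb_j$) in parallel, with $\cP^\tc$ sending in round $r$ the stack $\bm m_r=\left((\answerMat^{x,\bm q^\hyb_j}_r)_{j\in[\ell]}\right)^\top\in\bin^{\ell a}$ of the $r$-th prescribed base-prover messages (each computable from the coins revealed so far). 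At the end $\cV^\tc$ accepts iff, for every $j\in[\ell]$: (1) $\cV$ accepts the received hybrid-$j$ transcript with statement $x$ and coins $\bm q^\hyb_j$, and (2) the $(ja)$-bit prefix of the received hybrid-$j$ transcript equals the $(ja)$-bit prefix of $\transc$. From this description the complexity bounds ($\ell$ rounds; per-round lengths $\ell a$ and $b$; runtimes $\ell\cdot\Ptime$ and $\ell\cdot\Vtime$; verdict circuit made of $\ell$ parallel copies of $\cV$ plus $\ell$ prefix comparisons) and the stated Transcript/Check Structure are immediate.

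For \textbf{prescribed completeness}, note that $\cP^\tc$ ignores $\transc$ and plays each hybrid honestly, so the received hybrid-$j$ transcript is $\cP(x,\bm q^\hyb_j)$. By prescribed completeness of the base \UIP, all checks~(1) hold iff $x\in\cL$; and since the first $j$ messages of $\cP$ depend only on $q_{\le j}$, the first $j$ messages of $\cP(x,\bm q^\hyb_j)$ agree with those of $\cP(x,\bm q)$, so all checks~(2) hold iff $\transc=\cP(x,\bm q)$ (the binding case is $j=\ell$). Hence $\cV^\tc$ accepts the prescribed transcript iff $x\in\cL$ and $\transc=\cP(x,\bm q)$ --- equivalently, using prescribed completeness once more, iff $((x,\bm q),\transc)\in\cR$.

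The substantive part is \textbf{$\epsilon$-unambiguity}. Let $\cP^*$ be a prover for $\protU{^\tc}$ that first deviates in round $r^*$, and condition on the transcript through round $r^*$; then $\bm m^*_{r^*}\ne\bm m_{r^*}$, so some hybrid coordinate $j^\dagger$ is wrong at round $r^*$ while every coordinate was correct in rounds $<r^*$. I would split on whether there is a wrong coordinate with index $j^\dagger\le r^*$. If so, the hybrid-$j^\dagger$ play of $\cP^*$ is --- after conditioning on the first $r^*$ rounds, which fixes $q'_1,\ldots,q'_{j^\dagger}$ --- a legitimate prover strategy for the base \UIP on coins $\bm q^\hyb_{j^\dagger}$ that first deviates in round $r^*$, and because $j^\dagger\le r^*$ the coins of $\bm q^\hyb_{j^\dagger}$ beyond round $r^*$ are exactly the fresh coins $q'_{r^*+1},\ldots,q'_\ell$; so $\epsilon$-unambiguity of the base \UIP bounds the probability that $\cV$ accepts that hybrid, hence --- via check~(1) --- that $\cV^\tc$ accepts, by $\epsilon$. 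Otherwise every wrong coordinate has index $>r^*$, so coordinate $r^*$ is correct; pick a wrong coordinate $j^\dagger>r^*$. Since $r^*\le j^\dagger$, the prescribed round-$r^*$ messages of hybrids $r^*$ and $j^\dagger$ coincide --- both equal $\cP$'s $r^*$-th message on $q_{\le r^*}$, say $\alpha$ --- so $\cP^*$ sent $\alpha$ in coordinate $r^*$ but some $\beta\ne\alpha$ in coordinate $j^\dagger$. As round $r^*$ lies within the first $r^*$ rounds of hybrid $r^*$ and within the first $j^\dagger$ rounds of hybrid $j^\dagger$, if both prefix checks (for $j=r^*$ and $j=j^\dagger$) passed then the $r^*$-th answer of $\transc$ would have to equal both $\alpha$ and $\beta$, which is impossible; so $\cV^\tc$ rejects with certainty. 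Either way the acceptance probability is at most $\epsilon$.

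I expect the main obstacle to be the second case of the unambiguity argument: recognizing that a deviation in a ``late'' hybrid coordinate --- whose tail coins are not all fresh, so base-\UIP unambiguity is unavailable --- must instead be caught by the prefix-consistency checks, and laying out the hybrids and checks so that this forces $\transc$'s offending answer to two incompatible values. The rest is routine bookkeeping: the conditioning argument needed to view the hybrid-$j^\dagger$ play as a bona fide base-\UIP prover before invoking \Cref{def:UIP}, and reconciling the precise round/coin indexing of the hybrids with that of the base \UIP (the ``$(\bm q^\hyb_j)_{<r}$'' versus ``$(\bm q^\hyb_j)_{\le r}$'' convention).
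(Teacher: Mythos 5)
Your proposal is correct and uses the same construction as the paper: the Transcript Checker is realized as the random continuation protocol, and the prescribed completeness argument and complexity accounting match. The one genuine difference is the case split in the unambiguity argument. The paper splits on whether the common input $\transc$ equals the prescribed $\transc^{x,\bm q}$; you split instead on whether the first wrong hybrid coordinate $j^\dagger$ at the deviation round $r^*$ satisfies $j^\dagger\le r^*$. Your split is better aligned with the definition of unambiguity (which quantifies over the deviation round of $\cP^*$, not over the input), and it isolates the one genuinely delicate point: when the only wrong hybrids are ``late'' ($j^\dagger>r^*$), their tail coins $q_{r^*+1},\ldots,q_{j^\dagger}$ are fixed, so base-\UIP unambiguity is not directly applicable, and one must instead observe that the prescribed round-$r^*$ answers of hybrids $r^*$ and $j^\dagger$ coincide, so the two prefix checks cannot both pass, giving deterministic rejection. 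The paper's Case 1 invokes base unambiguity for ``some $j$'' without this caveat; that is essentially harmless in the formulation of \Cref{alg:iq} (where the prover sends only the coordinates $j\le r$ in round $r$, so a round-$r^*$ deviation necessarily appears in a coordinate $j\le r^*$), but under the transcript structure stated in \Cref{def:tauchecker} (all $\ell$ coordinates each round) your more careful subcase analysis is what actually closes the gap. Your argument is therefore slightly tighter while buying nothing new structurally.
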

\begin{remark}
    \label{rem:tauchecker}
    When a batch of inputs $\bm x = (x_1, \ldots, x_\batchK) \in (\bin^n)^{\batchK}$ is given,
    for any sequence of $g$ pairs $\cS = ((i^1, \bm q^1), \ldots, (i^g, \bm q^g)) \subset [\batchK] \times (\bin^b)^\ell$,
    and $g$ transcripts $\transcMat^1, \ldots, \transcMat^g \in \bin^{\ell \times (a\ell)}$,
    we consider a batch run of $g$ instances of $\protU{^\tc}$ on $((x_{i^1}, \bm q^1), \transcMat^1), \ldots, ((x_{i^g}, \bm q^g), \transcMat^g)$ where $\cV^\tc$ uses a shared random coin $\bm q'$ for all the $g$ instances.
    Let 
    \ifFOCS
    \begin{align*}
    \cS^\hyb &\coloneqq \left((i^1, (\bm q^1)^\hyb_1), \ldots, (i^1, (\bm q^1)^\hyb_\ell),\right. \\
    &\quad \quad \quad \left.\ldots, (i^g, (\bm q^g)^\hyb_1), \ldots, (i^g, (\bm q^g)^\hyb_\ell)\right)
    \end{align*}
    \else
    \[\cS^\hyb \coloneqq ((i^1, (\bm q^1)^\hyb_1), \ldots, (i^1, (\bm q^1)^\hyb_\ell), \ldots, (i^g, (\bm q^g)^\hyb_1), \ldots, (i^g, (\bm q^g)^\hyb_\ell))\]
    \fi
    be the set of $g\ell$ hybrid pairs constructed from every pair in $\cS$ using the new $\bm q'$.
    By the transcript structure of $\protU{^\tc}$,
    the $r$-th round messages of the $g$ instances of $\protU{^\tc}$ constitute exactly the rows of $\transcMat^{\cS^\hyb}_r = \left((\transcMat^{x_i, \bm q}_r)_{(i, \bm q) \in \cS^\hyb}\right)^\top$,
    and the full transcripts of the $g$ instances is $\transcMat^{\cS^\hyb}$.
\end{remark}

\paragraph{Checksum}
Let $\K = 2^\lgK$ for some $\lgK \in \NN$,
and $\FF$ be a constructible field ensemble of characteristic 2,
such that $\Flog = \tO(\secpar)$.

Let $d \in \NN$ be some \emph{distance parameter} and $\cksumT \in \NN$ be some parameter be specified.
We utilize the checksum of a systematic error correcting code with message space $\FF^\K$ of distance $2d$,
which we denoted by $\cksum_{\bm \cksumU}(\cdot)$,
where the parameter $\bm \cksumU = (\cksumU_1,\ldots, \cksumU_{\cksumT}) \in (\FF^\lgK)^{\cksumT}$ is a sequence of points in $\FF^{\lgK}$.
\begin{proposition}[\LDE-instantiated checksum (c.f. Proposition 6.6 of \cite{STOC:ReiRotRot16})]
\label{lem:unique-decoding}
Let $\K = 2^\lgK$ and $\lgK \in \NN$, $\FF$ be a field of characteristic 2.
Let $\secpar \in \NN$ be an unambiguity parameter,
and $\cksumT \coloneqq 2d(\lgK + \Flog) + \secpar$.
There exists a function $\cksum_{\bm \cksumU}: \FF^\K \to \FF^\cksumT$, 
such that with probability $1 - 2^{-\secpar}$ over uniform $\bm \cksumU = (\cksumU_1,\ldots, \cksumU_\cksumT)$,
$\cksum_{\bm \cksumU}(\cdot)$ is the checksum of a systematic linear error-correcting code of distance $2d$ on the message space $\FF^\K$.
In other words,
for any $\bm m \in \FF^\K$,
for any $\bm m', \bm m'' \in \FF^\K$ that are both $d$-close to $\bm m$ and $\bm m' \neq \bm m''$.
$\cksum_{\bm \cksumU}(\bm m') \neq \cksum_{\bm \cksumU}(\bm m'')$.

This allows \emph{unique decoding} in the following sense:
if we know $\bm m'$ is $d$-close to some (fixed) $\bm m$,
then we can uniquely determine $\bm m'$ given $\cksum_{\bm \cksumU}(\bm m')$.
\end{proposition}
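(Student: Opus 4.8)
The plan is to realize $\cksum_{\bm\cksumU}$ as a batch of low-degree-extension evaluations and then read the statement off directly from \Cref{lem:pval}. Identify $\FF^\K$ with the space of functions $\{0,1\}^\lgK \to \FF$ (via the function table) and define
\[
\cksum_{\bm\cksumU}(\bm m) \coloneqq \widehat{\bm m}(\bm\cksumU) = \bigl(\widehat{\bm m}(\cksumU_1), \ldots, \widehat{\bm m}(\cksumU_\cksumT)\bigr) \in \FF^\cksumT ,
\]
where $\widehat{\bm m} : \FF^\lgK \to \FF$ is the multilinear extension of $\bm m$ (see \Cref{sec:LDE}). The ``systematic code'' is then $\bm m \mapsto (\bm m, \cksum_{\bm\cksumU}(\bm m))$: it is systematic since $\bm m$ occupies the first $\K$ coordinates of the codeword, and $\FF$-linear since $\bm m \mapsto \widehat{\bm m}$ is an $\FF$-linear operator (each $\widehat{\bm m}(\bm s)$ is a fixed linear combination of the entries of $\bm m$ with the multilinear Lagrange coefficients at $\bm s$). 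So everything reduces to the distance claim, i.e.\ to the displayed ``unique decoding'' property.

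First I would reduce that claim to a statement about the kernel $\pval(\bm\cksumU, \bm 0)$. Suppose $\bm m', \bm m''$ are both $d$-close to a common $\bm m$ with $\bm m' \ne \bm m''$, and set $\bm x \coloneqq \bm m' - \bm m'' = \bm m' + \bm m''$ (characteristic $2$). Then $\bm x \ne \bm 0$ and, by the triangle inequality for Hamming distance, $\Delta(\bm x) \le \Delta(\bm m', \bm m) + \Delta(\bm m, \bm m'') \le 2d$. By linearity of $\cksum_{\bm\cksumU}$, $\cksum_{\bm\cksumU}(\bm m') = \cksum_{\bm\cksumU}(\bm m'')$ iff $\cksum_{\bm\cksumU}(\bm x) = \bm 0$, which (unwinding \Cref{def:pval}) is exactly the statement $\bm x \in \pval(\bm\cksumU, \bm 0)$. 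Hence it suffices to prove that, except with probability $2^{-\secpar}$ over $\bm\cksumU$, the linear subspace $\pval(\bm\cksumU, \bm 0)$ contains no nonzero vector of Hamming weight at most $2d$ --- equivalently, $\Delta(\pval(\bm\cksumU, \bm 0)) > 2d$.

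But this is precisely \Cref{lem:pval}, applied with the same distance parameter $d$, dimension $\lgK$, unambiguity parameter $\secpar$, and with $T = \cksumT = 2d(\lgK + \Flog) + \secpar$ samples, which meets the hypothesis $T \ge 2d(\lgK + \Flog) + \secpar$ with equality; the remaining hypotheses are that $\FF$ has characteristic $2$ (assumed) and $\abs{\FF} \ge 2\lgK$ (which holds in every instantiation, where $\abs{\FF}$ is taken at least exponential in $\secpar$). It yields $\Pr_{\bm\cksumU}\bigl[\Delta(\pval(\bm\cksumU, \bm 0)) \le 2d\bigr] \le 2^{-\secpar}$, as needed. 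The ``unique decoding'' remark is then immediate: if $\bm m'$ is promised to be $d$-close to a fixed $\bm m$, it is the unique vector in the radius-$d$ Hamming ball around $\bm m$ with checksum $\cksum_{\bm\cksumU}(\bm m')$, hence is recovered from $\cksum_{\bm\cksumU}(\bm m')$ together with $\bm m$.

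There is no serious obstacle: the proposition is essentially a repackaging of \Cref{lem:pval} (itself a Schwartz--Zippel plus union-bound argument over the $\binom{2^\lgK}{2d}\abs{\FF}^{2d}$ vectors of weight at most $2d$). The only points needing care are (i) the reduction to the kernel, which uses linearity of the LDE and characteristic $2$ so that ``error patterns'' add; (ii) verifying that the specific choice $\cksumT = 2d(\lgK + \Flog) + \secpar$ is exactly the sample count that drives the union bound in \Cref{lem:pval} down to $2^{-\secpar}$, and that the ambient lower bound on $\abs{\FF}$ is in force; and (iii) noting (in passing) that ``code of distance $2d$'' here should be read as the unique-decoding-within-$d$-balls property established above --- this, rather than the bare minimum distance of the systematic code, is what the rest of the paper uses.
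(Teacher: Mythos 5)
Your proof is correct and takes essentially the same route as the paper: define $\cksum_{\bm\cksumU}$ as a batch of multilinear-extension evaluations, reduce the distinctness of checksums of $d$-close strings to a minimum-distance statement about the kernel $\pval(\bm\cksumU,\bm 0)$ via linearity, and apply \Cref{lem:pval}. You are also right to flag that \Cref{lem:pval} requires the hypothesis $\abs{\FF}\ge 2\lgK$, which the proposition statement omits; the paper's proof invokes \Cref{lem:pval} without remarking on this, so your observation is a small but genuine improvement in rigor.
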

\begin{proof}
    Recall that $\hat{\bm m}: \FF^\lgK \to \FF^\cksumT$ is the multilinear extension of $\bm m$ (\Cref{sec:LDE}).
    We can let $\cksum_{\bm \cksumU}(\bm m) \coloneqq \hat{\bm m}(\bm \cksumU) = (\hat {\bm m}(\cksumU_1), \ldots, \hat{\bm m}(\cksumU_{\cksumT})) \in \FF^{\cksumT}$.

    To show unique decoding,
    assume that on the contrary, $\cksum_{\bm \cksumU}(\bm m') = \cksum_{\bm \cksumU}(\bm m'')$,
    Then $\widehat{\bm m' - \bm m''}(\bm \cksumU) = 0$ (by linearity).
    Note $\bm m' - \bm m'' \in \Delta(\pval(\bm \pvalU, \bm 0))$ has Hamming weight at most $2d$,
    so $\Delta(\pval(\bm \cksumU, \bm 0)) \le 2d$,
    but this happens with probability at most $2^{-\secpar}$ by \Cref{lem:pval}.
\end{proof}
Abusing the notation,
given a matrix $\bm M \in \FF^{\K \times a}$ with columns $(\bm m_1,\ldots,\bm m_a)$,
we use \\
$\cksum_{\bm \cksumU}(\bm M)$ to denote $\bm S = (\cksum_{\bm \cksumU}(\bm m_1),\ldots,\cksum_{\bm \cksumU}(\bm m_a)) \in \FF^{\cksumT \times a}$.

\subsubsection{Construction of the \Distance{} Protocol}
\paragraph{Road map of our Construction.}
Recall that the input claim is $(\cS, \C) \in \cL'_{\bm x}$,
and given the parameters $\secpar, d \in \NN$ and the field $\FF$ where $\Flog = \tO(\secpar)$,
the goal is to define $(\cSmid, \Cmid)$ such that $\Cmid$ rejects everything in the $\rowball(\ippinput)$ around $\ippinput$ if $(\cS, \C) \notin \cL'_{\bm x}$,
except with probability at most $(\epsilon + 2^{-\secpar})$.

To this end,
both parties interact for $\ell$ rounds,
to \emph{implicitly} run a batch of $\protU{^\tc}$ in parallel.
These protocols are run implicitly in the following sense.
\begin{itemize}
    \item $\cV_\Distance$ samples $\cksumT = \tO(\secpar d)$ random points that defines the checksum function $\cksum_\cksumU: \FF^{g\ell} \to \FF^\cksumT$.
    The checksums enable unique decoding except with probability $2^{-\secpar}$ over the choice of $\bm \cksumU$ (\Cref{lem:unique-decoding}).
    \item $\cV_\Distance$ samples a new random coin sequence $\bm q' \in (\bin^b)^\ell$, and send the coin to the prover round by round,
    \item $\cP_\Distance$ only responds by the checksums of the underlying messages.
\end{itemize}
Specifically,
in round $r \in [\ell]$,
the prover computes checksums $\cksumv_r \in \FF^\cksumT$ on the columns of the matrix $\answerMat^{\cS^{\hyb}}_r \in \bin^{(g\ell) \times a}$,
which consists of the $(g\ell)$ base-\UIP $r$-th round answers that the prover would have sent if the protocols $\protU{^\tc}$ were run explicitly.
Note that we would pick $\cksumT  = \tO(\secpar d) \ll {g\ell}$, 
so indeed the checksums can be represented using very few bits ($\tO(\secpar d \Flog) = \tO(\secpar^2 d)$ bits).

Finally, they let the resulting claim be $(\cSmid, \Cmid) \in \cL'_{\bm x}$ where $\cSmid = \cS^{\hyb}$ and $\Cmid$ is a circuit that takes in $\transcMat$ (purported to be $\transcMat^{\cS^{\hyb}}$) as input.
In additional to checking that rows (transcripts) in $\transcMat$ are indeed accepted by the base \UIP
and satisfy the original constraint $\C$,
the circuit also recomputes the checksums of the columns of $\transcMat$ to make sure they are equal to the checksums committed by the prover.
Due to the unique decoding guarantee of $\cksum$ (\Cref{lem:unique-decoding}),
if there exists $\transcMat^*$ in $\rowball(\transcMat^{\cSmid})$ with $\GKRcirc(\transcMat^*) = 1$,
the cheating prover would be implicitly committing to this $\transcMat^*$.
Therefore,
one can actually extract (via unique decoding) an accepting but deviating prover strategy for one of the base \UIP.
The base \UIP's unambiguity guarantees that any deviating strategy should only be accepted with probability at most $\epsilon$.
Therefore, with high probability,
there is no $\transcMat^* \in \rowball(\ippinput)$ that would make $\GKRcirc(\transcMat^*) = 1$,
i.e. distance is generated.

\paragraph{The Protocol.}
The protocol is given in \Cref{alg:phase1}.
\begin{algorithm}
    \setstretch{1.1}
    \caption{\prot{\Distance} \protphaseone.}
    \label{alg:phase1}
    
    \small
    \textbf{Input Parameters:} $\secpar, d \in \NN$ and a field $\FF$.\\
    \iterinput.\\
    \textbf{Other Parameters:} $g = \abs{\cS}$, $\K = g \ell = 2^\lgK$, and  $\cksumT = 2d(\lgK + \secpar) + \secpar \in \NN$.
    \\
    \textbf{Ingredients:} 
    \begin{itemize}[topsep=-0.02em,itemsep=-0.2em]
        \item \emph{A Transcript Checker \UIP, $\protU{^\tc}$ (\Cref{lem:Lxq}), for certifying $((x_i, \bm q), \transc) \in \cR$.}

        \item \emph{A Checksum Function, $\cksum_{\bm \cksumU}(\cdot)$, computing the checksum of a systematic error-correcting code with distance $2d$},
        where $\bm \cksumU = (\cksumU_1,\ldots, \cksumU_{\cksumT}) \in (\FF^\lgK)^{\cksumT}$ is a sequence of points in $\FF^{\lgK}$.
    \end{itemize}
    \textbf{Output:} 
    \itermidone{}.
    
    \begin{algorithmic}[1]
        \State $\cV_\Distance$ samples $\bm q' = (q'_1,\ldots,q'_\ell) \in (\bin^b)^\ell$.
        \State $\cV_\Distance$ samples $\bm \cksumU \gets_R (\FF^\lgK)^{\cksumT}$ and sends it to $\cP_\Distance$. 
        \For{$r=1, \ldots,\ell$}
        \State $\cV_\Distance$ sends $q_{r}'$ to $\cP_\Distance$. \Comment{Random coins are shared.}
        \State $\cP_\Distance$ receives $q_{r}'$ and does:
            \ifFOCS
            \State Compute 
            $\begin{aligned}[t]
                &\bm m_r^{i, \bm q} \\
                &\coloneqq \cP^\tc((x_i, \bm q), (\bm m_1^{i, \bm q},\ldots, \bm m_{r - 1}^{i, \bm q}))
            \end{aligned}$
            for all $(i,\bm q)\in \cS$,
            \else
            \State \hspace{1em} 1. Compute $\bm m_r^{i, \bm q} \coloneqq \cP^\tc((x_i, \bm q), (\bm m_1^{i, \bm q},\ldots, \bm m_{r - 1}^{i, \bm q}))$ for all $(i,\bm q)\in \cS$,
            \fi
            \State \hspace{1em} 2. By \Cref{rem:tauchecker},
            $\set{\bm m^{i, \b q}_r}_{(i, \bm q) \in \cS}$ exactly forms the rows of the matrix $\answerMat_r \coloneqq \answerMat_r^{\cS^\hyb} \in \bin^{(g\ell) \times a}$.
            \State \hspace{1em} 3. Compute $\cksumv_r\gets \cksum_{\bm \cksumU}(\answerMat_{r}) \in \bin^{\cksumT \times a}$, and send it to $\cV_\Distance$.
        \EndFor
        \State Define an arithmetic circuit $\GKRcirc(\transcMat)$,
        where $\transcMat \in \FF^{\K \times (a\ell)}$ is purported to be $\transcMat^{\cS^{\hyb}}$,
        and the circuit checks the following set of conditions.
        \Comment{Note that $\abs{\cS^\hyb} = \K = g \ell$.}
        \printchecksumchecks
        \State \Return $\lrag{\cS^{\hyb}}, \lrag{\GKRcirc}$.
    \end{algorithmic}
\end{algorithm}

\subsubsection{Analysis of the \Distance{} Protocol}
Recall that the input claim $(\cS, \C) \in \cL'_{\bm x}$ is true iff $(\C(\transcMat^{\cS}) = 1) \wedge (\bm x|_\I \in \cL^{\otimes \abs{\I}})$ where $\cS$ consists of $g \le \batchK$ pairs of the form $(i, \bm q) \in [\batchK] \times \bin^{b \ell}$, 
$\Flog = \tO(\secpar)$,
and $\C$ is a log-space uniform boolean circuit of size $\tO(g a\ell)$ and depth $\tO(1)$.

\paragraph{Prescribed Completeness.}
The honest prover $\cP_{\Distance}$ answers the random coin $\bm q'$ sampled by the verifier by running the Random Continuation \UIP $\prot{i, \bm q}$ in parallel on all $g = \abs{\cS}$ instances.
It honestly sends the checksums $\bm S$ that are computed on the columns of $\ippinput$,
where $\cS^\hyb$ is the set of hybrid pairs between $\cS$ and $q'$.
Recall that $\GKRcirc(\ippinput) = 1$ iff

\printchecksumchecks

The third condition is always true because the checksums are computed honestly on the columns of $\ippinput$.
On the other hand, the first condition is true iff $\C(\transcMat^{\cS}) = 1$,
while the second condition is true iff $\bm x|_\I \in \cL^{\otimes \I}$,
so, $\GKRcirc(\ippinput) = 1$ iff $(\cS, \C) \in \cL'_{\bm x}$.

\paragraph{Unambiguous Generation of $\Delta_c$-Distance.}
The verifier $\cV_\Distance$ sends the fresh random coins $\bm q' = (q_1', q_2', \ldots, q_\ell')$ round by round.
Let $\cP^*$ be a cheating strategy that deviates from $\cP_{\Distance}$.
For each $r \in [\ell]$, $\cP^*$'s behavior in the $r$-th round is fully determined by $\bm q_{\le r}' = (q_1',\ldots,q_r')$,
so let $\cksumv_r^*$ be the checksum produced by $\cP^*$ on $\bm q_{\le r}'$.
Let $\cksumv^* = (\cksumv_1^*,\ldots,\cksumv_\ell^*)$ be all the checksum sent.

Fix some $\bm q_{\le r^*}'$ such that $r^* \in [\ell]$ is indeed the first round when $\cP^*$ deviates,
i.e. when it first produces some $\cksumv_{r^*}^* \neq \cksum(\answerMat_{r^*})$.
As $r^*$ is minimal, $\cksumv_{r^*}^* \neq \cksum(\answerMat_{r^*})$ but $\cksumv_r^* = \cksum(\answerMat_r)$ for all $r < r^*$.
(Recall that $\answerMat_r$ is the $r$-th round chunks of $\ippinput$).
Note that this implies $\Cmid(\ippinput) = 0$ because $\ippinput$ does not have the checksum $\cksumv_{r^*}^*$ committed by the prover in round $r^*$.

Let $E$ be the event that after $\bm q'_{> r^*}$ is sampled, there exists $\transcMat^* \in \rowball(\ippinput) \setminus \set{\ippinput}$, 
such that $\GKRcirc(\transcMat^*) = 1$.
Given that we already showed $\Cmid(\ippinput) = 0$,
to prove generation of $\Delta_c$-distance 
(with probability at least $1-(\epsilon + 2^{-\secpar})$, for every $\transcMat^* \in \rowball(\ippinput)$, $\Cmid(\transcMat^*) = 0$)
we only need to show that $\Pr[E] \le \epsilon + 2^{-\secpar}$.
If $E$ is impossible on this $\bm q_{\le r^*}'$, then we are done,
so we consider the case when $\Pr[E] > 0$.

Since $\Pr[E] > 0$,
there exists some continuation $\bm q_{> r^*}'$ and $\transcMat^* \in \rowball(\ippinput) \setminus \set{\ippinput}$ such that $\GKRcirc(\transcMat^*) = 1$. 
In this case, for every $r \le r^*$,
if we let $\answerMat_r^*$ be the corresponding chunk of $\transcMat^*$ used in round $r$,
we have $\cksum_{\bm \cksumU}(\answerMat_r^*) =\bm \cksumv_r^*$ because of Condition 3 of $\GKRcirc$.

Furthermore, since $\cP^*$ is first cheating on round $r^*$,
we have $\cksumv_{r^*}^* = \cksum(\answerMat_{r^*}^*)\neq \cksum(\answerMat_{r^*})$,
and thus $\answerMat_{r^*}^* \neq \answerMat_{r^*}$. 
This means there exists a row-index $(i, \bm q^\hyb) \in \cS^\hyb$ such that $\answerMat_{r^*}^*[i, \bm q^\hyb] \neq \answerMat_{r^*}[i, \bm q^\hyb]$.
By \Cref{rem:tauchecker},
$(i, \bm q^\hyb)$ is some hybrid derived from some $(i, \bm q) \in \cS$.
Let $\cS^{i, \hyb} = \set{(i, \bm q^\hyb_1), \ldots, (i, \bm q^\hyb_\ell)}$ be the set of hybrid pairs derived from $(i, \bm q)$.

We construct a cheating prover $\cP^\tc_{*}$, that breaks the unambiguous soundness of the UIP $\protU{^\tc}$ as follows.
For each round $r \in [\ell]$, it does:
\begin{enumerate}
    \item Sample $\bm \cksumU \gets_R \bin^{\cksumT}$ uniformly at random and send it to $\cP^*$.
    \item Let $\bm q_{\le r}' = (q_1',\ldots,q_r')$ be the random coins sent by $\cV^\tc$ in the previous rounds,
    \item Apply $\cP^*$ on $\bm x|_\I$ and $\bm q_{\le r}'$ to obtain $\bm \cksumv_r^*$,
    \item Try to infer $\answerMat_r^*$, committed by $\cP^*$,
    by finding at most one $\answerMat_r^*$ that is $\Delta_c$-$d$-close to $\answerMat_r$ with checksum $\cksumv_r^*$.
    \item If such an $\answerMat_r^*$ exists and $\answerMat_r^* \in \bin^{\K \times a}$,
    send the message $\answerMat_{r}^*[\cS^{i, \hyb}, :]$ to $\cV^\tc$.
\end{enumerate}
\begin{claim}
Conditioned on unique decoding\footnote{Recall that unique decoding means for any pair $\answerMat',\answerMat'' \in \rowball(\ippinput)$, if $\cksum_{\bm \cksumU}(\answerMat') = \cksum_{\bm \cksumU}(\answerMat'')$ then $\answerMat' = \answerMat''$.} holding for the sampled $\bm \cksumU$,
when $\cV^\tc$ samples the random coin prefix $\bm q'_{\le r^*}$,
$\cP_*^\tc$ first deviates on the $r^*$-th round and convinces $\cV^\tc$ to accept with probability at least $\Pr[E]$.
\label{fact:break-soundness}
\end{claim}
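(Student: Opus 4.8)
\emph{Plan.} I would prove the two assertions of the claim separately: (i) that $\cP^\tc_*$ agrees with the prescribed prover $\cP^\tc$ of $\protU{^\tc}$ on the instance $(x_i,\bm q)$ --- with $\cV^\tc$'s fresh coins playing the role of $\bm q'$ --- in all rounds $1,\ldots,r^*-1$ and genuinely deviates on round $r^*$; and (ii) that on every continuation $\bm q'_{>r^*}$ for which the event $E$ occurs, $\cV^\tc$ accepts its interaction with $\cP^\tc_*$. Throughout I would fix a value of $\bm\cksumU$ for which the unique-decoding guarantee of \Cref{lem:unique-decoding} holds (making $\cP^\tc_*$ effectively deterministic and each ``decode'' step single-valued) and fix the prefix $\bm q'_{\le r^*}$ under consideration, so that $\cksumv^*_1,\ldots,\cksumv^*_{r^*}$ and the honest blocks $\answerMat_1,\ldots,\answerMat_{r^*}$ are all pinned down. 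If $\Pr[E]=0$ there is nothing to prove about acceptance --- and the surrounding argument only appeals to the claim when $\Pr[E]>0$ --- so I may assume $\Pr[E]>0$. I would also freely hardwire $\bm x$, $\cS$, and the code of $\cP^*$ into $\cP^\tc_*$ as advice, which is harmless since the unambiguity of $\protU{^\tc}$ is quantified over all unbounded, non-uniform deviating provers.

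\emph{Deviation round.} For $r<r^*$, $\cP^*$ sends the honest checksum $\cksumv^*_r=\cksum_{\bm\cksumU}(\answerMat_r)$; since $\answerMat_r$ is Boolean (a matrix of honest base-\UIP messages over $\bin$) and trivially $\Delta_c$-$d$-close to itself, unique decoding forces $\cP^\tc_*$ to recover $\answerMat_r$ and forward $\answerMat_r[\cS^{i,\hyb},:]$, which by the transcript structure in \Cref{rem:tauchecker} is exactly the prescribed round-$r$ message of $\protU{^\tc}$ on $(x_i,\bm q)$. For round $r^*$: by $\Pr[E]>0$ there is a witness $\transcMat^*\in\rowball(\ippinput)\setminus\{\ippinput\}$ with $\GKRcirc(\transcMat^*)=1$; its round-$r^*$ block is $\Delta_c$-$d$-close to $\answerMat_{r^*}$ (as $\transcMat^*\in\rowball(\ippinput)$), Boolean by Condition~4 of $\GKRcirc$, and has checksum $\cksumv^*_{r^*}$ by Condition~3, so unique decoding makes $\cP^\tc_*$ recover exactly this block $\answerMat^*_{r^*}$ --- which, importantly, is determined by the prefix alone, independently of $\bm q'_{>r^*}$ and of the choice of witness. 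Since $\cksumv^*_{r^*}\neq\cksum_{\bm\cksumU}(\answerMat_{r^*})$ we have $\answerMat^*_{r^*}\neq\answerMat_{r^*}$, so they disagree on some row of $\cS^\hyb$; that row descends from a unique original pair $(i,\bm q)\in\cS$ --- which is exactly the instance $\cP^\tc_*$ attacks --- and hence lies in $\cS^{i,\hyb}$, so $\answerMat^*_{r^*}[\cS^{i,\hyb},:]\neq\answerMat_{r^*}[\cS^{i,\hyb},:]$, the latter being the prescribed round-$r^*$ message. Together with the previous point, $\cP^\tc_*$ first deviates precisely on round $r^*$.

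\emph{Acceptance on every $E$-continuation.} Fix $\bm q'_{>r^*}$ with $E$, witnessed by $\transcMat^*\in\rowball(\ippinput)\setminus\{\ippinput\}$, $\GKRcirc(\transcMat^*)=1$. For each $r\in[\ell]$, the $r$-th block of $\transcMat^*$ is $\Delta_c$-$d$-close to the honest block $\answerMat_r$ (which $\cP^\tc_*$ computes on its own, as it depends only on $\bm q'_{<r}$ and the hardwired data), Boolean by Condition~4, and carries checksum $\cksumv^*_r$ by Condition~3; so $\cP^\tc_*$'s round-$r$ message is the restriction of that block to $\cS^{i,\hyb}$, and the whole transcript it produces is $\transcMat^*[\cS^{i,\hyb},:]$. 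Finally, Condition~2 of $\GKRcirc$ requires, for the original pair $(i,\bm q)$, that each of the $\ell$ hybrid-indexed rows $\transcMat^*[(i,\bm q^\hyb_j),:]$ be accepted by $\cV$ under coins $\bm q^\hyb_j$ and that the prescribed shared-prefix relations among them hold; since $(i,\bm q)=(i,\bm q^\hyb_\ell)$ is itself one of these rows, this is verbatim the ``Check Structure'' verdict of $\cV^\tc$ from \Cref{lem:Lxq} on input $((x_i,\bm q),\,\transcMat^*[\cS^{i,\hyb},:])$. Hence $\cV^\tc$ accepts, and therefore $\Pr[\cV^\tc\text{ accepts}]\ge\Pr[E]$ over the remaining coins, as claimed.

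\emph{Main obstacle.} The delicate part will be the bookkeeping around the two uses of unique decoding, which are invoked ``from both sides'': one has to argue simultaneously that the decoder inside $\cP^\tc_*$ always succeeds whenever $E$ holds --- which needs Conditions~3 and~4 of $\GKRcirc$ to supply a legitimate ($\Delta_c$-$d$-close, checksum-matching, Boolean) preimage of every committed checksum --- and that the decoded block is thereby forced to agree, after restriction to $\cS^{i,\hyb}$, with the corresponding block of the $E$-witness $\transcMat^*$, uniformly over all witnesses and continuations. A related subtlety to nail down is that the round-$r^*$ decoded block, and hence the attacked instance $(x_i,\bm q)$, are determined by the prefix $\bm q'_{\le r^*}$ alone, so that $\cP^\tc_*$ is one well-defined (non-uniform) strategy rather than a family depending on the not-yet-sampled $\bm q'_{>r^*}$; this is precisely what lets the contradiction with the unambiguity of $\protU{^\tc}$ be drawn in the surrounding proof.
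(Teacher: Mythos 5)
Your proof follows the paper's two-step decomposition exactly: fix $\bm\cksumU$ (unique decoding) and the prefix, argue the first deviation is at round $r^*$ via unique decoding and minimality, then show acceptance on every $E$-continuation by observing that $\cP^\tc_*$'s messages are pinned down (again by unique decoding, using Conditions~3 and~4) to be the blocks of the $E$-witness, so the produced transcript is $\transcMat^*[\cS^{i,\hyb},:]$, which Condition~2 then certifies as accepted. Your careful tracking of the well-definedness of $\cP^\tc_*$ --- that the decoded round-$r^*$ block, and hence the attacked $(i,\bm q)$, depend only on the prefix --- spells out something the paper leaves implicit and is worth writing down.

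There is, however, one shared blind spot concerning the input $\transc$ handed to $\protU{^\tc}$. The ``Check Structure'' bullet of \Cref{lem:Lxq}, as written, describes $\cV^\tc$'s verdict as a function of the transcript matrix alone, anchoring the prefix checks to the $\ell$-th hybrid row; but \Cref{alg:iq} anchors them to the input $\transc$, and that must be the real spec, since otherwise prescribed completeness of $\protU{^\tc}$ as a \UIP for $\cR$ fails (the verdict would not distinguish $\transc = \transc^{x,\bm q}$ from other inputs). With the \Cref{alg:iq} verdict, the $j=\ell$ prefix check forces the reconstructed $\ell$-th hybrid row to equal $\transc$; so on input $\transc = \transc^{x_i,\bm q}$, $\cV^\tc$ accepts only if $\transcMat^*[(i,\bm q),:] = \transc^{x_i,\bm q}$. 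That equality is \emph{not} a consequence of $E$: Conditions~1--4 of $\Cmid$ constrain $\transcMat^*$ to be Boolean, internally prefix-consistent, checksum-matching, and row-wise accepted, but its $\ell$-th hybrid row may be a different accepting-but-non-prescribed transcript for $(x_i,\bm q)$. In that case $\cV^\tc$ rejects even though $E$ holds, so the inequality $\Pr[\text{accept}] \ge \Pr[E]$ is overstated. Your phrase ``on input $((x_i,\bm q),\transcMat^*[\cS^{i,\hyb},:])$'' elides this --- the input to $\protU{^\tc}$ must be fixed before the coins, whereas $\transcMat^*$ depends on the continuation.

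The conclusion $\Pr[E] \le \eps$ that the surrounding proof needs is still correct; a clean way to obtain it is to bypass $\protU{^\tc}$ and invoke the unambiguity of the base $\prot{}$ on a single hybrid. Condition~2's prefix consistency implies that if any hybrid $j_0 \ge r^*$ of instance $i$ has a deviating round-$r^*$ block in $\transcMat^*$, so does hybrid $j = r^*$ (they share the first $r^*\cdot a$ coordinates), so there is always some $j_1 \le r^*$ for which $\transcMat^*[(i,\bm q^\hyb_{j_1}),:]$ has honest blocks before round $r^*$ (unique decoding) and a deviating block at $r^*$. That row is a base-\UIP transcript for $(x_i, \bm q^\hyb_{j_1})$ which first deviates at round $r^*$ with a prefix-determined message, and whose remaining coins are exactly $\bm q'_{>r^*}$; Condition~2 forces it to be accepted, and $\prot{}$'s unambiguity caps that probability by $\eps$.
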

\begin{proof}
First note that by the minimality of $r^*$ and the uniqueness of $\answerMat^*_r$ in each round,
$\cP^\tc_*$ deviates from the prescribed $\cP^\tc$ the first time exactly in round $r^*$.

Whenever $E$ occurs, 
there exists some completion $\transcMat' \in \rowball(\transcMat)$ such that $\GKRcirc(\transcMat') = 1$.
Let $(\answerMat'_1,\ldots,\answerMat'_\ncol)$ be the columns of $\transcMat'$.
Again by unique decoding,
$\cP^\tc_*$ are \emph{exactly} sending those $\answerMat_r'$ ($\answerMat'_r = \answerMat_r^*$) for all $r \in [\ell]$,
because their checksums are agreeing with $\cksumv^*$.
Furthermore,
$(\answer'_1[\cS^{i, \hyb}, :],\ldots,\answer'_\ell[\cS^{i, \hyb}])$ satisfies Conditions 2 and 4 of $\Cmid$,
so it is also accepted by $\cV^\tc$ (\Cref{lem:Lxq}).
\end{proof}
By the $\epsilon$-unambiguity of $\protU{^\tc}$,
the probability of $\cP_*^\tc$ breaking the unambiguity of $\protU{^\tc}$ is upper-bounded by $\epsilon$.
Therefore, $\Pr[E] - 2^{-\secpar} \le \epsilon$.
Note the loss of $2^{-\secpar}$ comes from the (rare) event that unique decoding does not hold for the initially sampled $\bm \cksumU$ (\Cref{lem:unique-decoding}).

The \emph{furthermore} part follows from that when $E$ does not happen, 
$\ippinput$ is the only candidate solution to $\GKRcirc(\transcMat) = 1$ in $\rowball(\ippinput)$.
This holds for the prescribed $\cP_\Distance$ as well.

\paragraph{Complexities.}
With $\tO$ hiding $\poly(\secpar) \cdot \polylog(\batchK, n, a, \ell)$ factors,
Note that we rely on the assumption that $\Flog = \tO(\secpar)$ to simplify the expressions.
\begin{itemize}
    \item $\ell_\Distance = O(\ell)$,
    \item $a_\Distance = \cksumT \cdot a \cdot \Flog = \tO(da)$.
    \item $b_\Distance = \max(b, \cksumT\Flog) = \tO(\max(b, \secpar^2d)) = \tO(d a)$, assuming $b \le a$.
    \item $\Ptime_\Distance = \tO(g \ell \Ptime + \cksumT (g \ell) (a\ell)\Flog + \abs{\lrag{\cS}} + \abs{\lrag{\C}} + \batchK \cdot n) = \tO(g \ell \Ptime + dga\ell^2 + \abs{\lrag{\cS}} + \abs{\lrag{\C}} + \batchK \cdot n)$.
    \item $\Vtime_\Distance = \tO(\cksumT \cdot a \cdot \Flog \cdot \ell + \abs{\lrag{\cS}} + \abs{\lrag{\C}} + \batchK \cdot n) = \tO(da\ell + \abs{\lrag{\cS}} + \abs{\lrag{\C}} + \batchK \cdot n)$.
\end{itemize}
The runtimes are less immediate, 
so we explain them below.
\begin{itemize}
    \item 
    For the prover runtime,
    note that is has to run $g = \abs{\cS}$ instances of the \UIP $\protU{^\tc}$ in parallel,
    which is in turn $g\ell$ instances of the base $\prot{}$,
    so that takes $O(g \ell \Ptime)$.
    It also computes the checksums (which can be done in $\tO(\cksumT (g \ell) (a\ell)\Flog)$ time using FFT),
    The term $\abs{\lrag{\cS}} + \abs{\lrag{\C}} + \batchK \cdot n$ comes from reading the input.
    \item
    The verifier does not perform any check in this sub-protocol,
    but it has to store the $\cksumT \cdot a$ points it receives every round,
    which requires $\tO(\cksumT \cdot a\cdot \Flog \ell)$ time.
    In the end, the time to compute $\lrag{\cSmid}$ is $O(\abs{\lrag{\cSmid}}) = O(\abs{\lrag S} + b \ell)$,
    and the time to compute $\lrag{\Cmid}$ is $O(\abs{\lrag{\Cmid}}) = \tO(\abs{\lrag \C} + \abs{\lrag \cS} + da\ell + \batchK \cdot n)$,
    since both of these consist of strings the verifier already has access to.
\end{itemize}

\paragraph{Circuit $G_{\Distance}$ for generating $\cSmid$.}
Let $\bm q'$ be the fresh random coin generated in \Cref{alg:phase1}.
Let $p \coloneqq \ceil{\log\batchK} + b\cdot \ell$ be the bit length of a pair $(i, \bm q) \in \bin^{\ceil{\log\batchK}} \times \bin^{b\cdot \ell}$.
We can represent $\cSmid$ as a subset of $\bin^p$.
\begin{claim}
    The set $\cSmid$ has a description $\lrag{\cSmid} = (\lrag{\cS}, \bm q')$, and $\abs{\lrag{\cSmid}} = \abs{\lrag{\cS}} + b\ell + O(1)$.
    \label{clm:gen}
\end{claim}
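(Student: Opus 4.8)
The plan is to give the description $\lrag{\cSmid}$ explicitly as the concatenation $(\lrag{\cS},\bm q')$ and then verify the two requirements of \Cref{def:setdesc}: the length bound, and the existence of a $p$-space uniform circuit $G_\Distance$ (with $p=\lceil\log\batchK\rceil+b\ell$, the bit length of a pair in $[\batchK]\times\bin^{b\ell}$) that recovers the $s$-th element of $\cSmid$ from $\lrag{\cSmid}$ for every $s\in[g\ell]$.

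I would begin by recalling from \Cref{alg:phase1} and \Cref{rem:tauchecker} that $\cSmid=\cS^{\hyb}$ is the ordered set of $g\ell$ pairs
\[
\cS^{\hyb}=\bigl((i^1,(\bm q^1)^{\hyb}_1),\ldots,(i^1,(\bm q^1)^{\hyb}_\ell),\ldots,(i^g,(\bm q^g)^{\hyb}_1),\ldots,(i^g,(\bm q^g)^{\hyb}_\ell)\bigr),
\]
where $\cS=((i^1,\bm q^1),\ldots,(i^g,\bm q^g))$ and, for $(i,\bm q)\in\cS$ and $j\in[\ell]$, the hybrid coin sequence $\bm q^{\hyb}_j=(\bm q_{\le j},\bm q'_{>j})$ consists of $\ell$ blocks of $b$ bits whose first $j$ blocks are copied from $\bm q$ and whose remaining $\ell-j$ blocks are copied from the fresh coins $\bm q'$ sampled by $\cV_\Distance$. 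Setting $\lrag{\cSmid}\coloneqq(\lrag{\cS},\bm q')$ — a plain concatenation of $\abs{\lrag{\cS}}$ bits and $b\ell$ bits, together with a fixed-size header recording the shape parameters $g,\ell,b$ so that the generating machine can parse its input — immediately yields $\abs{\lrag{\cSmid}}=\abs{\lrag{\cS}}+b\ell+O(1)$, which is the length bound.

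Next I would construct $G_\Distance$ from the circuit $G$ guaranteed by the hypothesis that $\lrag{\cS}$ describes $\cS$ (so $G(t,\lrag{\cS})=(i^t,\bm q^t)$ for $t\in[g]$). On input $(s,\lrag{\cSmid})$ with $s\in[g\ell]$, the circuit $G_\Distance$ computes $t=\lceil s/\ell\rceil$ and $j=s-(t-1)\ell$ by integer division (a subcircuit of polylogarithmic size and $\tO(1)$ depth), evaluates $(i,\bm q)=G(t,\lrag{\cS})$ on a copy of $G$ fed with the $\lrag{\cS}$-portion of $\lrag{\cSmid}$, then forms $\bm q^{\hyb}_j$ by a block-wise multiplexer that for each of the $\ell$ blocks selects the corresponding $b$-bit block of $\bm q$ when the block index is at most $j$ and of $\bm q'$ otherwise, and finally outputs $(i,\bm q^{\hyb}_j)$. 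By inspection $G_\Distance(s,\lrag{\cSmid})$ is exactly the $s$-th element of $\cS^{\hyb}$; the multiplexer uses $O(b\ell)$ gates and $\tO(1)$ depth, and the Turing machine producing $G_\Distance$ only needs to hard-wire $g,\ell,b$ and embed the machine producing $G$, so the family is $p$-space uniform with $\size(G_\Distance)=\size(G)+b\ell+\tO(1)$ and $\depth(G_\Distance)=\depth(G)+\tO(1)$, matching the bounds claimed in \Cref{clm:phase1}.

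There is no real obstacle here — the claim is essentially bookkeeping — and the only points requiring a little care are (i) routing the hybrid multiplexer and the index decomposition $s\mapsto(t,j)$ so that they add only $\tO(1)$ depth, keeping $\depth(G_\Distance)$ within a polylogarithmic additive term of $\depth(G)$, and (ii) confirming that wrapping $\lrag{\cS}$ and $\bm q'$ into a single string costs only the claimed $O(1)$ additive overhead, which holds because the lengths $\abs{\lrag{\cS}}$ and $b\ell$ are determined by the publicly known shape parameters so a constant-length prefix suffices for parsing.
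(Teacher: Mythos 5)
Your proposal is correct and follows essentially the same approach as the paper: you set $\lrag{\cSmid}=(\lrag{\cS},\bm q')$, decompose the row index $s$ into an instance index and a hybrid round index by integer division, invoke the circuit $G$ for $\lrag{\cS}$, and then assemble the hybrid coin sequence by a block-wise selector, yielding the same size, depth, and uniformity bounds. The only cosmetic difference is that you spell out the multiplexer and the constant-size parsing header in more detail, which is fine and matches the paper's $\tO(1)$ and $O(1)$ overhead terms.
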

\begin{proof}
    Assume $\cS = \set{\cS_1, \ldots, \cS_g}$,
    where $\cS_s = (i^{s}, \bm q^{s})$ for every $s \in [g]$,
    then by the definition of the hybrid coins in \Cref{alg:phase1},
    \ifFOCS
    \begin{align*}
    \cSmid = \cS^\hyb = \begin{Bmatrix}
    \begin{aligned}
    &(i^{1}, (\bm q^{1})^\hyb_1), \ldots, (i^{1}, (\bm q^{1})^\hyb_\ell), \\
    &\ldots, (i^{g}, (\bm q^{g})^\hyb_1), \ldots, (i^{g}, (\bm q^{g})^\hyb_\ell)
    \end{aligned}
    \end{Bmatrix},
    \end{align*}
    \else
    \[\cSmid = \cS^\hyb = \set{(i^{1}, (\bm q^{1})^\hyb_1), \ldots, (i^{1}, (\bm q^{1})^\hyb_\ell), \ldots, (i^{g}, (\bm q^{g})^\hyb_1), \ldots, (i^{g}, (\bm q^{g})^\hyb_\ell)},\]
    \fi
    where $(\bm q^{s})^\hyb_r \coloneqq (\bm q^{s}_{\le r}, \bm q'_{> r})$ for every $s \in [g], r \in [\ell]$.

    Let $\K = \abs{\cSmid} = g \ell$,
    and $\lrag{\cSmid} = (\lrag{\cS}, \bm q')$.
    We define a Turing machine $\cM$ that takes in $(\K, 1^{\abs{\lrag{\cSmid}}}, 1^p)$ as input,
    and outputs the circuit $G_\Distance: [\K] \times \bin^{\abs{\lrag{\cSmid}}} \to \bin^p$.
    Since $\lrag{\cS}$ is a succinct description of $\cS$,
    there exists a Turing machine $\cM'$ that takes in $(g, 1^{\abs{\lrag{\cS}}}, 1^p)$ as input,
    and outputs the circuit $G: [g] \times \bin^{\abs{\lrag{\cS}}} \to \bin^p$,
    which, given $(s, \lrag{\cS})$ as input, outputs $\cS_s = (i^s, \bm q^s)$.

    $\cM$ first runs $\cM'$ to get $G$, 
    then moves onto defining $G_\Distance$, which, given $(s', \lrag{\cSmid}) \in [\K] \times \bin^{\abs{\lrag{\cSmid}}}$ as input,
    \begin{enumerate}
        \item First, computes $(g, r)$ such that $s' = s \cdot g + r$ where $s \in [g], r \in [\ell]$, e.g. by long division.
        \item Then, evaluates $G(s, \lrag \cS)$ to get $(i^s, \bm q^s)$.
        \item Finally, computes the corresponding hybrid coin $(\bm q^{s})^\hyb_r$ using $\bm q^s$ and $\bm q'$ (which is just copying the first $r$ bits of $\bm q^s$ and appending the remaining bits of $\bm q'$),
        and outputs $(i^{s}, (\bm q^{s})^\hyb_r)$.
    \end{enumerate}
    Indeed,
    \begin{itemize}
        \item $\size(G_\Distance) = \size(G) + b \ell + \tO(1)$.
        \item $\depth(G_\Distance) = \depth(G) + \tO(1)$.
    \end{itemize}
    The Turing machine $\cM$ uses $O(p) = O(\log \batchK + b \ell)$ space because it only needs to simulate $\cM'$ (which uses $O(p)$ space) and maintain pointers into the circuit $G_\Distance$ (which uses $O(p) + O(\log\batchK + \log b + \log \ell)$ space).
\end{proof}
\paragraph{Circuit $C_\Distance$ for $\Cmid$.}
By construction, $\GKRcirc(\transcMat)$ checks four conditions, 
whose circuit complexities are:
\begin{enumerate}
    \item Checking $\C$ on $\transcMat[\cS, :]$, requires $\size(\C)$ gates and $\depth(\C)$ depth.
    \item \distcirctwo

    This requires $O(\K \cdot S + \K\ncol) = \tO(g\ell S + dga\ell^2)$ gates and $D + O(\log(\K \ncol)) = D + \tO(1)$ depth (by a binary tree of \AND gates).
    \item \distcircthree

    This requires re-computing the checksums for every column of $\transcMat \in \bin^{\K \times \ncol}$.
    To compute one checksum, $\cksumT$ \LDE{} evaluation are needed.
    One \LDE evaluation requires $\tO(\K) = \tO(g\ell)$ gates,
    and $\tO(1)$ depth.
    Therefore, 
    we need $\tO(\cksumT \K \ncol) = \tO(dga\ell^2)$ gates and $\tO(1)$ depth to compute the checksums.
    Finally, we implement the consistency checks by equality tests.
    This requires an additional $\tO(\cksumT \ncol) = \tO(da\ell)$ gates and $\tO(1)$ depth.
    \item \distcircfour

    To check $\transcMat \in \GF(2)^{\K \times \ncol}$, 
    we just need to test if every element is in $\GF(2)$,
    and compute an \AND over them.
    This requires $\tO(\K\ncol) = \tO(g\ell)$ gates and $\tO(1)$ depth,
    because the test can be implemented by evaluating the polynomial $X^2 + X$ over $\FF$ and checking if the result is $0$.
\end{enumerate}
In order to perform the second check,
$C_\Distance$ also needs to first enumerate the pairs $\cS^{\hyb}$ explicitly using $\lrag{\cS^{\hyb}}$.
This incurs an additional overhead of $\depth(G_\Distance) +\tO(1) = \depth(G) + \tO(1)$ and $g \ell \cdot \size(G_\Distance) = \tO(g \ell \cdot \size(G) + gb\ell^2)$ (see \Cref{clm:gen}).

Therefore, we have the following overall bounds.
\begin{itemize}
    \item $\size(C_\Distance) = \tO(g\ell \cdot \size(G) + \size(C) + ga\ell S + dga\ell^2)$.
    \ifFOCS
    \item $\begin{aligned}[t]\depth(C_\Distance) &= \max(\depth(C), \\
        &\quad\quad\quad \depth(G) + D + \tO(1), \\
        &\quad\quad\quad \tO(1))+ \tO(1) \\
        &= \max(\depth(C), \\
        &\quad\quad\quad \depth(G) + D) + \tO(1).\end{aligned}$
    \else
    \item $\begin{aligned}[t]\depth(C_\Distance) &= \max(\depth(C), \depth(G) + D + \tO(1), \tO(1)) + \tO(1) \\
        &= \max(\depth(C), \depth(G) + D) + \tO(1).\end{aligned}$
    \fi
\end{itemize}

Similarly to \Cref{clm:gen},
the circuit $\Cmid$ also has a succinct description,
given by \[\lrag{\Cmid} = (\lrag{\cSmid}, \lrag{\C}, \cksumU, (\cksumv_1,\ldots, \cksumv_\ell), \bm x),\]
and it is indeed log-space uniform,
because its components (the four checks, as well as the circuit $C(\lrag{\cS^\hyb})$ that generates $\cS^\hyb$) are,
and the additional binary trees of $\AND$ as well as the equality checks are simple to describe.
The bit length $\abs{\lrag{\Cmid}} = \tO(\abs{\lrag\C} + \abs{\lrag{\cS}} + da\ell + \batchK \cdot n)$.

\subsection{Proof of \Cref{clm:phase2}: \protphasetwo}
\label{sec:phase2}
\subsubsection{Additional Ingredients}
\paragraph{Road map of our Construction.}
Recall that the intermediate claim is $\Cmid(\ippinput) = 1$.
We also have the \emph{distance guarantee} from \Cref{clm:phase1}: If $\Cmid(\ippinput) = 0$ then every matrix $\transcMat \in \rowball(\ippinput)$ is rejected by $\Cmid$.
Our goal is to output some reduced claim $(\cS', \C')$ such that $\abs{\cS'} \le \abs{\cSmid}/d$.

The \RR protocol provides a similar instance reduction, but only with respect to the Hamming distance.
\begin{theorem}[The original \RR Instance Reduction Protocol, informally rephrased; \cite{TCC:RotRot20}]
    \label{thm:RR-informal}
    Suppose $\K \in \NN$.
    There exists a protocol $\prot{\RR}$,
    where both the prover and verifier gets the description $\lrag{\C}$ of a log-space uniform circuit $\C : \bin^\K \to \bin$,
    and the prover gets the auxiliary input $\bm w \in \bin^{\K}$,
    such that if both parties gets the parameter $d = d(n) \in \NN$,
    then the protocol outputs a succinct description $\lrag{\RRS}$ of a set $\RRS \subset [\K]$ and a new circuit $\C' : \bin^{\abs{\RRS}} \to \bin$,
    such that $\abs{\RRS} \le \frac{\abs{\K}}{d}$.
    The protocol is sound in the following sense: if $\C(\bm w) = 0$ for every $\bm w$ that is (Hamming) $d$-close to $\bm w$, 
    then $\C'(\bm w|_\RRS) = 0$.
    It has $\tO(1)$ rounds and communication complexity $\tO(d)$.
    The prover runtime is $\poly(\K)$ and the verifier runtime is $\tO(d)$.
\end{theorem}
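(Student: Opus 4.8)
\emph{Overview.} This is, in essence, the instance-reduction protocol of \cite{TCC:RotRot20} (itself building on \cite{STOC:RotVadWig13,RRR18}), and the plan is to reconstruct it in two stages. First I would reduce the general-circuit claim ``$\C(\bm w)=1$'' to a $\pval$ claim over a field while preserving Hamming distance, using the $\GKR$ protocol; then I would run the folding-based reduction on the resulting $\pval$ claim.

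\emph{Stage~1: reduction to $\pval$.} In this stage both parties replace $\C$ by the arithmetic circuit $\C^\circ:\FF^\K\to\bin$ that checks $\C(\bm w)=1$ and in addition that $w_i^2-w_i=0$ for every coordinate, over a constructible characteristic-$2$ field $\FF$ with $\abs{\FF}=\poly(\K,d,2^{\secpar})$, and run the $\GKR$ protocol of \Cref{lem:GKR} on $\C^\circ$ with $\pvalT=\tO(d)$ parallel repetitions. Invoking \Cref{lem:RVW} with $\ncol=1$ (so that $\Delta_c$ collapses to Hamming distance), and noting that the only vectors of $\FF^\K$ accepted by $\C^\circ$ are boolean, the hypothesis ``every boolean $\bm w'$ Hamming-$d$-close to $\bm w$ is rejected by $\C$'' yields $\cS^\circ\cap\rowball(\bm w)=\varnothing$; hence with probability $1-2^{-\secpar}$ the $\GKR$ output $\pval(\bm j,\bm v)\subseteq\FF^\K$ has $\bm w$ Hamming-$d$-far from it. The polynomially large field is needed so that the $\GKR$ soundness error $\epsilon_{\GKR}^{\pvalT}$ survives the union bound over the $\le\binom{\K}{d}\abs{\FF}^{d}$ points of $\rowball(\bm w)$ (cf.\ \Cref{lem:pval}). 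Since $\GKR$ is holographic, this stage costs $\tO(1)$ rounds and $\tO(d)$ communication, and the verifier never reads $\bm w$.

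\emph{Stage~2: folding.} We now have $\bm w$ Hamming-$d$-far from an affine subspace $\pval(\bm j,\bm v)\subseteq\FF^\K$, and iterate the following for $\lceil\log(\K/d)\rceil$ rounds (after round $t$ the current vector has length $\le\K/2^{t}$). Using multilinearity of $\hat{\bm w}$, the prover splits the current $\pval$ claim into a claim $\pval_0$ on the top half $\bm w_0$ and $\pval_1$ on the bottom half $\bm w_1$ (fixing the leading query coordinate to $0$ resp.\ $1$ and adjusting the value), and one checks that $(\bm w_0,\bm w_1)$ is still Hamming-$d$-far from $\pval_0\times\pval_1$. The verifier then samples a permutation $\pi$ from the almost-$d$-wise independent family of \Cref{thm: random reversible circuits are independent} on the half-length index set, with $\eta=1/\poly(\K)$, and rewrites $\pval_1$ as a $\pval$ claim $\pval_1'$ on $\pi(\bm w_1)$ (coordinate-permutation being linear, this needs no extra communication); finally it samples $\alpha,\beta\gets\FF$ and folds the two claims into a single $\pval$ claim over $\FF^{\K/2^{t}}$ on $\alpha\bm w_0+\beta\,\pi(\bm w_1)$. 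The point of $\pi$ is that it scrambles the ``correction support'' of $\bm w_1$ so that, by \Cref{thm: random set intersection is tiny}, it meets the correction support of $\bm w_0$ in at most $\epsilon d$ coordinates except with probability $\eta+\exp(-\Omega(\epsilon d))$; a Schwartz--Zippel argument (\Cref{lem:SZ}) over $\alpha,\beta$ then shows the folded vector is again $\gtrsim(1-\epsilon)d$-far from the folded subspace (taking $d$ larger by a constant factor up front absorbs these $\epsilon$-losses across all $\log(\K/d)$ rounds). After the last round the parties hold a $\pval$ claim over a vector $\bm w^\ast\in\FF^{O(d)}$, each of whose coordinates is an $\FF$-linear combination of $\le\K/(2d)$ original coordinates of $\bm w$, together with the honest folded value $\bm w^\ast_{\mathrm{hon}}$, which is Hamming-$d$-far from that claim.

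\emph{Final check, output, and the main obstacle.} The prover sends $\bm w^\ast$ in the clear ($\tO(d)$ bits) and the verifier checks it satisfies the final $\pval$ constraint; in the NO case $\bm w^\ast$ then differs from $\bm w^\ast_{\mathrm{hon}}$ on $\ge d$ of its $O(d)$ coordinates, so the verifier subsamples $O(1)$ coordinates (boosting to any desired error) and outputs $\lrag{\RRS}$ for the set $\RRS$ of the $\le O(1)\cdot\K/(2d)\le\K/d$ original indices folded into those coordinates, together with $\lrag{\C'}$ for the circuit that recomputes the subsampled entries of $\bm w^\ast_{\mathrm{hon}}$ from $\bm w|_\RRS$ through the composed folding maps and compares them to the sent $\bm w^\ast$; then $\C'(\bm w|_\RRS)=0$ whenever an inconsistency is caught, which happens with high probability in the NO case. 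For complexity: $O(\log\K)=\tO(1)$ iterations of $\tO(1)$ rounds and $\tO(d)$ communication each --- the permutation seeds and field coefficients admit $\tO(d)$-bit succinct descriptions --- plus $\tO(d)$ from Stage~1, giving $\tO(1)$ rounds and $\tO(d)$ communication; the prover runs the $\GKR$ prover and all folding maps on $\bm w$ in $\poly(\K)$ time, and the verifier's work is dominated by $\tO(d)$-bit messages and succinct descriptions. I expect the main obstacle to be the distance-preservation analysis of a single folding iteration --- that splitting, then permuting, then taking a random $\FF$-linear combination keeps the vector $\approx d$-far from the correspondingly combined affine subspace. The delicate points there are: formalizing the ``correction support'' of an affine $\pval$ subspace so that near-disjointness after $\pi$ (\Cref{thm: random set intersection is tiny}) together with a Schwartz--Zippel bound over $(\alpha,\beta)$ deliver the claim; keeping $\abs{\FF}$ polynomially large enough that the $\binom{\K}{d}\abs{\FF}^{d}$-term union bounds in both stages stay negligible; and controlling the accumulated per-iteration error $\eta+\exp(-\Omega(d))$ over $\log(\K/d)$ rounds, which forces $d=\Omega(\log\K)$ and $\eta=1/\poly(\K)$ --- matching the technical side-conditions that recur elsewhere in the paper.
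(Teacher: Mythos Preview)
Your two-stage plan matches the paper's own outline of how \Cref{thm:RR-informal} is proved: first run $\GKR$ with $\pvalT=\tO(d)$ parallel repetitions (your invocation of \Cref{lem:RVW} with $\ncol=1$ is exactly right, and the Booleanity check $w_i^2-w_i=0$ inside $\C^\circ$ is the correct trick), then fold the resulting $\pval$ instance via split--permute--combine. The complexity accounting and the identification of distance preservation under folding as the crux are both accurate.

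There is, however, one concrete step that fails as written. In Stage~2 you sample $\pi$ from the random-reversible-circuit family of \Cref{thm: random reversible circuits are independent} and then assert that ``coordinate-permutation being linear, this needs no extra communication'' to turn the claim $\bm w_1\in\pval(\bm j,\bm v_1)$ into a $\pval$ claim on $\pi(\bm w_1)$. Linearity of $\pi$ is not enough: the functional $\bm w_1\mapsto \hat{\bm w}_1(\bm j_i)$ composed with $\pi^{-1}$ is the functional $\bm u\mapsto\sum_y \chi_{\pi(y)}(\bm j_i)\,u_y$, which is an $\LDE$ evaluation only when $\pi$ is affine over $\GF(2)^{\lgK}$. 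The permutations produced by random reversible circuits are not affine, so after permuting you no longer hold a $\pval$ instance and the multilinear split in the next iteration is undefined. The paper handles exactly this by inserting an additional holographic $\GKR$ step (\Cref{alg:GKRinsideRR}, used inside \Cref{alg: gap amplification}): it delegates the circuit $\Phi_{\pi,\bm j,\bm v}$ that checks $\widehat{\bm w_1\circ\tau}(\bm j)=\bm v_1$, yielding a fresh $\pval(\bm j',\bm v')$ claim on $\bm w_1\circ\tau$. This extra $\GKR$ costs only $\tO(1)$ rounds and $\tO(d)$ communication, so your stated bounds survive, but without it Stage~2 does not recurse. (An alternative is to switch to an affine pairwise-independent family, as in the original \cite{TCC:RotRot20} sketch in \Cref{sec:overview:RR}, but then you lose the $d$-wise independence needed for \Cref{thm: random set intersection is tiny} and must redo the distance analysis.)

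A smaller point: your ``correction support'' argument implicitly assumes a unique nearest $\pval$ element to each of $\bm w_0,\bm w_1$. Making this rigorous is precisely the role of the correlated-agreement lemma (\Cref{lem: Batch BKS}), which the paper combines with \Cref{thm: random set intersection is tiny} in \Cref{lem: Batch RR}; you correctly flag this as the main obstacle, but be aware that the argument needs the $\Delta(\pval(\bm j,\bm 0))\ge 4d$ hypothesis to go through, not just a Schwartz--Zippel bound over $(\alpha,\beta)$.
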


As explained in \Cref{sec:overview:RR},
$\rowball(\ippinput)$ is a strict subset of the Hamming ball.
The soundness of \Cref{thm:RR-informal} applies only when \emph{every Hamming-close string} is rejected,
which is not always true in our case because our notion of distance is different. 

\paragraph{The \RR Instance Reduction Template}
\Cref{thm:RR-informal} is originally proven by the following two steps:
\begin{enumerate}
    \item Use the \GKR protocol (\Cref{lem:GKR}) to transform the input claim $\C(\bm w) = 1$ into the claim $\bm w \in \pval(\pvalU, \bm \pvalv)$.
    \label{step:RR1}
    The parameters are chosen to ensure that \emph{distance is preserved}:
    if every $\bm w'$ that is $d$-close to $\bm w$ is rejected by $\C$,
    then $\bm w$ is $d$-far from $\pval(\pvalU, \bm \pvalv)$.

    \item Apply a special-purpose protocol for reducing $\pval$ instances,
    on the input $\bm w \in \pval(\pvalU, \bm \pvalv)$.
    This outputs a subset of coordinates $\RRS \subset [\K]$ that specifies the positions of the coordinates of $\bm w$ to read,
    and $\abs{\RRS} \le \abs{\K}/d$,
    along with a predicate $\C'$,
    such that $\C'(\bm w|_\RRS) = 1$ only if $\bm w$ is Hamming-$d$-close to $\pval(\pvalU, \bm \pvalv)$.
    \label{step:RR2}
\end{enumerate}
It turns out that the first step can be modified to work for $\Delta_c$-distance quite straightforwardly (\Cref{lem:RVW}).
However, for the second step,
we have to make lower level modifications in order to make it work for our notion of distance.
This discussion appears in \Cref{sec:DcRR}.

In the rest of this section,
we formally recall the guarantees of the two sub-protocols (\Cref{lem:RVW,lem:DcRR}),
corresponding to the two steps mentioned above,
and use them to prove \Cref{clm:phase2}.

\paragraph{The \GKR Protocol is Distance-Preserving}
Let $\ncol \coloneqq a \ell$.
To preserve the distance,
we choose the parameter $\pvalT = \tilde \Theta(d\ncol)$ (see \Cref{lem:RVW}).
In the \GKR protocol,
both parties take as input the description $\lrag \Cmid$,
and the prover additionally has access to $\ippinput$.
After running the \GKR protocol with parameter $\pvalT$,
the verifier obtains $\bm \pvalU \in (\FF^\lgK)^\pvalT, \bm \pvalv \in \FF^\pvalT$ as output,
with the guarantee that $\pval(\bm \pvalU, \bm \pvalv)$ is $\Delta_c$-$d$-close to $\Cmid(\ippinput)$ iff $\ippinput$ is $\Delta_c$-$d$-close to $\Cmid(\ippinput)$.\footnote{Recall that a matrix $\transcMat$ is $\Delta_c$-$d$-close to $\ippinput$ iff $\transcMat \in \rowball(\ippinput)$ (\Cref{def:Deltac-dist}).}

Recall that $D_\Distance \coloneqq \depth(\Cmid)$, $S_\Distance \coloneqq \size(\Cmid)$.
The complexity of the protocol is as follows, where $\tO$ hides $\polylog(\Flog, \K, \ncol)$ factors:
\begin{itemize}
    \ifFOCS
    \item $\epsilon_\GKR = \frac{O(D_\Distance\log S_\Distance)}{\abs{\FF}} + \left(O(\frac{D_\Distance\log S_\Distance}{\abs{\FF}})\right)^\pvalT \left(\binom{\K}{d} \abs{\FF}^{d}\right)^\ncol < 2^{-\secpar + 2}$, so long as $O(\frac{D_\Distance\log S_\Distance}{\abs{\FF}}) < 2^{-\secpar-3}$.
    \else
    \item $\epsilon_\GKR = \frac{O(D_\Distance\log S_\Distance)}{\abs{\FF}} + \left(O(\frac{D_\Distance\log S_\Distance}{\abs{\FF}})\right)^\pvalT \left(\binom{\K}{d} \abs{\FF}^{d}\right)^\ncol < 2^{-\secpar + 2}$, so long as $O(\frac{D_\Distance\log S_\Distance}{\abs{\FF}}) < 2^{-\secpar-3}$.
    \fi
    \item $\ell_\GKR = O(D_\Distance \log S_\Distance)$.
    \item $a_\GKR = b_\GKR = \tO(\pvalT\Flog) = \tO(da\ell)$.
    \item $\Ptime_\GKR = \tO(\pvalT \cdot \poly(S_\Distance) \cdot \Fbits) = \tO(da\ell) \cdot \poly(S_\Distance)$.
    \item $\Vtime_\GKR = \tO(\pvalT \cdot D_\Distance \log S_\Distance \cdot \Flog + \pvalT \cdot \abs{\lrag{\Cmid}}\Flog) = \tO(da\ell \cdot (D_\Distance \log S_\Distance + \abs{\lrag{\Cmid}}))$.
\end{itemize}
And the verifier's verdict circuit satisfies:
\begin{itemize}
    \item $\size(\cV_\GKR) = \tO(\pvalT \cdot D_\Distance \log S_\Distance + \pvalT \cdot \abs{\lrag{\Cmid}}\Flog) = \tO(da\ell \cdot (D_\Distance \log S_\Distance + \abs{\lrag{\Cmid}}))$.
    \item $\depth(\cV_\GKR) = \tO(1)$.
\end{itemize}

\paragraph{The Instance Reduction Protocol for \pval with $\Delta_c$-distance} 
$\prot{\RRrow}$ is a special-purpose protocol for reducing $\pval$ instances with $\Delta_c$ distances.
Both parties in the protocol take as input the parameters $\pvalU \in (\FF^\lgK)^{\pvalT}, \bm \pvalv \in \FF^\pvalT$ that specify $\pval(\pvalU, \bm \pvalv)$,
and the prover additionally has the input $\ippinput$.
The protocol outputs $\param{'}$ such that $\abs{\cS'} \le \abs{\cSmid}/d$,
with the guarantee that $(\C', \cS') \in \cL'_{\bm x}$ iff $\ippinput$ is $\Delta_c$-$d$-close to $\pval(\pvalU, \bm \pvalv)$.

The following is a restatement of \Cref{thm:DcRR-main}.
Its proof appears in \Cref{sec:DcRR}.
\begin{lemma}[The Instance Reduction Protocol for \pval with $\Delta_c$-distance (\Cref{thm:DcRR-main} in \Cref{sec:DcRR})]
    \label{lem:DcRR}
    \DCMainStmt
\end{lemma}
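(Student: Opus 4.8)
To prove \Cref{lem:DcRR} I would instantiate the second, \pval-specific phase of the Reingold--Rothblum--Rothblum / Rothblum--Rothblum instance-reduction template (\Cref{thm:RR-informal}), replacing its Hamming-distance machinery with a column-distance ($\Delta_c$) analogue built on the almost-$d$-wise independent permutations of \Cref{thm: random reversible circuits are independent}. The protocol $\prot{\RRrow}$ would run $\nround = O(\lgK) = \tO(1)$ folding iterations, each of $O(1)$ public-coin rounds, and the verifier never touches $\transcMat$. Iteration $t$ starts from a claim $\bm a^{(t)} \in \pval(\bm \pvalU^{(t)}, \bm \pvalv^{(t)})$ with $\bm a^{(t)} \in \FF^{\K_t \times \ncol}$, $\bm a^{(0)} = \transcMat$, $\K_t = \K/2^t$, and does: \emph{(Split)} writing $\bm \pvalU^{(t)}_i = (\alpha_i, \bm\beta_i)$ with $\alpha_i$ indexing the top row-bit, multilinearity gives $\widehat{\bm a^{(t)}}(\alpha_i, \bm\beta_i) = (1-\alpha_i)\widehat{\bm a_0^{(t)}}(\bm\beta_i) + \alpha_i\widehat{\bm a_1^{(t)}}(\bm\beta_i)$ where $\bm a_0^{(t)}, \bm a_1^{(t)}$ are the top and bottom halves of the rows, and the prover sends the half-evaluations $w_i := \widehat{\bm a_0^{(t)}}(\bm\beta_i)$, which together with $\bm \pvalv^{(t)}$ determine a \pval claim on $\bm a_0^{(t)}$ and one on $\bm a_1^{(t)}$ over $\FF^{(\lgK_t-1)+\lncol}$ at the common points $\bm\beta_i$ (coordinates with $\alpha_i \in \{0,1\}$ collapse to a single half); \emph{(Permute)} the verifier samples $\pi_t \sim \Pi_{\lgK_t-1,\Theta(d),\eta}$ with $\eta = 2^{-\secpar}/\poly{(\lgK,\ncol)}$ and sends its succinct description, and $\pi_t$ is applied to the \emph{rows} of $\bm a_1^{(t)}$ --- equivalently, the same permutation acts on every column, so each entry stays in its column, which is the essential departure from \cite{TCC:RotRot20}, whose scrambling moves errors across columns (cf.\ \Cref{fig:folding}); \emph{(Fold)} the verifier samples $\gamma_t \gets \FF$ and both parties set $\bm a^{(t+1)} := \bm a_0^{(t)} + \gamma_t\cdot\pi_t(\bm a_1^{(t)})$ together with the correspondingly folded \pval claim, which is consistent by linearity of the low-degree extension in the matrix entries. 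After the last iteration the prover reveals the (small) folded matrix $\bm b := \bm a^{(\nround)}$, the verifier checks $\bm b \in \pval(\bm \pvalU^{(\nround)}, \bm \pvalv^{(\nround)})$ and rejects otherwise, then subsamples $\Theta(\secpar)$ of the $\K_\nround$ rows of $\bm b$; since each row of $\bm b$ is a fixed, coin-determined linear combination of $\K/\K_\nround$ rows of $\transcMat$, the output $\lrag{\RRS}$ succinctly describes the union $\RRS$ (of size $\le \ceil{8\secpar\K/d}$, depending only on the verifier's coins) of the original rows aggregated into the subsampled rows, and $\lrag{\C'}$ describes the circuit that recomputes those subsampled rows of $\bm b$ from $\transcMat[\RRS,:]$ via the public coefficients and permutations and checks them against the revealed $\bm b$.

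\textbf{Correctness and complexity.} Prescribed completeness is immediate from linearity of the low-degree extension: the prescribed fold of $\transcMat$ lies in each folded \pval set iff $\transcMat \in \pval(\bm \pvalU, \bm \pvalv)$, and then the revealed $\bm b$ matches the recomputed rows, so $\C'(\transcMat[\RRS,:]) = 1$ iff $\transcMat \in \pval(\bm \pvalU, \bm \pvalv)$. For unambiguity, suppose $\cP^*$ first deviates in round $r^*$. If the deviation is in a \emph{Split} step (some sent $w_i$ is wrong), then the prescribed continuation fails the folded claim for all but one value of $\gamma_t$ by Schwartz--Zippel (\Cref{lem:SZ}) over $\abs{\FF} \ge \FboundT$; I would then show --- this is the $\Delta_c$-distance-preservation fact --- that the prescribed folded matrix stays $\Delta_c$-$d$-far from every subsequent folded \pval set, so at the end the revealed $\bm b$ either fails the verifier's \pval check or differs from the prescribed folded matrix on $> d$ rows, and the $\Theta(\secpar)$-subsample detects such a row (whence $\C'$ rejects) except with probability $2^{-\secpar-2}$. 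If $\cP^*$ deviates only when revealing $\bm b$, the same final-step argument applies, now using that $\Delta_c(\pval(\bm \pvalU, \bm 0)) \ge 4d$ makes the prescribed $\bm b$ itself $\Delta_c$-$d$-far from the final \pval set. The $\Delta_c$-distance-preservation fact --- one fold maps a $\Delta_c$-$d$-far claim to a $\Delta_c$-$d$-far claim except with probability $2^{-\secpar}/\poly{(\lgK,\ncol)}$ --- is proved per column: in the worst column the at most $d$ bad rows of $\bm a_0^{(t)}$ and of $\pi_t(\bm a_1^{(t)})$ overlap in fewer than $d/2$ positions with probability $\ge 1 - \eta - 2^{-\Omega(d)}$ by \Cref{thm: random set intersection is tiny}, so after adding, the column still has $\ge d$ bad rows, and the random $\gamma_t$ from the large field avoids cancelling them --- this is where $\Delta_c(\pval(\bm \pvalU, \bm 0)) \ge 4d$ enters, keeping the kernel of the folded \pval constraint of $\Delta_c$-distance $\ge 2d$ across all iterations. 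A union bound over the $\ncol$ columns and the $\nround = \tO(1)$ iterations pins down the requirements $d \ge \dboundm$ and the stated $\eta$. The round, communication, runtime, circuit-size, depth, and description-length bounds then follow by bookkeeping over the $\tO(1)$ iterations (the \emph{Split} steps account for the $\ncol$ terms, the permutation descriptions and the final reveal/check for the $\poly{(d)}$ terms), with $G_\RRS$ and $\C'$ the obvious circuits of the claimed parameters.

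\textbf{The main obstacle.} The hard part is exactly the $\Delta_c$-distance-preservation fact for a permutation forced to act identically on all $\ncol$ columns. Since $\pi_t$ is shared, the bad-row sets of the different columns are all carried by one permutation, so the failure probability must be bounded for the worst column and then union-bounded over all columns and all iterations; this is why the permutation family must be $\Theta(d)$-wise (not merely pairwise, as is enough in \cite{TCC:RotRot20}) almost independent with $\eta$ as small as $2^{-\secpar}/\poly{(\lgK,\ncol)}$, so that the intersection concentration $2^{-\Omega(d)}$ of \Cref{thm: random set intersection is tiny} dominates both $\eta$ and the union, and simultaneously the fold coefficient $\gamma_t$ must be drawn from a field large enough ($\abs{\FF} \ge \FboundT$) that no single column suffers a Schwartz--Zippel cancellation, which forces control of the $\Delta_c$-distance of the kernel of the folded \pval constraint throughout --- precisely what $\Delta_c(\pval(\bm \pvalU, \bm 0)) \ge 4d$ provides. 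Getting these error terms to fit simultaneously under $2^{-\secpar-2}$ with the prescribed settings $d \ge \dboundm$ and $\abs{\FF} \ge \FboundT$ is the crux; a secondary subtlety is verifying that a deviation on even a single coordinate $w_i$ propagates through the folds into a genuine $\Delta_c$-$d$-distance at the end, which the parameter choices are designed to guarantee.
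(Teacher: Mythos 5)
Your high-level architecture---split by multilinearity, permute the bottom half with a shared $d$-wise almost-independent permutation to preserve $\Delta_c$-distance, fold with a random field coefficient, iterate $O(\lgK)$ times, then reveal the small folded matrix and subsample rows---matches the paper, and the column-by-column distance-preservation argument with a union bound over columns and iterations is the right idea. However, there is a genuine gap at the center of your \emph{Permute}/\emph{Fold} step. You write that after permuting $\bm a_1^{(t)}$ by $\pi_t$ the ``correspondingly folded \pval claim ... is consistent by linearity of the low-degree extension in the matrix entries.'' This fails: while the LDE is $\FF$-linear in the entries, it is not equivariant under an \emph{arbitrary} permutation $\pi_t$ of $\{0,1\}^{\lgK_t-1}$. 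The permutations from \Cref{thm: random reversible circuits are independent} are arbitrary reversible circuits, not affine maps, so there is no formula expressing $\widehat{\pi_t(\bm a_1^{(t)})}(\bm\beta_i)$ in terms of the already-committed value $\widehat{\bm a_1^{(t)}}(\bm\beta_i)$. After the Split step the prover has given you a \pval claim about $\bm a_1^{(t)}$ at the points $\bm\beta_i$, but you have no \pval claim about $\pi_t(\bm a_1^{(t)})$ at all, so the fold produces no coherent constraint on $\bm a^{(t+1)}$, and the soundness argument collapses.

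The paper closes this gap with an extra sub-protocol (\Cref{alg:GKRinsideRR}, using the circuit $\Phi_{\pi,\bm j,\bm v}$ of \Cref{def: GKR inside RR circuits}): the parties run \GKR on the claim that the permuted function, pulled back through $\tau^{-1} = (\pi^{-1}\times I)$, lies in the original \pval set, and the \GKR output is a \emph{fresh} set of random evaluation points $\bm j'$ and values $\bm v'$ giving a genuine \pval claim about $\transcMat_1\circ\tau$. The distance-preservation guarantee of this reduction is exactly \Cref{lem:RVW}. Two further ingredients in the paper then fall out of this necessity and are also missing from your sketch: (i) after the \GKR step the number of evaluation points is $2T$ (old $\bm j$ plus new $\bm j'$), and the curve-fitting step of \Cref{alg: gap amplification} together with the evaluation-point-reduction lemma (\Cref{lem: eval point reduction}) brings this back to $T$ to avoid per-round blowup; and (ii) the correlated-agreement gap-amplification lemma (\Cref{lem: Batch RR}) is stated under a non-emptiness hypothesis on the intermediate \pval sets, so the verifier must run a \pval-emptiness check (\Cref{lem: IP for pval emptiness}) each round. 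Without the \GKR step the rest of your argument (Schwartz--Zippel over the fold coefficient, the $d$-wise intersection bound, the final subsample) is directed at a protocol that does not actually produce the object you are analyzing.
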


\subsubsection{Construction of the \Reduce{} Protocol}
We give the construction in \Cref{alg:Reduce}.

\begin{algorithm}
    \setstretch{1.1}
    \caption{\prot{\Reduce} \protphasetwo{}}
    \label{alg:Reduce}
    
    \small
    \textbf{Input Parameters:} $\secpar, d \in \NN$, and a field $\FF$ with $\abs{\FF} \ge \FboundconcreteT$ (where $C$ is some constant).\\
    \textbf{Input:} \itermidone{}.\\
    \textbf{Other Parameters:} $\K, \ncol \in \NN$, $\pvalT = \Tbound$.\\
    \textbf{Ingredients:} 
    \begin{itemize}[topsep=-0.02em,itemsep=-0.2em]
        \item \emph{The \GKR protocol} $\prot{\GKR}$ (\Cref{lem:GKR}) for checking $\C(\transcMat) = 1$.

        The prover and verifier gets the description $\lrag \C$,
        while the prover additionally gets the input $\transcMat$.
        Both parties additionally takes in the parameter $\pvalT \in \NN$,
        and in the end outputs $\pvalU \in (\FF^{\lgK + \log \ncol})^\pvalT, \pvalv \in \FF^\pvalT$.

        \item \emph{The Instance Reduction Protocol for \pval} $\prot{\RRrow}$ (\Cref{lem:DcRR}) for checking that $\transcMat$ is $\Delta_c$-$d$-close to $\pval(\bm \pvalU, \bm \pvalv)$.

        Both the prover and verifier gets the input $\pvalU, \bm \pvalv$,
        and the prover additionally gets the input $\transcMat$.
        The protocol outputs \itermidthree{}.
    \end{itemize}
    \textbf{Output:} \iteroutput.
    
    \begin{algorithmic}[1]
        \State $\cP_{\Reduce}$ compute the prescribed $\ippinput$.
        \State 
        Both parties run $\prot{\GKR}$ (\Cref{lem:GKR}) with $\pvalT = \Tbound$ on the claim \[\Cmid(\ippinput) = 1.\]
        The verifier obtains $\pvalU, \bm \pvalv$.

        \State Both parties run the Instance Reduction Protocol for \pval,
        $\prot{\DcRR}$ (\Cref{lem:DcRR}), 
        on the claim \[\ippinput \in \pval(\pvalU, \bm \pvalv),\]
        obtaining \itermidthree{}.

        \State Let $\lrag{\cS'} = (\lrag{\cSmid}, \lrag{\RRS})$.
        \Comment{Implicitly define $\cS' \coloneqq \cSmid[\RRS, :]$ and thus $\abs{\cS'} = \abs{\RRS} \le \abs{\cSmid}/d$.}

        \State \Return $\param{'}$.
    \end{algorithmic}
\end{algorithm}

\subsubsection{Analysis of the \Reduce{} Protocol}
We first verify that the preconditions for $\prot{\RRrow}$ are satisfied. 
Indeed, $d > \dboundm$, and $\abs{\FF} \ge \FboundT$ are assumptions of \Cref{clm:phase2}.
By \Cref{lem:DcRR}, $\abs{\RRS} \le \ceil{8\secpar \cdot \frac{\K}{d}} = O(\secpar\frac{\K}{d})$.

\paragraph{Prescribed Completeness.}
By the prescribed completeness of $\prot{\GKR}$,
$\ippinput \in \pval(\pvalU, \bm \pvalv)$ iff $\Cmid(\ippinput) = 1$ (which is equivalent to $(\cSmid, \Cmid) \in \cL'_{\bm x}$ by the construction of $\Cmid$ in \Cref{clm:phase1}),
and then by the prescribed completeness of $\prot{\RRrow}$,
$\C'(\ippinput[\RRS, :]) = 1$ iff $\ippinput \in \pval(\pvalU, \bm \pvalv)$.
Finally,
$\cS' = \cSmid[\RRS, :]$,
so $(\cS', \C') \in \cL'_{\bm x}$ iff $(\cSmid, \Cmid) \in \cL'_{\bm x}$.

\paragraph{Unambiguity.}
With the same proof as in \Cref{lem:pval}, 
with all but $2^{-\secpar-1}$ probability, 
the following claim holds,
so the unambiguity of $\prot{\RRrow}$ applies. 
\begin{claim}
With probability at least $1 - 2^{-\secpar-1}$, $\Delta(\pval(\bm \pvalU,\bm 0)) \geq 4d.$
\end{claim}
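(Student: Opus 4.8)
The plan is to reuse the proof of \Cref{lem:pval} (``random $\pval$ kernel has large distance'') essentially verbatim, with the parameters shifted. The first step is to observe that the claim concerns only the distribution of $\bm\pvalU$: in \Cref{alg:Reduce} the tuple $\bm\pvalU=(\bm\pvalU_1,\ldots,\bm\pvalU_\pvalT)$ is the output of $\prot{\GKR}$ run with $\pvalT=\Tbound=8d\ncol\lgK$ parallel repetitions, and by the last guarantee of \Cref{lem:GKR}, which \Cref{lem:RVW} inherits, \emph{regardless of the (possibly cheating) prover strategy} each $\bm\pvalU_i$ is an independent uniform point of $\FF^{\lgK+\lncol}$, with $\bm\pvalU$ depending only on $\cV_\Reduce$'s coins. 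So it suffices to bound, over $\pvalT$ i.i.d.\ uniform points of $\FF^{\lgK+\lncol}$ (here $\pval(\bm\pvalU,\bm 0)$ is a linear subspace of $\FF^{2^{\lgK+\lncol}}=\FF^{\K\ncol}$, the space of $\K\times\ncol$ transcript matrices), the probability that it contains a nonzero codeword that is too sparse.

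I would then run the Schwartz--Zippel plus union-bound argument of \Cref{lem:pval}, except that I union-bound over the family matching the $\Delta_c$-distance actually consumed downstream (by the unambiguity clause of \Cref{lem:DcRR}): fix a nonzero $\bm x\in\FF^{\K\ncol}$ which, viewed as a $\K\times\ncol$ matrix, has every column of Hamming weight at most $4d$. Its multilinear extension $\hat{\bm x}\colon\FF^{\lgK+\lncol}\to\FF$ is a nonzero polynomial of total degree at most $\lgK+\lncol$, so by \Cref{lem:SZ} and independence of the $\bm\pvalU_i$, $\Pr[\bm x\in\pval(\bm\pvalU,\bm 0)]\le\bigl((\lgK+\lncol)/\abs{\FF}\bigr)^{\pvalT}$. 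There are at most $\bigl(\binom{\K}{4d}\abs{\FF}^{4d}\bigr)^{\ncol}\le 2^{4d\lgK\ncol}\abs{\FF}^{4d\ncol}$ such column-sparse $\bm x$, so a union bound bounds $\Pr[\Delta_c(\pval(\bm\pvalU,\bm 0))<4d]$ by $2^{4d\lgK\ncol}\abs{\FF}^{4d\ncol}\bigl((\lgK+\lncol)/\abs{\FF}\bigr)^{\pvalT}$; since $\Delta_c\le\Delta$ on any codeword, the same bound controls $\Pr[\Delta(\pval(\bm\pvalU,\bm 0))<4d]$, the stated quantity.

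It remains to check that this estimate is at most $2^{-\secpar-1}$ — routine arithmetic, and the only place any care is needed. The key point is that $\pvalT=\Tbound=8d\ncol\lgK$ carries the factor $\ncol$ — exactly the $\ncol$ occurring in the exponent of the $\rowball$ count in \Cref{lem:RVW} — which is precisely what lets $\bigl(\abs{\FF}/(\lgK+\lncol)\bigr)^{\pvalT}$ swallow the $\ncol$-fold factor $\bigl(\binom{\K}{4d}\abs{\FF}^{4d}\bigr)^{\ncol}$, while $\abs{\FF}\ge\FboundconcreteT$ makes $\log\abs{\FF}$ dominate $\secpar$ and the remaining polylog terms (in particular $\abs{\FF}\ge 2(\lgK+\lncol)$, subsuming the $\abs{\FF}\ge2\lgK$ hypothesis of \Cref{lem:pval}), leaving room for the extra $2^{-1}$. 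I expect no conceptual obstacle; the only subtlety is the one just exploited — the bound truly needed is the column-wise $\Delta_c$ statement, not merely the Hamming one, so the union bound must range over column-sparse matrices — and once this is noticed the choice $\pvalT=\Tbound$ is seen to be calibrated so the computation closes. Formally, this is \Cref{lem:pval} with the substitution $(d,\secpar,\lgK)\mapsto(2d,\secpar+1,\lgK+\lncol)$, applied against the column-wise codeword count.
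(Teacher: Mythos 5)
Your argument matches the paper's own proof: both fix a nonzero matrix of $\Delta_c$-weight at most $4d$, apply Schwartz--Zippel to its multilinear extension across the $\pvalT$ independent uniform points in $\FF^{\lgK+\lncol}$, and union-bound over the $\bigl(\binom{\K}{4d}\abs{\FF}^{4d}\bigr)^{\ncol}$ column-sparse matrices, invoking $\pvalT\ge\Tbound$ and the field-size lower bound to close the arithmetic. Your extra remarks --- that $\bm\pvalU$ is uniform regardless of the prover's strategy, and that $\Delta_c\le\Delta$ so the $\Delta_c$ bound subsumes the stated $\Delta$ bound --- are correct expository additions that the paper leaves implicit.
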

\begin{proof}
    \prot{\GKR} always outputs $\pvalT$ truly random points in $(\KK^{\lgK + \log \ncol})^{\pvalT}$.
    The number of points with $\Delta_c$ distance at most $4d$ from the origin is at most $(\binom{\K}{4d} \cdot \abs{\FF}^{4d})^\ncol$.
    Fixing such a point $x \in \bin^{\K \times \ncol}$,
    since every $\pvalU$ is uniformly random,
    by \Cref{lem:SZ},
    it is a root of the \LDE of $x$ with probability at most $\frac{m}{\abs{\FF}}$.
    Therefore,
    the probability $x \in \pval(\bm \pvalU, \bm \pvalv)$ is at most $\left(\frac{m}{\abs{\FF}}\right)^{\pvalT}$.
    Taking a union bound over all such points,
    and using that $\pvalT \ge \Tbound$,
    \ifFOCS
    \begin{align*}
    &\Pr[\Delta_c(\pval(\bm \pvalU, \bm \pvalv)) \ge 4d] \\
    &\le \left(\binom{\K}{4d} \cdot \abs{\FF}^{4d}\right)^\ncol \cdot \left(\frac{m}{\abs{\FF}}\right)^{\pvalT} \le 2^{-\secpar-1}.\qedhere
    \end{align*}
    \else
    \[
    \Pr[\Delta_c(\pval(\bm \pvalU, \bm \pvalv)) \ge 4d] \le \left(\binom{\K}{4d} \cdot \abs{\FF}^{4d}\right)^\ncol \cdot \left(\frac{m}{\abs{\FF}}\right)^{\pvalT} \le 2^{-\secpar-1}.\qedhere
    \]
    \fi
\end{proof}
Let $\cP^*$ be a cheating prover that deviates from $\cP_{\Reduce}$,
then it must be deviating in either of the two protocols.
By the unambiguity of $\prot{\GKR}$ and $\prot{\RRrow}$ and the last claim,
the probability that the prover produces $\param{'}$ such that $\C'(\cS') = 1$ is at most $\epsilon_\GKR + \epsilon_\RRrow + 2^{-\secpar-1} \le 2^{-\secpar}$.

\paragraph{Complexities.}
The \GKR protocol has $\ell_\GKR = O(D_\Distance \log S_\Distance)$ rounds.
We can reduce the message length per-round ($a_\GKR$) by increasing the number of rounds to $\ell_{\GKR}' = \tO(\ell \cdot D_\Distance \log S_\Distance)$,
which changes its complexity to $a_{\GKR}' = b'_{\GKR} = \tO(da)$.
Similarly,
for $\prot{\RRrow}$,
we increase its number of rounds to $\ell_{\RRrow}' = \tO(\ell)$,
and its message lengths reduce to $a_{\RRrow}' = b_{\RRrow}' = \tO(da + \poly(d))$.
This is a simplifying assumption to make per-round messages have comparable lengths as in $\prot{\Distance}$.
\begin{itemize}
    \item $\ell_{\Reduce} = \ell_{\GKR}' + \ell_{\RRrow}' = \tO(\ell \cdot D_\Distance \log S_\Distance).$
    \item $a_{\Reduce} = b_{\Reduce} = \max(a_{\GKR}', a_{\RRrow}') = \tO(da + \poly(d))$.
    \ifFOCS
    \item $\begin{aligned}[t]
        \Ptime_{\Reduce} &= \Ptime_{\GKR} + \Ptime_{\RRrow} \\
        &= \tO(d \ncol) \cdot \poly(S_\Distance) \\
        &\quad + \poly(\K \ncol, \pvalT \cdot \Flog)\\
        &= \poly(\Flog, \K, \ncol, d, S_\Distance).
    \end{aligned}$
    \item $\begin{aligned}[t]
    \Vtime_{\Reduce} &= \Vtime_{\GKR} + \Vtime_{\RRrow}\\
    &\quad + \tO(\batchK \cdot n + \abs{\lrag{\cSmid}} + \abs{\lrag{\Cmid}})\\
    &\le \tO(d a\ell \cdot (D_\Distance \log S_\Distance +\abs{\lrag{\Cmid}}) \\
    &\quad + \poly(d) + \abs{\lrag{\cSmid}} + \batchK \cdot n).
    \end{aligned}$
    \else
    \item $\begin{aligned}[t]
    \Ptime_{\Reduce} &= \Ptime_{\GKR} + \Ptime_{\RRrow}\\
    &= \tO(d \ncol) \cdot \poly(S_\Distance) + \poly(\K \ncol, \pvalT \cdot \Flog)\\
    &= \poly(\Flog, \K, \ncol, d, S_\Distance).
    \end{aligned}$
    \item $\begin{aligned}[t]
    \Vtime_{\Reduce} &= \Vtime_{\GKR} + \Vtime_{\RRrow} + \tO(\batchK \cdot n + \abs{\lrag{\cSmid}} + \abs{\lrag{\Cmid}})\\
    &\le \tO(d a\ell \cdot (D_\Distance \log S_\Distance +\abs{\lrag{\Cmid}}) + \poly(d) + \abs{\lrag{\cSmid}} + \batchK \cdot n).
    \end{aligned}$
    \fi
    Note that the third term in the first summation is due to the overhead of reading the input $(\cSmid,\bm x)$.
\end{itemize}

\paragraph{Circuit $G'$ for generating $\cS'$.}
The sub-protocol \prot{\RRrow} outputs the description $\lrag \RRS$.
Our new description $\lrag {\cS'} = (\lrag{\cSmid}, \lrag{\RRS})$.
We represent every pair $(i, \bm q)$ as elements in $\bin^p$,
where $p = \ceil{\log \batchK} + b \cdot \ell$.
Let $g' = \abs{\cS'}$ be the number of pairs to output.
The Turing machine $\cM$ takes in $(g', 1^{\abs{\lrag{\cS'}}}, 1^p)$,
and outputs the circuit $G':[g'] \times \bin^{\abs{\lrag{\cS'}}} \to \bin^p$.
$G'(\lrag{\cS'})$ enumerates $\cS'$ as follows:
for every $s \in [g']$, 
\begin{itemize}
    \item Compute the $s$-th element in $\RRS \in [\K]$ (using $G_\RRS(s, \lrag{\RRS})$), and let it be $\RRS_s \in [\K]$.
    \item Output the $\RRS_s$-th pair in $\cSmid$ (using $G_\Distance(\RRS_s, \lrag{\cSmid})$).
\end{itemize}
The circuit satisfies:
\begin{itemize}
    \item $\size(G') = \size(G_\RRS) + \size(G_\Distance) = \size(G_\Distance) + \tO(\poly(d))$.
    \item $\depth(G') = \depth(G_\Distance) + \tO(1)$.
\end{itemize}
The bit length satisfies $\abs{\lrag{\cS'}} = \tO(\poly(d) + \abs{\lrag{\cSmid}})$.
$G'$ is $p$-space uniform because $G_{\RRS}$ is $p$-space uniform and $\cM$ only needs to maintain additional pointers into the circuit $G'$.

\paragraph{Circuit $C'$ for checking $\C'$.}
Note that the sub-protocol \prot{\RRrow} already outputs the description $\lrag{\C'}$.
By \Cref{lem:DcRR},
$\abs{\lrag{\C'}} = \tO(da\ell + \poly(d))$,
and the complexity of $C'(\transcMat, \lrag{\C'}) = \C'(\transcMat)$ is the following:
\begin{itemize}
    \item $\size(C') = \tO(\frac{\K}{d} \cdot a\ell)$,
    \item $\depth(C') = \tO(1)$.
\end{itemize}
\paragraph{Verdict Circuit $\cV_{\Reduce}$.}
The verdict circuit $\cV_{\Reduce}$ does not reject iff neither $\cV_\GKR$ nor $\cV_{\RRrow}$ rejects.
Therefore,
the circuit that implements this check would satisfy:
\begin{itemize}
    \ifFOCS
    \item $\begin{aligned}[t]
   \size(\cV_{\Reduce}) &= \tO(\size(\cV_\GKR) + \size(\cV_{\RRrow}) \\
   &\quad\quad\quad +\abs{\lrag{\cSmid}} + \abs{\lrag{\Cmid}} + \batchK \cdot n) \\
   & = \tO(d a \ell(D_\Distance \log S_\Distance + \abs{\lrag{\Cmid}}) \\
   &\quad\quad\quad + \poly(d) + \abs{\lrag{\cSmid}} + \batchK \cdot n).
   \end{aligned}$
    \else
    \item $\begin{aligned}[t]
   \size(\cV_{\Reduce}) &= \tO(\size(\cV_\GKR) + \size(\cV_{\RRrow}) + \abs{\lrag{\cSmid}} + \abs{\lrag{\Cmid}} + \batchK \cdot n) \\
   & = \tO(d a \ell(D_\Distance \log S_\Distance + \abs{\lrag{\Cmid}}) + \poly(d) + \abs{\lrag{\cSmid}} + \batchK \cdot n).
   \end{aligned}$
    \fi
    \item $\begin{aligned}[t]
        \depth(\cV_{\Reduce}) &= \max(\depth(\cV_\GKR), \depth(\cV_{\RRrow}))\\
        &= \tO(1).
    \end{aligned}$
\end{itemize}

\subsection{Analysis of Other Sub-protocols}
\label{sec:other-analysis}
We present of the deferred sub-protocols and analyze their properties in this section. 

\subsubsection{Proof of \Cref{lem:Lxq}: the Random Continuation Protocol}
\label{sec:part2}
The protocol $\protU{^\tc}$ is formally given in \Cref{alg:iq}.

\paragraph{Prescribed Completeness.} This is immediate from the prescribed completeness of $\prot{}$.

\paragraph{Unambiguity.}
We consider the following two cases.
\begin{itemize}
    \item If the input $\transcMat = \transcMat^{x, \bm q}$,
    let $r^* \in [\ell]$ be the first round when $\cP^*$ deviates from $\cP^{\tc}$.
    Then $\cP^*$ sends a non-prescribed message $\tilde \answer^{x, \bm q^\hyb_j}_{r^*} \neq \answer^{x, \bm q^\hyb_j}_{r^*}$ for some $j \in [\ell]$.
    With probability at most $\epsilon$ over the remaining interaction,
    $\cV(x_i, \bm q^\hyb_j, \tilde \transc^{x, \bm q^\hyb_j}) = 1$
    by the unambiguity of $\prot{}$,
    so $\cP^*$ would be rejected by $\cV^\tc$ with probability at least $1-\epsilon$.
    
    \item If the input $\transcMat \neq \transcMat^{x, \bm q}$,
    let $j^* \in [\ell]$ be the first round for which $\transcMat_{j^*} \neq \transcMat^{x, \bm q}_{j^*}$.
    Since $\cV^\tc$ checks that $\transcMat$ and $\transcMat^{x, \bm q^\hyb_{j^*}}$ share a prefix of length $j^* \cdot a$,
    $\cV^\tc$ would reject at once if it is not the case that $\transcMat_{j^*} = \transcMat^{x, \bm q^\hyb_{j^*}}_{r}$ for every $1 \le r \le j^*$.
    Therefore, we assume $\transcMat_{j^*} = \transcMat^{x, \bm q^\hyb_{j^*}}_{r}$ for every $1 \le r \le j^*$.
    This means that $\transcMat^{x, \bm q^\hyb_{j^*}}$ is the transcript of an interaction where $\cP^*$ deviates firstly in round $j^*$ of the base protocol.
    By the unambiguity of $\prot{}$, 
    with probability at most $\epsilon$ over the remaining interaction starting from the $j^*$-th round,
    $\transcMat^{x, \bm q^\hyb_{j^*}}$ would be rejected,
    which means $\cP^*$ would be rejected by $\cV^\tc$ with probability at least $1-\epsilon$.
\end{itemize}
\paragraph{Complexities.}
The number of rounds is $\ell$.
The prover message is $a^\tc = \ell \cdot a$,
and the verifier message is $b^\tc = b$.
The prover time is $\Ptime^\tc = \ell \cdot \Ptime$,
and the verifier time is $\Vtime^\tc = \Vtime + \ell \cdot \size(\cV)$.

\paragraph{Transcript Check Structures.}
By inspection, indeed the $j$-th message of $\cP^\tc$ is \[
\bm m_r = \left((\answerMat^{x, \bm q^\hyb_j})_{r \in [\ell]}\right)^\top,
\] and the verifier $\cV^\tc$ simply checks all $\transcMat^{x, \bm q^\hyb_j}$ are accepted,
and that $\transcMat^{x, \bm q}$ and $\transcMat^{x, \bm q^\hyb_j}$ share a prefix of length $j \cdot a$.

\begin{algorithm}
    \setstretch{1.1}
    \caption{$\protU{^\tc}$ Random Continuation.}
    \label{alg:iq}
    
    \small
    \textbf{Input:} $x \in \bin^n$, $\bm q \in (\bin^b)^\ell$, and a transcript $\transc \in \bin^{a \times \ell}$ purported to be $\transc^{x, \bm q}$.  \\
    \textbf{Output:} the verifier decides to accept or reject the claim that $\transc$ is indeed $\transc^{x, \bm q}$, and that $x \in \cL$.
    \begin{algorithmic}[1]
        \State $\cV^\tc$ samples $\bm q' = (q'_1,\ldots,q'_\ell) \in (\bin^b)^\ell$.
        \For{$r=1, \ldots,\ell$}
            \State $\cV^\tc$ sends $q_{r}'$ to $\cP^\tc$.
            \For{$j = 1, \ldots, \ell$}
                \State $\cP^\tc$ let $\answer_r^{x, \bm q^\hyb_j} \gets \cP(x, (\bm q^\hyb_j)_{\le r})$, where $\bm q^\hyb_j = (q_1,\ldots,q_{j}, q'_{j + 1}, \ldots, q'_{\ell})$.
            \EndFor
            \State $\cP^\tc$ sends $\set{\answer_r^{x, \bm q^\hyb_j}}_{j \in [r]}$ to $\cV^\tc$.
        \EndFor
        \State $\cV^\tc$ let $\transc^{x, \bm q^\hyb_j} \gets (\answer_1^{x, \bm q^\hyb_j}, \ldots, \answer_\ell^{x, \bm q^\hyb_j})$ for every $j \in [\ell]$.
        \State $\cV^\tc$ accepts iff $\bigwedge_{j \in [\ell]}\left[\cV(x, \bm q^\hyb_j, \transc^{x, \bm q^\hyb_j}) = 1\right]$ and that for every $j \in [\ell]$, $\transc$ and $\transc^{x, \bm q^\hyb_{j}}$ share a prefix of length $j \cdot a$.
    \end{algorithmic}
\end{algorithm}

\subsubsection{Proof of \Cref{lem:final-check}: the Explicit \UIP for the associated language}
\label{sec:final-check}
We construct $\prot{\cL'_{\bm x}}$, an explicit \UIP for the associated language $\cL'_{\bm x}$,
which takes as input $\bm x$ and $\param{}$
and verifies the claim $(\cS, \C) \in \cL'_{\bm x}$.
It consists of the following two steps:
\begin{enumerate}
    \item $\cP_{\cL'_{\bm x}}$ sends the prescribed transcript matrix $\transcMat^\cS \in \bin^{\abs{\cS} \times (a\ell)}$ to $\cV_{\cL'_{\bm x}}$.
    \item $\cV_{\cL'_{\bm x}}$ enumerates $G(s, \lrag \cS)$ for every $s \in [\abs{\cS}]$.
    \item Both parties execute $\protU{^\tc}$ explicitly on the input $((i, \bm q), \transc^\cS[(i, \bm q), :])$, for every $(i, \bm q) \in \cS$ in parallel.

    By \Cref{rem:tauchecker},
    the messages sent by $\cP_{\cL'_{\bm x}}$ constitute the transcript matrix $\transcMat = \transcMat^{\cS^\hyb}$,
    where $\cS^\hyb$ is the set of hybrid pairs in constructed between $\cS$ and $\bm q' \in \bin^{b\ell}$, the random coin sampled by $\cV_{\cL'_{\bm x}}$.

    \item $\cV_{\cL'_{\bm x}}$ accepts iff every execution of $\protU{^\tc}$ in $\transcMat$ is accepted, 
    $G(\transcMat^\cS, \lrag \C) = 1$ and $\transcMat[\cS,:] = \transcMat^\cS$.
\end{enumerate}
Let $\I \coloneqq \set{(i, \bm q): (i, \bm q) \in \cS}$.
Recall that $(\cS, \C) \in \cL'_{\bm x}$ iff $(\C(\transcMat^\cS) = 1) \wedge (\bm x|_{\I} \in \cL^{\otimes \abs{\cS}})$.
\paragraph{Prescribed Completeness.}
$\cP_{\cL'_{\bm x}}$ sends the verifier the prescribed $\transcMat^\cS$ in the first step,
and executes $\cP_{\cL'_{\bm x}}$ honestly for every $(i, \bm q) \in \cS$ in parallel,
so $\transcMat = \transcMat^{\cS^\hyb}$ (and thus $\transcMat[\cS,:] = \transcMat^\cS$).
Since $\transcMat^\cS$ is the prescribed transcript matrix,
all executions of $\protU{^\tc}$ in $\transcMat$ are accepted iff $\bm x|_{\I} \in \cL^{\otimes \abs{\cS}}$.
Therefore,
the verifier's check passes exactly when $(\cS, \C) \in \cL'_{\bm x}$.

\paragraph{Unambiguity.}
Suppose that $\cP^*$ deviates from $\cP_{\cL'_{\bm x}}$ in the first step,
i.e. it sends some $\transcMat^* \neq \transcMat^\cS$ to $\cV_{\cL'_{\bm x}}$.
This means for some row-index $(i, \bm q) \in \cS$, $\transcMat^*[(i, \bm q)] \neq \transcMat^{i, \bm q}$.
If $\cP^*$ does not deviate in the rest of the protocol,
then by prescribed completeness of $\protU{^\tc}$ (\Cref{def:tauchecker}),
$\cV_{\cL'_{\bm x}}$ rejects on the input $((i, \bm q), \transcMat^*[(i, \bm q)])$,
and thus $\cV_{\cL'_{\bm x}}$ rejects.
Otherwise, $\cP^*$ deviates in one of the $\ell$ rounds while executing $\protU{^\tc}$ in parallel.
By the $\epsilon$-unambiguity of $\protU{^\tc}$ (\Cref{def:tauchecker}),
that execution of $\protU{^\tc}$ is accepted with probability at most $\epsilon$,
so $\cV_{\cL'_{\bm x}}$ accepts with probability at most $\epsilon$.

\paragraph{Complexities.}
Let $g = \abs{\cS}$.
The number of rounds is $\ell_{\cL'_{\bm x}} = \ell + 1$,
The prover message (per round) is of length $a_{\cL'_{\bm x}} = g\ell \cdot a$.
This can be seen by noting that the message in the first step is of length $g \cdot a\ell$ and the rest are of length $g \cdot a\ell$ per round as well (since $g$ instances of the protocol in \Cref{def:tauchecker} are in parallel).
The verifier message (per round) is of length $b_{\cL'_{\bm x}} = b$.
The prover runtime is $\Ptime_{\cL'_{\bm x}} = O(g\ell \cdot \Ptime)$,
because it simply runs $g$ instances $\cP^\tc$ in parallel (each of which has runtime $\ell \cdot \Ptime$),
and the verifier runtime is $\Vtime_{\cL'_{\bm x}} = O(g\ell \cdot \Vtime + g\size(G) + \size(C))$.
This is because it first computes $G(s, \lrag \cS)$ for every $s \in [g]$,
then runs $g$ instances of $\cV^\tc$ in parallel (each of which has runtime $\ell \cdot \Vtime$),
and finally evaluates $C(\transcMat^\cS, \lrag \C)$ (which takes time $\size(C)$).

The verifier's decision circuit $\cV_{\cL'_{\bm x}}$ has the following complexities:
\begin{itemize}
    \item $\size(\cV_{\cL'_{\bm x}}) = \tO(g \cdot \ell \cdot \size(\cV) + g\size(G) + \size(C))$.
    \item $\depth(\cV_{\cL'_{\bm x}}) = \max(\depth(G) + D, \depth(C)) + \tO(1)$.
\end{itemize}

\section{Doubly-Efficient Interactive Proofs}
\label{sec:app}
\subsection{A New \UIP for Verifying Space-Bounded Computations}
We demonstrate the power of \bUIP by constructing a doubly-efficient interactive proof for delegating space-bounded computations.
Denote by $\TS$ the class of languages decidable by a Turing machine in time $T(n)$ and space $S(n)$.
\begin{theorem}[Doubly-efficient Interactive Proof for $\TS$]
    \label{thm:TS}
    For any $T(n) \le n^{O\left(\left(\frac{\log n}{\log \log n}\right)^{1/2}\right)}$ and $S(n) = \poly(n)$,
    any $\secpar=\secpar(n)= \polylog(n)$,
    any $\delta=\delta(n) \in (0, 1)$,
    and any language $L \in \TS$,
    there exists a \UIP with the following properties:
    \begin{itemize}
        \item The soundness error is $2^{-\secpar} \cdot (1/\delta)\cdot (\log n)^{O(1/\delta)}$.
        \item The number of rounds is $(\log n)^{O(1/\delta)}$.
        \item The prover and verifier message length per-round is at most $T^\delta \cdot S(n) \cdot (\log n)^{O(1/\delta^2)}$.
        \item The prover runtime is $T(n) \cdot \poly(n) \cdot (\log n)^{O(1/\delta^2)}$.
        \item The verifier runtime is $T^{O(\delta)} \cdot S(n)^2 \cdot \poly(n) \cdot (\log n)^{O(1/\delta^2)}$.
    \end{itemize}
\end{theorem}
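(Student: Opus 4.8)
The plan is to bootstrap the batch \UIP of \Cref{thm:basic-bUIP}, iteratively enlarging the number of Turing-machine steps a verifier can certify, starting from a single step. Fix a Turing machine $\cM$ deciding $L$ in time $T(n)$ and space $S(n)$, and for $t \in \NN$ let $\cL_t$ consist of triples $(x, w_1, w_2)$ --- with $w_1, w_2$ configurations of $\cM$ of size $O(S(n))$ --- such that $\cM$ started from $w_1$ on input $x$ reaches $w_2$ in exactly $t$ steps. The base case $\cL_1$ has a trivial $(\ell=1, a=b=0)$-\UIP with unambiguity error $0$: there are no prover messages, and the verifier's verdict is the one-step transition check of $\cM$, computable by a log-space uniform Boolean circuit of size $\tO(n+S(n))$ and depth $\polylog(n)$.

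For the inductive step I use the observation from \Cref{sec:overview} that $\cL_{k\cdot t}$ reduces to the batch language $\cL_t^{\otimes k}$: the prover first sends the $k-1$ intermediate configurations $w_{t}, w_{2t}, \ldots, w_{(k-1)t}$ produced by $\cM$ on $x$ (one round, $O(kS)$ bits), and then both parties run the batch \UIP for $\cL_t$ on the $k$ instances $\{(x, w_{(i-1)t}, w_{it})\}_{i=1}^{k}$, where $w_0, w_{kt}$ are the endpoints of the outer claim. Prescribed completeness is preserved: if the outer claim is false, the prescribed prover sends the genuine intermediate configurations, so the last instance $(x, w_{(k-1)t}, w_{kt})$ is false and the batch \UIP rejects. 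Unambiguity is preserved too: besides the batch-\UIP messages, the only prescribed prover message is the (uniquely determined) list of intermediate configurations, and any deviation there forces a false instance, after which the unambiguity of \Cref{thm:basic-bUIP} applies. Writing $(D_j, S_j, \ell_j, a_j, \Ptime_j, \Vtime_j, \epsilon_j)$ for the parameters of the resulting \UIP for $\cL_{k^j}$ (with $\cL_{k^0} = \cL_1$), applying \Cref{thm:basic-bUIP} with batching parameter $k$ and unambiguity parameter $\secpar$ gives the recursion $D_{j+1} = D_j + \tO(1)$, $\ell_{j+1} = \tO(\ell_j \cdot D_j \log S_j)$, $a_{j+1} = b_{j+1} = \max\bigl(O(kS),\, \tO(a_j\ell_j + \poly(\ell_j))\bigr)$, $\epsilon_{j+1} = \tO(\epsilon_j + 2^{-\secpar})$ (the extra round adds no error), $\Ptime_{j+1} = \tO(k\ell_j\Ptime_j + \poly(k, a_j, \ell_j, S_j))$, $\Vtime_{j+1} = \tO(\ell_j \Vtime_j + a_j\ell_j^2(kn + D_j\log S_j) + a_j^2\poly(\ell_j))$, with $\log S_{j+1} = \polylog(n,k)$. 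The crucial point is that $D$ grows only additively by $\tO(1)$ per level, so if $D_0 = \polylog(n)$ the invariant $D_j\log S_j = \polylog(n,k)$ survives all levels, and hence the clean form of \Cref{thm:basic-bUIP} genuinely applies at every step.

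Unrolling $r$ iterations and writing $P = \polylog(n,k)$ for a fixed bound, I get $\ell_r \le P^{O(r)}$, $a_r = b_r \le kS\cdot P^{O(r^2)}$, $\epsilon_r \le 2^{-\secpar}\cdot r\cdot P^{O(r)}$, $\Vtime_r \le (kS)^2\cdot P^{O(r^2)}\cdot\poly(n)$, and $\Ptime_r \le T\cdot\poly(n)\cdot P^{O(r^2)}$ (the $T\poly(n)$ factor being the one-time cost of the prover simulating $\cM$ for $T$ steps to extract every intermediate configuration; the recursively-nested base-prover calls number $k^r = \Theta(T)$ in total but are trivial). To certify $\cL_{T(n)}$ it suffices to run $r = \lceil 1/\delta\rceil$ iterations with batch size $k = \lceil T^\delta\rceil$, so that $k^r \ge T$ (the final iteration batches at most $k$ instances; if $T$ is not an exact power of $k$ this only helps). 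Since $T \le n^{O(\sqrt{\log n/\log\log n})}$ we have $\log T = \polylog(n)$, hence $\log k = \delta\log T = \polylog(n)$ and $P = \polylog(n)$; substituting $r = 1/\delta$ into the unrolled bounds yields exactly soundness error $2^{-\secpar}(1/\delta)(\log n)^{O(1/\delta)}$, round complexity $(\log n)^{O(1/\delta)}$, per-round communication $T^\delta S(n)(\log n)^{O(1/\delta^2)}$, prover runtime $T\poly(n)(\log n)^{O(1/\delta^2)}$, and verifier runtime $T^{O(\delta)}S(n)^2\poly(n)(\log n)^{O(1/\delta^2)}$, which is \Cref{thm:TS}; \Cref{cor:TS} then follows by taking $\delta = \Theta(\sqrt{\log\log n/\log n})$ (making $k = \poly(n)$ and $(\log n)^{O(1/\delta^2)} = \poly(n)$) and $\secpar = \polylog(n)$.

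The main obstacle is not conceptual --- the step-enlargement idea is handed to us by \Cref{sec:overview} --- but lies in the recursive bookkeeping: one must check that the invariant $D_j\log S_j = \polylog$ truly survives all $r$ levels (otherwise \Cref{thm:basic-bUIP} degrades and the whole scheme unravels), that the telescoping $(\log n)^{O(j)}$ communication factors and $(\log n)^{O(j^2)}$ verifier-time factors collapse to precisely the claimed $(\log n)^{O(1/\delta)}$ and $(\log n)^{O(1/\delta^2)}$, and that $T = n^{O(\sqrt{\log n/\log\log n})}$ is exactly the largest regime in which one can simultaneously have $k^r \ge T$ (forcing $r\log k \gtrsim \log T$) and keep the accumulated overhead $(\log n)^{O(r^2)}$ polynomial (forcing $r^2\log\log n = O(\log n)$). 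A secondary care point is the prover's implementation across levels --- it must run many nested copies of the lower-level prover, but since each such copy only ever consults a contiguous segment of the single length-$T$ run of $\cM$, the total prover time stays $T\cdot\poly(n)$ up to the $(\log n)^{O(1/\delta^2)}$ overhead.
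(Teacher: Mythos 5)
Your proposal is correct and follows essentially the same route as the paper: the paper packages your inductive step as an Augmentation Lemma (\Cref{lem:aug}), proves the same base case for $\cL_1$ (\Cref{prop:base}), carries out the same recursion with the same invariant that $\depth(\cV_i)\log\size(\cV_i)$ stays polylogarithmic (\Cref{prop:induction}), and concludes with the same parameter choice $\batchK = T^\delta$, $\gamma = 1/\delta$.
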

\noindent Setting $\delta = \sqrt{\frac{\log \log n}{\log n}}$ (so that $(\log n)^{O(1/\delta^2)} = \poly(n)$),
and $\secpar = \Theta(\sqrt{\log n \cdot \log \log n})$,
we have the following corollary.
\begin{corollary}[Formal statement of \Cref{thm:informal-deIP}]
    \label{cor:TS}
    Let language $L \in \TS$ where $T(n) = n^{O\left(\left(\frac{\log n}{\log \log n}\right)^{1/2}\right)}$ and $S(n) = \poly(n)$,
    there exists a \UIP with the following properties:
    \begin{itemize}
        \item The soundness error is $o(1)$.
        \item The number of rounds is $o(n)$.
        \item The prover and verifier message length per-round is $S(n) \cdot \poly(n)$.
        \item The prover runtime is $T(n) \cdot \poly(n)$.
        \item The verifier runtime is $S(n)^2 \cdot \poly(n)$.
    \end{itemize} 
\end{corollary}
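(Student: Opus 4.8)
The plan is to bootstrap the \bUIP of \Cref{thm:basic-bUIP} iteratively, following the blueprint sketched in \Cref{sec:app}. For each $t \in \NN$ let $\cL_t$ be the language of triples $(x, w_0, w_t)$ where $w_0$ and $w_t$ are configurations of $\cM$ of length $O(S(n))$ and $\cM$ moves from $w_0$ to $w_t$ on input $x$ in exactly $t$ steps; after modifying $\cM$ so that its run on every $x$ has length exactly $T(n)$ and ends in a canonical halting configuration, the language $L$ reduces to $\cL_{T(n)}$ with both endpoint configurations determined by $x$ alone, so this reduction costs no communication. The crucial observation is that $\cL_{\batchK \cdot t}$ is a batch of $\cL_t$: the prover first sends the $\batchK-1$ intermediate configurations $w_t, w_{2t}, \ldots, w_{(\batchK-1)t}$ (a single message of length $\batchK \cdot O(S(n))$), and then both parties run the \bUIP of \Cref{thm:basic-bUIP} on the $\batchK$ instances $\{(x, w_{(j-1)t}, w_{jt})\}_{j \in [\batchK]}$ of $\cL_t$. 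The base case is $\cL_1$, which has a degenerate $O(1)$-round \UIP with unambiguity error $0$: the prover sends the default message and the verifier checks a single transition of $\cM$ directly, which (one Turing-machine step being locally computable) is a log-space uniform verdict circuit of size $\poly(n)$ and depth $\polylog(n)$.

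For the inductive step, suppose we have an $\epsilon_i$-unambiguous \UIP for $\cL_{t_i}$ whose verdict circuit has size $S_i$ and depth $D_i$. The construction above composed with \Cref{thm:basic-bUIP} yields a \UIP for $\cL_{t_{i+1}}$, where $t_{i+1} = \batchK \cdot t_i$, with $\epsilon_{i+1} \le 2\log\batchK \cdot (\epsilon_i + 2^{-\secpar})$, round complexity $\ell_{i+1} = \tO(\ell_i \cdot D_i \log S_i)$, per-round message length $a_{i+1} = \batchK \cdot O(S(n)) + \tO(a_i \ell_i + \poly(\ell_i))$, verdict depth $D_{i+1} = D_i + \tO(1)$, and the corresponding polynomial bounds for $S_{i+1}, \Ptime_{i+1}, \Vtime_{i+1}$ (the prover additionally simulates $\cM$ for $t_{i+1} \le T(n)$ steps, costing $T(n)\poly(n)$, to obtain the configurations it sends). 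I would iterate this $r$ times with $\batchK \approx T(n)^{\delta}$, so that $t_r = \batchK^{\,r} \ge T(n)$ after $r = \lceil 1/\delta\rceil$ iterations, and then run the resulting \UIP on $\cL_{T(n)}$, composing with the communication-free reduction from $L$ to $\cL_{T(n)}$ (padding $t_r$ down to $T(n)$ if it overshoots).

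The remaining work is parameter bookkeeping, and two observations keep everything under control: the depth $D_i$ grows only additively, so $D_r = \polylog(n)$; and since the hypothesis $T(n) \le n^{O(\sqrt{\log n/\log\log n})}$ forces $\log T(n) = \polylog(n)$, each $S_i$ is at most $2^{\polylog(n)}$, hence $D_i\log S_i = \polylog(n)$ throughout, so the factor multiplying the round count at each step is $\polylog(n)$ and $\ell_i \le (\log n)^{O(1/\delta)}$ for all $i \le r$. Feeding this into the recursions: the round complexity becomes $(\log n)^{O(1/\delta)}$; the per-round message length becomes $\batchK \cdot S(n) \cdot (\log n)^{O(1/\delta^2)} = T(n)^{\delta} S(n)(\log n)^{O(1/\delta^2)}$ (an extra $(\log n)^{O(1/\delta)}$ factor accumulates in each of the $r = 1/\delta$ iterations, producing the exponent $1/\delta^2$); the unambiguity error telescopes to $2^{-\secpar}(1/\delta)(\log n)^{O(1/\delta)}$; the prover runtime to $T(n)\poly(n)(\log n)^{O(1/\delta^2)}$ (the $T(n)$ coming from $\batchK^{\,r}$ and from simulating $\cM$); and the verifier runtime to $T(n)^{O(\delta)}S(n)^2\poly(n)(\log n)^{O(1/\delta^2)}$ (the $S(n)^2$ from the $\batchK n a^2\poly(\ell)$-type terms of \Cref{thm:basic-bUIP} with $a \approx S(n)$). \Cref{cor:TS} then follows by setting $\delta = \sqrt{\log\log n/\log n}$ so that $(\log n)^{O(1/\delta^2)} = \poly(n)$, and $\secpar = \Theta(\sqrt{\log n\log\log n})$ so that the soundness error is $o(1)$. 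I expect the main obstacle to be verifying that none of these recursions secretly introduces a $\poly(T(n))$ factor of degree exceeding one in the prover's runtime or communication; handling this requires using that every $\poly(\cdot)$ appearing in \Cref{thm:basic-bUIP} is applied only to $\batchK, a, \ell, S$ — each of which is $T(n)^{O(\delta)}\poly(n)$ or $\polylog(n)$ — together with the choice $\batchK \le T(n)^{\delta}$ and the restriction on $T(n)$, and also carefully tracking the boundary case where $\batchK^{\,r}$ overshoots $T(n)$ (absorbed by the padding above).
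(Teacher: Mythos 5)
Your proposal follows the paper's route essentially step for step: the same decomposition of $\cL_{\batchK \cdot t}$ into a batch of $\cL_t$ (the paper's Augmentation Lemma, \Cref{lem:aug}), the same trivial base case for $\cL_1$ (\Cref{prop:base}), the same induction bookkeeping (\Cref{prop:induction}), and the same final parameter choices $\batchK = T^\delta$, $\gamma = 1/\delta$, $\delta = \sqrt{\log\log n/\log n}$, $\secpar = \Theta(\sqrt{\log n \log\log n})$ used to derive \Cref{thm:TS} and \Cref{cor:TS}. The one minor discrepancy is that the paper's base case has verdict circuit depth $O(1)$ (not just $\polylog(n)$), which is what makes the depth stay polylogarithmic after $1/\delta$ additive increments, but your claimed $\polylog(n)$ bound still suffices and the rest of the argument goes through unchanged.
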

\subsubsection{Augmenting Interactive Proofs of Transition of Turing Machines}
Our approach is similar to that in \cite{STOC:ReiRotRot16}.
Fix $\cM$ to be some Turing machine with $T(n)$ time and $S(n)$ space.
For every $t \le T(n)$,
let the language $\cL_{t}$ consist of all tuples $(x, w_1, w_2)\in \bin^n \times \bin^{O(S(n))} \times \bin^{O(S(n))}$ such that $\cM$ transitions from configuration $w_1$ to configuration $w_2$ in exactly $t$ steps when the input is $x$.

The crucial observation is captured in the following \emph{Augmentation Lemma} (\Cref{lem:aug}):
let $\batchK = \batchK(n) \in \poly(n)$ be some batch size parameter.
The language $\cL_{\batchK \cdot t}$ can be expressed as a batch of $\cL_{t}$.
Therefore,
an efficient batching protocol such as \Cref{thm:basic-bUIP} allows the verifier to verify $(\batchK \cdot t)$-step transitions by batch-verifying $t$-step transitions,
at the cost of a small increase in complexity.
By repeated application of \Cref{lem:aug},
the verifier gradually increase the transition length $t$ that it can verify,
until it reaches $\cL_{T(n)}$.
\begin{lemma}[Augmentation Lemma]
    Let $\cM$ be a Turing machine with $T(n)$ time and $S(n) \ge n$ space.
    Let $t = t(n)$ be some time bound,
    and let $\batchK = \batchK(n) \in \poly(n)$ be some batch size parameter.
    Assume that $\cL_{t}$ has an $\epsilon$-sound $(\ell, a, b, \Ptime, \Vtime)$ \UIP $\prot{}$.
    For every $\secpar \in \NN$, 
    there exists a protocol $\epsilon_A$-unambiguous $(\ell', a', b', \Ptime', \Vtime')$ \UIP $\protU{'}$ for $\cL_{\batchK \cdot t}$.
    With $\tO$ hiding $\poly(\secpar) \cdot \polylog(\batchK, S(n), a, \ell)$ factors,
    it has the following properties:
    \begin{itemize}
        \item $\epsilon' = 2\log \batchK(\epsilon + 2^{-\secpar})$.
        \item $\ell' = \tO(\ell \cdot \depth(\cV) \log \size(\cV))$.
        \item $a' = b' = \tO(a \ell + \poly(\ell))+ O(\batchK \cdot S(n))$.
        \item $\Ptime' = \tO(\batchK \cdot \ell \cdot \Ptime + \poly(\batchK, a, \ell, \size(\cV)))$.
        \item $\Vtime' = \tO(\ell \cdot \Vtime +  a \ell^2(\batchK S(n) + \depth(\cV) \log \size(\cV)) + a^2 \cdot \poly(\ell))$.
    \end{itemize}
    Furthermore,
    \begin{itemize}
        \item $\size(\cV') = \tO(\ell \cdot \size(\cV) + a\ell^2(\batchK S(n) + \depth(\cV) \log \size(\cV)) + a^2 \cdot \poly(\ell))$.
        \item $\depth(\cV') = \depth(\cV) + \tO(1)$.
    \end{itemize}
    \label{lem:aug}
\end{lemma}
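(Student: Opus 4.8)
The plan is to prove \Cref{lem:aug} by instantiating the batch $\UIP$ of \Cref{thm:basic-bUIP} on the $\batchK$ sub-instances obtained by cutting a $(\batchK\cdot t)$-step transition of $\cM$ into $\batchK$ consecutive $t$-step transitions. Without loss of generality (and as is standard, e.g.\ in \cite{STOC:ReiRotRot16}) I will assume $\cM$ loops on halting configurations, so that the configuration reached after any number $k$ of steps is well-defined and, because $\cM$ is deterministic, $(x,w,w')\in\cL_t$ holds iff $w'$ is that unique configuration after $t$ steps of $\cM$ on $x$ started from $w$; in particular $t$-step transitions compose. The protocol $\protU{'}$ for $\cL_{\batchK\cdot t}$ on input $(x,w_0,w_{\batchK t})$ runs as follows: first $\cP'$ sends the $\batchK-1$ intermediate configurations $\bm u^{\mathrm{pres}}=(u_1,\ldots,u_{\batchK-1})$, where $u_i$ is the configuration of $\cM$ on $x$ after $i\cdot t$ steps from $w_0$ (the verifier's message in this round is unused; computing $\bm u^{\mathrm{pres}}$ costs $O(\batchK t)$ Turing-machine steps, or none if $\cP'$ is supplied the configurations); then, writing $u_0:=w_0$ and $u_{\batchK}:=w_{\batchK t}$, both parties run the batch $\UIP$ of \Cref{thm:basic-bUIP} with unambiguity parameter $\secpar$ on the batch $\bm y:=\big((x,u_{i-1},u_i)\big)_{i\in[\batchK]}$ of $\cL_t$-statements, and $\cV'$ accepts iff that batch $\UIP$ accepts. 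Each sub-instance has length $n+O(S(n))=O(S(n))$ (as $S(n)\ge n$), so \Cref{thm:basic-bUIP} is applied with instance length $O(S(n))$ and with the verdict circuit of the $\cL_t$ $\UIP$ in the role of $\cV$ there; its log-space uniformity and the assumption $b\le a$ are taken WLOG exactly as in \Cref{thm:basic-bUIP} and \Cref{rmk:Vdepth}.

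For prescribed completeness, note that by determinism of $\cM$ the batch $\bm y$ built from $\bm u^{\mathrm{pres}}$ automatically has $(x,u_{i-1},u_i)\in\cL_t$ for every $i<\batchK$, while $(x,u_{\batchK-1},u_{\batchK})\in\cL_t$ iff $w_{\batchK t}$ is exactly the configuration after $\batchK t$ steps, i.e.\ iff $(x,w_0,w_{\batchK t})\in\cL_{\batchK\cdot t}$. Hence $\bm y\in\cL_t^{\otimes\batchK}$ iff $(x,w_0,w_{\batchK t})\in\cL_{\batchK\cdot t}$, and prescribed completeness of $\protU{'}$ on both YES- and NO-instances then follows from prescribed completeness of the batch $\UIP$.

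The main work is unambiguity, done by a case split on the round $j^*$ in which a cheating $\cP^*$ first deviates from $\cP'$, with the target bound $\epsilon'=2\log\batchK\,(\epsilon+2^{-\secpar})$ (the concrete unambiguity error of the batch $\UIP$ when its base protocol has unambiguity error $\epsilon$). If $j^*\ge 2$, then $\cP^*$ has sent the prescribed $\bm u^{\mathrm{pres}}$, the batch is the prescribed $\bm y$, and $\cP^*$ first deviates inside the batch $\UIP$; if $(x,w_0,w_{\batchK t})$ is true then $\bm y\in\cL_t^{\otimes\batchK}$ and unambiguity of the batch $\UIP$ bounds acceptance by $\epsilon'$, and if it is false then $\bm y\notin\cL_t^{\otimes\batchK}$ and plain soundness of the batch $\UIP$ (which follows from its unambiguity, as noted after \Cref{def:UIP}) again bounds acceptance by $\epsilon'$. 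The delicate case is $j^*=1$: $\cP^*$ sends some $\bm u^{\ast}\neq\bm u^{\mathrm{pres}}$, which pins down the batch $\bm y^{\ast}$, and the key claim is $\bm y^{\ast}\notin\cL_t^{\otimes\batchK}$. Indeed, if every $(x,u^{\ast}_{i-1},u^{\ast}_i)$ were in $\cL_t$, then since $u^{\ast}_0=w_0=u^{\mathrm{pres}}_0$ an induction on $i$ using determinism of $\cM$ forces $u^{\ast}_i=u^{\mathrm{pres}}_i$ for all $i$, contradicting $\bm u^{\ast}\neq\bm u^{\mathrm{pres}}$. Since $\bm y^{\ast}$ is a false batch and the remainder of the interaction is exactly the batch $\UIP$ run on $\bm y^{\ast}$ with fresh verifier coins, soundness of the batch $\UIP$ bounds the acceptance probability by $\epsilon'$. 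Combining the cases gives $\epsilon'=2\log\batchK\,(\epsilon+2^{-\secpar})$.

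Finally, all complexity bounds are read off from \Cref{thm:basic-bUIP} with ``$n$'' set to $O(S(n))$, ``$S$'' to $\size(\cV)$, and ``$D$'' to $\depth(\cV)$, adding only: one extra round (absorbed into $\ell'$); $O(\batchK\cdot S(n))$ to the per-round length $a'=b'$, hence to $\Vtime'$ and $\size(\cV')$ where it is dominated, for transmitting and reading $\bm u^{\mathrm{pres}}$; $\tO(1)$ extra depth and rounds for the trivial wiring that assembles $\bm y$ from $(x,w_0,w_{\batchK t})$ and the first message; and the cost of computing $\bm u^{\mathrm{pres}}$, which is dominated by $\tO(\batchK\ell\Ptime)$ since any $\UIP$ prover for $\cL_t$ runs in time $\Ptime=\Omega(t)$ (or is free if $\cP'$ already holds the configurations). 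I expect the only genuine obstacle to be the $j^*=1$ case above: one must be careful that once $\cP^*$ deviates in the first round the ``prescribed continuation'' of $\cP'$ refers to the prescribed batch $\bm y$, so that the correct guarantee invoked for the rest of the interaction is plain soundness of the batch $\UIP$ on the false batch $\bm y^{\ast}$ rather than its unambiguity — together with the deterministic-composition argument showing that any deviation in the configuration message indeed produces a false batch.
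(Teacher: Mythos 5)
Your proposal is correct and follows essentially the same approach as the paper's proof sketch: the prover sends the $\batchK-1$ intermediate configurations, both parties run the batch $\UIP$ of \Cref{thm:basic-bUIP} on the resulting $\cL_t$-instances, and the complexities are read off by substituting $n\mapsto O(S(n))$, $S\mapsto\size(\cV)$, $D\mapsto\depth(\cV)$, with the extra $O(\batchK\cdot S(n))$ term for transmitting the configurations. You are somewhat more explicit than the paper in the $j^*=1$ case of the unambiguity argument (where the deterministic-composition argument shows the perturbed batch $\bm y^{\ast}$ is genuinely false so that plain soundness, derived from unambiguity plus prescribed completeness, applies), but the underlying reasoning matches.
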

\begin{proof}[Proof Sketch]
    The protocol $\protU{'}$ on input $(x, w_0, w_{\batchK \cdot t})$ is constructed as follows.
    \begin{enumerate}
        \item $\cP'$ runs $\cM$ on input $x$ and starting configuration $w_1$ for $\batchK \cdot t$ steps,
        obtaining all intermediate configurations $w_1, w_2, \ldots, w_{\batchK \cdot t}$.
        \emph{Note that if $\cP'$ already has access to $w_1, w_2, \ldots, w_{\batchK \cdot t}$,
        then it can skip this step.}
        \item $\cP'$ sends $w_t, w_{2t},\ldots, w_{\batchK t}$ to $\cV_{\batchK \cdot t}$.
        \item
        Both parties run the protocol $\prot{\bUIP}$ to batch $\prot{}$,
        with security parameter $\secpar$,
        that batch-verifies $\cL_{t}^{\otimes \batchK}$,
        on input $\set{(x, w_t, w_{2t})}_{t = 0}^\batchK$.
        \item The verifier $\cV'$ accepts iff the batched protocol accepts.
    \end{enumerate}

    \paragraph{Prescribed Completeness.} This is immediate.

    \paragraph{$\epsilon_A$-Unambiguity.} A deviating prover can either deviate by sending some incorrect configuration $w_t$,
    which leads to at least one incorrect $t$-step transition and hence a false statement in $\cL_{t}$,
    or by deviating in the batched protocol.
    In either case,
    the unambiguity of $\prot{\bUIP}$ ensures that the verifier accepts with probability at most $\epsilon_A$.

    \paragraph{Complexities.}
    The complexities remain the same as in \Cref{thm:basic-bUIP},
    with the input ($n$ in \Cref{thm:basic-bUIP}) being replaced by $O(S)$,
    except that $a_A = b_A \le \tO(a \ell + \poly(\ell))+ O(\batchK \cdot S)$,
    where the $O(\batchK \cdot S)$ term comes from the prover sending the configurations $w_t, w_{2t},\ldots, w_{\batchK t}$.
\end{proof}
\subsubsection{Our Construction of the Doubly-efficient \UIP for \TS}
For the base case,
we have the following (trivial) proof system for verifying one transition.
On input $(x, w_0, w_1)$,
the prover and verifier does not communicate,
and the verifier simply simulate $\cM$ on $x$ with $w_0$ by one step and verify it actually transitions to $w_1$,
which requires $O(S(n))$ time.
\begin{proposition}[A \UIP for verifying one transition]
    There exists a $(\ell_0, a_0, b_0, \Ptime_0, \Vtime_0)$ \UIP $\prot{0}$ for $\cL_1$ with the following properties:
    \begin{itemize}
        \item The unambiguity error is $\epsilon_0 = 0$.
        \item The number of rounds is $\ell_0 = 0$.
        \item The prover message length is $a_0 = 0$.
        \item The verifier message length is $b_0 = 0$.
        \item The prover does not do anything, so its runtime is 0.
        \item The verifier runs in time $O(S(n))$.
    \end{itemize}
    Furthermore,
    the verifier's decision circuit $\cV_0$ has the following properties:
    \begin{itemize}
        \item The circuit size is $\size(\cV_0) = O(S(n))$.
        \item The circuit depth is $\depth(\cV_0) = O(1)$.
    \end{itemize}
    \label{prop:base}
\end{proposition}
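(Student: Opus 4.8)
The plan is to exhibit the protocol explicitly and verify each claimed property directly, since $\cL_1$ consists of single-step Turing machine transitions and these are deterministic and locally checkable, so no interaction is needed at all. First I would define $\prot{0}$ to have zero rounds: the prover sends nothing and performs no computation, and on input $(x, w_0, w_1)$ the verifier $\cV_0$ simulates exactly one step of $\cM$ starting from configuration $w_0$ on input $x$ — locating the head, applying $\cM$'s transition function, and writing down the resulting configuration $w_1'$ — and accepts if and only if $w_1' = w_1$.

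Next I would establish prescribed completeness. A single step of $\cM$ is a deterministic function of $(x, w_0)$, so $(x, w_0, w_1)\in\cL_1$ iff $w_1' = w_1$ iff $\cV_0$ accepts; in particular this covers inputs not in $\cL_1$, on which $\cV_0$ rejects, matching \Cref{remark:prescribed-comp}. For unambiguity, the key observation is that $\ell_0 = 0$: the protocol has no rounds, so the universal quantifier in \Cref{def:UIP} over cheating provers $\cP^*$ that first deviate in some round $j^* \in [\ell_0] = \varnothing$ is vacuous, and hence the unambiguity error is $\epsilon_0 = 0$. (Equivalently, since a false statement is rejected outright by the deterministic check, standard soundness — and therefore unambiguity — holds with error $0$.)

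Finally I would bound the complexities. Clearly $a_0 = b_0 = 0$ and the prover's runtime is $0$ as it does nothing. For the verifier, reading a configuration of length $O(S(n))$, identifying the head position, applying the local transition rule, and comparing the result to $w_1$ all take $O(S(n))$ time, so $\Vtime_0 = O(S(n))$. The corresponding Boolean verdict circuit $\cV_0$ has size $O(S(n))$ — each cell of $w_1'$ depends only on the head-indicator bits and a constant-size neighborhood of $w_0$ — and depth $O(1)$ in the circuit model used here (the final equality test being treated as a constant-depth AND). The only point requiring any care is confirming that this vacuous reading of the unambiguity condition for a zero-round protocol is consistent with \Cref{def:UIP}; beyond that there is no real obstacle, as \Cref{prop:base} serves merely as the base case for the inductive construction driven by \Cref{lem:aug} toward \Cref{thm:TS}.
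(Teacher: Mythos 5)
Your proposal is correct and follows the same approach as the paper: the paper gives only a one-paragraph informal description of the trivial zero-round protocol (verifier simulates one step of $\cM$ and compares) immediately before the proposition and states the bounds without formal proof, and your writeup simply fills in the obvious details. Your observation that the unambiguity condition is vacuous for $\ell_0=0$ (since $[\ell_0]=\varnothing$) is the right way to read \Cref{def:UIP} for this base case.
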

Using \Cref{lem:aug},
we can show the following claim by induction.
\begin{proposition}
    \label{prop:induction}
    Let $\batchK = \batchK(n) \in \poly(n), \gamma = \gamma(n) \in O\left(\sqrt{\frac{\log n}{\log \log n}}\right), \secpar = \secpar(n) \in \polylog(n)$ be parameters.
    There exist universal constants $C_1, C_2, C_3$ such that
    for each $i \le \gamma$,
    there exists a $(\ell_i, a_i, b_i, \Ptime_i, \Vtime_i)$ \UIP $\prot{i}$ for $\cL_{\batchK^i}$,
    and for large $n$,
    the following holds:
    \begin{enumerate}
        \item $\max(\batchK, S(n), a_i, \ell_i, \size(\cV_i))  = O(n^c)$,
            where \[ c = \limsup_{n \to \infty} \ceil{\cbound} \] is a constant independent of $n$ and $i$ (since $\batchK, S(n) \in \poly(n)$).
        \item Regarding the complexities of $\prot{i}$:
        \begin{itemize}
            \item The unambiguity error is $\epsilon_i \le i \cdot (2\log\batchK)^i  2^{-\secpar}$.
            \item The number of rounds is $\ell_i \le \log^{3C_1 i} n$.
            \item The prover and verifier message length per round is $a_i = b_i \le \batchK  S(n) \cdot \log^{4C_1C_3 i^2}$.
            \ifFOCS
            \item The prover runtime is \\$\Ptime_i = \batchK^i \cdot n^{C_2C_3} \log^{3C_1 i^2} n$.
            \item The verifier runs in time \\$\Vtime_i \le (\batchK S(n))^2\log^{8C_1C_3 i^2} n$.
            \else
            \item The prover runtime is $\Ptime_i = \batchK^i \cdot n^{C_2C_3} \log^{3C_1 i^2} n$.
            \item The verifier runs in time $\Vtime_i \le (\batchK S(n))^2\log^{8C_1C_3 i^2} n$.
            \fi
        \end{itemize}
        Furthermore,
        the verifier's decision circuit $\cV_i$ has the following properties:
        \begin{itemize}
            \ifFOCS
            \item The circuit size is \\$\size(\cV_i) \le (\batchK S(n))^2\log^{8C_1C_3 i^2} n$.
            \else
            \item The circuit size is $\size(\cV_i) \le (\batchK S(n))^2\log^{8C_1C_3 i^2} n$.
            \fi
            \item The circuit depth is $\depth(\cV_i) \le i \cdot \log^{C_1} n$.
        \end{itemize}
    \end{enumerate}
\end{proposition}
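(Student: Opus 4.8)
The plan is to prove \Cref{prop:induction} by induction on $i$, anchoring the base case at the trivial single‑transition \UIP $\prot{0}$ of \Cref{prop:base} for $\cL_1 = \cL_{\batchK^0}$ and carrying out each inductive step by one application of the Augmentation Lemma (\Cref{lem:aug}). The base case is immediate: $\prot{0}$ has $\epsilon_0 = \ell_0 = a_0 = b_0 = \Ptime_0 = 0$ and $\Vtime_0, \size(\cV_0) = O(S(n))$, $\depth(\cV_0) = O(1)$, all of which sit inside the claimed bounds once the universal constants are fixed; the depth bound $\depth(\cV_i)\le i\log^{C_1}n$ is really used only for $i\ge 1$ (at which point \Cref{lem:aug} has already contributed an $O(1)$ term), and the $i=0$ case is read as $\depth(\cV_0)=O(1)$. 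For the inductive step, assuming $\prot{i}$ exists for $\cL_{\batchK^i}$ and satisfies items~1 and~2, use $\cL_{\batchK^{i+1}} = \cL_{\batchK\cdot\batchK^i}$ and apply \Cref{lem:aug} with time bound $t = \batchK^i$, batch size $\batchK$, and unambiguity parameter $\secpar$; this yields a \UIP $\prot{i+1}$ for $\cL_{\batchK^{i+1}}$, and it remains to verify that every parameter bound propagates.

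I would fix the three constants in this order. Since $\secpar = \polylog(n)$, $\batchK, S(n) = \poly(n)$, and item~1 at level $i$ gives $a_i, \ell_i, \size(\cV_i) = n^{O(1)}$, every factor hidden by the $\tO(\cdot)$ in \Cref{lem:aug} is $\log^{O(1)}n$ with a \emph{fixed} exponent $c_0$ independent of $i$ and $n$; pick $C_1 > c_0$ large enough to dominate also the $\tO(1)$ additive term in $\depth(\cV')$ and the linear‑in‑$i$ slack in the round recursion. Then pick $C_3$ large relative to $C_1$ and to the fixed degree of the $\poly(\ell)$ in $a'$ and $\Vtime'$; this fixes the constant $c=\limsup_n\ceil{\cbound}$ of the statement. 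Finally pick $C_2$ large relative to $C_1$ and to the fixed degree of the $\poly(\batchK,a,\ell,\size(\cV))$ term in $\Ptime'$, so that by item~1 that term is at most $n^{C_2C_3}$. With the constants in hand the recursions telescope: $\epsilon_{i+1} = 2\log\batchK\,(\epsilon_i + 2^{-\secpar}) \le (i+1)(2\log\batchK)^{i+1}2^{-\secpar}$; $\ell_{i+1}$ gains a factor $\depth(\cV_i)\log\size(\cV_i)\cdot\log^{O(1)}n \le \log^{C_1 + O(1)}n$, staying inside $\log^{3C_1(i+1)}n$; $\depth(\cV_{i+1}) = \depth(\cV_i) + \tO(1) \le (i+1)\log^{C_1}n$; and each of $a_{i+1},\Vtime_{i+1},\size(\cV_{i+1})$ picks up a factor $\ell_i^{O(1)}=\log^{O(i)}n$ over its level‑$i$ bound (using $S(n)\ge n$ to absorb the $\depth(\cV_i)\log\size(\cV_i)$ additive term into $\batchK S(n)$), which, since $\sum_{j\le i}O(j)=O(i^2)$, keeps the $\log n$ exponent quadratic and lands inside $\log^{4C_1C_3(i+1)^2}n$ resp. $\log^{8C_1C_3(i+1)^2}n$; and $\Ptime_{i+1} \le \batchK\cdot\ell_i\cdot\Ptime_i\cdot\log^{O(1)}n + n^{C_2C_3}\log^{O(1)}n \le \batchK^{i+1}n^{C_2C_3}\log^{3C_1(i+1)^2}n$. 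Finally, item~1 at level $i+1$ is read off from these bounds: since $i+1\le\gamma$ and $\gamma = O(\sqrt{\log n/\log\log n})$ we have $(i+1)^2\log\log n = O(\log n)$, so $\log^{8C_1C_3(i+1)^2}n \le n^{8C_1C_3}$ and hence $a_{i+1},\size(\cV_{i+1}) \le (\batchK S(n))^{2}\,n^{8C_1C_3} = O(n^c)$.

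The main obstacle is not any single estimate but the \emph{simultaneous} bookkeeping: one must show that one choice of $C_1,C_2,C_3$ makes all five parameter recursions plus the depth recursion close at \emph{every} level $i\le\gamma$ without the constants growing. The delicate point is the quadratic‑in‑$i$ blow‑up of the $\log n$ exponents in $a_i,\Vtime_i,\size(\cV_i)$, which arises precisely because each augmentation multiplies by the current round complexity $\ell_i$, whose own exponent is already linear in $i$; a lazier bound would give a super‑quadratic exponent and break item~1. Keeping the exponent exactly $O(i^2)$ and then exploiting the calibration $\gamma^2\log\log n = O(\log n)$ is what keeps the scheme doubly efficient all the way up to $t = \batchK^{\gamma}$, which (choosing $\batchK$ a suitable polynomial) exceeds $T(n)$ — exactly what \Cref{thm:TS} will need. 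I would also take care that the coupling between items~1 and~2 is benign: the induction hypothesis carries both, item~2 at level $i+1$ is proved using item~1 at level $i$ (to bound the $\poly$ and $\polylog$ factors), and item~1 at level $i+1$ then follows from item~2 at level $i+1$, so there is no circularity.
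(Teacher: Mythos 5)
Your proposal is correct and takes essentially the same approach as the paper: an induction on $i$ anchored at $\prot{0}$ from \Cref{prop:base}, one application of the Augmentation Lemma (\Cref{lem:aug}) per step, constants fixed via the same hierarchy $C_1 \to C_3 \to c \to C_2$ as in \Cref{rmk:constants}, the two invariants carried simultaneously to avoid circularity, and the calibration $(i+1)^2 \le \gamma^2 \le \log n / \log\log n$ to close item~1. Your observation that the $i=0$ depth bound should be read as $O(1)$ and your emphasis on the quadratic-in-$i$ exponent growth are both accurate refinements of the paper's argument, not deviations from it.
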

\noindent To simplify our calculation,
we use the following loose bounds on the constants involved.
\begin{remark}[Determining the constants $C_1$, $C_2$, $C_3$.]
The hidden constants in \Cref{lem:aug} determine the constants $C_1$, $C_2$, $C_3$,
as follows.
\begin{itemize}
    \item 
    Consider the upper bound of all $\poly(\secpar) \cdot \polylog(\batchK, S(n), a_i, \ell_i)$ factors hidden in $\tO$ of \Cref{lem:aug}.
    Assume that indeed $\batchK, S(n), a_i, \ell_i$ are each upper-bounded by $n^c$ for large $n$,
    and that $\secpar = \polylog(n)$,
    then for some constant $c_1$ and large $n$, the following holds,
    \ifFOCS
    \begin{align*}
    &\poly(\secpar)\polylog(\batchK, S(n), a_i, \ell_i) \\
    &\le (c\log n)^{c_1} \\
    &\le \left(\ceil{\cbound}\log n\right)^{c_1}.
    \end{align*}
    \else
    \[
    \poly(\secpar)\polylog(\batchK, S(n), a_i, \ell_i) \le (c\log n)^{c_1} \le \left(\ceil{\cbound}\log n\right)^{c_1}.
    \]
    \fi
    For large $n$, 
    \[\log n \ge \ceil{\cbound},\]
    so the above is upper-bounded by $\log^{2c_1} n$,
    irrespectively of the value of $c$ (or $C_1$).
    Therefore,
    if we take $C_1 \coloneqq 2c_1 + 1$, 
    then
    we can replace all occurrence of $\poly(\secpar) \cdot \polylog(\batchK, S(n), a_i, \ell_i)$ with $\log^{C_1} n$.
    \item 
    There exists a constant $c_2$ such that the term $\poly(\batchK, a_i, \ell_i, \size(\cV_i)) < (\batchK a_i \ell_i \size(\cV_i))^{c_2}$ in the $\Ptime$ recurrence relation.
    If $\max(\batchK, S(n), a_i, \ell_i, \size(\cV_i)) \le n^c$ for large $n$,
    then $(\batchK a_i \ell_i \size(\cV_i))^{c_2} \le n^{4cc_2}$ for large $n$.

    Therefore, we choose $C_2 \coloneqq 4cc_2$.
    \item $C_3$ is simply defined to be the power of $\ell$ in the terms $\poly(\ell)$ appearing in the recurrence relations.
\end{itemize}
\label{rmk:constants}
\end{remark}
\begin{proof}[Proof of \Cref{prop:induction}]
    Let $C_1 \coloneqq 2c_1 + 1$, $C_2 \coloneqq 4cc_2$ be the constants as defined in \Cref{rmk:constants}.
    (Recall that $c = \limsup_{n \to \infty} \ceil{\cbound}$.)

    We prove the following two claims simultaneously by induction on $i$.
    \begin{enumerate}
        \item $\max(\batchK, S(n), a_i, b_i, \size(\cV_i)) \le n^c$ for large $n$.
        \item The bounds about each of $\epsilon_i$, $\ell_i$, $a_i$, $b_i$, $\Ptime_i$, $\Vtime_i$, $\size(\cV_i)$, $\depth(\cV_i)$ hold.
    \end{enumerate}
    The base case, $i = 0$, corresponds to a transition of one step and holds trivially as in \Cref{prop:base}.

    \paragraph{Inductive Step.}
    Assume that we have a \UIP $\prot{i}$ for $\cL_{\batchK^{i}}$ with the stated complexity bounds.
    Note that an important consequence of $i \le \gamma \le \sqrt{\frac{\log n}{\log \log n}}$ is that $i^2\log \log n < \log n$.

    By our second inductive hypothesis, the following hold for large $n$.
    \begin{enumerate}
        \item $\depth(\cV_i) \le i \cdot \log^{C_1} n$.
        \item From $\size(\cV_i) \le O((\batchK S(n))^2 \log^{8C_1C_3i^2} n)$ and $\log (\batchK S(n)) = \log_n(\batchK S(n)) \log n$ we deduce 
        \ifFOCS
            \begin{align*}
                \log \size(\cV_i) &\le O(1) + 2\log (\batchK S(n)) \\
                &\quad + 8C_1C_3i^2 \log \log n\\
                &= O(1) + \\
                &\quad + (2\log_n (\batchK S(n)) + 8C_1C_3i^2)\log n\\
                &< \log^2 n.
            \end{align*}
            \else
            \begin{align*}
                \log \size(\cV_i) &\le O(1) + 2\log (\batchK S(n)) + 8C_1C_3i^2 \log \log n\\
                &= O(1) + (2\log_n (\batchK S(n)) + 8C_1C_3i^2)\log n\\
                &< (2\log_n (\batchK S(n)) + 8C_1C_3i^2 + 1)\log n < \log^2 n.
            \end{align*}
            \fi
        \end{enumerate}
    Therefore, for large $n$, \(\depth(\cV_i) \log \size(\cV_i) < i \cdot \log^{C_1} n \log^2 n< \log^{C_1 + 3}n\).

    \paragraph{Complexities}
    Applying \Cref{lem:aug} with batch size $\batchK$ and security parameter $\secpar$,
    we obtain a \UIP $\prot{i + 1}$ for $\cL_{\batchK^{i + 1}}$.
    As in \Cref{rmk:constants}, the first inductive hypothesis implies that all $\tO(1)$ terms hidden can be upper bounded by $\log^{C_1} n$ when $n$ is large.
    We work out the recurrence relations to obtain the following.
    \begin{itemize}
        \item $\depth(\cV_{i + 1}) = \depth(\cV_{i}) + \tO(1) \le \depth(\cV_{i}) + \log^{C_1} n \le (i + 1) \cdot \log^{C_1} n.$
        \item $\ell_{i + 1} = \ell_{i} \cdot (\depth(\cV_{i}) \log \size(\cV_{i})\log^{C_1} n) \le \log^{3C_1i + (2C_1 + 3)}n < \log^{3C_1(i + 1)}n$, (w.l.o.g. $C_1 > 3$).
        \ifFOCS
        \item $\begin{aligned}[t]
            a_{i + 1} = b_{i + 1} &\le a_{i} \cdot  \ell_{i} \cdot \log^{C_1} n + \ell_i^{C_3} \log^{C_1} n \\
            &\quad + O(\batchK \cdot S(n))\\
            &= \batchK \cdot S(n) \cdot \left(\log^{4C_1C_3i^2 + C_1i + C_1} n \right.\\
            &\quad\quad\quad \left. + \log^{3C_1C_3 i + C_1} n + O(1)\right)\\
            &< \batchK \cdot S(n) \cdot \log^{4C_1C_3(i + 1)^2} n.\\
            &\text{(w.l.o.g. $C_3 \ge 1$)}.
        \end{aligned}$
        \else
        \item $\begin{aligned}[t]
            a_{i + 1} = b_{i + 1} &\le a_{i} \cdot  \ell_{i} \cdot \log^{C_1} n + \ell_i^{C_3} \log^{C_1} n + O(\batchK \cdot S(n))\\
            &= \batchK \cdot S(n) \cdot (\log^{4C_1C_3i^2 + C_1i + C_1} n + \log^{3C_1C_3 i + C_1} n + O(1))\\
            &< \batchK \cdot S(n) \cdot \log^{4C_1C_3(i + 1)^2} n. \text{ (w.l.o.g. $C_3 \ge 1$)}.
        \end{aligned}$
        \fi
        \ifFOCS
        \item $\begin{aligned}[t]
            \Vtime_{i + 1} &\le \Vtime_i \cdot (\ell_i \cdot \log^{C_1} n) \\
            &\quad +\left(a_i\ell_i^2(\batchK S(n) \right.\\
            &\quad\quad\quad\quad\quad + \depth(\cV_i)\log \size(\cV_i))\\
            &\quad\quad\quad \left. + a_i^2\ell_i^{C_3}\right) \log^{C_1} n\\\
            &\le (\batchK S(n))^2 \log^{8C_1C_3i^2 + 3C_1i + C_1}n\\
            &\quad + \batchK S(n)(\batchK S(n) + \log^{C_1 + 3}n)\\
            &\quad \cdot \log^{4C_1C_3i^2 + 6C_1i}n \cdot \log^{C_1} n\\
            &\quad + (\batchK S(n))^2 \cdot  \log^{8C_1C_3i^2} n\\
            &\quad \cdot \log^{C_1C_3 i + C_1} n\\
            &< (\batchK S(n))^2 \cdot\log^{8C_1C_3(i+1)^2}n.
        \end{aligned}$
        \else
        \item $\begin{aligned}[t]
            \Vtime_{i + 1} &\le \Vtime_i \cdot (\ell_i \cdot \log^{C_1} n) + \left(a_i\ell_i^2(\batchK S(n) + \depth(\cV_i)\log \size(\cV_i)) + a_i^2\ell_i^{C_3}\right) \log^{C_1} n\\\
            &\le (\batchK S(n))^2 \log^{8C_1C_3i^2 + 3C_1i + C_1}n\\
            &\quad + \batchK S(n)(\batchK S(n) + \log^{C_1 + 3}n)\log^{4C_1C_3i^2 + 6C_1i}n \cdot \log^{C_1} n\\
            &\quad + (\batchK S(n))^2 \cdot  \log^{8C_1C_3i^2} n \cdot \log^{C_1C_3 i + C_1} n\\
            &< (\batchK S(n))^2 \cdot\log^{8C_1C_3(i+1)^2}n.
        \end{aligned}$
        \fi
        \ifFOCS
        \item Similarly, \[\size(\cV_{i + 1}) < (\batchK S(n))^2 \cdot \log^{8C_1C_3(i+1)^2}n.\]
        \else
        \item Similarly, $\size(\cV_{i + 1}) < (\batchK S(n))^2 \cdot \log^{8C_1C_3(i+1)^2}n.$
        \fi
        \ifFOCS
        \item $\begin{aligned}[t]
            \Ptime_{i + 1} &\le \Ptime_i \cdot (\batchK \cdot \ell_i \cdot \log^{C_1} n) \\
            &\quad + \poly(\batchK, a_i, \ell_i, \size(\cV_i))\\
            &\le \Ptime_i \cdot \batchK \log^{3C_1i^2 + C_1}n \log^{C_1} n \\
            &\quad + n^{C_2C_3}\\
            &< \batchK^{i + 1} n^{C_2C_3}\log^{3C_1(i + 1)^2 + C_1}n.
        \end{aligned}$
        \else
        \item $\begin{aligned}[t]
            \Ptime_{i + 1} &\le \Ptime_i \cdot (\batchK \cdot \ell_i \cdot \log^{C_1} n) + \poly(\batchK, a_i, \ell_i, \size(\cV_i))\\
            &\le \Ptime_i \cdot \batchK \log^{3C_1i^2 + C_1}n \log^{C_1} n + n^{C_2C_3}\\
            &< \batchK^{i + 1} n^{C_2C_3}\log^{3C_1(i + 1)^2 + C_1}n.
        \end{aligned}$
        \fi
    \end{itemize}
    \ifFOCS
    Recall that \[c = \ceil{\limsup_{n \to \infty} \cbound}.\]
    \else
    Recall that $c = \ceil{\limsup_{n \to \infty} \cbound}$.
    \fi
    The condition in the induction follows from that 
    $\max(\batchK, S(n), a_{i + 1}, b_{i + 1}, \size(\cV_{i + 1})) = (\batchK S(n))^2 \cdot \log^{8C_1C_3(i + 1)^2} n \le n^{c}$ for large $n$,
    since $(i + 1)^2 \le \gamma^2 \le \frac{\log n}{\log \log n}$ and $(\log n)^{\frac{\log n}{\log \log n}} = n$.

    Finally, the unambiguity error satisfies $\epsilon_{i + 1} = 2\log \batchK(\epsilon_{i} + 2^{-\secpar}) \le 2 \log \batchK (i (2\log \batchK)^i 2^{-\secpar} + 2^{-\secpar}) < (i + 1) (2\log \batchK)^{i + 1} 2^{-\secpar}$.
\end{proof}

Let $\delta(n) \in (0,1)$ be a parameter.
\Cref{thm:TS} now follows from \Cref{prop:induction} by setting $\batchK = T^{\delta}$ and $\gamma = \frac{1}{\delta}$.
In this case, 
$\cL_{\batchK^\gamma} = \cL$ and it has a \UIP with the following complexities:
\begin{itemize}
    \item $\begin{aligned}[t]
        \epsilon_\gamma \le \gamma (2\log n)^\gamma 2^{-\secpar} = 2^{-\secpar} \cdot (1/\delta) \cdot (\log n)^{O(1/\delta)}.
    \end{aligned}$
    \item $\begin{aligned}[t]
        \ell_\gamma \le \log^{3C_1\gamma} n = (\log n)^{O(1/\delta)}.
    \end{aligned}$
    \ifFOCS
    \item $\begin{aligned}[t]
        a_\gamma = b_\gamma &\le \batchK \cdot S(n) \log^{4C_1C_3\gamma^2}n \\
        &< T^{\delta} \cdot S(n) \cdot (\log n)^{O(1/\delta^2)}.
    \end{aligned}$
    \item $\begin{aligned}[t]
        \Ptime_\gamma &\le \batchK^\gamma n^{C_2C_3}\log^{3C_1\gamma^2}n \\
        &< T^{\delta} \cdot \poly(n) \cdot (\log n)^{O(1/\delta^2)}.
    \end{aligned}$
    \item $\begin{aligned}[t]
        \Vtime_\gamma &= (\batchK S(n))^2\log^{8C_1C_3\gamma^2} n \\
        &< T^{2\delta} \cdot S(n)^2 \cdot (\log n)^{O(1/\delta^2)}.
    \end{aligned}$
    \else
    \item $\begin{aligned}[t]
        a_\gamma = b_\gamma \le \batchK \cdot S(n) \log^{4C_1C_3\gamma^2}n = T^\delta \cdot S(n) \cdot (\log n)^{O(1/\delta^2)}.
    \end{aligned}$
    \item $\begin{aligned}[t]
        \Ptime_\gamma &\le \batchK^\gamma n^{C_2C_3}\log^{3C_1\gamma^2}n = T^\delta \cdot \poly(n) \cdot (\log n)^{O(1/\delta^2)}.
    \end{aligned}$
    \item $\begin{aligned}[t]
        \Vtime_\gamma &= (\batchK S(n))^2\log^{8C_1C_3\gamma^2} n = T^{2\delta} \cdot S(n)^2 \cdot (\log n)^{O(1/\delta^2)}.
    \end{aligned}$
    \fi
\end{itemize}

\section{IPP for \PVAL with Column Distance}
\label{sec:DcRR}

The work of \cite{TCC:RotRot20} constructed an $\IPP$ (same as \Cref{def:IPP} without the unambiguous soundness property) for the $\pval$ problem with respect to the Hamming distance. We use their ideas to construct a $\UIPP$ for a slightly different formulation of the $\pval$ problem which has a different distance measure. The same ideas also result in an instance reduction protocol for \pval. In particular, we use the column distance $\Deltac$ (\Cref{def:Deltac-dist}), as opposed to the Hamming distance.

We impose the additional constraint on our $\IPP$ to be a $\UIP$, as it is used as a building block in our batch $\UIP$ protocol.\footnote{We note that \cite{TCC:RotRot20} did not require their $\IPP$ to be unambiguous.
Their protocol, 
despite being slightly unambiguous, 
does not have an exponentially small unambiguity error.
Given that parallel repetition does not reduce unambiguity error,
we have to modify the protocol.}
Thus, we also prove the unambiguous soundness of our protocol. We now present our main result for $\UIPP$s.

\begin{theorem}[\UIPP for \pval with column distance; Restatement of \Cref{lem:DcRR}]
    \label{thm:DcRR-main}
    \DCMainStmt
\end{theorem}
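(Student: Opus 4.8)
The plan is to adapt the special-purpose $\pval$-instance-reduction protocol underlying \Cref{thm:RR-informal} (the second step of the \cite{TCC:RotRot20} template) from Hamming distance to the column distance $\Delta_c$, and to upgrade its analysis to exponentially small unambiguity. The protocol $\prot{\RRrow}$ iterates a halving step and then spot-checks a small residual. Viewing the prover's $\transcMat\in\FF^{\K\times\ncol}$ as a matrix, each halving step acts on a current $\pval$ claim ``$\transcMat^{(r)}\in\pval(\bm\pvalU^{(r)},\bm\pvalv^{(r)})$'' over $2^{\lgK-r+1}$ rows: (i) \emph{Split} — by multilinearity of the LDE in the first row-coordinate, for each of the $\pvalT$ query points the prover sends the two LDE values of the top half $\transcMat_0$ and the bottom half $\transcMat_1$, and the verifier checks the affine identity relating them to the parent value; (ii) \emph{Permute} — the verifier samples a fresh $\pi\sim\Pi_{\lgK-r,\,d,\,\eta}$ (the reversible-circuit family of \Cref{thm: random reversible circuits are independent}, with $\eta$ exponentially small) and the parties reduce the claim on $\transcMat_1$ to a claim on the per-column row-permuted $\pi(\transcMat_1)$: since $\pi$ is a small width-$2$ reversible circuit, ``the LDE of $\pi(\transcMat_1)$'' is a $\tO(2^{\lgK-r}(\ncol+d))$-size, $\tO(1)$-depth (after padding) arithmetic function of $\transcMat_1$, so this is done by an invocation of the distance-preserving, unambiguous \GKR protocol (\Cref{lem:GKR,lem:RVW}) plus a line-based point-alignment step; (iii) \emph{Combine} — the verifier sends random $\rho_0,\rho_1\in\FF$ and the parties pass to $\transcMat^{(r+1)}=\rho_0\transcMat_0+\rho_1\pi(\transcMat_1)$ with the correspondingly combined $\pval$ claim over $2^{\lgK-r}$ rows. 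After $t^*=\lgK-\Theta(\log d)$ halvings the residual matrix has $\Theta(d)$ rows; the prover reveals it; the verifier rejects if it fails the final $\pval$ claim, otherwise it picks a uniform set $R^*$ of $\lceil 8\secpar\rceil$ residual rows, outputs $\lrag{\RRS}$ = the $O(\secpar\K/d)$ original rows feeding those residual rows (together with descriptions of all $\pi$'s, $\rho$'s, and $R^*$), and $\lrag{\C'}$ = the circuit that recomputes those residual rows from $\transcMat[\RRS,:]$ and checks both that they equal the revealed ones and that they satisfy the final $\pval$ claim. Prescribed completeness is then routine: an honest run passes every check, and unwinding the halvings and the \GKR sub-protocols gives $\C'(\transcMat[\RRS,:])=1\iff\transcMat\in\pval(\bm\pvalU,\bm\pvalv)$.

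The bulk of the work is unambiguity. I would first establish a \emph{kernel-distance invariant}: combining the hypothesis $\Delta_c(\pval(\bm\pvalU,\bm 0))\ge 4d$, the choice $\pvalT\ge\Tbound$, and \Cref{lem:SZ} exactly as in \Cref{lem:pval}, every intermediate $\pval$ claim's kernel has $\Delta_c$-distance $\ge 2d$ — the split kernel embeds into its parent by zero-padding the absent half; the permute step preserves it because a row permutation is a $\Delta_c$-isometry and (using the almost-$d$-wise independence of $\Pi$) the \GKR-produced point is still uniform enough; and the combine step follows from a Schwartz–Zippel argument in $(\rho_0,\rho_1)$ union-bounded over the $\le(\binom{\K}{d}\abs{\FF}^d)^{\ncol}$-size $\Delta_c$-$d$-ball — this union bound is precisely what pins down the universal constant $C$ and the requirement $\abs{\FF}\ge\FboundT$. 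Granting the invariant, the core claim is that, conditioned on the verifier not rejecting, if $\transcMat^{(r)}$ is $\Delta_c$-$d$-far from its (possibly prover-corrupted) claim then so is $\transcMat^{(r+1)}$ except with probability $\le\eta+\abs{\FF}^{-1}$ plus a \GKR error term. The split step preserves $\Delta_c$-$d$-farness \emph{regardless} of the prover's sent values once the affine check passes (any matrix within $\Delta_c$-distance $d$ of $\transcMat^{(r)}$ whose two halves satisfy the claimed $\pval$'s would, via the affine identity, satisfy the parent claim). The permute step is a $\Delta_c$-isometry and preserves farness by \Cref{lem:RVW}, with any cheating caught by \GKR unambiguity. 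The combine step is where the per-column permutation is essential: a hypothetical matrix $\transcMat^{(r+1)}-\bm e$ in the folded claim with $\Delta_c(\bm e)\le d$ forces $\rho_0\bm a_i+\rho_1\bm b_i=\widehat{\bm e}(\bm\pvalU^{(r)}_i)$ for all $i$, which by Schwartz–Zippel in $(\rho_0,\rho_1)$ is possible for a fixed $\bm e$ with probability $\le\abs{\FF}^{-1}$ unless \emph{both} halves already satisfy their claims; but one half is $\Delta_c$-$d$-far, and \Cref{thm: random set intersection is tiny} applied to the two halves' (size-$\le d$) per-column error supports shows the scrambled errors cannot collapse — and the $d\ge\dboundm$ hypothesis is exactly what makes this exponentially reliable. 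Iterating over the $t^*=\tO(1)$ halvings and summing the \GKR errors, a cheating prover's revealed residual matrix is $\Delta_c$-$\Omega(d)$-far from the final claim, i.e. a constant fraction of some column's $\Theta(d)$ entries are erroneous; the uniform $R^*$ of size $\lceil 8\secpar\rceil$ misses all of them with probability $\le 2^{-\secpar-2}$, and since $\C'$ recomputes those rows from the true $\transcMat[\RRS,:]$ and compares against the revealed ones, $\C'(\transcMat[\RRS,:])=0$ in that event; combined with the honest-run output (which is $0$ whenever $\transcMat\notin\pval(\bm\pvalU,\bm\pvalv)$) this yields the stated guarantee.

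For the complexity bookkeeping: there are $t^*=\lgK-\Theta(\log d)=O(\lgK)=\tO(1)$ halvings, each costing $O(1)$ rounds plus $\tO(1)$ rounds for its \GKR call, so $\ell_{\RRrow}=\tO(1)$; amortizing the $\pvalT=\tO(d\ncol)$ split-values, the $\tO(d)$-bit $\pi$-descriptions, and the $\tO(d\ncol)$-entry final reveal over $\tO(d)$ rounds per halving gives $a_{\RRrow}=b_{\RRrow}=\tO(\ncol+\poly(d))$. By construction $\abs{\RRS}=\lceil 8\secpar\rceil\cdot 2^{t^*}\le\lceil 8\secpar\K/d\rceil$ with distribution depending only on the verifier's coins; $\abs{\lrag{\RRS}}=\tO(\poly(d))$ (dominated by the $O(\log d)$ reversible-circuit descriptions, each $\tO(d)$ bits, plus $\lceil 8\secpar\rceil$ indices in $[\K]$) and $\abs{\lrag{\C'}}=\tO(\ncol+\poly(d))$; $G_{\RRS}$ traces an original-row index back through $O(\log d)$ folds and permutations, so $\size(G_{\RRS})=\tO(\poly(d))$, $\depth(G_{\RRS})=\tO(1)$; $C'$ recomputes $\lceil 8\secpar\rceil$ residual rows — each a fixed $\FF$-linear combination of $O(\K/d)$ entries of $\transcMat[\RRS,:]$ — and performs the two checks, so $\size(C')=\tO(\abs{\RRS}\cdot\ncol)$, $\depth(C')=\tO(1)$; $\cV_{\RRrow}$'s verdict is the conjunction of the $O(\lgK)$ affine checks, the \GKR verdicts, and the final $\pval$/reveal comparison, so $\size(\cV_{\RRrow})=\tO(\ncol+\poly(d))$, $\depth(\cV_{\RRrow})=\tO(1)$; and $\Ptime_{\RRrow}=\poly(\K\ncol,\pvalT\cdot\Flog)$ since the prover only manipulates $\K\times\ncol$ matrices over $\FF$ and runs $\tO(1)$ \GKR provers on $\poly$-size circuits. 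The main obstacle is the combine-step distance-preservation, which must simultaneously exploit a huge field (Schwartz–Zippel and union bound over the exponential-size $\Delta_c$-$d$-ball) and a genuinely $d$-wise-independent per-column permutation — affine permutations, which suffice for the weak unambiguity of \cite{TCC:RotRot20}, are only $O(1)$-wise independent and do not prevent the $\Omega(d)$ column errors of the two halves from aligning; a secondary obstacle is routing the permute step's LDE bookkeeping through \GKR without inflating either the kernel $\Delta_c$-distance or the round/communication budget.
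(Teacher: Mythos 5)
The overall architecture (split by multilinearity, permute the bottom half column-wise via almost-$d$-wise independent reversible circuits and reduce to a claim about the permuted half via the distance-preserving \GKR of \Cref{lem:RVW}, fold with fresh scalars, iterate, then spot-check a small residual) matches the paper, and your invocation of \Cref{thm: random set intersection is tiny} for the permute step is on target. However, your unambiguity argument for the \emph{combine} step has a genuine gap.

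You propose to show the folding step is distance-preserving by ``Schwartz--Zippel in $(\rho_0,\rho_1)$, union-bounded over the $\le\left(\binom{\K}{d}\abs{\FF}^d\right)^{\ncol}$-size $\Delta_c$-$d$-ball,'' and claim this union bound is what pins down $C$ and $\abs{\FF}\ge\FboundT$. This cannot work. The folding randomness is only one or two field elements, so for a fixed $\bm e$ in the ball the bad probability is $\ge\abs{\FF}^{-O(1)}$; the ball has $\ge\abs{\FF}^{d\ncol}$ elements; the union-bound product is $\gg 1$ for every field, and in fact \emph{grows} with $\abs{\FF}$, so no constant $C$ in $\FboundT$ can rescue it. (Contrast this with the places where the paper does union-bound over the ball --- \Cref{lem:RVW} and \Cref{lem: eval point reduction} --- there the bound is multiplied by $\epsilon_\GKR^{\pvalT}$ or $(\cdot/\abs{\FF})^{\pvalT}$ with $\pvalT=\tilde\Theta(d\ncol)$, which is precisely enough to overcome the ball size; a single folding scalar offers no such compensation.) The deeper reason the naive argument fails is that the corrupting $\bm e$ is \emph{adversarial} and may depend on the folding scalar, so one cannot fix $\bm e$ first. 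The paper gets around this via a correlated-agreement lemma in the style of BKS18 (\Cref{lem: Batch BKS}), which shows that if more than a $1/(\eps\abs{\FF})$-fraction of scalars $c$ yield a $\Delta_c$-close folding, then a \emph{single} pair $(v_0,v_1)$ in the kernel explains all of them --- from which the fresh permutation and \Cref{thm: random set intersection is tiny} derive a contradiction. This extraction is what keeps the per-round error at $O(1/\abs{\FF}) + 2^{-\secpar} + e^{-\Omega(d)}$ rather than anything involving the ball's cardinality. Your plan lacks this step and, as written, cannot yield the claimed unambiguity error.

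A secondary, smaller gap: after the permute step the two halves' $\pval$ claims live at different query points ($\bm\pvalU$ vs.\ the fresh $\bm\pvalU'$ produced by \GKR), so the folded claim would naively have $2\pvalT$ constraints and the number of constraints would blow up over the iterations. Your ``line-based point-alignment step'' presumably gestures at fixing this, but the mechanism --- interpolating a degree-$\tO(\pvalT)$ curve through $(\bm\pvalU,\bm\pvalU')$, having the prover send the restriction polynomials, and subsampling $\pvalT$ fresh points on the curve, with the associated unambiguity cost paid by $\pvalT$ random points again (\Cref{lem: eval point reduction}) --- is not spelled out and is not obviously a ``line.'' Also, the residual in the paper's protocol stops when $2^{\lgK_{\textsf{prot}}/50}\le 4d$, so its size is $(4d)^{O(1)}$, not $\Theta(d)$; this does not break the bookkeeping but your accounting of $|\RRS|$ and communication should track $\poly(d)$ rather than $\Theta(d)$ at the base.
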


\subsection{Overview of the \RR \IPP for \pval.}
The setup for the \RR \IPP is as follows:
 \begin{itemize}
     \item The verifier has oracle access to a function $\transcMat:\{0,1\}^\lgK\mapsto \FF$ (note the notation $\transcMat$ often refers to the truth table of the function which is a string in $\FF^{2^\lgK}$), where $\FF$ is a field of characteristic $2$. The verifier also has elements $\bm j\in (\FF^\lgK)^T$ and $\bm v\in \FF^T$.
     \item $\hat \transcMat: \FF^\lgK\mapsto \FF$ is the multilinear extension of $\transcMat$.
     \item The prover and verifier want to run a $\UIP$ protocol to verify whether $\hat \transcMat(\bm j)=\bm v$ or if $\hat{\transcMat}$ is $\delta$ far from $\pval(\bm j, \bm v)$ by making only $\tilde O\left(\frac{1}{\delta}\right)$ oracle queries to the function $\transcMat$. 
 \end{itemize}
 
Since the verifier only cares about the case when the distance is at least $\delta$, it could hope to query $\transcMat$ on just $O(\frac{1}{\delta})$ uniformly random points to get a false claim. 
Unfortunately, the verifier does not begin with a claim about any subset of $\frac{1}{\delta}$ coordinates, but only global constraints imposed by $\bm j$ and $\bm v$. 

The original idea for getting around this difficulty was given in $\cite{STOC:RotVadWig13}$.
Their idea was to break $\transcMat$ into two parts: $\transcMat_0, \transcMat_1$, each corresponding to a function in $\{0,1\}^{\lgK-1}\to\FF$ such that $\transcMat_b(x)=\transcMat(b,x)$ for $b\in \{0,1\}$, and thus
\[\transcMat(x_1, \vec{x})=(1-x_1)\transcMat_0(\vec x)+x_1\transcMat_1(\vec x).
\]
This, together with the fact that $\hat{\transcMat}$ is the multilinear extension of $\transcMat$, implies that \[\hat{\transcMat}(x_1, \vec x)=(1-x_1)\hat{\transcMat}_0(\vec x)+x_1\hat{\transcMat}_1(\vec x)\]

Exploiting this observation, one can ask the prover to reduce the $\pval$ claims about $\hat{\transcMat}$ to $\hat{\transcMat}_0$ and $\hat{\transcMat}_1$. Assuming $\transcMat$ was far from the $\pval$ set, regardless of the prover's answers, we would expect a random linear combination $\transcMat_0+c\transcMat_1$ to be far from the resulting \pval claim. Let's try to understand how far from the claim we can expect $\transcMat_0+c\transcMat_1$ to be.

We know that the prover needs to change $\transcMat$ in $\delta\cdot 2^\lgK$ points to satisfy the \pval instance. Thus, the prover could answer such that it would need to change the answers in both $\transcMat_0$ and $\transcMat_1$ in $\delta\cdot 2^{\lgK-1}$ positions each. 
Therefore, the prover may need to change $\transcMat_0+c\transcMat_1$ in only $\delta\cdot 2^{\lgK-1}$ points to satisfy the $\pval$ instance if the error positions are aligned. Thus, even if the error positions are aligned, we could hope that $\transcMat_0+c\transcMat_1$ remains $(\delta 2^{m-1})/2^{m-1}=\delta$-far from satisfying the resulting $\pval$ claim.

\cite{STOC:RotVadWig13} proved that with high probability, $\transcMat_0+c\transcMat_1$ is at least $\frac{\delta}{2}$-far from the resulting \pval since they did not have any assumption that the base \pval sets themselves have any distance. 
In contrast,
\cite{BKS18} got improved upon \cite{STOC:RotVadWig13} by additionally assuming that every two elements in the \pval set are also far apart. This allowed them to associate a unique point in the set \pval to which $\transcMat_0+c\transcMat_1$ is close.

Now we have the guarantee that the reduced instance has $\delta$-distance from the PVAL set. 
However, observe that if we want to query $\transcMat_0+c\transcMat_1$ on one point then we need to query $\transcMat$ on two points. 
Even though the ``instance size'' is halved,
the ``effective instance size'', measured by the number of needed queries to $\transcMat$, 
remains the same. 
Specifically, if we have to query $\transcMat_0+c\transcMat_1$ on $q$ points, then we need to query $\transcMat$ on $2q$ points. 
To make progress, 
we would in fact like to \emph{double} the fractional distance,
i.e. maintain the absolute number of error positions in the new claim.
If this is achieved,
the number of verifier queries to $\transcMat_0+c\transcMat_1$ can be halved while retaining the desired soundness.
Through this, 
the number of queries to $\transcMat$ remains the same,
and the \pval constraint is indeed only over the folded instance, 
which is half the original size.

The above construction is indeed quite lossy since $1$ error each in $\transcMat_0$ and $\transcMat_1$ can add up to just count as $1$ error in $\transcMat_0+c\transcMat_1$ when they align in the truth tables. 
Thus, \cite{TCC:RotRot20} had the idea of instead reducing not to $\transcMat_0+c\transcMat_1$ but to $\transcMat_0+c(\transcMat_1\circ \pi)$ where $\pi$ is a sufficiently random permutation. 
This step distributes the errors,
and $\transcMat_0 + c(\transcMat_1\circ \pi)$ is indeed about $2\delta$-far from the resulting instance. 

In the next subsection, we show that this approach is compatible with not just the Hamming distance but also column distance.
Essentially, this is the same as running $\ncol$ instances of \RR in parallel with the same randomness and with a \pval claim over the $\ncol$ instances together.
Additionally, since we want to use this protocol as a sub-protocol of a $\UIP$, we prove that the resulting protocol is unambiguous.
This requires $\FF$ to be a large enough field, which is without loss of generality since we can always just replace $\FF$ with a sufficiently large extension.

\subsection{Gap Amplification Lemmas for Column Distance}
In the section above, we gave an overview of how \cite{TCC:RotRot20} used ideas from \cite{STOC:RotVadWig13} and \cite{BKS18}. 
These results are actively studied in the SNARKs literature \cite{BCIKS23,DCC:AHIV23,EC:ACFY25,EPRINT:Zeilberger24,C:ACFY24} and are known as ``divergence'' and ``correlated-agreement'' theorems. We prove the $\Deltac$ analogues of \cite{STOC:RotVadWig13} and \cite{BKS18} here now.

We begin by giving an analogous version of the \cite{STOC:RotVadWig13} batch amplification lemma. The following lemma only shows that if $\transcMat_0$ and $\transcMat_1$ are $\delta$-far from vector spaces $V$ then $\transcMat_0+c\transcMat_1$ is at least $\delta/2$-far from $V$ with high probability.\footnote{Note that absolute distances are used in place of the relative forms.}
This result might seem suboptimal, but it is important to note that we do not have any assumption on the distance of the space $V$ itself.

\begin{lemma}[\cite{STOC:RotVadWig13} gap amplification for $\Deltac$] \label{lem: Batch RVW}
    Let $\FF$ be a field of characteristic $2$ and integers $\lgK, \lncol, T>0$. Let $V \subset \FF^{\{0,1\}^{\lgK+\lncol}}$ be a non-empty $\FF$-linear vector space, and $\transcMat_0, \transcMat_1 : \{0,1\}^{\lgK+\lncol} \rightarrow \FF$ be functions. 
    Suppose \[d_i = \Deltac(\transcMat_i, V).\] Then, \[\Pr_{c\leftarrow \FF^\ast}[\Deltac(\transcMat_0+c \transcMat_1, V)<\max(d_0, d_1)/2]\le \frac{1}{|\FF|-1}.\]
\end{lemma}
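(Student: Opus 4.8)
The plan is to reduce the $\Delta_c$ statement to the single-column case and apply the Schwartz–Zippel-style argument of \cite{STOC:RotVadWig13} column by column. Recall $\Delta_c(\transcMat_i, V) = \max_{j \in [\ncol]} \Delta((\transcMat_i)_j, V_j)$, where $V_j$ is the projection of $V$ onto the $j$-th column block — but since $V$ is an arbitrary linear space, the column structure of $\Delta_c$ is slightly subtle: the ``columns'' here index blocks of coordinates of $\{0,1\}^{\lgK + \lncol}$, and $\Delta_c$ measures, in each such block, the Hamming distance from the corresponding restriction of $V$. Let $d = \max(d_0, d_1)$, and let $j^\ast$ be a column in which $\transcMat_0$ or $\transcMat_1$ (say $\transcMat_0$, WLOG) achieves $\Delta((\transcMat_0)_{j^\ast}, V_{j^\ast}) = d_0 = d$ (if instead $d_1 = d$ the argument is symmetric). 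The goal is to show that for all but a $\frac{1}{\abs{\FF} - 1}$ fraction of $c \in \FF^\ast$, the column $j^\ast$ of $\transcMat_0 + c\transcMat_1$ is more than $d/2$-far from $V_{j^\ast}$; since $\Delta_c$ is a max over columns, this lower-bounds $\Delta_c(\transcMat_0 + c\transcMat_1, V)$.

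First I would set up the single-column claim: fix the column $j^\ast$, write $u = (\transcMat_0)_{j^\ast}$ and $w = (\transcMat_1)_{j^\ast}$, with $\Delta(u, V_{j^\ast}) = d \geq d_1 \geq \Delta(w, V_{j^\ast})$. Suppose toward contradiction that for two distinct scalars $c_1 \neq c_2 \in \FF^\ast$ we have both $\Delta(u + c_1 w, V_{j^\ast}) < d/2$ and $\Delta(u + c_2 w, V_{j^\ast}) < d/2$; say $u + c_i w = v_i + e_i$ with $v_i \in V_{j^\ast}$ and $\Delta(e_i) < d/2$. Subtracting, $(c_1 - c_2) w = (v_1 - v_2) + (e_1 - e_2)$, so $w$ is $(< d)$-close to $V_{j^\ast}$ after scaling by the nonzero $(c_1 - c_2)^{-1}$; combining with $u + c_1 w = v_1 + e_1$ lets me write $u$ as an element of $V_{j^\ast}$ plus an error vector supported on $\mathrm{supp}(e_1) \cup \mathrm{supp}(e_1 - e_2)$, which has size $< d$. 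That contradicts $\Delta(u, V_{j^\ast}) = d$. Hence the ``bad'' set of $c \in \FF^\ast$ for column $j^\ast$ has size at most $1$, giving failure probability at most $\frac{1}{\abs{\FF} - 1}$.

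Then I would observe that this single bad event already controls $\Delta_c$: whenever $c$ is not bad for column $j^\ast$, $\Delta_c(\transcMat_0 + c\transcMat_1, V) \geq \Delta((\transcMat_0 + c\transcMat_1)_{j^\ast}, V_{j^\ast}) \geq d/2 = \max(d_0, d_1)/2$. So no union bound over columns is needed — the max structure means one well-chosen column suffices. A minor point to handle carefully: the characteristic-$2$ hypothesis is used only insofar as it matches the setting of the rest of the paper (e.g. the sign on $c_1 - c_2$ is irrelevant), so the argument is actually characteristic-free, but I would keep the hypothesis as stated. I would also double-check the edge case $d = 0$ (the bound is vacuous) and $\abs{\FF} = 2$ (then $\FF^\ast = \{1\}$ and the statement says the bad probability is at most $1$, trivially true).

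The main obstacle, such as it is, is purely definitional rather than mathematical: pinning down exactly what ``column'' means for the abstract linear space $V \subset \FF^{\{0,1\}^{\lgK + \lncol}}$ and its restriction $V_j$, and verifying that $\Delta_c(\transcMat, V) = \max_j \Delta(\transcMat_j, V_j)$ with $V_j$ being the coordinate-projection — this is what makes the reduction to a single column valid. Once that is fixed, the argument is the elementary linear-algebra/counting step above (essentially the $T = 1$, $\abs{\cU}$-unstructured case of the classical RVW lemma), and there is no real difficulty; the lemma is deliberately weak (only $d/2$, no distance assumption on $V$) precisely so that this short argument goes through.
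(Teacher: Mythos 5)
Your proposal has a genuine gap, and it is located exactly at the point you flagged as "slightly subtle": the identity
\[
\Delta_c(\transcMat_i, V) \;=\; \max_{j \in [\ncol]} \Delta\bigl((\transcMat_i)_j, V_j\bigr),
\]
with $V_j$ the coordinate-projection of $V$ onto column $j$, is \emph{false} for general linear $V \subset \FF^{\{0,1\}^{\lgK+\lncol}}$. Unwinding the definitions, $\Delta_c(\transcMat, V) = \min_{v\in V}\max_j \Delta(\transcMat_j, v_j)$, and the inequality $\min_v \max_j \geq \max_j \min_v$ can be strict: the right-hand side lets each column pick its own best $v^{(j)} \in V$, while the left-hand side requires a \emph{single} $v \in V$ to be simultaneously close in every column, and those per-column minimizers need not assemble into one element of $V$. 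A concrete counterexample: let $V \subset \FF^{4\times 2}$ be the diagonal $\{(a,a) : a \in \FF^4\}$ and let $\transcMat_0$ have first column $(1,0,0,0)^\top$ and second column $0$; then $V_1 = V_2 = \FF^4$, so $\max_j \Delta((\transcMat_0)_j, V_j) = 0$, but $\Delta_c(\transcMat_0, V) = 1$ since no single diagonal element agrees with both columns. The $V$'s the paper actually applies this lemma to are $\pval$ kernels, which genuinely couple the columns, so this is not a harmless corner case. Consequently there may be \emph{no} column $j^\ast$ with $\Delta((\transcMat_0)_{j^\ast}, V_{j^\ast}) = d_0$, and your single-column counting argument (which is correct on its own terms for a Hamming distance $d$ to a fixed subspace of $\FF^\K$) has nothing to anchor to.

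The paper's proof avoids the issue by never projecting to a column: it works with $\Delta_c$ directly at the matrix level. If $c' \neq c''$ are both bad, write $g' = \transcMat_0 + c'\transcMat_1$ and $g'' = \transcMat_0 + c''\transcMat_1$, each $\Delta_c$-within $d_0/2$ of some $v', v'' \in V$. For any $r$, the affine combination $rg' + (1-r)g''$ lies $\Delta_c$-within $\Delta_c(g',V) + \Delta_c(g'',V) < d_0$ of $rv' + (1-r)v'' \in V$ — this "triangle-inequality" step is the whole content, and it is valid for $\Delta_c$ because error supports only grow under linear combinations \emph{columnwise}. Choosing $r$ so that $rc' + (1-r)c'' = 0$ recovers $\transcMat_0$ and contradicts $\Delta_c(\transcMat_0, V) = d_0$. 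To repair your argument you would essentially have to redo this matrix-level step; the reduction to one column cannot be made to work as stated.
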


\begin{proof}
    We emulate the proof given in \cite{STOC:RotVadWig13}. For the sake of contradiction assume that \[\Pr[\Deltac(\transcMat_0+c \transcMat_1, V)<d_0/2]> \frac{1}{|\FF|}.\]

    Call a $c$ \textit{bad} if $\Deltac(\transcMat_0+c \transcMat_1, V)<d_0/2$. Now, if the theorem is false, then there exist $2$ bad values of $c$, let these be $c'$ and $c''$.

    We would like to use triangle inequality to say that if $g'=\transcMat_0+c'\transcMat_1$, $g''=\transcMat_0+c''\transcMat_1$ are both less than $\frac{d_0}{2}$ far from $V$ then their linear combinations are less than $\delta$-far as well. In particular, for any $r \in \FF$, we have that \[\Deltac(r\cdot g' + (1-r)\cdot g'', V)\le \Deltac(g, V)+\Deltac(g', V)<d_0\]
    
    Now, if $rc'+(1-r)c''=0$ then $r\cdot g'+(1-r)\cdot g'' = \transcMat_0$ which would imply that $\Deltac(\transcMat_0, V)<d_0$ which is a contradiction. Thus, there is no $r\in \FF$ such that $r(c'-c'')=-c''$. But, then we must have $c'=c''$ and there is at most one \textit{bad} value of $c$ as desired.
\end{proof}

\subsubsection{Correlated-Agreement under Column Distance}
\cite{BKS18} improves the \cite{STOC:RotVadWig13} analysis by observing that if we have an additional assumption about distance on the base vector space $V$ then we could get a better result. Their ideas also provide an improved gap amplification lemma which served as the point upon which \cite{TCC:RotRot20} built and improved. We prove the version of \cite{BKS18} in our $\Deltac$ setting. We later use this directly when we prove the \RR gap amplification lemma over $\Deltac$ distance.

\cite{BKS18} observed that if we could assume that if $\transcMat_0$ is close to $v_0$ and $\transcMat_1$ is close to $v_1$ where $v_0, v_1\in V$ where $V$ is a vector space then we could assume that the closest point to $\transcMat_0+c \transcMat_1$ is $v_0+c v_1$ as long as $V$ has large distance. As soon as we have such an assumption, we can immediately improve the \cite{STOC:RotVadWig13} analysis.

Similarly to what is done in \cite{TCC:RotRot20}, 
instead of stating the main \cite{BKS18} result on gap amplification directly, 
we will first prove a ``correlated-agreement'' lemma. 
Since the plan is to fold the functions as $\transcMat_0+c(\transcMat_1\circ \pi)$, we need to understand how the errors in the sum look more carefully.

\cite{BKS18} observed that if $V$ is some vector space with large distance and $\transcMat_0, \transcMat_1$ are different points in the ambient space of $V$, then if many points along the line $\transcMat_0+c \transcMat_1$ are close to $V$ then not only are both $\transcMat_0$ and $\transcMat_1$ close to $V$ but the number of coordinates that both of them simultaneously agree with elements of $V$ on is also large.

\begin{lemma}[Correlated Agreement for $\Deltac$] \label{lem: Batch BKS}
    Let $V\subseteq \FF^{\{0,1\}^{\lgK+\log \ncol}}$ be a non-empty vector space over $\FF$ and $\Deltac(V)=\Upsilon$.  For every $\varepsilon, \delta >0$ and $\transcMat_0, \transcMat_1 \in \FF^{\{0,1\}^{\lgK +\lncol}}$ such that \begin{itemize}
        \item $\delta(1-\varepsilon) \le \Upsilon/3$
        \item $|\{c \mid \Deltac(\transcMat_0+c \transcMat_1, V)<\delta(1-\epsilon)\}|>1/\epsilon$
    \end{itemize}
    \ifFOCS
    there exist $v_0, v_1 \in V$ such that for all $t\in \{0,1\}^{\log \ncol}$, 
    \begin{align*}
    &|\{i\mid (\transcMat_{0}[i,t]=v_{0}[i,t])\wedge (\transcMat_1[i,t] = v_1[i,t])\}|\\
    &\ge (1-\delta)2^\lgK
    \end{align*}
    \else
    there exist $v_0, v_1 \in V$ such that for all $t\in \{0,1\}^{\log \ncol}$, \[|\{i\mid (\transcMat_{0}[i,t]=v_{0}[i,t])\wedge (\transcMat_1[i,t] = v_1[i,t])\}|\ge (1-\delta)2^\lgK\]
    \fi
\end{lemma}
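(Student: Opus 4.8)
The plan is to reduce the $\Deltac$ statement to $\ncol$ coupled instances of the ordinary (Hamming) correlated-agreement argument of \cite{BKS18}, but in a way that extracts a \emph{single pair} $v_0, v_1 \in V$ that works simultaneously for every column $t \in \{0,1\}^{\log\ncol}$. First I would unpack the hypothesis: there is a set $B$ of ``good'' scalars with $|B| > 1/\varepsilon$, and for each $c \in B$ we have $\Deltac(\transcMat_0 + c\transcMat_1, V) < \delta(1-\varepsilon)$, meaning there is some $v^{(c)} \in V$ with $\Delta(\transcMat_0[\cdot,t] + c\transcMat_1[\cdot,t], v^{(c)}[\cdot,t]) < \delta(1-\varepsilon)2^\lgK$ for \emph{every} column $t$ simultaneously. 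For a fixed column $t$, define the agreement set $A_c^t := \{ i : (\transcMat_0 + c\transcMat_1)[i,t] = v^{(c)}[i,t] \}$, so $|A_c^t| > (1-\delta(1-\varepsilon))2^\lgK$. The key counting step is the classical one from \cite{STOC:RotVadWig13,BKS18}: intersecting the agreement sets over all $c \in B$, a Bollob\'as-type / inclusion-exclusion bound gives that $A^t := \bigcap_{c \in B} A_c^t$ has size at least $(1 - |B|\cdot\delta(1-\varepsilon))2^\lgK$\,---\,but this is too weak; instead one uses that $|B| > 1/\varepsilon$ together with averaging to conclude $|A^t| \ge (1-\delta)2^\lgK$. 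Concretely, pick $1/\varepsilon$ distinct scalars $c_1,\dots,c_{1/\varepsilon} \in B$; on the common agreement set $\bigcap_j A_{c_j}^t$, the $1/\varepsilon$ affine functions $\transcMat_0[\cdot,t] + c_j \transcMat_1[\cdot,t]$ all agree with elements of $V$, and the union of the complements has size $< (1/\varepsilon)\cdot\delta(1-\varepsilon)2^\lgK = \delta(1-\varepsilon)/\varepsilon \cdot 2^\lgK$; the correct bound comes from choosing the number of sampled scalars as roughly $\varepsilon^{-1}$ and noting $\delta(1-\varepsilon)\cdot(\text{number})/2^\lgK$ stays below $\delta$ after the optimization. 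I would follow the \cite{BKS18} argument exactly here, taking care that the distance hypothesis $\delta(1-\varepsilon) \le \Upsilon/3$ guarantees the $v^{(c)}$ restricted to column $t$ is \emph{unique} (two elements of $V$ agreeing on more than $(1-\Upsilon/2^\lgK)$ coordinates must be equal, since $\Deltac(V) = \Upsilon$), which is what lets us line up the $v^{(c)}$ across different $c$ coherently.

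Next, the step that is genuinely new relative to \cite{BKS18}\,---\,and the main obstacle\,---\,is upgrading from ``for each column $t$ there exist $v_0^t, v_1^t$'' to ``there exist column-independent $v_0, v_1 \in V$.'' Here I would use the linear-algebraic structure: once we know that on the large set $A^t$ (for the fixed column $t$) the pair $(\transcMat_0, \transcMat_1)$ agrees with $(v_0^t, v_1^t)$, and that $v_0^t + c v_1^t$ is forced to be the unique close codeword $v^{(c)}|_t$, I would argue that $v_0^t$ is actually determined \emph{as a column of a single element $v_0 \in V$}\,---\,because $V \subseteq \FF^{\{0,1\}^{\lgK+\log\ncol}}$ is a space of functions on the \emph{product} index set, the restriction map to column $t$ and the coherence of the $v^{(c)}$ data (which is itself a single element of $V$ per $c$, not per column) pins down $v_0, v_1$ globally. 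In other words, the unique $v^{(c)} \in V$ close to $\transcMat_0 + c\transcMat_1$ in $\Deltac$ is a genuine element of $V$ (that is exactly the content of $\Deltac(\transcMat_0+c\transcMat_1,V) < \delta(1-\varepsilon)$ combined with $\Deltac(V) = \Upsilon$ giving uniqueness), so writing $v^{(c)} = v_0 + c v_1$ by picking two distinct good scalars $c', c''$ and solving the $2\times 2$ linear system yields $v_0 = v^{(c')} + \frac{c'}{c'-c''}(v^{(c'')} - v^{(c')})$\,-ish in $V$ directly, with no column dependence. Then the per-column agreement bound $|A^t| \ge (1-\delta)2^\lgK$ applies verbatim to this global $(v_0, v_1)$.

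So the proof structure I would write is: (i) invoke $\Deltac(V) = \Upsilon \ge 3\delta(1-\varepsilon)$ to get, for each good $c$, a \emph{unique} $v^{(c)} \in V$ with $\Deltac(\transcMat_0 + c\transcMat_1, v^{(c)}) < \delta(1-\varepsilon)$; (ii) fix any two distinct good scalars $c', c''$, solve the linear system over $\FF$ to define $v_0, v_1 \in V$ (using $\mathrm{char}\,\FF = 2$ only insofar as it is inherited from the ambient setup; the argument works over any field once $c' \ne c''$), and check that by uniqueness $v^{(c)} = v_0 + c v_1$ for \emph{all} good $c$; (iii) for each column $t$, intersect the column-$t$ agreement sets of the (at least $1/\varepsilon$) affine functions $\transcMat_0[\cdot,t] + c\transcMat_1[\cdot,t]$ with $v^{(c)}[\cdot,t] = v_0[\cdot,t] + c v_1[\cdot,t]$, and run the \cite{BKS18} counting to conclude the common agreement set has size $\ge (1-\delta)2^\lgK$; since this common agreement set is exactly $\{i : \transcMat_0[i,t] = v_0[i,t] \wedge \transcMat_1[i,t] = v_1[i,t]\}$ (any $i$ in all the $1/\varepsilon$ agreement sets makes $\transcMat_0[i,t] + c\transcMat_1[i,t] = v_0[i,t] + c v_1[i,t]$ for $\ge 2$ values of $c$, hence $\transcMat_0[i,t]=v_0[i,t]$ and $\transcMat_1[i,t]=v_1[i,t]$), we are done. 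The main obstacle, as noted, is making step (ii) airtight\,---\,i.e.\ that the ``close codeword'' really is a single column-independent element of $V$ and that uniqueness holds at the level of $\Deltac$, not just per column\,---\,and checking that the distance condition $\delta(1-\varepsilon) \le \Upsilon/3$ is exactly strong enough to license all the uniqueness claims (one needs $2\delta(1-\varepsilon) < \Upsilon$ for the triangle-inequality uniqueness; the factor $3$ leaves slack for the intersection step).
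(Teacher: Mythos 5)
Your outline (i)--(iii) matches the structure of the paper's proof: use $\Deltac(V) = \Upsilon \ge 3\delta(1-\varepsilon)$ to pin the near codewords $v^{(c)}$ to a single affine line $v_0 + c v_1$ in $V$, then do a per-column count. But the counting step, as you actually wrote it, does not close.

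The gap is in step (iii). Your ``concretely'' paragraph is a union bound over the $|S| > 1/\varepsilon$ bad sets, giving $|T_t| \ge (1 - |S|\cdot\delta(1-\varepsilon))2^\lgK \approx (1 - \delta(1-\varepsilon)/\varepsilon)2^\lgK$, which is \emph{weaker} than the target $(1-\delta)2^\lgK$ for every $\varepsilon < 1/2$ and becomes vacuous as $\varepsilon \to 0$; no ``optimization'' of the number of sampled scalars repairs it. The argument the paper (following \cite{BKS18}) actually runs exploits the affine structure position-by-position rather than union-bounding sets: a position $i \notin T_t$ can satisfy $\transcMat_0[i,t]-v_0[i,t] = c(v_1[i,t]-\transcMat_1[i,t])$ for \emph{at most one} value of $c$ in all of $\FF$ (none if $\transcMat_1[i,t]=v_1[i,t]$ and $\transcMat_0[i,t]\ne v_0[i,t]$; exactly one otherwise). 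So summing over $i \notin T_t$ and $c \in S$, the number of ``fixed'' $(i,c)$-pairs is $\le 2^\lgK - |T_t|$, and by averaging some $c^* \in S$ disagrees on at least $(2^\lgK - |T_t|)(1 - 1/|S|)$ positions. That count is in turn bounded by $\Deltac(\transcMat_0 + c^*\transcMat_1,\, v_0 + c^*v_1) < \delta(1-\varepsilon)\cdot 2^\lgK$, and since $|S| > 1/\varepsilon$ gives $1-1/|S| > 1-\varepsilon$, the $(1-\varepsilon)$ factors cancel and one gets $|T_t| > (1-\delta)2^\lgK$. Your word ``averaging'' points at the right tool, but the concrete inequality you derived is not the one that is needed.

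Two smaller points. In step (ii), ``by uniqueness $v^{(c)} = v_0 + cv_1$ for all good $c$'' hides where the factor $3$ is actually used: one does not simply invoke uniqueness of the nearest codeword, but rather picks $c_1,c_2,c_3 \in S$, forms $v' = \frac{(c_2-c_3)v^{(c_1)} + (c_3-c_1)v^{(c_2)}}{c_2-c_1}$, applies the triangle inequality to get $\Deltac(v', \transcMat_0+c_3\transcMat_1) < 2\delta(1-\varepsilon)$, adds $\Deltac(v^{(c_3)}, \transcMat_0+c_3\transcMat_1) < \delta(1-\varepsilon)$, and concludes $\Deltac(v', v^{(c_3)}) < 3\delta(1-\varepsilon) \le \Upsilon$ forces $v' = v^{(c_3)}$. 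Also, the excursion in your second paragraph about upgrading per-column $v_0^t,v_1^t$ to global $v_0,v_1$ is a non-issue: since each $v^{(c)}$ is picked as a single element of $V$ (a function on the whole product index set), there are never per-column objects to reconcile; the paper's proof never encounters this.
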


\begin{proof}
    Let $S\subseteq\{c \mid \Deltac(\transcMat_0+c \transcMat_1, V)<\delta(1-\epsilon)\}$ such that $|S|=1+\lfloor1/\varepsilon\rfloor$. Now, for each $c \in S$, let $v^{(c)}\in V$ be such that $\Deltac(\transcMat_0 + c \transcMat_1, v^{(c)})<\delta(1-\epsilon)$.

    Observe that now if all these points $(c, v^{(c)})$ lied on a line then we can get the result. In particular, we can write $v^{(c)}=v_0 + cv_1$ for some fixed $v_0, v_1$ independent of $c$. And, we would get
    \[\Deltac(\transcMat_0+c \transcMat_1, v_0+c v_1)<\delta (1-\varepsilon)\]

     Rearranging, we get for every $c \in S$, \[\Deltac(\transcMat_0-v_0, c(v_1-\transcMat_1))<\delta(1-\epsilon)\]

     Now, observe that for any $i\in \{0,1\}^\lgK, t\in \{0,1\}^{\log \ncol}$ if $\transcMat_0[i,t] \ne v_0[i,t]$ then there is at most one value of $c$ for which $\transcMat_0[i,t]-v_0[i,t] = c(v_1[i,t]-\transcMat_1[i,t])$ and similarly if $\transcMat_1[i,t]\ne v_1[i,t]$. 
     
     Thus, fix any $t\in \{0,1\}^{\log \ncol}$ and let \[T_t = \{i \mid (\transcMat_0[i,t] = v_0[i,t])\wedge (\transcMat_1[i,t] = v_1[i,t])\}\] we get that there exists a $c \in S$ such that 
     \ifFOCS
     \begin{align*}
     &\frac{|\{i\mid (\transcMat_0[i,t]-v_0[i,t]) \ne c(v_1[{i,t}]-\transcMat_{i,t})\}|}{2^\lgK}\\
     &\ge \left(1-\frac{|T_t|}{2^{\lgK}}\right)\left(1-\frac{1}{|S|}\right)
     \end{align*}
     \else
     \[\frac{|\{i\mid (\transcMat_0[i,t]-v_0[i,t]) \ne c(v_1[{i,t}]-\transcMat_{i,t})\}|}{2^\lgK}\ge \left(1-\frac{|T_t|}{2^{\lgK}}\right)\left(1-\frac{1}{|S|}\right)\]  
     \fi

     However, since, for any $T$, we have that: 
     \ifFOCS
     \begin{align*}
     &\Deltac(\transcMat_0-v_0, c(v_1-\transcMat_1))\\
     &\ge \frac{|\{i\mid (\transcMat_0[i,t]-v_0[{i,t}]) \ne c(v_1[{i,t}]-\transcMat_1[{i,t}])\}|}{2^\lgK}
     \end{align*}
     \else
     \[\Deltac(\transcMat_0-v_0, c(v_1-\transcMat_1))\ge \frac{|\{i\mid (\transcMat_0[i,t]-v_0[{i,t}]) \ne c(v_1[{i,t}]-\transcMat_1[{i,t}])\}|}{2^\lgK}\]
     \fi

    Thus, 
    \ifFOCS
    \begin{align*}
    &\left(1-\frac{|T|}{2^\lgK}\right)\left(1-\frac1{|S|}\right)\le \delta(1-\varepsilon)\\
    &\implies 1-\frac{|T|}{2^\lgK} < \delta \implies 2^\lgK(1-\delta)< |T|
    \end{align*}
    \else
    \[\left(1-\frac{|T|}{2^\lgK}\right)\left(1-\frac1{|S|}\right)\le \delta(1-\varepsilon)\implies 1-\frac{|T|}{2^\lgK} < \delta \implies 2^\lgK(1-\delta)< |T|\]
    \fi

     The final result is since $|S|>\frac{1}{\varepsilon}$. 
    
     Thus, we just need to show that indeed the $(c, v^{(c)})$ lie on a line. This is feasible to show because we additionally assume that $V$ has large distance. We will now use the same triangle inequality trick to show that if all $v^{(c)}$ are not on a line then there are two distinct points of $V$ which are $<\Upsilon$ far from each other. 
    
    \noindent\textbf{Claim. } All the points $(c, v^{(c)})$ lie on a line.
    \begin{proof}
    Observe that the above claim follows trivially when $\epsilon >1/2$ since then $|S|=2$. 

    We are interested in the regime when $\varepsilon \le 1/2$. Consider any $c_1, c_2, c_3 \in S$. Then, we can take a linear combination of $(\transcMat_0+c_1\transcMat_1)$ and $(\transcMat_0+c_2\transcMat_1)$ to get $(\transcMat_0+c_3\transcMat_1)$. 
    Observe that 
    \ifFOCS
    \begin{align*}
    &\Deltac\left(\frac{(c_2-c_3)(\transcMat_0+c_1\transcMat_1)+(c_3-c_1)(\transcMat_0+c_2\transcMat_1)}{c_2-c_1}, \right.\\
    &\left. \vphantom{\frac{(c_2-c_3)(\transcMat_0+c_1\transcMat_1)+(c_3-c_1)(\transcMat_0+c_2\transcMat_1)}{c_2-c_1}}\frac{(c_2-c_3)v^{(c_1)}+(c_3-c_1)v^{(c_2)}}{c_2-c_1}\right)\\
    &<2\delta(1-\epsilon)
    \end{align*}
    \else
    \[\Deltac\left(\frac{(c_2-c_3)(\transcMat_0+c_1\transcMat_1)+(c_3-c_1)(\transcMat_0+c_2\transcMat_1)}{c_2-c_1}, \frac{(c_2-c_3)v^{(c_1)}+(c_3-c_1)v^{(c_2)}}{c_2-c_1}\right)<2\delta(1-\epsilon)\]
    \fi
    Let $v'=\frac{(c_2-c_3)v^{(c_1)}+(c_3-c_1)v^{(c_2)}}{c_2-c_1}$. 
    Thus, by the above we have \[\Deltac(v', \transcMat_0+ c_3\transcMat_1)<2\delta(1-\epsilon)\] and $\Deltac(v^{(c_3)}, \transcMat_0+ c_3\transcMat_1)<\delta(1-\epsilon)$. 
    Finally, by the triangle inequality, we have that: \[\Deltac(v', v^{(c_3)})<3\delta(1-\varepsilon)\le \Upsilon\implies v'=v^{(c_3)}.\]
    Hence, $v'=v^{(c_3)}$ so $v^{(c_1)}, v^{(c_2)},$ and $v^{(c_3)}$ are indeed collinear as desired. 
\end{proof}
\end{proof}

\subsubsection{An \RR-Style Gap Amplification}

Now, we are ready to prove the \RR style gap amplification lemma for $\Deltac$.

\cite{TCC:RotRot20}'s idea is to amplify distance using not just $\transcMat_0+c\transcMat_1$ as in \cite{STOC:RotVadWig13} but instead to work with $\transcMat_0+c(\transcMat_1\circ \pi)$ where $\pi$ is an efficient to compute permutation. If we assume that $\pi\sim \Pi$ is $\eta$-close to being $d-$wise independent then we can hope that the errors in $\transcMat_0$ and $\transcMat_1 \circ \pi$ do not align with very high probability. The issue here is that even though we have that $\Deltac(\transcMat_0\|\transcMat_1, \pval(j,v))\ge \delta$, but that does not translate into any immediate claim about distance of $\transcMat_1\circ \pi$ with any immediate new $\pval$ claim. 

Thus, we need to add a \GKR based protocol first to create claims about $\pval(\transcMat_1\circ \tau)$ where $\tau=\pi \times I$\footnote{It implements $\pi$ in each column of the matrix as in \Cref{fig:a-matrix-permuted}.}. The permutation $\pi$ is sampled as $\pi \sim \Pi$ where $\Pi$ is an $\eta$-almost $d$-wise independent distribution as defined in \Cref{thm: random reversible circuits are independent}.

\begin{definition}\label{def: GKR inside RR circuits}
    $\Phi_{\pi, \bm j, v}$ is the circuit that takes in an input $\transcMat$ and computes whether \[\left(\widehat{\transcMat \circ (\pi\times I})\right)(\bm j)= v\]
    Here, we assume that $\pi$ has a succinct description.
\end{definition}

\begin{lemma}[Complexity of $\Phi_{\pi, \bm j, \bm v}$]
    The circuit $\Phi_{\pi, \bm j, \bm v}$ has size $\tilde{O}(T\cdot (2^{\lgK+\lncol})\cdot \log|\FF|)$ and depth $\poly(\lgK+\lncol, \log(T\log|\FF|))$.
\end{lemma}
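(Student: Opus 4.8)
The plan is to realise $\Phi_{\pi,\bm j,\bm v}$ as the composition of three conceptually independent pieces, with essentially all of the cost coming from one of them. Write elements of $\{0,1\}^{\lgK+\lncol}$ as pairs $(i,t)$ with $i\in\{0,1\}^{\lgK}$ and $t\in\{0,1\}^{\lncol}$, so that $(\transcMat\circ(\pi\times I))(i,t)=\transcMat(\pi(i),t)$. Since the multilinear extension is an $\FF$-linear operator on $\FF^{2^{\lgK+\lncol}}$ and evaluation at a fixed point is a linear functional, for each $s\in[T]$ the value $\widehat{\transcMat\circ(\pi\times I)}(\bm j_s)$ is a fixed $\FF$-linear combination of the entries of $\transcMat$, namely
\[
\widehat{\transcMat\circ(\pi\times I)}(\bm j_s)=\sum_{(i',t)\in\{0,1\}^{\lgK+\lncol}}\transcMat(i',t)\cdot\chi_{(\pi^{-1}(i'),t)}(\bm j_s),
\]
where $\chi_b(\bm y)=\prod_{\ell=1}^{\lgK+\lncol}\bigl(b_\ell y_\ell+(1-b_\ell)(1-y_\ell)\bigr)$ is the Lagrange basis polynomial at $b$, and the coefficients depend only on the hardwired data $\pi$ and $\bm j$, not on $\transcMat$.

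First I would construct, for each $s$, the full Lagrange table $\{\chi_b(\bm j_s)\}_{b\in\{0,1\}^{\lgK+\lncol}}$ by the standard layered ``doubling'' circuit that appends one coordinate of $\bm j_s$ at a time and multiplies each current entry by the appropriate linear factor; this uses $O\bigl((\lgK+\lncol)\,2^{\lgK+\lncol}\bigr)$ field operations and has arithmetic depth $O(\lgK+\lncol)$. Applying $\pi^{-1}$ to the first-$\lgK$ index block is a fixed permutation of $2^{\lgK+\lncol}$ wires, so it contributes no gates and no depth. Then $\widehat{\transcMat\circ(\pi\times I)}(\bm j_s)$ is the inner product of the permuted table with $\transcMat$: $2^{\lgK+\lncol}$ multiplications followed by a balanced addition tree, i.e. $O(2^{\lgK+\lncol})$ field operations and arithmetic depth $O(\lgK+\lncol)$. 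Running all $T$ of these in parallel, then testing each result against $v_s$ and taking a $T$-way $\AND$, uses $\tilde O\bigl(T\cdot 2^{\lgK+\lncol}\bigr)$ field operations, $O(T)$ extra gates, and arithmetic depth $O(\lgK+\lncol+\log T)$. Expanding each field operation and each equality test into a Boolean sub-circuit via the constructibility/FFT assumption on $\FF$ (each costs $\tilde O(\log\abs{\FF})$ gates, and, since arithmetic over a constructible field lies in $NC^1$, depth $\poly(\log\log\abs{\FF})$) yields $\size(\Phi_{\pi,\bm j,\bm v})=\tilde O\bigl(T\cdot 2^{\lgK+\lncol}\cdot\log\abs{\FF}\bigr)$ and $\depth(\Phi_{\pi,\bm j,\bm v})=\poly(\lgK+\lncol,\log(T\log\abs{\FF}))$, as claimed; if one instead keeps $\Phi_{\pi,\bm j,\bm v}$ as an arithmetic circuit over $\FF$, the depth bound $O(\lgK+\lncol+\log T)$ is already immediate.

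It remains to note log-space uniformity: the doubling and inner-product gadgets are highly regular, the entries of $\bm j$ and $\bm v$ are copied verbatim from the description, and the only index computation needed to lay out the wiring is evaluating $\pi^{-1}$, which is cheap since by \Cref{thm: random reversible circuits are independent} $\pi$ (and hence $\pi^{-1}$) is a reversible circuit with $\tilde O(d\lgK\log(1/\eta))$ width-$2$ gates, simulable gate by gate in space logarithmic in the circuit size. The main obstacle --- really the only non-mechanical point --- is to handle the row permutation $\pi\times I$ for free: the right move is to keep $\pi$ out of the circuit's \emph{computation} entirely and fold it into the (still log-space uniform) \emph{wiring} of the Lagrange table, which is legitimate precisely because $\pi$ is a fixed, efficiently describable permutation; everything else is the routine accounting of field-arithmetic size and depth against the paper's $\tilde O$ and $\poly$ conventions.
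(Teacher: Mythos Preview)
Your argument is correct and follows essentially the same approach as the paper's (very terse) proof: apply the permutation cheaply, then compute the $T$ multilinear evaluations via the Lagrange basis (what the paper calls ``the sum of the indicator variables''), and combine. The only cosmetic difference is that the paper applies $\pi\times I$ directly to the input $\transcMat$ as a single linear-size layer, whereas you dualize and apply $\pi^{-1}$ to the index of the precomputed Lagrange table; both are valid and yield the same bounds, and your additional remarks on log-space uniformity of the wiring via the short reversible-circuit description of $\pi$ are a welcome elaboration that the paper leaves implicit.
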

\begin{proof}
    The permutation is only $1$ layer of the circuit and size linear in the input. Beyond that, it is verifying $T$ distinct multilinear evaluations which can be evaluated by simply taking the sum of the indicator variables. Thus, the result follows. 
\end{proof}

Now, consider the following sub-protocol:
\begin{algorithm}
  \setstretch{1.1}
  \caption{$(\cP_{\Pi}, \cV_{\Pi})$: GKR to get claims about $\transcMat\circ \tau$.}
  \label{alg:GKRinsideRR}
    \small
    \textbf{Input Parameters:} $\lgK$, $\ncol$, $T \in \NN$ and $\FF$ is a field.\\
    \textbf{Input:} $\pi \in \Pi$, a permutation, $\bm j \in \left(\FF^{\lgK+\log \ncol}\right)^T$ and $\bm v\in \FF^T$.\\
    \textbf{Prover Auxiliary Input:} $\transcMat \in \bin^{\lgK\times \ncol} \to \FF$.\\
    \textbf{Ingredients:} 
    \begin{itemize}[topsep=-0.02em,itemsep=-0.2em]
        \item \emph{The \GKR protocol} $\prot{\GKR}$ (\Cref{lem:GKR}) for checking $\C(\transcMat) = 1$.

        The prover and verifier gets the description $\lrag \C$,
        while the prover additionally gets the input $\transcMat$.
        Both parties additionally takes in the parameter $\pvalT \in \NN$,
        and in the end outputs $\pvalU \in (\FF^{\lgK + \log \ncol})^\pvalT, \pvalv \in \FF^\pvalT$.
    \end{itemize}
    \textbf{Output:} $\bm j' \in \left(\FF^{\lgK+\log \ncol}\right)^T, v'\in \FF^T$.

    \begin{algorithmic}[1]
        \State Let $\tau = \pi \times I$. 
            Let $\lrag{\Phi_{\pi, \bm j, \bm v}}$ be the description of the circuit $\Phi_{\pi, \bm j, \bm v}$ that takes in a function $\transcMat$ and checks whether $\widehat{(\transcMat\circ \tau)}(\bm j)=\bm v$.
        \State Run the \GKR protocol $T$ times on $\lrag{\Phi_{\pi, \bm j, \bm v}}$ and prover auxiliary input $\transcMat$ and output the resulting $\bm j', \bm v'$.
    \end{algorithmic}
\end{algorithm}

We can directly use this sub-protocol to create claims about $\transcMat_1\circ \tau$ and thus use \cite{TCC:RotRot20}'s idea of folding the function $\transcMat$ as $\transcMat_0+c(\transcMat_1\circ \tau)$. 

\begin{lemma}[\RR gap amplification for $\Deltac$] \label{lem: Batch RR}
    Let $\lgK, \lncol, T, \secpar, d>0$ be integers and $0<4d\le 2^{\lgK/50}$. Let $\FF$ be a field of characteristic $2$. For $\bm j \in \left(\FF^{\lgK+\lncol}\right)^T$ suppose that has minimum column distance $\ge 4d$.    
    Let $\bm v_0, \bm v_1\in \FF^T$ be vectors such that both $\pval(\bm j, \bm v_0)$ and $\pval(\bm j, \bm v_1)$ are non-empty. Let $\transcMat_0, \transcMat_1 : \{0,1\}^{\lgK+\lncol} \rightarrow \FF$ be functions.  
    Suppose \[d\le\Deltac\left(\begin{bmatrix}
        \transcMat_0\\
        \transcMat_1
    \end{bmatrix}, \begin{bmatrix}
        \pval(\bm j,\bm v_0)\\
        \pval(\bm j,\bm v_1)
    \end{bmatrix}\right)\] Consider a uniformly random $\pi\in \Pi_{\lgK,4d,2^{-\secpar}}$ where $\Pi$ is the set of \textit{permutations} on $\{0,1\}^\lgK$ defined in \Cref{subsec: affine perms}. Let $\tau = \pi \times I$ and $\bm j'\in \left(\FF^{m+\lncol}\right)^T, \bm v'\in \FF^T$ such that $\bm j', \bm v'$ is outputted by the \Cref{alg:GKRinsideRR} protocol on input $\transcMat_1\circ \tau, \pi^{-1},\bm j, v_1$ and the same additional parameters i.e. $\lgK, \lncol, T$. 

    Now, for $c\leftarrow \FF$ define: \[g=\transcMat_0+c(\transcMat_1\circ \tau).\] 
    
    Let $S_{\pi}\subseteq \FF^{2t}$ be the set of pairs of vectors $(u_0, u_1)$ such that the sets $\pval((\bm j, \bm j'), (\bm v_0, u_0))$ and $\pval((\bm j, \bm j', (u_1, \bm v'))$ are not empty. For $(u_0, u_1)\in S_{\pi}$, define 
    \ifFOCS
    \begin{align*}
    &\delta_{\pi, j', v', (u_0, u_1), c}\\
    &=\Deltac(g, \pval((\bm j, \bm j'), (v_0, u_0)+c(u_1, v')))
    \end{align*}
    \else
    \[\delta_{\pi, j', v', (u_0, u_1), c}=\Deltac(g, \pval((\bm j, \bm j'), (v_0, u_0)+c(u_1, v'))\] 
    \fi
    Then for every $\gamma < (2/d, 1)$ and $\epsilon \in (0,1)$, taking \[\delta'= d(1 -\gamma)(1-\varepsilon),\] we have that 
    \ifFOCS
    \begin{align*}
    &\Pr_{\pi \leftarrow \Pi, \Cref{alg:GKRinsideRR}}
    \begin{bmatrix}
    \begin{aligned}
        &\exists (\bm {u_0}, \bm {u_1}) \in S_{\pi} \\
        &\text{ s.t. } \Pr_{c_0, c_1\leftarrow \FF}\left[\delta_{\pi, (u_0, u_1), c, j'}<\delta'\right]\\
        &\quad > \frac{1}{\varepsilon |\FF|}+\frac{1}{|\FF|}
    \end{aligned}
    \end{bmatrix}\\
    &<2^{-\secpar}+\exp(-\gamma d/6)+\err_{\GKR}
    \end{align*}
    \else
    \begin{align*}
    \Pr_{\pi \leftarrow \Pi, \Cref{alg:GKRinsideRR}}\left[\exists (\bm {u_0}, \bm {u_1}) \in S_{\pi} \text{ s.t. } \Pr_{c_0, c_1\leftarrow \FF}\left[\delta_{\pi, (u_0, u_1), c, j'}<\delta'\right]> \frac{1}{\varepsilon |\FF|}+\frac{1}{|\FF|}\right]\\<2^{-\secpar}+\exp(-\gamma d/6)+\err_{\GKR}
    \end{align*}
    \fi
    where $\err_{\GKR}$ refers to the failure probability of \Cref{alg:GKRinsideRR} as described in \Cref{lem:RVW}. 
    Thus, we have that 
    \ifFOCS
    \begin{align*}
    &\err_{\GKR}\le \left(\epsilon_{\GKR}(D, \log S, \abs{\FF}) \vphantom{\binom{\K}{4d} \abs{\FF}^{4d}}\right.\\
    &\quad \left. + (\epsilon_{\GKR}(D, \log S, \abs{\FF}))^T \cdot\left(\binom{\K}{4d} \abs{\FF}^{4d}\right)^\ncol\right)
    \end{align*}
    \else
    \[\err_{\GKR}\le \left(\epsilon_{\GKR}(D, \log S, \abs{\FF}) + (\epsilon_{\GKR}(D, \log S, \abs{\FF}))^T \cdot\left(\binom{\K}{4d} \abs{\FF}^{4d}\right)^\ncol\right)\]
    \fi
    where $D$ and $S$ are the depth and size of $\Phi_{\pi, \bm j, v}$ respectively.
\end{lemma}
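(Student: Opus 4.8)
The plan is to combine three ingredients: the $\Delta_c$-distance-preserving \GKR reduction of \Cref{lem:RVW} (which is exactly what \Cref{alg:GKRinsideRR} runs), the $\Delta_c$ correlated-agreement lemma (\Cref{lem: Batch BKS}), and the almost-$d$-wise-independent permutation intersection bound (\Cref{thm: random set intersection is tiny}). First I would unwind the sub-protocol: the circuit $\Phi_{\pi^{-1},\bm j,\bm v_1}$ accepts a matrix $x$ exactly when $x\circ(\pi^{-1}\times I)\in\pval(\bm j,\bm v_1)$, hence it accepts $\transcMat_1\circ\tau$ iff $\transcMat_1\in\pval(\bm j,\bm v_1)$, and its set of accepting matrices is the affine coset $\cS:=\pval(\bm j,\bm v_1)\circ(\pi\times I)$, whose minimum $\Delta_c$-distance is $\ge 4d$ since permuting rows within each column preserves per-column Hamming distances and $\pval(\bm j,\bm 0)$ has minimum $\Delta_c$-distance $\ge 4d$ by the minimum-column-distance hypothesis on $\bm j$. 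In particular the precondition of \Cref{lem:RVW} (at most one point of $\cS$ in the $\Delta_c$-$d$-ball around the input) holds, so running \Cref{alg:GKRinsideRR} with $T$ parallel \GKR repetitions outputs $(\bm j',\bm v')$ such that, except with probability $\err_{\GKR}$ over the \GKR coins, $\Delta_c(\transcMat_1\circ\tau,\pval(\bm j',\bm v'))\le d$ iff $\Delta_c(\transcMat_1\circ\tau,\cS)\le d$, and whenever this distance is below $d$ the unique closest point of $\pval(\bm j',\bm v')$ to $\transcMat_1\circ\tau$ equals the unique closest point $q_1\circ\tau$ of $\cS$, where $q_1\in\pval(\bm j,\bm v_1)$ is the unique closest point to $\transcMat_1$. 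Note $\Delta_c(\transcMat_1\circ\tau,\cS)=\Delta_c(\transcMat_1,\pval(\bm j,\bm v_1))$ and that $q_1$ does not depend on $\pi$.

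Next I would condition on this good \GKR event and bound the displayed bad event by contradiction. Suppose some $(\bm u_0,\bm u_1)\in S_\pi$ has $\Delta_c\bigl(g,\pval((\bm j,\bm j'),(\bm v_0,\bm u_0)+c(\bm u_1,\bm v'))\bigr)<\delta'$ for more than a $(1/(\varepsilon|\FF|)+1/|\FF|)$-fraction of $c$, i.e.\ for more than $1/\varepsilon+1$ values of $c$. Fixing $w_0\in\pval((\bm j,\bm j'),(\bm v_0,\bm u_0))$ and $w_1\in\pval((\bm j,\bm j'),(\bm u_1,\bm v'))$ (non-empty since $(\bm u_0,\bm u_1)\in S_\pi$) and setting $V=\pval((\bm j,\bm j'),\bm 0)$, this reads $\Delta_c\bigl((\transcMat_0-w_0)+c((\transcMat_1\circ\tau)-w_1),V\bigr)<\delta'=d(1-\gamma)(1-\varepsilon)$ for more than $1/\varepsilon$ values of $c$. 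Since $V\subseteq\pval(\bm j,\bm 0)$ forces $\Delta_c(V)\ge 4d$ and $d(1-\gamma)(1-\varepsilon)<d\le\Delta_c(V)/3$, \Cref{lem: Batch BKS} (applied to $\transcMat_0-w_0$ and $(\transcMat_1\circ\tau)-w_1$, with its $\delta$ taken to be $d(1-\gamma)$) supplies $v_0^*,v_1^*\in V$ so that, writing $p_0=w_0+v_0^*$ and $p_1=w_1+v_1^*$ and letting $E_0^t,E_1^t\subseteq\{0,1\}^\lgK$ be the column-$t$ disagreement sets of $\transcMat_0$ and $\transcMat_1\circ\tau$ with $p_0$ and $p_1$, we have $|E_0^t\cup E_1^t|\le d(1-\gamma)$ for every column $t$.

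The crux is showing $E_0^t$ and $E_1^t$ are essentially $\pi$-independent, so that the permutation bound applies. Because $p_0\in\pval((\bm j,\bm j'),(\bm v_0,\bm u_0))\subseteq\pval(\bm j,\bm v_0)$ and $\Delta_c(\transcMat_0,p_0)\le d(1-\gamma)<2d$, while distinct elements of $\pval(\bm j,\bm v_0)$ are at $\Delta_c$-distance $\ge 4d$, $p_0$ must be the unique (hence $\pi$-independent) closest point $p_0^*$ of $\pval(\bm j,\bm v_0)$ to $\transcMat_0$. Likewise $p_1\in\pval(\bm j',\bm v')$ lies within $\Delta_c$-distance $d(1-\gamma)<d$ of $\transcMat_1\circ\tau$, so the good-event conclusion of the first step forces $p_1=q_1\circ\tau$ (and, incidentally, that the \GKR prover was honest and $\Delta_c(\transcMat_1,\pval(\bm j,\bm v_1))\le d$). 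Hence $E_1^t=\pi(F_1^t)$, where $F_1^t$ is the column-$t$ disagreement set of $\transcMat_1$ with $q_1$, a $\pi$-independent set with $|F_1^t|=|E_1^t|\le d(1-\gamma)$. Evaluating the hypothesis $\Delta_c\bigl(\left[\begin{smallmatrix}\transcMat_0\\ \transcMat_1\end{smallmatrix}\right],\left[\begin{smallmatrix}\pval(\bm j,\bm v_0)\\ \pval(\bm j,\bm v_1)\end{smallmatrix}\right]\bigr)\ge d$ at the pair $(p_0^*,q_1)$ gives a column $t^*$ — a fixed column, since $E_0^t$ and $F_1^t$ do not depend on $\pi$ — with $|E_0^{t^*}|+|F_1^{t^*}|\ge d$, and therefore $|E_0^{t^*}\cap\pi(F_1^{t^*})|=|E_0^{t^*}|+|F_1^{t^*}|-|E_0^{t^*}\cup E_1^{t^*}|\ge d-d(1-\gamma)=\gamma d$. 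Since $E_0^{t^*},F_1^{t^*}$ are fixed sets of size $<d\le 4d$ and $\pi\sim\Pi_{\lgK,4d,2^{-\secpar}}$, \Cref{thm: random set intersection is tiny} (with intersection threshold $\gamma d=(\gamma/4)\cdot 4d$, valid since $\gamma d\ge 2$ because $\gamma>2/d$) bounds the probability of $|E_0^{t^*}\cap\pi(F_1^{t^*})|\ge\gamma d$ by $2^{-\secpar}+\exp(-\gamma d/6)$; adding the $\err_{\GKR}$ from the first step yields the claimed bound, with $\err_{\GKR}$ exactly as in \Cref{lem:RVW} (taking the $\Delta_c$-ball radius to be $4d$ for concreteness).

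The main obstacle — and the reason the argument needs both \Cref{lem:RVW}'s "unique closest point is preserved" clause and the large minimum distance of the \pval kernel — is the circularity just navigated: a priori the points $p_0,p_1$ (hence the disagreement sets $E_0^t,E_1^t$) produced by correlated agreement depend on $\pi$ through $\bm j'$ and on the adversarially chosen $(\bm u_0,\bm u_1)$, so one cannot directly treat one of them as frozen and the other as a uniformly scrambled frozen set and invoke \Cref{thm: random set intersection is tiny}. Breaking this circularity via the two uniqueness facts is the heart of the proof; once they are in place the remaining bookkeeping (the $4d$ distance margin, the $(1-\gamma)$ and $(1-\varepsilon)$ slacks, and the harmless $+1$ feeding \Cref{lem: Batch BKS}'s hypothesis) is routine.
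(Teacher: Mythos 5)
Your proposal is correct and follows essentially the same route as the paper: condition on the \Cref{lem:RVW} GKR good event, apply the $\Delta_c$ correlated-agreement lemma (\Cref{lem: Batch BKS}) for contradiction, use the large minimum $\Delta_c$-distance of $\pval(\bm j,\bm 0)$ to pin the agreement witnesses to the unique ($\pi$-independent) closest codewords $p_0^*$ and $q_1$, and then invoke \Cref{thm: random set intersection is tiny} on the resulting fixed column disagreement sets. In fact you state the $\pi$-independence argument (that $p_0=p_0^*$ and $p_1=q_1\circ\tau$ must hold, so that $E_0^t$ and $F_1^t$ are determined before $\pi$ is sampled) more explicitly than the paper does, and you correctly use absolute rather than fractional $\Delta_c$ bounds when unpacking \Cref{lem: Batch BKS}; both are places where the paper's writeup is terse.
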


\begin{proof}
For the sake of contradiction, assume that the lemma is false. 

Observe that for any $\pi \in \Pi$, with probability $\ge 1 - \err_{\GKR}$, by \Cref{lem:RVW} and \Cref{lem:GKR}, there exists a $g\in \pval(\bm j', v')\cap B_{4d, \FF}(\transcMat_1\circ \tau)$ iff the prover was honest with implicit input $g$ and $\left(g\circ \tau^{-1}\right)\in \pval(\bm j, v_1)$. 

Thus, we condition on this event happening and that accounts for the $\err_{\GKR}$ term in the lemma by union bound. 

Let $\bar\delta = \frac{\delta'}{1-\epsilon}<d<\frac{4d}{3}$. Now, with probability $>p=2^{-\secpar}+\exp(-\gamma\cdot d/6)$ over the choice of $\pi$, there exists $(u_0, u_1)\in S_{\pi}$ and $c_0\ne 0$, such that \[\Pr_{c_1\leftarrow \FF}[\delta_{\pi, (u_0, u_1), c, \bm j'}<\bar\delta(1-\epsilon)]>\frac{1}{\epsilon|\FF|}\]

    Since $\bar{\delta}< 4d/3$, with probability $>p$, over the choice of $\pi\in \Pi$, let the choice of $(u_0, u_1)\in S_{\pi}$ and $c\ne 0$ be as above. Thus, let $V=\pval((\bm j, \bm j'),0)$ and observe that $\Deltac(V)\ge 4d$. Let 
    \ifFOCS
    \begin{align*}
    &h_0\in \pval((\bm j, \bm j'), (v_0, u_0)), \\
    &h_1\in \pval((\bm j, \bm j'), (u_1, v')),
    \end{align*}
    \else
    \[h_0\in \pval((\bm j, \bm j'), (v_0, u_0)), \hspace{1cm} h_1\in \pval((\bm j, \bm j'), (u_1, v')),\]
    \fi

     then 
     \ifFOCS
     \begin{align*}
     &\Pr_{c\leftarrow \FF}
     \begin{bmatrix}
     \begin{aligned}
     &\Deltac(V+h_0+c(V+h_1),\transcMat_0+c(\transcMat_1\circ \tau))\\
     &<\bar\delta(1-\epsilon)
     \end{aligned}
     \end{bmatrix}\\
     &>\frac{1}{\epsilon|\FF|},
     \end{align*}
     \else
     \[\Pr_{c\leftarrow \FF}[\Deltac(V+h_0+c(V+h_1),\transcMat_0+c(\transcMat_1\circ \tau))<\bar\delta(1-\epsilon)]>\frac{1}{\epsilon|\FF|},\] 
     \fi
     and thus 
     \[\Pr_{c\leftarrow\FF}[\Deltac(V, (\transcMat_0- h_0)+ c(\transcMat_1\circ \tau-h_1))<\bar\delta(1-\epsilon)]>\frac{1}{\epsilon|\FF|}.\]  
     Letting $u^\ast = \transcMat_0 - h_0$ and $u = \transcMat_1\circ \tau-h_1$ in \Cref{lem: Batch BKS}, we get that there exist $y_0, y_1 \in V$ such that for all $r\in \{0,1\}^\lncol$ 
     \ifFOCS
     \begin{align*}
     &\left|\begin{Bmatrix}
        \begin{aligned}
        &i\in \{0,1\}^\lgK \mid (\transcMat_0(i,r)-h_0(i,r)=y_0(i,r))\\
        &\quad \wedge (\transcMat_1\circ \tau(i,r)-h_1(i,r)=y_1(i,r))
        \end{aligned}
        \end{Bmatrix}\right|\\
        &\ge 2^\lgK - \bar\delta,
     \end{align*} 
     \else
     \[|\{i\in \{0,1\}^\lgK \mid (\transcMat_0(i,r)-h_0(i,r)=y_0(i,r))\wedge (\transcMat_1\circ \tau(i,r)-h_1(i,r)=y_1(i,r))\}|\ge 2^\lgK - \bar\delta,\] 
     \fi
     or equivalently for all $r\in \{0,1\}^\lncol$, 
     \ifFOCS
     \begin{align*}
     &\left|\begin{Bmatrix}
        \begin{aligned}
        &i\in \{0,1\}^\lgK \mid (\transcMat_0(i,r)=(y_0+h_0)(i,r))\\
        &\quad \wedge (\transcMat_1\circ \tau(i,r)=(y_1+h_1)(i,r))
        \end{aligned}
        \end{Bmatrix}\right|\\
        &\ge 2^\lgK - \bar\delta,
     \end{align*}
     \else
     \[|\{i\in \{0,1\}^\lgK \mid (\transcMat_0(i,r)=(y_0+h_0)(i,r))\wedge (\transcMat_1\circ \tau(i,r)=(y_1+h_1)(i,r))\}|\ge 2^\lgK - \bar\delta\]
     \fi

        Now, observe that since $(y_0+h_0) \in \pval(\bm j, v_0)$, $(y_1+h_1) \in \pval(\bm j', v')$, and \\$\Deltac(\pval((\bm j, \bm j'), 0))>3\bar{\delta}$, we have that there is a unique element in $z_0\in \pval(\bm j, v_0)$ such that $\Deltac(\transcMat_0, z_0)\le \bar\delta$ and similarly, there is a unique element $z_1 \in \pval(\bm j', v')$ such that $\Deltac(\transcMat_1\circ \tau, z_1)\le \bar{\delta}$. Now, observe that since $z_1\in \pval(\bm j', v')$ and $\Deltac(\transcMat_1\circ \tau, z_1)\le 4d$, we must have that the prover was honest with implicit input $z_1$ in mind and $z_1\circ \tau^{-1}\in \pval(\bm j, v_1)$. Observe that $\Deltac(\pval(\bm j, v_1))\ge 4d$. Thus, $z_1'=z_1\circ \tau^{-1}$ is the unique element such that $z_1'\in \pval(\bm j, v_1)$ and $\Deltac(\transcMat_1, z_1')<\bar\delta$.

     Plugging, this back in, we get that with probability $\ge p$ over the choice of $\pi \in \Pi$, we have that for all $r\in \{0,1\}^\lncol$ 
     \ifFOCS
     \begin{align*}
     &\left|\begin{Bmatrix}
        \begin{aligned}
        &i\in \{0,1\}^\lgK \mid (\transcMat_0(i,r)=z_0(i,r))\\
        &\quad \wedge (\transcMat_1\circ \tau(i,r)=z_1'\circ \tau(i,r))
        \end{aligned}
        \end{Bmatrix}\right|\\
        &>2^\lgK - \bar\delta,
     \end{align*}
     \else
     \[|\{i\in \{0,1\}^\lgK \mid (\transcMat_0(i,r)=z_0(i,r))\wedge (\transcMat_1\circ \tau(i,r)=z_1'\circ \tau(i,r))\}|>2^\lgK - \bar\delta\]
     \fi

    Since $\Deltac(\transcMat_0\|\transcMat_1, z_0\|z_1')\ge d$, for each $r\in \{0,1\}^\lncol$, let 
    \ifFOCS
    \begin{align*}
        &\delta_{r, 0}=\Deltac(\transcMat_0(\cdot, r), z_0(\cdot, r))\\
        &\delta_{r,1}=\Deltac(\transcMat_1(\cdot, r), z_1'(\cdot, r)).
    \end{align*}
    \else
    \[\delta_{r, 0}=\Deltac(\transcMat_0(\cdot, r), z_0(\cdot, r)) \hspace{1cm} \delta_{r,1}=\Deltac(\transcMat_1(\cdot, r), z_1'(\cdot, r)).\]
    \fi
    By our assumption, we know that $\max_{r\in \{0,1\}^\lgK}(\delta_{r,0}+\delta_{r,1})=\Deltac(\transcMat_0\|\transcMat_1, z_0\|z_1')\ge d$.
    
    Now, we fix our focus to a fixed $r\in \{0,1\}^\lncol$ such that 
    $\delta_{r,0}+\delta_{r,1}\ge d$. Now, for this $r$, with probability $\ge p$ over the choice of $\pi$, we have that 
    \ifFOCS
    \begin{align*}
    \left|\begin{Bmatrix}
    \begin{aligned}
    &i\in \{0,1\}^\lgK\mid \left(\transcMat_0(i,r)=z_0\right)\\
    &\quad \wedge\left( \transcMat_1\circ\tau(i,r)=z_1'\circ\tau(i,r)\right)
    \end{aligned}
    \end{Bmatrix}\right|
    \ge 2^\lgK - \bar\delta
    \end{align*}
    \else
    \[|\{i\in \{0,1\}^\lgK\mid \left(\transcMat_0(i,r)=z_0\right)\wedge\left( \transcMat_1\circ\tau(i,r)=z_1'\circ\tau(i,r)\right)\}|\ge 2^\lgK - \bar\delta\]
    \fi

    Let 
    \ifFOCS
    $E_0=\{i\mid \left(\transcMat_0(i,r)\ne z_0(i,r)\right)\}$ and $E_1=\{i\mid \left(\transcMat_1(i,r)\ne z_1'(i,r)\right)\}$.
    \else
    \[E_0=\{i\mid \left(\transcMat_0(i,r)\ne z_0(i,r)\right)\} \hspace{1cm} E_1=\{i\mid \left(\transcMat_1(i,r)\ne z_1'(i,r)\right)\}.\]
    \fi
    We know $|E_0|+|E_1|\ge d$. 
    Thus, we can take subsets $A_0\subseteq E_0$ and $E_1\subseteq A_1$ such that $|A_0|+|A_1|\ge d$ and $|A_0|, |A_1|\le d$.
    Therefore,
    probability $\ge p$ over the choice of $\pi$, we have that $|A_0\cap \pi A_1|\ge d-(1-\gamma)d\ge \gamma d$.

    Since $\pi \sim \Pi_{\lgK, 4d, 2^{-\secpar}}$ and $p=2^{-\secpar}+\exp(-\gamma d/6)$, this would contradict \Cref{thm: random set intersection is tiny}. Thus, we have a contradiction.
\end{proof}

\subsection{Efficient \IPP for \pval with Column Distance}

Now, we give an analogous efficient Interactive proof of proximity (IPP) for $\pval$ to \cite{TCC:RotRot20} in the column distance setting. The key differences here from their work is that we add a soundness parameter as we want the protocol to be unambiguous,
and our distance notion is different. Besides these differences which mostly just affect our choices of parameters, the protocol is essentially the same as \RR. 

For the applications that we use the protocol, we do not require it to be an IPP. We mention the $\IPP$ here for completeness and historical reasons.

Before presenting the protocol, we go over the ideas in the protocol first:
\begin{itemize}
    \item We begin with a function $\transcMat:\{0,1\}^{\lgK+\lncol}\rightarrow \FF$ and a $\pval(\bm j,\bm v)$ set. The verifier would like to check whether $\transcMat\in \pval(\bm j,\bm v)$ or if $\Deltac(\transcMat, \pval(\bm j,v))>d$. The verifier would also like the final check to be efficient and succinctly described so that it could be further delegated.
    \item \textbf{Breaking into Multilinear Functions:} The verifier asks the prover to break $\transcMat$ into multilinear functions $\transcMat_0, \transcMat_1:\{0,1\}^{\lgK +\lncol-1}\mapsto \FF$ such that $\hat \transcMat(\chi, x)=(1-\chi)\hat \transcMat_0(x)+\chi \hat \transcMat_1(x)$ for $\chi\in \FF$. This is possible due to the multilinearity of $\hat \transcMat$. The verifier then asks the prover to break $j_i= (\chi_i, j_i')$ and send $\transcMat_0(j_i')$ and$ \transcMat_1(j_i')$. Let the resulting $\pval$ instances be $\pval(\bm j', \bm v_0)$ and $\pval(\bm j',\bm v_1)$.

    The verifier can now use the linearity to check if these claims are consistent.
    \item \textbf{Permuting Errors:} Now, the prover uses \Cref{alg:GKRinsideRR} to convert the claim $\transcMat_1\in \pval(\bm j',\bm v_1)$ into a new claim that $\transcMat_1\circ \tau\in \pval(\bm j',\bm v')$. 
    \item \textbf{Filling in Values and emptiness checks:} The verifier would want to create now a new \pval claim about $\transcMat_0+c(\transcMat_1\circ \tau)$ but does not know $\hat \transcMat_0(\bm j')$ or $\widehat{\transcMat_1\circ \tau}(\bm j')$. Thus, the verifier simply asks the prover to fill in these values. This is all compatible with the \Cref{lem: Batch RR} gap amplification that we have proved above.
    \item \textbf{Curve fitting:} Assuming all this goes well, the verifier has reduced to a claim about a function $\transcMat':\{0,1\}^{\lgK +\lncol -1}\mapsto \FF$ but the resulting $\pval$ instance now has $2T$ evaluations instead of $T$. Thus, the verifier can ask the prover to construct a low degree curve through $j'$ and $j'$ and ask the prover for values of $\transcMat_0$ and $\transcMat_1\circ \tau$ on all the points on the curve.
    \item \textbf{Reducing number of evaluation points:} Now, the verifier just picks $T$ uniformly random points on the curve. Since $j'$ was uniformly random, these $T$ points also end up being uniformly random elements of $\FF^{\lgK+\lncol-1}$ and the distance guarantee continues to hold with high probability.
\end{itemize}

The protocol (\Cref{alg: RRbetter}) is essentially just the above with some parameter balancing to get the appropriate soundness error. We need to do all the above over larger fields than \cite{TCC:RotRot20} did since we also prove unambiguity of the procedure and thus cannot use repetition to amplify the unambiguous soundness error. In subsequent subsections - we describe each of these components of the full protocol. We then combine all of these parts to get \Cref{alg: RRbetter} and the following theorem:
\begin{theorem}[IPP for $\pval$ with column distance]
    \label{thm: DcRR}
    Let $\lgK,\lncol, T, \secpar, d \in \NN$. Let $\FF$ be a constructible field ensembles such that $\FF$ has characteristic $2$.
    If
    \begin{itemize}
        \item $|\FF|\ge \Omega(2^{\secpar}\cdot T^2(\lgK+\lncol)^2)$,
        \item $T>8d\cdot 2^{\lncol}\cdot \lgK$,
        \item $d\ge 16\cdot \lgK\cdot \secpar$
    \end{itemize}

    Then for any $\bm j \in (\FF^{\lgK+\lncol})^T, v\in \FF^T$, the matrix $\pval(\bm j,\bm v)$ has a public coin unambiguous $\IPP$ as long as $\Deltac(\pval(\bm j, 0))>4d$:
    \begin{itemize}
        \item Soundness error: $\lgK\cdot 2^{4-\secpar}$
        \item Prover to Verifier communication complexity: $T\log |\FF|\cdot \poly(\lgK,\lncol)+\ncol\cdot \log|\FF|\cdot \poly(d)$.
        \item Verifier to Prover communication complexity: $T\log |\FF|\cdot \poly(\lgK,\lncol)$ 
        \item Round Complexity: $\poly(\lgK,\lncol, \log(T\log|\FF|))$
        \item Prover runtime: $\poly(2^{\lgK+\lncol},T\log |\FF|)$
        \item Verifier runtime: \ifFOCS\\ \fi$T\log|\FF|\cdot \poly(\lgK,\lncol, \log(T\log|\FF|), \log |\FF|) + \tilde{O}(\ncol\cdot \log|\FF|\cdot \poly(d))$ 
        \item Query complexity : $\abs{\RRS} = \left\lceil\frac{4\secpar\cdot 2^{\lgK}}{d}\right\rceil\cdot \ncol$.
    \end{itemize}
    The final verifier verdict is given by a succinctly described predicate $\Phi : \FF^{\left\lceil\frac{4\secpar\cdot 2^{\lgK}}{d}\right\rceil\cdot2^{\lncol}}\mapsto \{0,1\}$ along with a succinctly described set $Q\subseteq [2^\lgK]$ such that $|Q|=\left\lceil\frac{4\secpar\cdot 2^{\lgK}}{d}\right\rceil$, and the verdict is $\Phi(f|Q\times \{0,1\}^\lncol)$ with $|\left<Q\right>|=\tilde O(\lgK^2d\sigma\cdot +\lgK\log|\FF|+\sigma\cdot \poly(d))$ and $|\left<\Phi\right>|=|\left<Q\right>|+\sigma\cdot \poly(d)\cdot \ncol\cdot \log|\FF|$.
\end{theorem}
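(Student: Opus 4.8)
The plan is to construct the protocol $\prot{\RRrow}$ of \Cref{alg: RRbetter} by iterating a single \emph{fold-and-halve} reduction, following the \RR template recalled in \Cref{thm:RR-informal} but with every Hamming-distance step replaced by its column-distance analogue from \Cref{lem: Batch RVW}, \Cref{lem: Batch BKS}, and \Cref{lem: Batch RR}. At iteration $t$ the parties hold a claim about a function $\transcMat^{(t)}\colon\bin^{(\lgK-t)+\lncol}\to\FF$ together with a constraint $\pval(\bm j^{(t)},\bm v^{(t)})$ on $T$ points, and one step does: (i) the prover splits $\hat\transcMat^{(t)}(\chi,x)=(1-\chi)\hat\transcMat_0(x)+\chi\hat\transcMat_1(x)$, sends $\transcMat_0(\bm j'),\transcMat_1(\bm j')$ at the projected points, and the verifier rejects unless these are linearly consistent with $\bm v^{(t)}$; (ii) the parties run \Cref{alg:GKRinsideRR} with a permutation $\pi\gets\Pi_{\lgK-t, 4d, 2^{-\secpar}}$ and $\tau=\pi\times I$, producing a fresh claim $\pval(\bm j'',\bm v')$ about $\transcMat_1\circ\tau$; (iii) the prover fills in the cross-evaluations the verifier needs to assemble the folded claim, and the verifier records the corresponding non-emptiness pair $(u_0,u_1)$; (iv) the verifier draws $c\gets\FF$, forms $g=\transcMat_0+c(\transcMat_1\circ\tau)$ and the linearly folded constraint over $\Theta(T)$ points; (v) a curve-fitting step (the prover sends a curve of degree $\mathrm{poly}(\lgK,\lncol)$ through those points, the verifier picks $T$ uniformly random points on it) brings the number of evaluation points back to $T$ and yields $\transcMat^{(t+1)}\colon\bin^{(\lgK-t-1)+\lncol}\to\FF$. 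After $t^\ast$ iterations, chosen so that $2^{t^\ast}=\Theta(2^{\lgK}/d)$ and hence $\transcMat^{(t^\ast)}$ has $\Theta(d)$ rows, the verifier reads the $O(\secpar)$ rows of $\transcMat^{(t^\ast)}$ on which a random constant-size subset depends; unwinding the splits, folds and permutations, each such row reads $2^{t^\ast}$ rows of the original $\transcMat$, so the read set is $\RRS\times\bin^\lncol$ with $\abs{\RRS}=\lceil 4\secpar\cdot 2^{\lgK}/d\rceil$, and the verdict is a succinctly described predicate $\Phi$ that recomputes the folded values at those positions and checks the residual $\pval$ claim.

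\textbf{Completeness} is immediate: an honest prover passes every linear-consistency check, and by the prescribed completeness of \Cref{lem:GKR} inside \Cref{alg:GKRinsideRR} and the exactness of the split, fill-in and curve-fitting steps, each iteration maps $\transcMat^{(t)}\in\pval(\bm j^{(t)},\bm v^{(t)})$ to $\transcMat^{(t+1)}\in\pval(\bm j^{(t+1)},\bm v^{(t+1)})$, so on the residual claim $\Phi(\transcMat|_{\RRS\times\bin^\lncol})=1$ iff $\transcMat\in\pval(\bm j,\bm v)$. For \textbf{soundness} the key is to maintain, by induction on $t$ on a high-probability event, three invariants: (a) $\Deltac(\pval(\bm j^{(t)},\bm 0))\ge 4d$; (b) there is a \emph{unique} $z^{(t)}\in\pval(\bm j^{(t)},\bm v^{(t)})$ within $\Deltac$-distance $d$ of $\transcMat^{(t)}$ (immediate from (a), as $4d>2d$); and (c) if $\transcMat^{(t)}$ is $\Deltac$-far from $\pval(\bm j^{(t)},\bm v^{(t)})$ then so is $\transcMat^{(t+1)}$. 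Invariant (a) is propagated by the Schwartz--Zippel union bound of \Cref{lem:pval} (equivalently the claim inside the proof of \Cref{clm:phase2}), using $T>8d\cdot 2^{\lncol}\cdot\lgK$ and that the curve-fitting step keeps $\bm j^{(t)}$, together with the uniform \GKR output points $\bm j''$, uniformly random, so that w.h.p.\ no column-distance-$4d$ vector lies in the kernel. Invariant (c) is exactly \Cref{lem: Batch RR} applied with $\gamma=\Theta(\secpar/d)$ and $\varepsilon=\Theta(1/\lgK)$: this keeps the output distance above $d(1-\gamma)(1-\varepsilon)$, and the cumulative loss over $t^\ast=O(\lgK)$ iterations stays above $\tfrac12$ precisely because $d\ge 16\lgK\secpar$, while the lemma's failure probability $2^{-\secpar}+\exp(-\gamma d/6)+\err_{\GKR}$ is $\le 2^{-\secpar+2}$ since $\gamma d=\Omega(\secpar)$ gives $\exp(-\gamma d/6)\le 2^{-\secpar}$ and $\abs{\FF}=\Omega(2^{\secpar}T^2(\lgK+\lncol)^2)$ makes the \Cref{lem:RVW}-error $\err_{\GKR}$ for the circuit $\Phi_{\pi,\bm j,\bm v}$ (depth $\mathrm{poly}(\lgK+\lncol,\log(T\log\abs{\FF}))$, size $\tO(T2^{\lgK+\lncol}\log\abs{\FF})$) at most $2^{-\secpar}$ by a direct calculation using $T>8d2^{\lncol}\lgK$.

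\textbf{Unambiguity} then follows from a case analysis on the round $r^\ast$ where a cheating $\cP^\ast$ first deviates. Either it breaks a deterministic linear-consistency check and is caught at once; or it deviates inside a \Cref{alg:GKRinsideRR} invocation and is caught except with probability $\err_{\GKR}$ by the unambiguous distance preservation of \Cref{lem:RVW}; or it sends an inconsistent curve, which disagrees with the true degree-$\mathrm{poly}(\lgK,\lncol)$ curve at all but a $\mathrm{poly}(\lgK,\lncol)/\abs{\FF}\le 2^{-\secpar}$ fraction of points by \Cref{lem:SZ}; or it sends internally consistent values that differ from those of the unique codeword $z^{(t)}$, in which case invariant (c) forces $\transcMat^{(t+1)}$ to remain $\Deltac$-far and, $z^{(t)}$ being unique, the deviation propagates to the residual claim, where reading $O(\secpar)$ random rows of the $\Theta(d)$-row, $\Deltac$-far $\transcMat^{(t^\ast)}$ makes $\Phi$ reject except with probability $2^{-\secpar}$. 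Summing these $O(\secpar)$-type terms over the $t^\ast\le\lgK$ iterations gives soundness error $\lgK\cdot 2^{4-\secpar}$.

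Finally, \textbf{complexity} is bookkeeping. Per iteration the dominant cost is one $T$-fold parallel \Cref{lem:RVW}/\GKR execution on $\Phi_{\pi,\bm j,\bm v}$, giving prover time $\mathrm{poly}(2^{\lgK+\lncol},T\log\abs{\FF})$, round complexity $\mathrm{poly}(\lgK+\lncol,\log(T\log\abs{\FF}))$, communication $T\log\abs{\FF}\cdot\mathrm{poly}(\lgK,\lncol)$, plus $\mathrm{poly}(T\log\abs{\FF})$ field operations and an additive $\ncol\cdot\mathrm{poly}(d)$ communication for the residual constraint over $\Theta(d)$ rows and $\ncol$ columns; multiplying by $t^\ast=O(\lgK)$ iterations yields the stated bounds, with the verifier's $\tO(\ncol\log\abs{\FF}\cdot\mathrm{poly}(d))$ term being the residual check. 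The description $\langle\RRS\rangle$ stores the $t^\ast$ reversible-circuit permutation seeds ($\tO(d\lgK\secpar)$ bits each by \Cref{thm: random reversible circuits are independent}, hence $\tO(\lgK^2 d\secpar)$ total), the $O(\lgK)$ folding coefficients and curve data ($\tO(\lgK\log\abs{\FF})$ bits), and the $O(\secpar)$ residual row indices ($\secpar\cdot\mathrm{poly}(d)$ bits), and $\Phi$ additionally stores the residual $\pval$ constraint ($\secpar\cdot\mathrm{poly}(d)\cdot\ncol\cdot\log\abs{\FF}$ bits). \textbf{The main obstacle} I expect is this soundness analysis: maintaining invariants (a)--(c) across all $\lgK$ levels while driving every error source --- \GKR unambiguity, almost-$d$-wise independence of the reversible-circuit permutations via \Cref{thm: random set intersection is tiny}, the curve-fitting Schwartz--Zippel check, and the $\pval$-kernel union bound --- below $2^{-\secpar}$, which is exactly what pins down the $\abs{\FF}$, $T$ and $d$ lower bounds, and which is delicate because (unlike \cite{TCC:RotRot20}) parallel repetition cannot boost the unambiguity error, so the per-level $\gamma,\varepsilon,d,\secpar$ trade-off in \Cref{lem: Batch RR} must be balanced exactly. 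The restatements \Cref{lem:DcRR} and \Cref{thm:DcRR-main} then follow by having the protocol output the descriptions $\langle\RRS\rangle,\langle\Phi\rangle$ rather than having the verifier query $\transcMat$ directly.
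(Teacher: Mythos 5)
Your proposal takes essentially the same route as the paper: iterate the fold-and-halve reduction (split as in \Cref{alg: break into two}, permute and re-express the $\pval$ claim via \Cref{alg:GKRinsideRR}, emptiness checks, random linear fold, curve-fit, reduce the number of evaluation points as in \Cref{lem: eval point reduction}), maintain the three invariants (kernel distance, uniqueness, distance preservation), apply \Cref{lem: Batch RR} per level, and sum the per-level unambiguity errors to obtain $\lgK\cdot 2^{O(1)-\secpar}$. The complexity and succinct-description bookkeeping also matches.

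There is one concrete discrepancy worth fixing. You propose iterating until the folded function has $\Theta(d)$ rows (i.e.\ $2^{t^\ast}=\Theta(2^{\lgK}/d)$), then sampling only $O(\secpar)$ rows of $\transcMat^{(t^\ast)}$. But the almost-$4d$-wise-independent permutation family of \Cref{thm: random reversible circuits are independent}, which drives \Cref{thm: random set intersection is tiny} and hence \Cref{lem: Batch RR}, requires $4d\le 2^{\lgK'/50}$ at every level, i.e.\ the domain must have at least $(4d)^{50}$ rows. So the last $\approx 50\log(4d)-\log d$ levels of your folding cannot be carried out. The paper's \Cref{alg: RRbetter} handles this by stopping precisely when $2^{\lgK_{\text{prot}}/50}\le 4d$, and in \Cref{alg: RRconclude} the verifier reads $\Theta(\secpar\K_{\text{new}}/d)$ folded rows rather than $O(\secpar)$; unwinding by $\K/\K_{\text{new}}$ then yields the claimed $\abs{\RRS}=\lceil 4\secpar\cdot 2^{\lgK}/d\rceil$ original rows, so the final query count you state is still right, but the intermediate numbers are off. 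A secondary (cosmetic) difference: you set $\gamma=\Theta(\secpar/d)$ and $\varepsilon=\Theta(1/\lgK)$, whereas the paper instantiates \Cref{lem: Batch RR} with $\gamma=\varepsilon=1/(2r)$ and $r=\lgK$, so that the condition $d\ge 16\lgK\secpar$ simultaneously forces $\exp(-\gamma d/6)\le 2^{-\secpar}$; both parameterizations work under the theorem's hypotheses, but yours should be stated as a single choice consistent with the $r<d/4$ constraint in \Cref{lem: RR Reduction Lemma}.
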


\subsubsection{Breaking into Multilinear Functions and Consistency Checks}

\begin{algorithm}
  \setstretch{1.1}
  \caption{Breaking $\transcMat$ into two multilinear functions}
  \label{alg: break into two}
  \textbf{Input Parameters:} $\lgK$, $\ncol$, $T \in \NN$ and $\FF$ is a field.\\
  \textbf{Input:} $\bm j \in (\FF^{\lgK+\lncol})^T$ and $\bm v\in \FF^T$.\\
  \textbf{Prover Auxiliary Input:} $\transcMat: \{0,1\}^{\lgK+\lncol}\rightarrow \FF$.\\
  \textbf{Output:}  $\{j'_i, \zeta^{(0)}_i, \zeta^{(1)}_i\}_{i\in [T]}$.
  \begin{algorithmic}[1]
    \State $\cP$ lets $\bm j_i = (\chi_i, \bm j_i')\in \FF\times \FF^{\lgK-1+\lncol}$. 
    \State $\cP$ sends $\set{\zeta_i^{(b)}=\hat \transcMat(b, \bm j_i')}_{i\in [T], b\in \{0,1\}}$ to $\cV$.
    \If {$\bigwedge_{i\in [T]}\left[v_i=(1-\chi_i)\zeta_{i}^{(0)}+\chi_i\zeta_i^{(1)}\right]$}
    \State $\cV$ outputs $\{\bm j_i', \zeta^{(b)}_{i}\}_{b\in \{0,1\}, i \in [T]}$.
    \Else
    \State $\cV$ halts and rejects.
    \EndIf
\end{algorithmic}
\end{algorithm}

\begin{lemma}\label{lem: break into two parts}
    Let $T, d, \lgK, \lncol \in \NN$ and $\FF$, a finite field of characteristic $2$. $\bm j \in (\FF^{\lgK+\lncol})^T$ and $\bm v\in \FF^T$. If, $\Deltac(\pval(j, 0))\ge 4d$, then 
    \Cref{alg: break into two} has the following properties:
    \begin{itemize}
    \item \textbf{Unambiguous Distance Preservation:}
    \begin{itemize}
        \item If $\Deltac(\transcMat, \pval(\bm j, \bm v))\le d$ and the prover answers according to the prover's prescribed strategy for the unique $\transcMat^\ast \in \pval(\bm j, \bm v)\cap \rowball(\transcMat)$ then $\transcMat^\ast$ remains the unique element in $\rowball(\transcMat) \cap (\pval(\bm j', \bm \zeta^{(0)})\|\pval(\bm j', \zeta^{(1)}))$.  
        \item If $\Deltac(\transcMat, \pval(\bm j, \bm v))> d$ or the prover is not honest with respect to the unique $\transcMat^\ast \in \rowball(\transcMat)\cap \pval(\bm j, \bm v)$ then, $\Deltac(\transcMat, \pval(\bm j', \zeta^{(0)})\| \pval(\bm j', \zeta^{(1)}).)>d$.
    \end{itemize}
        \item \textbf{Round Complexity:} $1$
        \item \textbf{Communication Complexity:} $T\cdot \log \FF\cdot (\lgK+\lncol+1)$.
        \item \textbf{Verifier Time Complexity:} $O(T\poly\log(\FF))$
        \item \textbf{Prover Runtime:} $\poly(2^{\lgK+\lncol}, T, \log \FF)$.
        \item In all cases: $\Deltac(\pval(\bm j', \bm 0))\ge 4d$.
    \end{itemize}

    \begin{proof}
    The round, communication and time complexity follow directly. The final property is since $\pval(\bm j', \bm 0)\|\bm 0 \subseteq \pval(\bm j, 0)$ and the first has distance $\Deltac(\pval(\bm j', \bm 0))$.

    Now, if the verifier doesn't reject and $\transcMat_0\in \pval(\bm j', \mathbf \zeta^{(0)})$ and $\transcMat_1\in \pval(\bm j', \mathbf \zeta^{(1)})$ then, we have that for all $i\in [T]$:
    \ifFOCS
    \begin{align*}
    \widehat{\transcMat_0\|\transcMat_1}(\bm j_i) &= \widehat{\transcMat_0\|\transcMat_1}(\chi_i, \bm j_i')= (1-\chi_i)\widehat{\transcMat_0}(\bm j_i')+\chi_i\widehat{\transcMat_1}(\bm j_i') \\
    &= (1-\chi_i)\zeta_i^{(0)}+\chi_i\cdot \zeta_i^{(1)} = \bm v_i
    \end{align*} 
    \else
    \[\widehat{\transcMat_0\|\transcMat_1}(\bm j_i) = \widehat{\transcMat_0\|\transcMat_1}(\chi_i, \bm j_i')= (1-\chi_i)\widehat{\transcMat_0}(\bm j_i')+\chi_i\widehat{\transcMat_1}(\bm j_i') = (1-\chi_i)\zeta_i^{(0)}+\chi_i\cdot \zeta_i^{(1)} = \bm v_i\] 
    \fi
    by multilinearity. We get that $\transcMat_0\|\transcMat_1 \in \pval(\mathbf{j, v})$. This clearly implies the honest prover case and the $\Deltac(\transcMat, \pval(\bm j, \bm v))>d$ cases.

    If the prover is dishonest with respect to $\transcMat^\ast$, then we know that $\transcMat^\ast \notin \pval(\bm j', \mathbf \zeta^{(0)})\|\pval(\bm j', \mathbf \zeta^{(1)})$. Thus, $\Deltac(\transcMat^\ast, \pval(\bm j', \mathbf \zeta^{(0)})\|\pval(\bm j', \mathbf \zeta^{(1)}))\ge 4d\implies \Deltac(\transcMat, \pval(\bm j', \mathbf \zeta^{(0)})\|\pval(\bm j', \mathbf \zeta^{(1)}))>d$.
    \end{proof}
\end{lemma}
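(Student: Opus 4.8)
The plan is to reduce everything to two elementary facts about multilinear extensions together with the hypothesis $\Deltac(\pval(\bm j,\bm 0))\ge 4d$, and then chase sets. Throughout, write $\transcMat_b:\bin^{\lgK-1+\lncol}\to\FF$ for the restriction $\transcMat_b(\vec y)=\transcMat(b,\vec y)$; viewed as $2^{\lgK}\times 2^{\lncol}$ matrices, $\transcMat$ is the vertical stacking $\transcMat_0\|\transcMat_1$ up to a permutation of rows, and since $\Deltac$ (\Cref{def:Deltac-dist}) depends only on per-column Hamming weights it is invariant under that permutation, so $\Deltac(\transcMat,\cdot)=\Deltac(\transcMat_0\|\transcMat_1,\cdot)$. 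The first fact I would record is the multilinear folding identity $\hat\transcMat(\chi,\vec y)=(1-\chi)\hat{\transcMat_0}(\vec y)+\chi\,\hat{\transcMat_1}(\vec y)$ (immediate from uniqueness of the multilinear extension), whose specialisation to $\chi\in\{0,1\}$ gives $\hat\transcMat(b,\vec y)=\hat{\transcMat_b}(\vec y)$; consequently the prescribed message $\zeta_i^{(b)}=\hat\transcMat(b,\bm j_i')$ of \Cref{alg: break into two} is \emph{equivalent} to $\transcMat_b\in\pval(\bm j',\bm\zeta^{(b)})$, i.e.\ to $\transcMat\in\pval(\bm j',\bm\zeta^{(0)})\|\pval(\bm j',\bm\zeta^{(1)})$ --- and by definition this is exactly what ``honest with respect to $\transcMat$'' (resp.\ $\transcMat^\ast$) means. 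The complexity bounds are then read off by inspection: a single round, $2T$ field elements obtained from $2T$ multilinear-extension evaluations over $\bin^{\lgK-1+\lncol}$, and a verifier checking $T$ affine relations over $\FF$.

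Next I would establish two structural consequences of $\Deltac(\pval(\bm j,\bm 0))\ge 4d$. First, the ``in all cases'' clause: if $\bm 0\neq M_0\in\pval(\bm j',\bm 0)$, then $M_0\|\bm 0$ is a nonzero element of $\pval(\bm j,\bm 0)$ --- by the folding identity $\widehat{M_0\|\bm 0}(\chi_i,\bm j_i')=(1-\chi_i)\hat M_0(\bm j_i')=0$ --- with the same per-column weights as $M_0$ (appending an all-zero block changes no column weight), hence $\Deltac(M_0,\bm 0)=\Deltac(M_0\|\bm 0,\bm 0)\ge\Deltac(\pval(\bm j,\bm 0))\ge 4d$; minimising over $M_0$ gives $\Deltac(\pval(\bm j',\bm 0))\ge 4d$. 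Second, a uniqueness fact: for every $\transcMat$, $\rowball(\transcMat)\cap\pval(\bm j,\bm v)$ contains at most one matrix, since two members $M,M'$ satisfy $M-M'\in\pval(\bm j,\bm 0)$ while $\Deltac(M,M')=\Deltac(M-M',\bm 0)\le\Deltac(M,\transcMat)+\Deltac(\transcMat,M')\le 2d<4d$, forcing $M=M'$.

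The one non-formal step is the ``soundness inclusion'': \emph{if the verifier does not reject}, then $\pval(\bm j',\bm\zeta^{(0)})\|\pval(\bm j',\bm\zeta^{(1)})\subseteq\pval(\bm j,\bm v)$. This is where the accepted linear check $v_i=(1-\chi_i)\zeta_i^{(0)}+\chi_i\zeta_i^{(1)}$ is used: for any $M_0\|M_1$ with $\hat M_b(\bm j')=\bm\zeta^{(b)}$, the folding identity gives $\widehat{M_0\|M_1}(\bm j_i)=(1-\chi_i)\zeta_i^{(0)}+\chi_i\zeta_i^{(1)}=v_i$ for all $i$. Granting this, the two distance-preservation bullets follow. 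In the completeness bullet, $\Deltac(\transcMat,\pval(\bm j,\bm v))\le d$ and the uniqueness fact make $\transcMat^\ast$ the unique element of $\rowball(\transcMat)\cap\pval(\bm j,\bm v)$; honesty for $\transcMat^\ast$ places $\transcMat^\ast$ in $\pval(\bm j',\bm\zeta^{(0)})\|\pval(\bm j',\bm\zeta^{(1)})$, and $\widehat{\transcMat^\ast}(\bm j)=\bm v$ forces acceptance, so the inclusion applies; any $M\in\rowball(\transcMat)\cap(\pval(\bm j',\bm\zeta^{(0)})\|\pval(\bm j',\bm\zeta^{(1)}))$ then lies in $\rowball(\transcMat)\cap\pval(\bm j,\bm v)=\{\transcMat^\ast\}$, so $M=\transcMat^\ast$ and $\transcMat^\ast$ remains the unique nearby matrix. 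In the bad bullet, if the verifier rejects there is nothing to prove; otherwise, suppose some $M$ lay in $\rowball(\transcMat)\cap(\pval(\bm j',\bm\zeta^{(0)})\|\pval(\bm j',\bm\zeta^{(1)}))$: by the inclusion $M\in\rowball(\transcMat)\cap\pval(\bm j,\bm v)$, which is empty when $\Deltac(\transcMat,\pval(\bm j,\bm v))>d$ and otherwise equals $\{\transcMat^\ast\}$ --- forcing $M=\transcMat^\ast$, contradicting dishonesty with respect to $\transcMat^\ast$ (i.e.\ $\transcMat^\ast\notin\pval(\bm j',\bm\zeta^{(0)})\|\pval(\bm j',\bm\zeta^{(1)})$). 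Hence $\Deltac(\transcMat,\pval(\bm j',\bm\zeta^{(0)})\|\pval(\bm j',\bm\zeta^{(1)}))>d$.

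I do not anticipate a real obstacle: the proof is pure set-chasing once the folding identity is in place. The only things requiring care are the bookkeeping between the function-table and matrix views of $\transcMat$, pinning down that ``honest with respect to $\transcMat^\ast$'' means $\transcMat^\ast\in\pval(\bm j',\bm\zeta^{(0)})\|\pval(\bm j',\bm\zeta^{(1)})$, and remembering that the inclusion $\pval(\bm j',\bm\zeta^{(0)})\|\pval(\bm j',\bm\zeta^{(1)})\subseteq\pval(\bm j,\bm v)$ is conditional on the verifier's linear check passing, so that the $d$-far conclusion in the bad case is only claimed when the verifier does not reject.
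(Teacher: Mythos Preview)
Your proposal is correct and follows essentially the same approach as the paper's proof: both hinge on the multilinear folding identity to establish the inclusion $\pval(\bm j',\bm\zeta^{(0)})\|\pval(\bm j',\bm\zeta^{(1)})\subseteq\pval(\bm j,\bm v)$ conditional on the verifier's linear check passing, and on the embedding $\pval(\bm j',\bm 0)\|\bm 0\subseteq\pval(\bm j,\bm 0)$ for the distance lower bound. Your write-up is more explicit than the paper's in a few places---you spell out the uniqueness of $\transcMat^\ast$ in $\rowball(\transcMat)\cap\pval(\bm j,\bm v)$ and carefully condition on the verifier not rejecting---but these are elaborations of steps the paper leaves implicit rather than a different route.
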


\subsubsection{Permuting Errors and Curve Fitting}

\begin{algorithm}
  \setstretch{1.1}
  \caption{Permuting Errors and Curve Fitting}
  \label{alg: gap amplification}
  \textbf{Input parameters:} $\lgK$, $\ncol$, $T$, $\secpar \in \NN$ and $\FF$ is a field.\\
  \textbf{Input:} $\bm j\in (\FF^{\lgK+\lncol})^T$ and $\bm v_0, \bm v_1\in \FF^T$.\\
  \textbf{Prover Auxiliary Input:} $\transcMat_0, \transcMat_1 : \{0,1\}^{\lgK+\lncol}\rightarrow \FF$.\\
  \textbf{Other parameters:}
  $\mathbf \Lambda=(\Lambda_1, \Lambda_2, \ldots, \Lambda_{2T})$ are distinct fixed canonical points in $\FF$.

  \textbf{Output:} A permutation $\tau: \{0,1\}^{\lgK + \lncol} \to \bin^{\lgK + \lncol}$, a curve $\mathcal C: \FF\rightarrow \FF^{\lgK+\lncol}$, $\bm j' \in (\FF^{\lgK+\lncol})^T$, $\bm v' \in \FF^T$, and $g^{(0)}$ and $g^{(1)} : \FF\rightarrow \FF$ which are univariate polynomials of degree at most $2T(\lgK+\lncol)$.
  \begin{algorithmic}[1]
    \State $\cV$ samples $\pi\leftarrow \Pi_{\lgK, 4d, 2^{-\secpar}}$ and sends it to the prover. They both let $\tau = \pi \times I$. 
    \State Both parties run \Cref{alg:GKRinsideRR} on $\transcMat_1\circ \tau, \pi^{-1}, \bm j, \bm v_1$,
    obtaining $\bm j', \bm v'$.
     \State Let $\mathcal C: \FF\rightarrow \FF^{\lgK+\lncol}$ be the canonical low degree curve passing through $\bm j$ and  $\bm j'$ i.e. the unique degree $\le 2T-1$ curve such that $\mathcal C(\bm \Lambda)=(\bm j, \bm j')$. 
    \State $\cP$ computes the univariate polynomials of degree $\le (2T-1)(\lgK+\lncol)$,
    denoted by $g^{(0)}$ and $g^{(1)}$,
    that satisfy
    \[g^{(0)}(x)=\widehat{\transcMat_0}({\mathcal{C}}(x)) \hspace{1cm} g^{(1)}(x)=\widehat{\transcMat_1\circ \tau}(\mathcal C(x)),\]
    and sends them to $\cV$.
    \State $\cV$ accepts and outputs $\pi, \mathcal{C}, \bm j', \bm v', g^{(0)}, g^{(1)}$ if
    \ifFOCS
    \begin{align*}
    g^{(0)}(\Lambda_1, \Lambda_2, \ldots, \Lambda_T)&=\bm v_0\\
    \wedge \,g^{(1)}(\Lambda_{T+1},\ldots,\Lambda_{2T})&=\bm v'.
    \end{align*}
    \else
    \[g^{(0)}(\Lambda_1, \Lambda_2, \ldots, \Lambda_T)=\bm v_0\hspace{1cm} \wedge \hspace{1cm} g^{(1)}(\Lambda_{T+1},\ldots,\Lambda_{2T})=\bm v'.\]
    \fi 
\end{algorithmic}
\end{algorithm}

\begin{lemma}\label{lem: permuting Errors and Curve Fitting}
    Let $\lgK, \lncol, T, \secpar, r, d>0$ be integers with $r<d/4$, $4d\le 2^{\lgK/50}$. Let $\FF$ be a constructible field ensemble of characteristic $2$. We also have the inputs $\bm {j, v_0, v_1}$ where $\bm j \in (\FF^{\lgK+\lncol})^T$ and $\bm v_0, \bm v_1 \in \FF^T$, with prover having access to functions $\transcMat_0, \transcMat_1: \{0,1\}^{\lgK+\lncol}\rightarrow \FF$. Additionally, if $\Deltac(\pval(\bm j, 0))\ge 4d$ then we have that the protocol either breaks and rejects or outputs univariate polynomials $g_0, g_1: \FF\rightarrow \FF$ of degree at most $2T(\lgK+\lncol)$, a permutation $\pi: \{0,1\}^{\lgK}\rightarrow \{0,1\}^{\lgK}$, and a uniformly random $\bm j' \in (\FF^{\lgK+\lncol})^T$. Then \Cref{alg: gap amplification} has the following properties:
    \begin{itemize}
        \item \textbf{Unambiguous Distance Preservation:}
        Let $\transcMat=\transcMat_0\|\transcMat_1$. Now, with probability, $\ge 1- (2^{-\secpar}+\err_{\GKR}+\exp(-d/12r))$, the following properties hold:
        \begin{itemize}\item If $\Deltac(\transcMat, \pval(\bm j, \bm v_0)\|\pval(\bm j, \bm v_1))\le d$ and the prover answers according to the prover's prescribed strategy for the unique $\transcMat_0^\ast\|\transcMat_1^\ast=\transcMat^\ast \in (\pval(\bm j, \bm v_0)\|\pval(\bm j, \bm v_1))\cap \rowball(\transcMat)$ then:
        \begin{itemize}
            \item For any $c\in \FF$: $\transcMat^\ast_0+c\transcMat^\ast_1\circ \tau$ is the unique element in $\rowball(\transcMat_0+c\transcMat_1\circ \tau)\cap(\pval(\mathcal C (\FF), g_0+cg_1(\FF)))$.
            \ifFOCS
            \item $\begin{aligned}[t]
                &\Pr_c\begin{bmatrix}
                \begin{aligned}
                &\Deltac(\transcMat_0+c\transcMat_1\circ \tau, \transcMat^\ast_0+c\transcMat^\ast_1\circ \tau)\\
                &<(1 - 1/r) \cdot \Deltac(\transcMat, \transcMat^\ast)
                \end{aligned}
            \end{bmatrix}\\
            &<\frac{2r+1}{\FF}.\end{aligned}$
            \else
            \item $\Pr_c\begin{bmatrix}
                \Deltac(\transcMat_0+c\transcMat_1\circ \tau, \transcMat^\ast_0+c\transcMat^\ast_1\circ \tau)
                <\Deltac(\transcMat, \transcMat^\ast)(1-\frac{1}{r})
            \end{bmatrix}
            <\frac{2r+1}{\FF}.$
            \fi
        \end{itemize}
        \item If $\Deltac(\transcMat, \pval(\bm j, \bm v_0)\|\pval(\bm j, \bm v_1))> d$ or the prover is not honest with respect to the unique $\transcMat^\ast \in \Deltac(\transcMat, \pval(\bm j, \bm v_0)\|\pval(\bm j, \bm v_1))$ then either: 
        \begin{itemize}
            \item $\pval((\bm j, \bm j'), g_0(\Lambda))$ is empty; or
            \item $\pval((\bm j, \bm j'), g_1(\Lambda))$ is empty; or
            \ifFOCS
            \item $\begin{aligned}[t]
                &\Pr_c
                \begin{bmatrix}
                \begin{aligned}
                &\Deltac(\transcMat_0+c\transcMat_1\circ \tau,\\
                &\quad \quad \pval(\mathcal C(\FF), g_0+cg_1(\FF))\\
                &<d(1-\frac{1}{r})
                \end{aligned}
                \end{bmatrix}\\
                &< \frac{2r+1}{\FF}.
                \end{aligned}
                $
                \else
                \item 
                $\Pr_c
                \begin{bmatrix}
                \Deltac(\transcMat_0+c\transcMat_1\circ \tau, \pval(\mathcal C(\FF), g_0+cg_1(\FF))
                <d(1-\frac{1}{r})
                \end{bmatrix} < \frac{2r+1}{\FF}.
                $
                \fi
        \end{itemize}
        \end{itemize}
    \item \textbf{Round Complexity:} $O(D\log S)$
    \item \textbf{Message length per round:} $\tilde O(T(\lgK+\lncol)\log F+d\lgK\secpar)$.
    \item \textbf{Verifier Runtime:} $\tilde O(D\log S\cdot \poly\log(\FF) + d\lgK\secpar+T(\lgK+\lncol))$.
    \item \textbf{Prover Runtime:} $\poly(S, \log \FF, T, 2^{\lgK+\lncol})$.
    \end{itemize}
    Here $D$ and $S$ are the depth and size respectively of $\Phi_{\pi, \bm j, \bm v}$ as defined in \Cref{def: GKR inside RR circuits}.
\end{lemma}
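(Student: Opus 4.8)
The plan is to assemble the lemma from three pieces that are already in hand: the distance‑preserving GKR reduction of \Cref{alg:GKRinsideRR} (whose guarantees are exactly \Cref{lem:RVW} applied to the circuit $\Phi_{\pi^{-1},\bm j,\bm v_1}$ on input $\transcMat_1\circ\tau$), the column‑distance gap amplification of \Cref{lem: Batch RR}, and routine polynomial‑interpolation bookkeeping for the curve $\mathcal C$. First I would isolate a \emph{good event} $\mathsf G$ over the verifier's coins (the permutation $\pi\gets\Pi_{\lgK,4d,2^{-\secpar}}$ and the coins of \Cref{alg:GKRinsideRR}): (i) the GKR step behaves as in \Cref{lem:RVW}, i.e. a prover deviation forces the output $\pval(\bm j',\bm v')$ to be $\Deltac$-$4d$-far from being honestly explained, while the prescribed strategy for $\transcMat^\ast_1\circ\tau$ keeps it the unique close point — failing with probability at most $\err_{\GKR}$; and (ii) $\pi$ scrambles errors in the sense of \Cref{thm: random set intersection is tiny} with intersection threshold $\gamma d$ for $\gamma=\tfrac{1}{2r-1}$ (legitimate since $r<d/4$ gives $\gamma>2/d$), failing with probability at most $2^{-\secpar}+\exp(-\gamma d/6)\le 2^{-\secpar}+\exp(-d/12r)$. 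A union bound gives $\Pr[\lnot\mathsf G]\le 2^{-\secpar}+\err_{\GKR}+\exp(-d/12r)$. I would also record once and for all the curve facts used everywhere: since $\mathcal C$ passes through every $\bm j_i$ and $\bm j'_i$, we have $\pval(\mathcal C(\FF),g_0+cg_1(\FF))\subseteq \pval((\bm j,\bm j'),g^{(0)}(\bm\Lambda)+cg^{(1)}(\bm\Lambda))$, hence $\Deltac(g,\pval(\mathcal C(\FF),\cdot))\ge\Deltac(g,\pval((\bm j,\bm j'),\cdot))$, and $\Deltac(\pval(\mathcal C(\FF),\bm 0))\ge\Deltac(\pval(\bm j,\bm 0))\ge 4d$, which supplies all the uniqueness statements automatically.

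For the honest case I would argue as follows. Uniqueness of $\transcMat^\ast=\transcMat^\ast_0\|\transcMat^\ast_1$ in $\rowball(\transcMat)$ follows from $\Deltac(\pval(\bm j,\bm 0))\ge 4d$. Running the prescribed strategy for $\transcMat^\ast$: by the prescribed‑completeness side of \Cref{lem:RVW} (on $\mathsf G$), $\transcMat^\ast_1\circ\tau$ is the unique element of $\rowball(\transcMat_1\circ\tau)\cap\pval(\bm j',\bm v')$ with $\bm v'=\widehat{\transcMat^\ast_1\circ\tau}(\bm j')$; hence the honest polynomials $g^{(0)}=\widehat{\transcMat^\ast_0}\circ\mathcal C$, $g^{(1)}=\widehat{\transcMat^\ast_1\circ\tau}\circ\mathcal C$ pass the verifier's interpolation checks, and by multilinearity $\widehat{\transcMat^\ast_0+c(\transcMat^\ast_1\circ\tau)}\circ\mathcal C=g^{(0)}+cg^{(1)}$, so $\transcMat^\ast_0+c(\transcMat^\ast_1\circ\tau)\in\pval(\mathcal C(\FF),g_0+cg_1(\FF))$, which is its unique close point by the $4d$‑distance of the kernel — this is the first honest sub‑bullet, valid for every $c$. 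For the second, I would work in the column $r^\ast$ realizing $\Deltac(\transcMat,\transcMat^\ast)$: with $A_0,A_1\subseteq\{0,1\}^\lgK$ the disagreement sets there, on $\mathsf G$ we have $|A_0\cap\pi^{-1}(A_1)|<\gamma d$, and for $c\gets\FF$ the fold disagrees with $\transcMat^\ast_0+c(\transcMat^\ast_1\circ\tau)$ on all of $A_0\triangle\pi^{-1}(A_1)$ except when $c=0$ or $c$ is one of the $\le|A_0\cap\pi^{-1}(A_1)|$ ``cancelling'' ratios; choosing $\varepsilon=\tfrac1{2r}$ this gives the bound $\Pr_c[\cdots]<\tfrac{2r+1}{|\FF|}$.

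For the dishonest case I would route everything through \Cref{lem: Batch RR}. When $\Deltac(\transcMat,\pval(\bm j,\bm v_0)\|\pval(\bm j,\bm v_1))>d$, this is exactly its hypothesis; instantiating with $\gamma=\tfrac1{2r-1}$ and $\varepsilon=\tfrac1{2r}$ (so $\delta'=d(1-\gamma)(1-\varepsilon)=d(1-1/r)$ and $\tfrac1{\varepsilon|\FF|}+\tfrac1{|\FF|}=\tfrac{2r+1}{|\FF|}$) gives, on $\mathsf G$, that for every $(u_0,u_1)\in S_\pi$ the fold is $d(1-1/r)$‑far from $\pval((\bm j,\bm j'),(v_0,u_0)+c(u_1,v'))$ for all but a $\tfrac{2r+1}{|\FF|}$‑fraction of $c$. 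If the verifier accepts, then $(v_0,u_0)=g^{(0)}(\bm\Lambda)$ and $(u_1,v')=g^{(1)}(\bm\Lambda)$ with $u_0:=g^{(0)}(\Lambda_{T+1},\dots,\Lambda_{2T})$, $u_1:=g^{(1)}(\Lambda_1,\dots,\Lambda_T)$, so either one of $\pval((\bm j,\bm j'),g^{(0)}(\bm\Lambda))$, $\pval((\bm j,\bm j'),g^{(1)}(\bm\Lambda))$ is empty (first two sub‑bullets) or $(u_0,u_1)\in S_\pi$ and the curve containment upgrades $\pval((\bm j,\bm j'),\cdot)$‑farness to $\pval(\mathcal C(\FF),\cdot)$‑farness (third sub‑bullet). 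The remaining subcase — $\transcMat$ is $\le d$‑close but the prover deviates from the prescribed strategy for $\transcMat^\ast$ — splits by where the deviation occurs: a deviation inside \Cref{alg:GKRinsideRR} is killed on $\mathsf G$ by the cheating side of \Cref{lem:RVW} (making $\transcMat_1\circ\tau$ $4d$‑far from $\pval(\bm j',\bm v')$, hence the stacked object far and \Cref{lem: Batch RR} applicable to $\pval(\bm j,\bm v_0)\|\pval(\bm j',\bm v')$), while a deviation only in $g^{(0)},g^{(1)}$ moves at least one of $(v_0,u_0),(u_1,v')$ off the prescribed value and so is caught by the interpolation check or falls into a case already handled.

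The complexity bounds are immediate: the round count, per‑round message length, and verifier/prover runtime are those of $T$ parallel GKR runs on $\Phi_{\pi^{-1},\bm j,\bm v_1}$ (\Cref{lem:RVW}, \Cref{lem:GKR}), plus the $\tO(d\lgK\secpar)$‑bit reversible‑circuit description of $\pi$ (\Cref{thm: random reversible circuits are independent} with $\eta=2^{-\secpar}$ and parameter $4d$), plus the two degree‑$\le 2T(\lgK+\lncol)$ univariate polynomials $g^{(0)},g^{(1)}$, which are communicated and interpolation‑checked in time $\tO(T(\lgK+\lncol)\log|\FF|)$; the prover additionally evaluates $\widehat{\transcMat_0}$ and $\widehat{\transcMat_1\circ\tau}$ along the curve in time $\poly(2^{\lgK+\lncol},T\log|\FF|)$. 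I expect the main obstacle to be the honest‑case distance bound: reconciling the claimed lower bound $(1-1/r)\Deltac(\transcMat,\transcMat^\ast)$ with what the error‑scrambling of \Cref{thm: random set intersection is tiny} literally yields (a loss measured against $d$, via $|A_0\cap\pi^{-1}(A_1)|<\gamma d$, rather than against $\Deltac(\transcMat,\transcMat^\ast)$) — getting the constants to come out exactly as $\tfrac{2r+1}{|\FF|}$ and $\exp(-d/12r)$ is the place that needs the most careful choice of $\gamma,\varepsilon$ and a careful treatment of the column achieving the maximum; the dishonest case and the complexity accounting are otherwise mechanical given \Cref{lem:RVW} and \Cref{lem: Batch RR}.
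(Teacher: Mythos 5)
Your proposal takes essentially the same route as the paper's proof: a case split into far-instance, close-and-honest, and close-but-deviating, with the far case routed through the column-distance gap amplification (\Cref{lem: Batch RR}), the honest-close case handled via uniqueness from the $4d$ kernel distance of $\pval(\mathcal C(\FF), 0)\supseteq\pval(\bm j,\bm 0)$, and the deviating case split by whether the deviation happens inside \Cref{alg:GKRinsideRR} or in the curve polynomials. A few non-essential differences from the paper: you set $\gamma=\tfrac1{2r-1},\varepsilon=\tfrac1{2r}$ so that $(1-\gamma)(1-\varepsilon)=1-\tfrac1r$ exactly, where the paper simply takes $\varepsilon=\gamma=\tfrac1{2r}$ (giving a slightly stronger threshold $d(1-\tfrac1{2r})^2>d(1-\tfrac1r)$, which is subsumed by the stated bound, with the same $(2r+1)/|\FF|$ tail and $\exp(-d/12r)$ failure); and in the deviation-inside-GKR sub-case you re-invoke \Cref{lem: Batch RR} on the stacked object $\pval(\bm j,\bm v_0)\|\pval(\bm j',\bm v')$, while the paper's more economical move is to establish $\Deltac(\transcMat_1\circ\tau,\pval(\bm j',\bm v'))\ge4d$ via \Cref{lem:RVW} and then call the simpler \Cref{lem: Batch RVW} (losing a factor of $2$, which the $4d$ slack absorbs). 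Both are sound.

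The one genuine concern is the honest-case probability bullet, which you correctly flag. Your direct counting of ``cancelling'' values of $c$ and the event $|A_0\cap\pi^{-1}(A_1)|<\gamma d$ are both calibrated against $d$, whereas the claimed lower bound is against $\Deltac(\transcMat,\transcMat^\ast)$, which in the close case satisfies $\Deltac(\transcMat,\transcMat^\ast)\le d$ and may be much smaller. Concretely, the number of bad $c$'s your argument produces is $O(|A_0\cap\pi^{-1}(A_1)|)=O(\gamma d)\approx d/(2r)$, not $O(r)$, so it does not by itself yield the stated $(2r+1)/|\FF|$; and the disagreement lower bound $\Deltac(\transcMat,\transcMat^\ast)-O(\gamma d)$ only beats $(1-\tfrac1r)\Deltac(\transcMat,\transcMat^\ast)$ when $\Deltac(\transcMat,\transcMat^\ast)=\Omega(d)$. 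The paper's proof does not address this either — it disposes of this sub-bullet with the one-line ``follows from \Cref{lem: Batch RR}'', whose hypothesis requires the distance to be \emph{at least} $d$, so it does not literally cover the regime $0<\Deltac(\transcMat,\transcMat^\ast)<d$. Apart from this shared gap, your reconstruction matches the paper in structure, lemma invocations, and complexity accounting.
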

\begin{proof}
We again don't focus on round, communication, and time complexities as they follow directly. 
The complexities of $\pi, g_0, g_1$ and \Cref{alg:GKRinsideRR} directly imply the results. 
Focusing on the unambiguous distance preservation property, the result is essentially a restatement of \Cref{lem: Batch RR} with $\epsilon=\gamma=\frac{1}{2r}$. 
In particular:
\begin{itemize}
    \item If $\Deltac(\transcMat, \pval(\bm j, \bm v_0)\|\pval(\bm j, \bm v_1))>d$, then the result follows by $\Cref{lem: Batch RR}$ with $\epsilon=\gamma = \frac{1}{2r}$.
    \item If $\Deltac(\transcMat, \pval(\bm j, \bm v_0)\|\pval(\bm j, \bm v_1))\le d$ and the prover is honest then:
    \begin{itemize}
        \item The first condition follows since $\Deltac(\pval(\mathcal C(\FF), g_0+cg_1(\FF)))\ge \Deltac(\pval(\bm j, 0))\ge 4d$,
        and we know that by the honesty definition that $\transcMat_0^\ast+c\transcMat_1^\ast\circ \tau \in \rowball(\transcMat_0+c\transcMat_1\circ \tau)$. Thus, it must also be unique.
        \item The second part follows from $\Cref{lem: Batch RR}$.
    \end{itemize}
    \item If the prover is dishonest first in:
    \begin{itemize}
        \item The GKR i.e. \Cref{alg:GKRinsideRR} then with probability $\ge 1-\err_{\GKR}$, we have that $\Deltac(\transcMat_1\circ \pi, \pval(\bm j', v'))\ge 4d$. Now, by \Cref{lem: Batch RVW}, the result follows.
        \item In returning $g$: If the checks pass, 
        but the prover lied in the value of $g$,
        then we know that either $\Deltac(\transcMat_0^\ast, \pval(\mathcal C(\FF), g_0(\FF)))\ge 4d$ or $\Deltac(\transcMat_1^\ast\circ \tau, \pval(\mathcal C(\FF), g_1(\FF)))\ge 4d$. Thus, since $\transcMat_0^\ast$ and $\transcMat_1^\ast\circ \tau$ are within $d$ distance of $\transcMat_0$ and $\transcMat_1\circ \tau$ by the GKR correctness and our assumption respectively. Thus, we can again simply conclude by the triangle inequality and \Cref{lem: Batch RVW}.
    \end{itemize}
\end{itemize}

\end{proof}

\subsubsection{Efficient \UIP for \pval emptiness}

Since we have used non-emptiness conditions on the \pval instances defined in \Cref{lem: Batch RR}, we need to be able to check the same. The following is a standard interactive proof for checking whether a given $\pval$ instance is empty. This is based on the observation that checking emptiness of a \pval instance is simply a rank computation, and thus we can use \Cref{lem:GKR} on such a circuit. We will make use of this lemma freely now.

\begin{lemma}[Lemma $5.10$ in \cite{TCC:RotRot20}] \label{lem: IP for pval emptiness}
    Let $T, \lgK, \secpar \in \NN$ and $\FF$, a finite field of characteristic $2$. There is a public coin unambiguous interactive proof for the language \[\mathcal L = \left\{(\bm j,\bm v) \in \left((\FF^m)^T, \FF^T\right) \mid \pval(\bm j, \bm v) \neq \varnothing\right\},\]
    with perfect completeness and the following parameters: \begin{itemize}
        \item Unambiguous Soundness: $2^{-\secpar}$
        \item Communication complexity:\\
        $\poly(m, \log (T\log \FF), \secpar)$.
        \item Round Complexity: $\poly(m, \log (T\log |\FF|),\secpar)$
        \item Verifier running time: $T\cdot \poly(m, \log |\FF|, \secpar)$
        \item Prover running time: $\poly(2^m, T\log |\FF|, 2^{\secpar})$.
    \end{itemize}  
\end{lemma}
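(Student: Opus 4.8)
\emph{Proof approach.} The plan is to reduce $\pval$‑nonemptiness to a shallow arithmetic‑circuit predicate and then run the public‑coin unambiguous \GKR protocol (\Cref{lem:GKR}) on that predicate; since \GKR's unambiguity error cannot be driven down by parallel repetition, the quantitative $2^{-\secpar}$ bound is obtained by executing \GKR over a sufficiently large extension field of $\FF$, which is legitimate because the predicate is invariant under field extension.

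First I would set up the linear‑algebraic reformulation. Let $A \in \FF^{T \times 2^m}$ be the matrix whose $(i,\bm s)$ entry is the Lagrange coefficient $\prod_{k=1}^m (\bm j_{i,k})^{s_k}(1-\bm j_{i,k})^{1-s_k}$, so that $\hat{x}(\bm j_i) = (A x)_i$ for every $x \in \FF^{2^m}$. Then $\pval(\bm j,\bm v) \neq \varnothing$ iff the system $A x = \bm v$ is solvable, iff $\mathrm{rank}(A) = \mathrm{rank}([A \mid \bm v])$. Because the rank of a matrix over a field is read off the vanishing pattern of its minors, whose values do not change under field extension, this predicate is unchanged when $\FF$ is replaced by any extension; the parties therefore agree to work over a constructible characteristic‑$2$ extension $\FF' = \GF(2^{k}) \supseteq \FF$ of size at least $2^{\secpar}\cdot\poly(m,\log T)$ (the mild inflation over $2^\secpar$ is what makes \GKR's error $\le 2^{-\secpar}$ below), and simply embed their $\FF$‑inputs into $\FF'$.

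Next I would exhibit a log‑space uniform arithmetic circuit $\C$ over $\FF'$ that, on input $(\bm j,\bm v) \in (\FF')^{T(m+1)}$, outputs $1$ exactly when $\mathrm{rank}_{\FF'}(A) = \mathrm{rank}_{\FF'}([A\mid\bm v])$: it forms the entries of $A$ by balanced products of the $\bm j_{i,k}$'s, computes the two ranks by the standard $\mathsf{NC}^2$ parallel‑linear‑algebra routine (a symbolic‑determinant computation à la Mulmuley/Berkowitz, valid over any commutative ring), reads each rank off as the index of the first nonzero coefficient of a degree‑$O(2^m)$ univariate polynomial using the characteristic‑$2$ zero‑test $c \mapsto c^{\abs{\FF'}-1}$, and finally compares the two $O(m)$‑bit integers by a shallow boolean subcircuit embedded arithmetically. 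This yields $\size(\C) = \poly(2^m, T, \log\abs{\FF'})$ and $\depth(\C) = \poly(m, \log T, \log\abs{\FF'})$, with $\C$ log‑space uniform and carrying no parameters, so $\abs{\lrag\C} = O(1)$. Applying \Cref{lem:GKR} to $\C$ with input $(\bm j,\bm v)$ over $\FF'$ reduces checking $\C(\bm j,\bm v)=1$ to checking the low‑degree extension of $(\bm j,\bm v)$ at a single uniformly random point, which the verifier performs directly since it holds $(\bm j,\bm v)$; it accepts iff this check passes. Prescribed completeness of \GKR gives perfect completeness on nonempty instances (and rejection of empty ones by the honest prover), and the unambiguity clause of \Cref{lem:GKR} states that against any prover first deviating in some round, with probability $1 - \epsilon_{\GKR} = 1 - O\!\bigl(\depth(\C)\log\size(\C)/\abs{\FF'}\bigr)$ the verifier rejects or the final check fails; by the choice of $\abs{\FF'}$ this is at least $1 - 2^{-\secpar}$, giving unambiguous soundness $2^{-\secpar}$ (hence also ordinary soundness). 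The stated round, communication, verifier‑time, and prover‑time bounds then follow by substituting $\size(\C)$, $\depth(\C)$, $\abs{\lrag\C}=O(1)$, the $\Theta(Tm\log\abs{\FF})$ cost of reading and single‑point‑evaluating the input, and $\log\abs{\FF'} = \poly(m, \log(T\log\abs{\FF}), \secpar)$ into the complexity list of \Cref{lem:GKR}.

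The main obstacle is the middle step: building the arithmetic circuit for rank/consistency with size polynomial in $2^m$ and $T$ but only polylogarithmic \emph{depth} in them (so that \GKR's cost stays $\poly(m,\log T,\ldots)$ rather than growing with $2^m$), and verifying log‑space uniformity of the whole construction. This is routine parallel linear algebra, but it is slightly delicate in characteristic $2$, where Csanky's algorithm is unavailable (forcing a Mulmuley/Berkowitz symbolic‑determinant approach) and field‑element zero‑testing must route through the Frobenius map. I expect everything else to be an essentially mechanical invocation of \Cref{lem:GKR} together with the extension‑field observation, which supplies the unambiguity amplification that parallel repetition cannot.
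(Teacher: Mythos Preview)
Your proposal is correct and follows essentially the same approach as the paper: reformulate $\pval$-nonemptiness as the rank equality $\mathrm{rank}(A)=\mathrm{rank}([A\mid\bm v])$, compute this predicate by an $\mathsf{NC}^2$ parallel-linear-algebra circuit, and run \GKR on that circuit (with the verifier directly checking the final LDE claim on $(\bm j,\bm v)$). You supply more detail than the paper on two points it leaves implicit: the need to pass to a large extension field $\FF'$ so that \GKR's $O(D\log S/|\FF'|)$ error is driven below $2^{-\secpar}$ (correctly noting that rank is invariant under field extension and that parallel repetition does not amplify unambiguity), and the characteristic-$2$ subtlety that forces a Mulmuley/Berkowitz-style rank algorithm in place of Csanky.
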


\begin{proof}
    Let matrix $M\in \FF^{2^m\times T}$ where \[M_{a,b}=\hat{x}^a(j_b)\] where $x^a = \prod\limits_{a_r = 1} x_r$. Observe that each entry of the matrix can be computed using an arithmetic circuit over $\FF$ of depth $O(\log m)$ and size $m$. Now, since each multiplication and addition over $\FF$ as a boolean circuit is of depth $\log |\FF|$ and size $\log^2|\FF|$. Thus, each coordinate of $M$ can be computed in size $m\log |\FF|^2$ and depth $O(\log|\FF|\log m|)$.

    The rank of a matrix can be computed in depth $O(\log^2(T\cdot 2^m)$ as a circuit over $\FF$ and in size $\poly(2^mT)$. Now, each operation over $|\FF|$ multiplies another $\log^2|\FF|$ to size and another $\log|\FF|$ to the depth. Now, the verifier can simply check using GKR whether $\mathsf{rank}(M)=\mathsf{rank}(M\|\bm v)$. Observe that if these two ranks are equal then $\bm v\in \FF^{T}$ is in the $\FF$ linear span of the columns of $M$ i.e. there is a polynomial whose extension applied to $\bm j$ has the image $\bm v$ and the affine space has dimension $>0$ if the rank is not full. Similarly, we could have also just checked if $\dim \pval(\bm j,\bm v)>0$ instead of just checking if $|\pval(\bm j,\bm v)|>0$. 
\end{proof}

\subsubsection{Reducing Number of Evaluation Points:}

\begin{lemma}\label{lem: eval point reduction}
    
$\lgK, \lncol, T, \secpar, d>0$ are integers. $\FF$ is a constructible field of characteristic $2$. We also have $\bm j \in (\FF^{\lgK+\lncol})^T$ and uniformly random points $\bm j'\in (\FF^{\lgK+\lncol})^T$. $\bm \Lambda=(\Lambda_1,\ldots, \Lambda_{2T})$ are canonical distinct fixed points in $\FF$. Let $\transcMat: \{0,1\}^{\lgK+\lncol}\rightarrow \FF$ and let $g: \FF\rightarrow \FF$ be a univariate polynomial of degree at most $2T(\lgK+\lncol)$. Additionally, assume that $\Deltac(\pval(\bm j, 0))$ is non-empty.

Then, we have the following unambiguous distance preservation property:
\begin{itemize}
    \item If $\Deltac(\transcMat, \pval(\mathcal C(\FF), g(\FF)))\le d$ then:
    \ifFOCS
    \begin{align*}
        &\Pr_{\alpha=(\alpha_1, \ldots, \alpha_T)\in \FF^T}\begin{bmatrix}
        \begin{aligned}[t]
            &\rowball(\transcMat)\cap \pval(\mathcal C(\alpha),g(\alpha))\\
            &=\rowball(\transcMat)\cap \pval(\mathcal C(\FF, g(\FF))
        \end{aligned}
        \end{bmatrix}\\
        &\ge p
    \end{align*}
    \else
    \[\Pr_{\alpha=(\alpha_1, \ldots, \alpha_T)\in \FF^T}[\rowball(\transcMat)\cap \pval(\mathcal C(\alpha),g(\alpha))=\rowball(\transcMat)\cap \pval(\mathcal C(\FF, g(\FF)))] \ge p\]
    \fi
    \item If $\Deltac(\transcMat, \pval(\mathcal C(\FF), g(\FF)))>d$ then: 
    \ifFOCS
    \begin{align*}
        &\Pr_{\alpha=(\alpha_1, \ldots, \alpha_T)\in \FF^T}\begin{bmatrix}
        \begin{aligned}[t]
            \Deltac(\pval(\mathcal C(\alpha), g(\alpha)), \transcMat)
            >d
        \end{aligned}
        \end{bmatrix}\\
        &\ge p,
    \end{align*}
    \else
    \[\Pr_{\alpha=(\alpha_1, \ldots, \alpha_T)\in \FF^T}[\Deltac(\pval(\mathcal C(\alpha), g(\alpha)), \transcMat)>d]\ge p,\]
    \fi
    where $p = 1 - \left(\binom{M}{d}|\FF|^d\right)^{L}\left(\frac{2T(m+\log L)}{\FF}\right)^T$.

    Additionally, 
    \ifFOCS
    \begin{align*}
        &\Pr_{\alpha=(\alpha_1, \ldots, \alpha_T)\in \FF^T}\begin{bmatrix}
        \begin{aligned}[t]
            \Deltac(\pval(\mathcal C(\alpha),0))
            \ge 4d
        \end{aligned}
        \end{bmatrix}\\
        &\ge 1-\left(\binom{M}{4d}|\FF|^{4d}\right)^L\left(\frac{m+\log L}{\FF}\right)^T
    \end{align*}
    \else
    \[\Pr_{\alpha=(\alpha_1, \ldots, \alpha_T)\in \FF^T}[\Deltac(\pval(\mathcal C(\alpha),0))\ge 4d]\ge 1-\left(\binom{M}{4d}|\FF|^{4d}\right)^L\left(\frac{m+\log L}{\FF}\right)^T\]
    \fi
    where $\mathcal C: \FF\rightarrow \FF^{\lgK+\lncol}$ is the canonical $2T-1$ degree curve passing through $\bm j, \bm j'$.
\end{itemize}
\end{lemma}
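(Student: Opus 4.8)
The plan is to prove all three bullets by the same two‑ingredient recipe: a union bound over a suitable finite family of ``bad'' matrices, and a Schwartz--Zippel estimate (\Cref{lem:SZ}) applied to the low‑degree univariate polynomial obtained by restricting to the curve $\mathcal C$. The structural observation driving everything is that membership of a matrix $\transcMat'$ in $\pval(\mathcal C(\FF),g(\FF))$ is the polynomial identity $\widehat{\transcMat'}\circ\mathcal C\equiv g$, where $\widehat{\transcMat'}\circ\mathcal C$ has degree at most $(2T-1)(\lgK+\lncol)<2T(\lgK+\lncol)<|\FF|$ by the field‑size hypothesis, so the identity is equivalent to pointwise agreement of $\widehat{\transcMat'}(\mathcal C(z))$ and $g(z)$ over all $z\in\FF$; membership in $\pval(\mathcal C(\bm\alpha),g(\bm\alpha))$ only asserts agreement at the $T$ points $\alpha_1,\dots,\alpha_T$. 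Throughout set $M=2^{\lgK}$, $L=\ncol$, and recall $|\rowball(\transcMat)|\le\bigl(\binom{M}{d}|\FF|^{d}\bigr)^{L}$.

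For the first two bullets, note the always‑true inclusion $\pval(\mathcal C(\FF),g(\FF))\subseteq\pval(\mathcal C(\bm\alpha),g(\bm\alpha))$, so in the $\Deltac(\transcMat,\pval(\mathcal C(\FF),g(\FF)))\le d$ case only the reverse inclusion (intersected with $\rowball(\transcMat)$) can fail, and in the $>d$ case the ``good'' event is exactly that $\rowball(\transcMat)\cap\pval(\mathcal C(\bm\alpha),g(\bm\alpha))$ remains empty. In both cases the bad event is witnessed by some $\transcMat'\in\rowball(\transcMat)$ with $\widehat{\transcMat'}\circ\mathcal C\not\equiv g$ (in the first case because $\transcMat'\notin\pval(\mathcal C(\FF),g(\FF))$; in the second because the hypothesis says no element of $\rowball(\transcMat)$ lies in $\pval(\mathcal C(\FF),g(\FF))$) yet $\widehat{\transcMat'}(\mathcal C(\alpha_i))=g(\alpha_i)$ for all $i$. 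For a fixed such $\transcMat'$ the polynomial $\widehat{\transcMat'}\circ\mathcal C-g$ is nonzero of degree $\le 2T(\lgK+\lncol)$, hence vanishes at the $T$ independent uniform points $\alpha_i$ with probability at most $\bigl(2T(\lgK+\lncol)/|\FF|\bigr)^{T}$; a union bound over $\rowball(\transcMat)$ gives total failure probability $\le\bigl(\binom{M}{d}|\FF|^{d}\bigr)^{L}\bigl(2T(\lgK+\lncol)/|\FF|\bigr)^{T}=1-p$, which yields both bullets.

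For the third bullet I would exploit that $\bm j'$ is uniformly random (so the displayed probability is read over $\bm\alpha$ together with the randomness of $\bm j'$). Fix $\bm\alpha$ with distinct coordinates disjoint from $\{\Lambda_1,\dots,\Lambda_T\}$; this fails for at most a $\binom{2T}{2}/|\FF|$ fraction of $\bm\alpha$, absorbed into the stated bound. Then $\Lambda_1,\dots,\Lambda_T,\alpha_1,\dots,\alpha_T$ are $2T$ distinct nodes, so a degree‑$\le 2T-1$ curve is equivalently specified by its values at these nodes; holding $\mathcal C(\Lambda_k)=j_k$ fixed, the map $\bm j'\mapsto(\mathcal C(\alpha_1),\dots,\mathcal C(\alpha_T))$ is an invertible affine map of $(\FF^{\lgK+\lncol})^{T}$, so $(\mathcal C(\alpha_i))_i$ is uniform. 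Now $\Deltac(\pval(\mathcal C(\bm\alpha),\bm 0))<4d$ means some nonzero $x\in\FF^{M\times L}$ with $\Deltac(x,\bm 0)\le 4d$ satisfies $\widehat x(\mathcal C(\alpha_i))=0$ for all $i$; there are at most $\bigl(\binom{M}{4d}|\FF|^{4d}\bigr)^{L}$ such $x$ (using $4d\le 2^{\lgK/50}\ll M$), each has nonzero multilinear extension of total degree $\lgK+\lncol$, and by \Cref{lem:SZ} applied at the uniform point $\mathcal C(\alpha_i)$ the event $\widehat x(\mathcal C(\alpha_i))=0$ for all $i$ has probability $\le\bigl((\lgK+\lncol)/|\FF|\bigr)^{T}$; the union bound gives the claim.

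The main obstacle is bookkeeping rather than conceptual: one must make the two ``per‑point'' Schwartz--Zippel estimates interact correctly with the exponents $L$ introduced by the column‑wise structure of $\pval$ and $\rowball$, and then verify that the hypotheses $|\FF|\ge\Omega(2^{\secpar}T^{2}(\lgK+\lncol)^{2})$, $T>8d\cdot 2^{\lncol}\lgK$, and $d\ge 16\lgK\secpar$ make $1-p$ (and the ``bad $\bm\alpha$'' term) exponentially small in $\secpar$, so this lemma can be chained with \Cref{lem: break into two parts,lem: permuting Errors and Curve Fitting,lem: IP for pval emptiness} inside \Cref{alg: RRbetter} without degrading the overall unambiguity error beyond $\lgK\cdot 2^{4-\secpar}$. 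I would also pause to confirm the degree accounting $\deg(\widehat{\transcMat'}\circ\mathcal C)\le(2T-1)(\lgK+\lncol)$ and $2T(\lgK+\lncol)<|\FF|$, since this is precisely what makes ``$\transcMat'\notin\pval(\mathcal C(\FF),g(\FF))$'' equivalent to the polynomial inequality $\widehat{\transcMat'}\circ\mathcal C\not\equiv g$ used throughout, and what lets us compare $\pval(\mathcal C(\FF),g(\FF))$ with the much weaker $T$‑point constraint $\pval(\mathcal C(\bm\alpha),g(\bm\alpha))$.
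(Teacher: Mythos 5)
Your proposal is correct and takes essentially the same route as the paper: Schwartz--Zippel applied to the degree-$\le 2T(\lgK+\lncol)$ univariate polynomial obtained by restricting to the curve $\mathcal C$, followed by a union bound over the $\Delta_c$-ball (of size $\le(\binom{M}{d}|\FF|^{d})^{L}$) for the first two bullets, and over the small-$\Delta_c$-weight nonzero vectors (of size $\le(\binom{M}{4d}|\FF|^{4d})^{L}$) for the third. The only differences are matters of rigor that the paper states tersely but you make explicit: the equivalence ``$\transcMat'\notin\pval(\mathcal C(\FF),g(\FF))$ iff $\widehat{\transcMat'}\circ\mathcal C\not\equiv g$'' requires $\deg<|\FF|$, and in the third bullet you explicitly exhibit the invertible affine map from $\bm j'$ to $(\mathcal C(\alpha_i))_i$ (conditioned on the $\alpha_i$ being distinct and disjoint from $\Lambda_1,\dots,\Lambda_T$) to justify the joint uniformity that the paper simply asserts.
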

\begin{proof}
    For the first part, observe that for if $\transcMat' \in \rowball(\transcMat)\setminus \pval(\mathcal C(\FF), g(\FF))$ then we have that $\widehat{\transcMat'} \circ \mathcal C\ne g$. Thus, $\Pr_{\alpha\in \FF^T}[\widehat{\transcMat'} \circ \mathcal C (\alpha) = g(\alpha)]\le \left(\frac{2T(\lgK +\lncol)}{\FF}\right)^T$ by the Schwartz-Zippel Lemma(\Cref{lem:SZ}). Thus, we finish by a union bound over all elements of $\rowball(\transcMat)$.

    For the second part, observe that since $\bm j'$ is uniformly random, thus so is $\mathcal C(\alpha)$. Thus, for any element $ \transcMat'$ such that $0<\Deltac(0, \transcMat')< 4d$, we have that $\Pr_{\alpha \in \FF^T}[\widehat{\transcMat'}(\mathcal C(\alpha))=0]\le \left(\frac{\lgK+\lncol}{|\FF|}\right)^T$ by \Cref{lem:SZ}. We can now finish by the union bound over all elements.

    \textbf{Remark.} In the proof of the second part, we used the fact that $j'$ was uniformly random. If we do not want to use this fact then we can observe that $\widehat{\transcMat'}\circ \mathcal C \ne 0$ for any $\transcMat'$ since $\Deltac(\pval(j, 0))\ge 8d$ and have a loss of a factor of $T$,
    but this will not matter much anyway.
\end{proof}

\subsubsection{Putting together the Reduction Protocol:}

\begin{algorithm}
    \setstretch{1.1}
    \caption{The Reduction Protocol}
    \label{alg: RRfolding}
    \textbf{Input Parameters:} $\lgK$, $\ncol$, $T$, $\secpar \in \NN$ and a field $\FF$. \\
    \textbf{Input:} $\bm j \in (\FF^{\lgK+\lncol})^T$, $\bm v \in \FF^T$. \\
    \textbf{Prover Auxiliary Input:} $\transcMat:\{0,1\}^{\lgK+\lncol}\rightarrow \FF$.\\
    \textbf{Output:} A permutation $\tau \in \{0,1\}^{\lgK - 1+\lncol}\to\bin^{\lgK-1+\lncol}$, a scalar $c \in \FF$, and a smaller instance: $\transcMat_{\text{new}} : \{0,1\}^{\lgK-1+\lncol}\to\FF$,
    $\bm j_{\text{new}} \in (\FF^{\lgK-1+\lncol})^T$ and $\bm v_{\text{new}} \in \FF^T$.
    \begin{algorithmic}[1]
        \State Run \Cref{alg: break into two} on input $\bm j, \bm v$ and $\transcMat$ to get output $\bm j'', \zeta^{(0)}, \zeta^{(1)}$. \State Break $\transcMat$ into $\transcMat_0\|\transcMat_1$.
        \State Run \Cref{alg: gap amplification} on input $\bm j'', \bm v_0=\zeta^{(0)}, \bm v_1 = \zeta^{(1)}$,
        obtaining $\tau, \mathcal{C}, \bm j', \bm v', g_0, g_1$.
        \State \textbf{$\pval$ Emptiness Check:} Both parties run the Protocol in \Cref{lem: IP for pval emptiness} with error parameter $\secpar$ to check the non-emptiness of $\pval((\bm j, \bm j'),g^{(0)}(\Lambda))$ and $\pval{((\bm j, \bm j'), g^{(1)}(\Lambda))}$.
        \State $\cV$ rejects if the check fails.
        \State $\cV$ samples $\alpha_1, \ldots, \alpha_T \leftarrow \FF$ and $c \leftarrow \FF$ and sends them to $\cP$.
        \State $\cV$ outputs $\tau, c, \transcMat_{\text{new}}=\transcMat_0+c(\transcMat_1\circ\tau)$, $\bm j_{\text{new}} = \left(\mathcal C (\alpha_i)\right)_{i = 1}^T$ and $\bm v_{\text{new}}= \left((g_0+cg_1)(\alpha_i)\right)_{i = 1}^T$.
    \end{algorithmic}
\end{algorithm}

\begin{lemma}\label{lem: RR Reduction Lemma}
    Let $\lgK, \lncol, T, \secpar, d, r, \in \mathbb N$ such that $r<d/4$ and $4d<2^{\lgK/50}$ and let $\FF$ be a constructible field. Let $\transcMat: \{0,1\}^{\lgK+\lncol}\rightarrow \FF$ and $\bm j \in (\FF^{\lgK+\lncol})^T, \bm v\in \FF^T$ such that $\Deltac(\pval(\bm j, 0))\ge 4d$.
    Define the following quantity.
    \ifFOCS
    \begin{align*}
        &p\coloneqq 1 - \frac{2r+1}{|\FF|}-2\left(\binom{M}{4d}|\FF|^{4d}\right)^L\left(\frac{m+\log L}{\FF}\right)^T\\
        &-\left(\binom{M}{d}|\FF|^{d}\right)^L\left(\frac{2T(m+\log L)}{\FF}\right)^T\\
        &-\left(2^{2-\secpar}+\err_{\GKR}+\exp(-d/12r)\right).
    \end{align*}
    \else
    \begin{align*}
        p\coloneqq 1 - \left(\frac{2r+1}{|\FF|}+2\left(\binom{M}{4d}|\FF|^{4d}\right)^L\left(\frac{m+\log L}{\FF}\right)^T\right)\\-\left(\left(\binom{M}{d}|\FF|^{d}\right)^L\left(\frac{2T(m+\log L)}{\FF}\right)^T+2^{2-\secpar}+\err_{\GKR}+\exp(-d/12r))\right).
    \end{align*}
    \fi
    If we run \Cref{alg: RRfolding} and get outputs $\transcMat_{\text{new}}, \bm j_{\text{new}}, \bm v_{\text{new}}$ then we have the following properties:
    \begin{itemize}
        \item \textbf{Completeness:} If $\transcMat \in \pval(\bm j, \bm v)$ and the prover was honest then $\transcMat_{\text{new}}\in \pval(\bm j_{\text{new}}, \bm v_{\text{new}})$.
        \item \textbf{Soundness:} If $\transcMat\not\in\pval(\bm j, \bm v)$ with respect to $\transcMat$ then \[\Pr[\Deltac(\transcMat_{\text{new}}, \pval(\bm j, \bm v))\ge d(1-\frac{1}{r})]\ge p.\]
        \item \textbf{Unambiguity:} If $\transcMat\in \pval(\bm j, \bm v)$ but the prover deviates from the honest strategy then \[\Pr[\Deltac(\transcMat_{\text{new}}, \pval(\bm j, \bm v))\ge d(1-\frac{1}{r})]\ge p.\]
        \item With probability at least $1-\left(\binom{M}{4d}|\FF|^{4d}\right)^L\left(\frac{m+\log L}{\FF}\right)^T$, we have that $\Deltac(\pval(\bm j_{\text{new}}, 0))\ge 4d$.
    \end{itemize}

    Additionally, the protocol has the following complexities:
    \begin{itemize}
        \item \textbf{Round Complexity:} $\tilde O(D\log S\cdot \poly(\lgK+\lncol,\log(T\log|\FF|),\secpar))$
        \item \textbf{Communication Complexity:} $(T+d)\cdot \poly(\lgK+\lncol, \log(T\log|\FF|), \secpar)$
        \item \textbf{Verifier Runtime:} $(T+d)D\log S\cdot \poly(\lgK+\lncol, \log(T\log|\FF|), \secpar)$.
        \item \textbf{Prover Runtime:}\ifFOCS \\ \fi $\poly(S, T, 2^{\lgK+\lncol}, \log|\FF|, \secpar)$.
        \item \textbf{Verifier's verification Circuit Size:}\ifFOCS \\ \fi $\tilde{O}(T)\poly(\log \FF, \lgK+\lncol, \secpar)$.
        \item \textbf{Verifier's verification circuit depth:}\ifFOCS \\ \fi $\polylog(T, \lgK+\lncol, \log|\FF|,\sec)$.
    \end{itemize}
        If we further assume that \begin{itemize}
            \item $|\FF|\ge \tilde\Omega(2^{\secpar}\cdot T^2(\lgK+\lncol+\log T)^2)$.
            \item $T>8d\cdot 2^{\lncol}\cdot \lgK$
            \item $d\ge 16\cdot r\cdot \secpar$
            \item $r = o(|\FF|\cdot 2^{-\sigma})$,
        \end{itemize} then 
        \ifFOCS
        \begin{align*}
        &\quad \frac{2r+1}{|\FF|}+3\left(\binom{M}{4d}|\FF|^{4d}\right)^L\left(\frac{m+\log L}{\FF}\right)^T\\
        &+\left(\binom{M}{d}|\FF|^{d}\right)^L\left(\frac{2T(m+\log L)}{\FF}\right)^T\\
        &+2^{2-\secpar}+\err_{\GKR}+\exp(-d/12r)\le 2^{3-\secpar}. 
        \end{align*}
        \else
        \begin{align*}
        \left(\frac{2r+1}{|\FF|}+3\left(\binom{M}{4d}|\FF|^{4d}\right)^L\left(\frac{m+\log L}{\FF}\right)^T\right)\\+\left(\left(\binom{M}{d}|\FF|^{d}\right)^L\left(\frac{2T(m+\log L)}{\FF}\right)^T+2^{2-\secpar}+\err_{\GKR}+\exp(-d/12r))\right)\le 2^{3-\secpar}.
        \end{align*}
        \fi
\end{lemma}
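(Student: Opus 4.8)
The plan is to prove \Cref{lem: RR Reduction Lemma} by composing the four sub-protocol guarantees established above --- \Cref{lem: break into two parts}, \Cref{lem: permuting Errors and Curve Fitting}, \Cref{lem: IP for pval emptiness}, and \Cref{lem: eval point reduction} --- along the sequential structure of \Cref{alg: RRfolding}, carrying the invariant $\Deltac(\pval(\bm j,\bm 0))\ge 4d$ through each step (it survives \Cref{alg: break into two} by the ``in all cases'' clause of \Cref{lem: break into two parts}, passes into the curve construction, and re-emerges as the last bullet via the final display of \Cref{lem: eval point reduction}). \textbf{Completeness} is the easy direction: if $\transcMat\in\pval(\bm j,\bm v)$ and the prover is honest, the honest $\zeta_i^{(b)}=\hat\transcMat(b,\bm j_i')$ pass the consistency test of \Cref{alg: break into two}, so $\transcMat_0\|\transcMat_1\in\pval(\bm j'',\zeta^{(0)})\|\pval(\bm j'',\zeta^{(1)})$ where $\bm j''$ is that algorithm's output; the honest run of \Cref{alg:GKRinsideRR} and the honestly computed univariate $g_0,g_1$ of degree $\le(2T-1)(\lgK+\lncol)$ pass the checks of \Cref{alg: gap amplification}; the two $\pval$ instances are nonempty so the check of \Cref{lem: IP for pval emptiness} passes; and by multilinearity of the low-degree extension, $\widehat{\transcMat_0+c(\transcMat_1\circ\tau)}(\mathcal C(\alpha_i))=(g_0+cg_1)(\alpha_i)$ for all $i$, i.e.\ $\transcMat_{\text{new}}\in\pval(\bm j_{\text{new}},\bm v_{\text{new}})$.

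\textbf{Soundness and unambiguity} are treated uniformly: in both cases the hypothesis of the ``unambiguous distance preservation'' alternative of \Cref{lem: break into two parts} holds --- either $\transcMat$ is $\Deltac$-$d$-far from $\pval(\bm j,\bm v)$, or (true case) the prover deviates from the prescribed strategy for the unique $\transcMat^\ast\in\rowball(\transcMat)\cap\pval(\bm j,\bm v)$, which is $\transcMat$ itself --- so unless $\cV$ rejects in Step~1, $\Deltac(\transcMat,\pval(\bm j'',\zeta^{(0)})\|\pval(\bm j'',\zeta^{(1)}))>d$ and the deviation flag propagates. Feeding this into \Cref{lem: permuting Errors and Curve Fitting} with $\gamma=\varepsilon=\tfrac1{2r}$ (so that $d(1-\gamma)(1-\varepsilon)\ge d(1-\tfrac1r)$), we get: except with probability $2^{2-\secpar}+\err_{\GKR}+\exp(-d/12r)$ over the coins of \Cref{alg: gap amplification} (the $2^{2-\secpar}$ also absorbing the $\le 2$ emptiness-check soundness errors), one of three events occurs --- $\pval((\bm j'',\bm j'),g_0(\bm\Lambda))=\varnothing$, $\pval((\bm j'',\bm j'),g_1(\bm\Lambda))=\varnothing$, or $\Pr_c[\Deltac(\transcMat_0+c(\transcMat_1\circ\tau),\pval(\mathcal C(\FF),(g_0+cg_1)(\FF)))<d(1-\tfrac1r)]<\tfrac{2r+1}{|\FF|}$. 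The first two are caught by the emptiness checks of \Cref{lem: IP for pval emptiness}; in the third, the freshly sampled $c$ avoids the bad event except with probability $\tfrac{2r+1}{|\FF|}$, so fixing a good $c$ and writing $g:=g_0+cg_1$, $\transcMat_{\text{new}}:=\transcMat_0+c(\transcMat_1\circ\tau)$ we have $\Deltac(\transcMat_{\text{new}},\pval(\mathcal C(\FF),g(\FF)))\ge d(1-\tfrac1r)$. Finally \Cref{lem: eval point reduction}, applied with the degraded radius $\lceil d(1-\tfrac1r)\rceil$ and the curve $\mathcal C$ (uniformly random on $\{0,1\}^{\lgK-1+\lncol}$ since $\bm j'$ is), gives $\Deltac(\transcMat_{\text{new}},\pval(\bm j_{\text{new}},\bm v_{\text{new}}))\ge d(1-\tfrac1r)$ except with probability $(\binom{M}{d}|\FF|^{d})^L(\tfrac{2T(m+\log L)}{|\FF|})^T$ over $\alpha$; a union bound over all failure events gives exactly $p$. (Here I read the literal ``$\transcMat\notin\pval(\bm j,\bm v)$'' of the Soundness item as the distance-$d$ promise $\Deltac(\transcMat,\pval(\bm j,\bm v))>d$ that the sub-lemmas, and the iterated outer protocol, actually require.)

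\textbf{Remaining items.} The $\Deltac(\pval(\bm j_{\text{new}},\bm 0))\ge 4d$ bullet is the final display of \Cref{lem: eval point reduction} (which uses only that $\bm j'$ is uniform). Round, communication, prover-time, and verifier-time bounds add the costs of \Cref{alg: break into two} ($1$ round), \Cref{alg: gap amplification} ($O(D\log S)$ rounds, $D,S$ being the depth and size of $\Phi_{\pi,\bm j,\bm v}$ from \Cref{def: GKR inside RR circuits}, i.e.\ $\poly(\lgK+\lncol,\log(T\log|\FF|))$ and $\tilde O(T\cdot 2^{\lgK+\lncol}\log|\FF|)$), the emptiness checks of \Cref{lem: IP for pval emptiness} ($\poly(\lgK,\lncol,\log(T\log|\FF|),\secpar)$ each), and the $O(1)$-round exchange of $\alpha,c$. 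Since $\cV$ never queries $\transcMat$ in one reduction step, its verdict circuit is the conjunction of the transcript-only verdict circuits of these pieces --- the two consistency tests, the \GKR verdict of \Cref{alg:GKRinsideRR}, and the emptiness verdicts --- hence has size $\tilde O(T)\cdot\poly(\log|\FF|,\lgK+\lncol,\secpar)$ and depth $\polylog(T,\lgK+\lncol,\log|\FF|,\secpar)$.

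\textbf{The parameter regime, and the main obstacle.} Under the extra hypotheses $|\FF|\ge\tilde\Omega(2^{\secpar}T^2(\lgK+\lncol+\log T)^2)$, $T>8d\cdot 2^{\lncol}\lgK$, $d\ge 16r\secpar$, $r=o(|\FF|2^{-\secpar})$, every summand of $1-p$ is $O(2^{-\secpar})$: the three $T$-th-power terms $(\binom{M}{4d}|\FF|^{4d})^L(\tfrac{m+\log L}{|\FF|})^T$, $(\binom{M}{d}|\FF|^{d})^L(\tfrac{2T(m+\log L)}{|\FF|})^T$, and $\err_{\GKR}\le\epsilon_{\GKR}+\epsilon_{\GKR}^{T}(\binom{M}{4d}|\FF|^{4d})^{\ncol}$ are each $\le 2^{-\secpar}$ because $T>8d\cdot 2^{\lncol}\lgK$ makes the per-coordinate error-rate decay outweigh the $2^{\lncol}$-fold union bound over the $\Deltac$-ball (using that $|\FF|$ is large enough that $\log|\FF|$ dominates $\log(D\log S)$ and $m$); $\tfrac{2r+1}{|\FF|}\le 2^{-\secpar}$ from $r=o(|\FF|2^{-\secpar})$; and $\exp(-d/12r)\le e^{-4\secpar/3}\le 2^{-\secpar}$ from $d\ge16r\secpar$. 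Summing the $O(1)$ many terms gives $1-p\le 2^{3-\secpar}$. All the genuine mathematics is in the gap-amplification lemmas \Cref{lem: Batch RR}, \Cref{lem: Batch BKS}, \Cref{lem: Batch RVW} and in \Cref{lem: eval point reduction}; the only real care needed here is bookkeeping --- keeping the randomness order coherent ($c$ and $\alpha$ are sampled \emph{after} $\pi$, the \GKR transcript, and $g_0,g_1$, so the ``$\Pr_c[\cdots]$'' bound can be used to fix a good $c$ before invoking \Cref{lem: eval point reduction}), tracking that the distance degrades from $d$ to $d(1-1/r)$ at curve-fitting so that \Cref{lem: eval point reduction} is run with the degraded radius (which is why the outer protocol needs $r$ super-logarithmic in $\lgK$), and matching each failure term against the claimed $p$.
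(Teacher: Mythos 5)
Your proposal is correct and follows essentially the same route as the paper: compose \Cref{lem: break into two parts}, \Cref{lem: permuting Errors and Curve Fitting}, \Cref{lem: IP for pval emptiness}, and \Cref{lem: eval point reduction} along the sequential structure of \Cref{alg: RRfolding}, add up the per-step failure probabilities to get $p$, account for the verdict circuit, and verify the parameter constraints in the final regime. The paper's own write-up is terser --- it simply declares the statement ``a combination'' of the four lemmas and moves straight to the verifier-circuit and parameter-verification details --- whereas you spell out the union-bound bookkeeping (e.g.\ why the $2^{2-\secpar}$ absorbs the two emptiness-check soundness errors plus the amplification-lemma error, and why \Cref{lem: eval point reduction} applied at the degraded radius $d(1-1/r)$ is still covered by the $\bigl(\binom{M}{d}|\FF|^d\bigr)^L$ count), which are all correct.
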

\begin{proof}
    This statement is a combination of \Cref{lem: break into two parts,lem: permuting Errors and Curve Fitting,lem: IP for pval emptiness,lem: eval point reduction}. The only thing left to show is the final verifier's verification circuit and the setting of parameters.

    The verifier's verification checks are:
    \begin{itemize}
        \item Checking Linear Combinations: Size: $T\polylog(\FF)$ and Depth $O(\log(T)+\log\log |\FF|)$.
        \item \Cref{alg:GKRinsideRR}: Depth $D$ and size $S$.
        \item Checking $O(T)$ polynomial evaluations:\\Size $\tilde O(T\cdot \log|\FF|(\lgK+\lncol))$ and depth $\poly(\log(T\log|\FF|(\lgK+\lncol)))$ since we can run multipoint polynomial evaluation.
        \item Emptiness checks: Size $T\cdot \poly(\lgK+\lncol, \secpar, \log |\FF|)$ and depth $\polylog(T, \lgK+\lncol,\log |\FF|, \secpar)$.
    \end{itemize}
    Thus, putting all these together, we get the desired result.
    
    We verify the parameters set:
    \begin{itemize}
        \item $\frac{2r+1}{|\FF|}\le 2^{-\secpar}$
        \item 
        \ifFOCS
        $\begin{aligned}[t]
            &\left(\binom{\K}{4d}\cdot |\FF|^d\right)\left(\frac{(\lgK+\lncol)}{|\FF|}\right)^T \\
            &\le \left(\frac{(2\lgK+2\lncol)^{m}\cdot |\FF|}{|\FF|^{m}}\right)^{4dL}\le \frac{1}{|\FF|}\le 2^{-\secpar}.
        \end{aligned}$
        \else
        $\begin{aligned}[t]
            \left(\binom{\K}{4d}\cdot |\FF|^d\right)\left(\frac{(\lgK+\lncol)}{|\FF|}\right)^T \le \left(\frac{(2\lgK+2\lncol)^{m}\cdot |\FF|}{|\FF|^{m}}\right)^{4dL}\le \frac{1}{|\FF|}\le 2^{-\secpar}.
        \end{aligned}$
        \fi
        \ifFOCS
        \item $\begin{aligned}[t]
            &\left(\binom{\K}{d}\cdot |\FF|^d\right)\left(\frac{T(\lgK+\lncol)}{|\FF|}\right)^T \\
            &\le \left(\frac{T(\lgK+2\lncol)^{8m}\cdot |\FF|}{|\FF|^{8m}}\right)^{dL}\le \frac{1}{|\FF|}\le 2^{-\secpar}.
        \end{aligned}$
        \else
        \item $\begin{aligned}[t]
            \left(\binom{\K}{d}\cdot |\FF|^d\right)\left(\frac{T(\lgK+\lncol)}{|\FF|}\right)^T \le \left(\frac{T(\lgK+2\lncol)^{8m}\cdot |\FF|}{|\FF|^{8m}}\right)^{dL}\le \frac{1}{|\FF|}\le 2^{-\secpar}
        \end{aligned}$
        \fi
        \ifFOCS
        \item $\begin{aligned}[t]
            \err_{\GKR} &\le \epsilon_{\GKR}(D, \log S, \abs{\FF}) \\
            &\quad + (\epsilon_{\GKR}(D, \log S, \abs{\FF}))^T \\
            &\quad \cdot\left(\binom{\K}{4d} \abs{\FF}^{4d}\right)^\ncol\\
            &\le O\left(\frac{D\log S}{|\FF|}\right)\\
            &\quad +\left(\frac{(O(D\log S)^m\cdot |\FF|}{|\FF|^m}\right)^{4d\ncol}\\
            &\le \frac{1}{|\FF|}\le 2^{-\secpar}
        \end{aligned}$
        \else
        \item $\begin{aligned}[t]
            \err_{\GKR} &\le \left(\epsilon_{\GKR}(D, \log S, \abs{\FF}) + (\epsilon_{\GKR}(D, \log S, \abs{\FF}))^T \cdot\left(\binom{\K}{4d} \abs{\FF}^{4d}\right)^\ncol\right)\\
            & \le O\left(\frac{D\log S}{|\FF|}\right)+\left(\frac{(O(D\log S)^m\cdot |\FF|}{|\FF|^m}\right)^{4d\ncol}\le \frac{1}{|\FF|}\le 2^{-\secpar}
        \end{aligned}$
        \fi
        \item $\exp(-d/12r)\le 2^{-\secpar}$.
    \end{itemize}
    Combining these we get the desired result.
    \end{proof}

\subsubsection{The Base Protocol}

\begin{algorithm}
    \setstretch{1.1}
    \caption{Base protocol in $\RR$.}
    \label{alg: RRconclude}
    \textbf{Input Parameters:} $\lgK$, $\lgK_{\text{new}}$ (where $\lgK\ge \lgK_{\text{new}}$), $\ncol$, $\secpar$, $d \in \NN$ and $\FF$ is a field.\\
    \textbf{Input:} $\bm j \in \left(\FF^{\lgK_{\text{new}}+\lncol}\right)^T$, $\bm v \in \FF^T$. \\
    \textbf{Other Parameters:} Let $\K \coloneqq 2^\lgK$ and $\K^{\text{new}} \coloneqq 2^{\lgK_{\text{new}}}$.\\
    \textbf{Prover Auxiliary Input:}
    A function $\transcMat_{\text{new}} \in \bin^{\lgK_{\text{new}}} \times \bin^{\lncol}\to\FF$. \\
    \textbf{Verifier Auxiliary Input:}
    A description $\lrag H$ of a $\FF$-linear function $H: (\FF^\ncol)^{\K}\to (\FF^{\ncol})^{\K_{\text{new}}}$,
    such that its every output coordinate is a linear combination of exactly $\frac{\K}{\K_{\text{new}}}$ elements of $\FF^{\ncol}$.
    (We assume that the set of rows that fold into any given row of $\K_{\text{new}}$ are easy to compute given the description $\left<H\right>$.)\\
    \textbf{Verifier Query Access:}
    Query access to $\transcMat: \{0,1\}^{\lgK} \times \bin^{\lncol}\rightarrow \FF$. \\
    \textbf{Output:} A succinct description $\left<Q\right>$ of $Q \subset \{0,1\}^\lgK$ of size $\left\lceil\frac{\secpar\cdot \K}{d}\right\rceil$ and a succinct predicate $\left<\Phi\right>:\FF^{|Q|\times \ncol}\rightarrow \{0,1\}$.
    \begin{algorithmic} [1]
    \State $\cP$ explicitly sends all of $\transcMat_{\text{new}}$ i.e. the entire string in $\FF^{2^{\lgK_{\text{new}}} \times \ncol}$ to $\cV$. 
    \State $\cV$ checks if all the extensions are correct i.e. $\transcMat_{\text{new}}\in \pval(\bm j, \bm v)$ (and rejects otherwise). 
    \State $\cV$ randomly selects $Q'$ to be a uniform subset of $\left\lceil\frac{\secpar\cdot \K}{d}\right\rceil$ of $[\K_{\text{new}}]$.
    \State $\cV$ outputs $Q$ as the set of rows that the rows of $H|_{Q'}$ depends on. $\Phi$ checks if $H(\transcMat|_{Q})=\transcMat_{\text{new}}|_{Q'}$. 
\end{algorithmic}
\end{algorithm}
\begin{claim}
    If $\Deltac(\pval(\bm j, \bm v), H(\transcMat))\ge d$  then with probability $\ge 1 - 2^{-\secpar}$, we have that $\Phi(\transcMat|Q)=0$.
    Additionally, \Cref{alg: RRconclude} has the following complexities:
    \begin{itemize}
        \item Round Complexity: $O(\lgK_{\text{new}})$
        \item Communication Complexity: $O(2^{\lgK_{\text{new}}}\cdot \ncol\cdot \log(\FF)+T\cdot (\lgK_{\text{new}}+\lncol)\cdot \log(\FF))$. 
        \item Verifier runtime: $O(\K_{\text{new}}\cdot \ncol\cdot \poly(\lncol+\lgK, \log \FF)\cdot T)$
        \item Prover runtime: $\K_{\text{new}}\ncol\cdot \log|\FF|\poly(T, \lgK_{\text{new}}+\lncol)$
    \end{itemize}
\end{claim}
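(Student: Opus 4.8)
The plan is to dispose of the easy branch first and then run a hitting-set argument over the subsampled rows. If the prover's message $\transcMat_{\text{new}}$ fails the $\pval$-membership test on Line~2, the verifier rejects; in the ambient protocol this makes the verdict $0$ regardless of $\Phi$, so the only interesting case is $\transcMat_{\text{new}} \in \pval(\bm j, \bm v)$ as an actual element of that affine space. Since the $\Deltac$-distance from a matrix to a set is the minimum of the distances to its elements, the hypothesis $\Deltac(\pval(\bm j, \bm v), H(\transcMat)) \ge d$ yields $\Deltac(\transcMat_{\text{new}}, H(\transcMat)) \ge d$ for this particular $\transcMat_{\text{new}}$; unwinding \Cref{def:Deltac-dist}, there is a column index $t^\ast \in [\ncol]$ such that the $\K_{\text{new}} \times \ncol$ matrices $\transcMat_{\text{new}}$ and $H(\transcMat)$ disagree, in column $t^\ast$, on a set $B \subseteq [\K_{\text{new}}]$ of rows with $|B| \ge d$. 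Crucially, $\transcMat_{\text{new}}$ — and hence $B$ — is committed on Line~1, before the verifier samples $Q'$, so no prover strategy beyond the choice of $\transcMat_{\text{new}}$ itself enters the analysis.

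Next I would show the subsampled check catches this discrepancy. Whenever $Q' \cap B \neq \varnothing$, say $i^\ast \in Q' \cap B$, the $i^\ast$-th rows of $H(\transcMat)$ and $\transcMat_{\text{new}}$ differ in coordinate $t^\ast$; since $\Phi(\transcMat|_Q)$ recomputes exactly the $Q'$-restricted output $H|_{Q'}$ from the input rows $Q$ those outputs depend on and compares it to the hardwired $\transcMat_{\text{new}}|_{Q'}$, it must output $0$. It then remains to bound $\Pr_{Q'}[Q' \cap B = \varnothing]$: with $Q'$ a uniformly random size-$s$ subset of $[\K_{\text{new}}]$ this is $\binom{\K_{\text{new}}-|B|}{s}/\binom{\K_{\text{new}}}{s} \le (1-|B|/\K_{\text{new}})^{s} \le (1-d/\K_{\text{new}})^{s} \le e^{-ds/\K_{\text{new}}}$. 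The sampling size is set to $s = \lceil \secpar\,\K_{\text{new}}/d\rceil$ (so that the induced query set $Q$, which gathers the $\K/\K_{\text{new}}$ input rows feeding each of the $s$ sampled output rows, has size $\lceil\secpar\,\K/d\rceil$ as claimed), whence $ds/\K_{\text{new}} \ge \secpar$ and the miss probability is at most $2^{-\secpar}$. This is exactly the soundness statement; note there are no unambiguity subtleties here, since the only freedom a cheating prover has is which point of $\pval(\bm j, \bm v)$ to send, and every such point is $\Deltac$-$d$-far from $H(\transcMat)$ by hypothesis.

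For the complexities I would read them off the protocol. The only message is the prover writing $\transcMat_{\text{new}} \in \FF^{\K_{\text{new}}\times\ncol}$, costing $O(\K_{\text{new}}\ncol\log|\FF|)$ bits (split over $O(\lgK_{\text{new}})$ rounds if one wants per-round messages bounded) plus the $O(T(\lgK_{\text{new}}+\lncol)\log|\FF|)$ bits naming the $\pval$ instance, which gives the stated round and communication bounds. The verifier's time is dominated by Line~2, i.e.\ evaluating the multilinear extension $\widehat{\transcMat_{\text{new}}}$ at the $T$ points of $\bm j$: each evaluation is $\tO(\K_{\text{new}}\ncol)$ field operations and each field operation is $\poly(\log|\FF|)$ bit operations, for a total of $O(\K_{\text{new}}\ncol\cdot\poly(\lncol+\lgK,\log|\FF|)\cdot T)$; sampling $Q'$, extracting $Q$ from $\lrag{H}$, and emitting the succinct descriptions $\lrag{Q},\lrag{\Phi}$ (whose sizes, dominated by the hardwired $\transcMat_{\text{new}}|_{Q'}$, I would track to match the theorem statement) are all lower order. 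The prover's runtime is just the cost of producing and transmitting $\transcMat_{\text{new}}$, namely $\K_{\text{new}}\ncol\log|\FF|\cdot\poly(T,\lgK_{\text{new}}+\lncol)$.

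The one place needing genuine care — the ``main obstacle'' — is matching the subsampling parameter against the claimed query bound $|Q|=\lceil\secpar\,\K/d\rceil$. This uses the structural property of $H$ (which in this application is a composition of \RR folding steps, each pairing rows via a permutation) that its $\K_{\text{new}}$ output-row dependency sets form an exact partition of $[\K]$ into blocks of size $\K/\K_{\text{new}}$: each sampled output row then contributes a fresh block of $\K/\K_{\text{new}}$ input rows to $Q$, so the hitting-set calculation — which lives over the $\K_{\text{new}}$ output rows — translates cleanly into the bound on $|Q|$ over the $\K$ input rows. Everything else is routine bookkeeping.
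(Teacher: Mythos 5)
Your proof is correct and takes essentially the same approach as the paper's (which is a one-sentence hitting-set argument: the probability that a uniform $Q'\subseteq[\K_{\text{new}}]$ misses every one of the $\ge d$ disagreeing rows is at most $(1-d/\K_{\text{new}})^{|Q'|}\le 2^{-\secpar}$). You fill in the details the paper elides — disposing of the rejection branch on Line~2, observing that $\transcMat_{\text{new}}$ is committed before $Q'$ is drawn, extracting a single bad column $t^\ast$ from the $\Deltac$ hypothesis, and verifying that $\Phi$ is built to flag any hit — and these are all the right steps. One note on parameters: the protocol as written samples $Q'$ of size $\lceil\secpar\K/d\rceil$ from $[\K_{\text{new}}]$ while also asserting $|Q|=\lceil\secpar\K/d\rceil$, which is inconsistent unless $\K=\K_{\text{new}}$; your choice $|Q'|=\lceil\secpar\K_{\text{new}}/d\rceil$ (so that $|Q|=|Q'|\cdot\K/\K_{\text{new}}\approx\lceil\secpar\K/d\rceil$) is the reading that makes the claimed query bound come out, and it still gives $(1-d/\K_{\text{new}})^{|Q'|}\le e^{-\secpar}\le 2^{-\secpar}$. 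Your treatment of the complexity items is a reasonable read-off; the $O(\lgK_{\text{new}})$-round figure is indeed only explicable by spreading the single large prover message over rounds, which the paper does not spell out either.
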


\begin{proof}
    When the distance is at least $d$,
    the verification passes exactly when none of the rows in the sampled $Q'$ are inconsistent with $H$.
    This probability is at most $(1-\frac{d}{\K_{\text{new}}})^{|Q'|} \le 2^{-\secpar}$.
\end{proof}
\subsubsection{The complete protocol:}

\begin{theorem}[IPP for $\pval$ with column distance]
    \label{thm: copy of DcRR}
    Let $\lgK,\lncol, T, \secpar, d \in \NN$. Let $\FF$ be a constructible field ensembles such that $\FF$ has characteristic $2$.
    If
    \begin{itemize}
        \item $|\FF|\ge \Omega(2^{\secpar}\cdot T^2(\lgK+\lncol)^2)$,
        \item $T>8d\cdot 2^{\lncol}\cdot \lgK$,
        \item $d\ge 16\cdot \lgK\cdot \secpar$
    \end{itemize}

    Then for any $\bm j \in (\FF^{\lgK+\lncol})^T, v\in \FF^T$, the matrix $\pval(\bm j,\bm v)$ has a public coin unambiguous $\IPP$ (\Cref{alg: RRbetter}) as long as $\Deltac(\pval(\bm j, 0))>4d$:
    \begin{itemize}
        \item Soundness error: $\lgK\cdot 2^{4-\secpar}$
        \item Prover to Verifier communication complexity: $T\log |\FF|\cdot \poly(\lgK,\lncol)+\ncol\cdot \log|\FF|\cdot \poly(d)$.
        \item Verifier to Prover communication complexity: $T\log |\FF|\cdot \poly(\lgK,\lncol)$ 
        \item Round Complexity: $\poly(\lgK,\lncol, \log(T\log|\FF|))$
        \item Verifier runtime:
        \ifFOCS\\
        \fi
        $T\log|\FF|\cdot \poly(\lgK,\lncol, \log(T\log|\FF|), \log |\FF|) + \tilde{O}(\ncol\cdot \log|\FF|\cdot \poly(d))$ 
        \item Prover runtime: $\poly(2^{\lgK+\lncol},T\log |\FF|)$
        \item Query complexity : $\abs{\RRS} = \frac{4\secpar\cdot 2^{\lgK+\lncol}}{d}$.
    \end{itemize}
    The final verifier verdict is given by a succinctly described predicate $\Phi : \FF^{\frac{4\secpar}{d}\cdot2^{\lgK+\lncol}}\mapsto \{0,1\}$ along with a succinctly described set $Q\subseteq [2^\lgK]$ such that $|Q|=\frac{4\secpar\cdot 2^{\lgK}}{d}$, and the verdict is $\Phi(f|Q\times \{0,1\}^\lncol)$ with $|\left<Q\right>|=\tilde O(\lgK^2d\sigma\cdot +\lgK\log|\FF|+\sigma\cdot \poly(d))$ and $|\left<\Phi\right>|=|\left<Q\right>|+\sigma\cdot \poly(d)\cdot \ncol\cdot \log|\FF|$.
\end{theorem}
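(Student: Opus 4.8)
The plan is to build the protocol (\Cref{alg: RRbetter}) by first iterating the Reduction Protocol and then running the Base Protocol, and to establish the stated guarantees by induction on the iteration count while union-bounding all errors. Starting from the claim $\transcMat\in\pval(\bm j,\bm v)$ with $\transcMat:\{0,1\}^{\lgK+\lncol}\to\FF$, I would have the parties run the Reduction Protocol (\Cref{alg: RRfolding}, whose guarantees are \Cref{lem: RR Reduction Lemma}) for about $\lgK-\lgK_{\mathrm{new}}$ iterations, halving the row dimension each time: the $i$-th step turns a claim about $\transcMat_i:\{0,1\}^{\lgK-i+\lncol}\to\FF$ into a claim $\transcMat_{i+1}=(\transcMat_i)_0+c_i\big((\transcMat_i)_1\circ\tau_i\big)\in\pval(\bm j_{i+1},\bm v_{i+1})$ with $\tau_i=\pi_i\times I$. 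I would stop at a small residual dimension $\lgK_{\mathrm{new}}=\Theta(\log(\lgK\secpar))$ and invoke the Base Protocol (\Cref{alg: RRconclude}): the prover sends the $2^{\lgK_{\mathrm{new}}}$-row matrix in the clear, the verifier checks it lies in the current \pval set and sub-samples $O(\secpar\,\K_{\mathrm{new}}/d)$ of its rows $Q'$; the output query set $Q$ is the union of the original rows on which the composed linear fold $H$ restricted to $Q'$ depends, and the output predicate $\Phi$ checks whether $H(\transcMat|_Q)=\transcMat_{\mathrm{new}}|_{Q'}$. Since each fold is linear in $\transcMat$ and depends on exactly two rows of the previous matrix, $H$ sends $\K$ rows to $\K_{\mathrm{new}}$ rows with each output row a combination of $\K/\K_{\mathrm{new}}$ original rows; hence $|Q|=O(\secpar\K/d)$ and, because each $\pi_i\sim\Pi_{\lgK_i,4d_i,2^{-\secpar}}$ (and its inverse) has a short seed and is efficiently computable (\Cref{thm: random reversible circuits are independent}), both $\langle Q\rangle$ and $\langle\Phi\rangle$ admit succinct, low-depth descriptions built from the $O(\lgK)$ permutation seeds and $Q'$.

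\textbf{Parameter setting.} This is the quantitative core. I would track an absolute distance budget $d_i:=d(1-1/r)^i$; \Cref{lem: RR Reduction Lemma} loses at most a factor $(1-1/r)$ per step, so taking $r=\Theta(\lgK)$ keeps $d_{\lgK-\lgK_{\mathrm{new}}}\ge d/4$ via $(1-1/(4\lgK))^{\lgK}\ge e^{-1/4}$. The hypothesis $d\ge16\lgK\secpar$ then gives $d_i\ge16r\secpar$ throughout (a precondition of \Cref{lem: RR Reduction Lemma}), and stopping at $\lgK_{\mathrm{new}}=\Theta(\log(\lgK\secpar))$ keeps $4d_i\le2^{\lgK_i/50}$ valid. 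The invariant $\Deltac(\pval(\bm j_i,\bm 0))\ge4d_i$ --- needed by the correlated-agreement step --- is re-established after each fold by the last bullet of \Cref{lem: RR Reduction Lemma}, up to a failure term $\big(\binom{\K}{4d}|\FF|^{4d}\big)^{\ncol}\big(\frac{\lgK+\lncol}{|\FF|}\big)^{T}$. Finally, under $|\FF|\ge\Omega(2^{\secpar}T^2(\lgK+\lncol)^2)$, $T>8d\cdot2^{\lncol}\lgK$, and $d\ge16\lgK\secpar$, each of the quantities $\frac{2r+1}{|\FF|}$, $\err_{\GKR}$, $\exp(-d/12r)$ and the binomial-tail terms appearing in \Cref{lem: RR Reduction Lemma} and in each \pval-emptiness check (\Cref{lem: IP for pval emptiness}) is at most $2^{-\secpar}$; a union bound over the $O(\lgK)$ reduction steps, emptiness checks, and the Base Protocol gives total error $\lgK\cdot2^{O(1)-\secpar}$.

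\textbf{Completeness and (unambiguous) soundness.} These follow by induction. If $\transcMat\in\pval(\bm j,\bm v)$ and the prover is honest, the completeness bullets of \Cref{lem: break into two parts}, \Cref{lem: permuting Errors and Curve Fitting} (using correctness of \Cref{alg:GKRinsideRR} via \Cref{lem:RVW} and \Cref{lem:GKR}) and \Cref{lem: eval point reduction} keep $\transcMat_{i+1}\in\pval(\bm j_{i+1},\bm v_{i+1})$ at every step, so the Base Protocol accepts and $\Phi(\transcMat|_Q)=1$. In the soundness case $\Deltac(\transcMat,\pval(\bm j,\bm v))>d$; in the unambiguity case there is a first round in which the prover deviates, after which the Distance-Preservation bullets of the sub-lemmas are in force. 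In either case the invariant ``$\Deltac(\transcMat_i,\pval(\bm j_i,\bm v_i))\ge d_i$, or the relevant \pval set is empty and the emptiness check rejects'' propagates with the error accounting above, so with probability $\ge1-\lgK\cdot2^{O(1)-\secpar}$ the matrix reaching the Base Protocol is $\Deltac$-$d/4$-far from its \pval set (or fails the \pval check); and since any inconsistent claimed $\transcMat_{\mathrm{new}}$ must then disagree with $H(\transcMat)$ in some column on at least $d$ of the $\K_{\mathrm{new}}$ rows, sampling $O(\secpar\K_{\mathrm{new}}/d)$ rows detects it except with probability $2^{-\secpar}$, forcing $\Phi(\transcMat|_Q)=0$.

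\textbf{Main obstacle, and complexities.} The hardest part, relative to the Hamming-distance \RR $\IPP$ of \cite{TCC:RotRot20}, is the \emph{unambiguity} guarantee: parallel repetition does not reduce unambiguity error, so we cannot amplify a weakly-sound protocol; the analysis must run over a single large field and each sub-protocol's distance-preservation must hold in the unambiguous sense, i.e., against a prover that first deviates in some round. The mechanism that makes this work is the $\Deltac$ correlated-agreement lemma (\Cref{lem: Batch BKS}), which needs the ambient space $\pval((\bm j,\bm j'),\bm 0)$ to have large $\Deltac$-distance --- exactly why the invariant $\Deltac(\pval(\bm j_i,\bm 0))\ge4d_i$ must be carried through all iterations and re-derived (via \Cref{lem:SZ}, using that the curve-fitting step of \Cref{lem: permuting Errors and Curve Fitting} outputs almost-uniform evaluation points) after every fold. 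A second delicate point is making $\langle Q\rangle$ and $\langle\Phi\rangle$ genuinely succinct and low-depth, which forces us to compose the $\pi_i$ into a single fold $H$ with efficiently describable row-dependencies, available only because both $\pi_i$ and $\pi_i^{-1}$ are efficiently computable with short seeds (\Cref{thm: random reversible circuits are independent}). The remaining bounds come from summing the per-step costs of \Cref{lem: RR Reduction Lemma} --- round complexity $\tO(D_i\log S_i)$, with $D_i=\poly(\lgK+\lncol,\log(T\log|\FF|))$ and $S_i=\tO(T\cdot2^{\lgK_i+\lncol}\log|\FF|)$ coming from the $\GKR$-inside-\RR circuit $\Phi_{\pi_i,\bm j_i,\bm v_i}$ (\Cref{def: GKR inside RR circuits}) --- over the $O(\lgK)$ iterations, and adding the Base Protocol; the $\ncol\cdot\log|\FF|\cdot\poly(d)$ terms in the communication, verifier time, and $|\langle\Phi\rangle|$ are precisely the cost of the Base Protocol transmitting the residual $2^{\lgK_{\mathrm{new}}}\times\ncol=\poly(d)\times\ncol$ matrix together with the emptiness-check descriptions.
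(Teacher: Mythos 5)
Your proposal follows the paper's own argument essentially step for step: iterate the Reduction Protocol (\Cref{lem: RR Reduction Lemma}) tracking a multiplicative distance loss of $(1-1/r)$ per round with $r=\Theta(\lgK)$, re-derive the invariant $\Deltac(\pval(\bm j_i,\bm 0))\ge 4d$ after each fold, union-bound all error terms (each $\le 2^{O(1)-\secpar}$ under the stated field-size/$T$/$d$ hypotheses), terminate with the Base Protocol on a $\poly(d)$-row residual, and sum per-step costs to get the stated complexities. The only deviations are cosmetic choices of constants (e.g.\ you write $r=4\lgK$ and $\lgK_{\text{new}}=\Theta(\log(\lgK\secpar))$ where the paper takes $r=\lgK$ and stops when $2^{\lgK_{\text{prot}}/50}\le 4d$, so really $\lgK_{\text{new}}=\Theta(\log d)$, but this still gives $2^{\lgK_{\text{new}}}=\poly(d)$), none of which affect the argument.
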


\begin{proof}
    Observe that the total error is at most the sum of the errors in each round which we know is at most $2^{3-\secpar}$ for the first $\lgK$ rounds if $r=\lgK$ in \Cref{lem: RR Reduction Lemma}. This would result in that if $\Deltac(\transcMat, \pval(\bm j, v))\ge d$ initially, then after $i$ rounds, we have $\Deltac(\transcMat_{\text{prot}}, \pval(\bm j, \bm v))\ge d(1-\frac{1}{\lgK})^i$. Thus, with probability at least $\lgK \cdot 2^{3-\secpar}$, we have that when \Cref{alg: RRconclude} is called then indeed $\Deltac(\transcMat_{\text{prot}},\pval(\bm j, \bm v))>d/4$ as required. Thus, the total error is at most $\lgK\cdot 2^{4-\secpar}$ as required.

    The total round, time, communication complexities are just the sum of $\lgK$ times the complexities of \Cref{alg: RRfolding} and the complexity of \Cref{alg: RRconclude},
    as desired.
    The final query complexity is as described in \Cref{alg: RRconclude}.

    We discuss the succinct descriptions in the next subsection.
\end{proof}

\begin{algorithm}
  \setstretch{1.1}
  \caption{Efficient $\IPP$ for $\pval$ with column distance}
  \label{alg: RRbetter}
  \textbf{Input Parameters:} $m$, $\ncol \in \NN$, $\secpar$, $d \ge 16\secpar\cdot \lgK$, $\FF$ is a field with $\abs{\FF}\ge \FboundT$.\\
  \textbf{Input:} $\bm j\in (\FF^{\lgK+\lncol})^T, \bm v\in \FF^T$.\\
  \textbf{Other Parameters:} Let $r \coloneqq \lgK$.\\
  \textbf{Prover Auxiliary Input:} $\transcMat:\{0,1\}^{\lgK+\lncol}\rightarrow\FF$.\\
  \textbf{Verifier Query Access:} Query access to $\transcMat:\{0,1\}^{\lgK+\lncol}\rightarrow\FF$.\\
  \textbf{Output:} Succinct descriptions of $Q \subset [\K]$ and $\Phi : \FF^{|Q| \times \ncol}\rightarrow \{0,1\}$.
  \begin{algorithmic}[1]
    \State $\cV$ maintains a linear function $H: \left(\FF^{\lncol}\right)^{\K}\rightarrow \left(\FF^{\lncol}\right)^{\K_{\text{prot}}}$.
    \State Set the variable parameters: $\lgK_{\text{prot}} \coloneqq \lgK$, $H_{\text{prot}} \coloneqq Id$ and $\transcMat_{\text{prot}} \coloneqq \transcMat$.
    \While{\textsf{True}}
    \If{$2^{\lgK_{\text{prot}/50}}\le 4d$:}
    \State Both parties run \Cref{alg: RRconclude} on input parameters $\lgK$, $\lgK_{\text{prot}}$, $\ncol$, $\secpar$, $d/4$, $\FF$, and input $\bm j$, $\bm v$, prover auxiliary input $\transcMat_{\text{prot}}$, verifier auxiliary input $\left<H_{\text{prot}}\right>$ and verifier query access to $\transcMat$, obtaining $\left<Q\right>$ and $\left<\Phi\right>$.
    \State $\cV$ outputs $\left<Q\right>$ and $\left<\Phi\right>$.
    \Else \State Both parties run \Cref{alg: RRfolding} on input parameters $\lgK_{\text{prot}}, \ncol$, $\secpar$ and $T$, input $\bm j, \bm v$ and prover auxiliary input $\transcMat_{\text{prot}} = \transcMat_{0,\text{prot}}\|\transcMat_{1,\text{prot}}$,
    obtaining $\tau$, $c$, $\bm j_{\text{new}}, \bm v_{\text{new}}$ and $\transcMat_{\text{new}}=\transcMat_0+c\transcMat_1\circ \tau$.
    \State 
    Update $\bm j = \bm j_{\text{new}}$, $\bm v = \bm v_{\text{new}}$, $\lgK_{\text{prot}}=\lgK_{\text{prot}}-1$, and $\transcMat_{\text{prot}} = \transcMat_{\text{new}}$.
    \State Update $H_{\text{prot}}=G\circ H_{\text{prot}}$ where $G(x_0\|x_1)=x_0+cx_1\circ \tau$. $\left<H_{\text{prot}}\right>=\left<H_{\text{prot}}\right>\|(\tau, c)$. 
    \EndIf
    \EndWhile
\end{algorithmic}
\end{algorithm}

\subsection{Succinct Descriptions}
We verify that the output parameters $(Q, \C)$ have succinct descriptions. The size of $\left<Q\right>$ is given by the randomness of the choice in $\Cref{alg: RRconclude}$ which is $\poly(d)\cdot \sigma \cdot \log|\FF|$ and the $(c, \pi)$ pairs from the $\lgK$ rounds. Thus, giving us a total complexity of $\tilde O(\lgK^2d\sigma\cdot +\lgK\log|\FF|+\sigma\cdot \poly(d))$ since the seed length of a single $\pi$ is $\tilde O(\lgK \cdot d \cdot \sigma)$.

Additionally, $|\left<\C\right>|$ is just given by $|\left<Q\right>|+|Q'|\cdot \ncol\cdot \log|\FF|$ since $\left<Q\right>$ contains the entire details of $\left<H\right>$. Thus, we only need $\transcMat_{\text{prot}}|Q'$ which is $\lncol\cdot |Q'|\cdot \log|\FF |=\lncol\log|\FF|\sigma\cdot \poly(d)$ bits. 
\paragraph{Circuit $G$ for generating $Q$}
Let $\lgK'=\lgK_{\text{{prot}}}$ when $\Cref{alg: RRconclude}$ is called.

The set $Q$ has succinct description 
\ifFOCS
\begin{align*}
\lrag Q = \left(\lrag Q', \begin{Bmatrix}
    \begin{aligned}[t]
    &\pi: \pi \text{ is used in the folding }\\
    &\text{ steps of the protocol}
    \end{aligned}
\end{Bmatrix}\right),
\end{align*}
\else
\[\lrag Q = (\lrag Q', \set{\pi: \pi \text{ is used in the folding steps of the protocol}}),\] 
\fi
where $Q' \subset \bin^{m'}$ are the $\frac{4\sigma\cdot 2^{\lgK'}}{d}$ random rows selected in the end of the iteration and $\lrag R \in \bin^{\abs{Q'} \cdot m'}$ are the random coins used to generate $Q'$.

Given $(i, \lrag Q)$, 
$G$ needs to generate the $i$-th element of $Q$.
The idea is to carry out the folding steps in reverse,
which can be done as follows.
First, it generates the full set $Q'$ using $\lrag R$,
and let $Q_0 \coloneqq Q'$,
which requires $\tO(\abs{Q'})$ gates and $\tO(1)$ depth.
If $i \le \abs{Q}/2$,
the $i$-th element of $Q$ would lie in the first half before folding,
so $G$ prepends a $0$ to every element in $Q_0$.
Otherwise,
$G$ prepends a $1$ to every element in $Q_0$.
Since the indices in the second half were permuted by $\pi$,
$G$ also applies $\pi^{-1}$ to every element in $R_0$ to reconstruct their locations in the original index space.
The circuit $G$ repeats this process $(\lgK-\lgK')$ times,
until it obtains $\abs{Q'}$ indices that lies in $\bin^{\lgK}$,
and it outputs the $(i \bmod \abs{Q'})$-th index.

\paragraph{Circuit $C(\transcMat|Q \times \bin^\lncol, \lrag \C)$ for checking $\C(\transcMat|Q \times \bin^\lncol)$}
Let $Q' \subset \bin^{m'}$ be the $\left\lceil\frac{4\sigma\cdot 2^{\lgK'}}{d}\right\rceil$ random rows selected in the end of the iteration, 
and $\lrag Q' \in \bin^{\abs{Q'} \cdot m'}$ be the random coin used to generate it.
The predicate $\C$ can be described by the pair
\ifFOCS
\begin{align*}
\left(\lrag Q', 
\begin{Bmatrix}
    \begin{aligned}[t]
    &(c, \pi): (c, \pi) \text{ is used in the folding }\\
    &\text{ steps of the protocol}, \bm \psi 
    \end{aligned}
    \end{Bmatrix}\right),
\end{align*}
\else
\[(\lrag Q', \set{(c, \pi): (c, \pi) \text{ is used in the folding steps of the protocol}}, \bm \psi),\]
\fi
where $\psi$ is the set of values claimed by the prover on $Q'$.

The circuit first enumerates the entire $Q$ using $\lrag Q'$ and all the $\pi$'s (as in the last paragraph),
and then it applies the $\FF$-linear function $\phi: \FF^{\abs{Q}\times 2^\lncol} \to \FF^{2^\lncol}$ determined by all the $(c,\pi)$ used in the folding steps of the protocol, on the $2^\lncol$ columns of the truth table of $f$ in parallel.
This can be done by iteratively applying the corresponding $\pi$ and figuring out which of $c$ to multiply in every folding step for each position in $Q$,
and requires $\tO(\abs{Q} \cdot 2^{\lncol} + \secpar +2^\lncol \Flog)$ gates and $\tO(1)$ depth.

\subsection{Construction of the Verifier's Verification Circuit}

We now wish to understand the size and depth of the verifier's verification circuit $\cV_{\RRrow}$. $\cV_{\RRrow}$ on reading the transcript of the interaction and the verifier decides whether to reject or to check $\Phi$.  
\begin{claim}
    If the transcript of the interaction is of length $\mathrm L$, then $\cV_{\RRrow}$, the verification circuit, has the following properties:
    \begin{itemize}
        \item Size of $\cV_{\RRrow}$ is $\tilde{O}(\mathrm L)$ where we ignore $\poly(\lgK +\lncol)$ factors.
        \item Depth of $\cV_{\RRrow}$ is $\tilde{O}(\log \mathrm{L})$ where we ignore $\poly(\lgK+\lncol)$ factors.
    \end{itemize}
\end{claim}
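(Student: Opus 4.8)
The verdict circuit $\cV_{\RRrow}$ is exactly the circuit underlying \Cref{alg: RRbetter}: for fixed $\lgK,\lncol,d,\secpar$ that algorithm runs a deterministic schedule of at most $\lgK$ invocations of the folding step \Cref{alg: RRfolding}, followed by one invocation of the base step \Cref{alg: RRconclude}, so $\cV_{\RRrow}$ is simply the $\AND$ of the verdict circuits of these $O(\lgK)$ sub-protocols. The plan is therefore: (i) for each sub-protocol, bound the size and depth of its verdict circuit \emph{relative to the portion of the transcript it inspects}; (ii) combine the bounds through the outer $\AND$.

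\textbf{The per-round checks are transcript-local.} A single invocation of \Cref{alg: RRfolding} contributes four verdict components: the $\pvalT$ linear-consistency equations of \Cref{alg: break into two}; the verdict of the $\pvalT$-fold $\GKR$ call of \Cref{alg:GKRinsideRR}; the curve-endpoint equations relating $g^{(0)},g^{(1)}$ to $\bm v_0,\bm v'$ in \Cref{alg: gap amplification}; and the two $\pval$-emptiness verdicts of \Cref{lem: IP for pval emptiness}. I would show each is computed by a sub-circuit whose size is, up to $\poly(\lgK+\lncol,\Flog)$ factors, at most the number of transcript symbols it reads, and whose depth is $\mathrm{polylog}$ of the transcript length (again up to $\poly(\lgK+\lncol)$ factors). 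Concretely: the linear-consistency check is $O(\pvalT)$ $\FF$-operations followed by one $O(\pvalT)$-leaf $\AND$ tree; the $\GKR$ verdict has size $\tO(\pvalT D\log S\cdot\Flog+\pvalT\abs{\lrag{\Phi_{\pi,\bm j,\bm v}}}\Flog)$ and depth $\tO(1)$ by \Cref{lem:GKR}, which — using $\pvalT\ge 8d\ncol\lgK$, the stated size/depth of $\Phi_{\pi,\bm j,\bm v}$, and the fact that $\lrag{\Phi_{\pi,\bm j,\bm v}}$ consists of $\pi$'s short seed together with $\bm j,\bm v$ — is $\tO$ of the $\GKR$ transcript block (of length $\Theta(\pvalT\Flog\cdot D\log S)$); the curve-endpoint check is two multipoint evaluations of degree-$\le 2\pvalT(\lgK+\lncol)$ univariate polynomials at $\pvalT$ fixed canonical points, realizable by a fast-multipoint-evaluation circuit of size $\tO(\pvalT(\lgK+\lncol)\Flog)$ and $\mathrm{polylog}$ depth, matching the $g^{(0)},g^{(1)}$ messages; and each emptiness verdict is itself a $\GKR$ verdict (on a rank-equality circuit), so the same bound applies.

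\textbf{The base step and the aggregation.} The \Cref{alg: RRconclude} verdict recomputes $\pvalT$ low-degree-extension evaluations of the explicitly-transmitted truth table $\transcMat_{\text{new}}\in\FF^{2^{\lgK'}\times\ncol}$, each a standard tensor-of-interpolation circuit of size $\tO(2^{\lgK'}\ncol)$ and depth $O(\lgK'+\lncol)$; since $2^{\lgK'}\le\poly(d)$ and the transcript already carries $\transcMat_{\text{new}}$ in full together with the (larger) $\GKR$ transcripts of the preceding rounds, this too is $\tO(\mathrm L)$. Finally, $\cV_{\RRrow}$ wires the $O(\lgK)$ sub-verdicts together with an $\AND$ tree over at most $\mathrm L$ leaves, adding $O(\mathrm L)$ gates and $O(\log\mathrm L)$ depth. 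Summing sizes yields $\tO(\mathrm L)$; the depth is the maximum of the sub-verdict depths (each $\mathrm{polylog}(\mathrm L)$ up to $\poly(\lgK+\lncol)$ factors, using that $\pvalT,\secpar,\Flog$ are all at most $\poly(\mathrm L)$) plus $O(\log\mathrm L)$, i.e.\ $\tO(\log\mathrm L)$ after suppressing $\poly(\lgK+\lncol)$ factors.

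\textbf{Main obstacle.} Almost everything above is routine circuit engineering; the delicate point is making the per-piece size bounds genuinely ``relative to the transcript''. In particular, the $\GKR$ verdict and the base-step verdict are bounded in \Cref{lem:GKR} (and in the base-protocol claim) in terms of $D\log S$ and $\abs{\lrag{\C}}$, not transcript length, so one must invoke the quantitative parameter relations hard-wired into the protocol — $\abs{\FF}\ge\FboundT$, $d\ge 16\lgK\secpar$, $\pvalT>8d\ncol\lgK$, and $\pvalT=\tilde\Theta(d\ncol)$ — which are precisely what force the transcript to be long enough to pay for the verifier's per-round re-computation. The near-linear size of the curve-endpoint check likewise depends on using fast multipoint evaluation rather than evaluating each point independently.
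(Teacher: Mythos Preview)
Your proposal is correct and follows essentially the same approach as the paper: enumerate the per-round verifier checks (linear consistency, the $\GKR$ verdicts for \Cref{alg:GKRinsideRR} and the emptiness tests, the curve/polynomial evaluations, and the base-step LDE checks), argue each is near-linear in its transcript block using the $\GKR$ verdict bounds from \Cref{lem:GKR} together with fast multipoint polynomial evaluation, and combine via an $\AND$. Your treatment is in fact more careful than the paper's about making the per-piece bounds genuinely transcript-relative (your ``main obstacle'' paragraph), whereas the paper simply asserts near-linearity and cites the multipoint-evaluation results from \cite{vzGG13}.
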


\begin{proof}
    Observe that the verifier conducts the following checks:
    \begin{enumerate}[label=(\arabic*)]
        \item $\tilde O(T\lgK)$ linearity checks to see if the verifier broke the instances into two claims correctly.
        \item $\tilde{O}(\lgK)$ $\pval$ emptiness checks which are run via a GKR.
        \item $\tilde{O}(\lgK)$ runs of \Cref{alg:GKRinsideRR} to apply the pairwise independent map and convert $f_1$ into $f_1\circ \tau$.
        \item $\tilde O(\lgK)$ recalculations of input points $\bm j$ for recursing via evaluating low degree curves. 
        \item Checking if $\tilde O (T)$ low degree extensions are correct.
    \end{enumerate}

    Now, step $(1)$ is clearly linear in it's input. Step $(2)$ and $(3)$ can be done in nearly linear input size and logarithmic depth due to \Cref{lem:GKR} and \Cref{lem:RVW}. We may also need to check $\tilde{O}(\lgK)$ low degree extensions which can be done in nearly linear of their input length as well. Thus, we just need to show that interpolation and evaluation of univariate polynomials can be done in nearly linear number of field operations. We know that this is possible due to our assumptions about our fields and multipoint interpolation and evaluation results (Corollary 10.8 and 10.12 in \cite{vzGG13}). 

    Thus, all the checks can be done in nearly linear size of the input transcript. The depth of the overall circuit is also clearly $\tilde O(\log \mathrm L)$.
\end{proof}
 
\section*{Acknowledgement}
The authors thank Ron Rothblum, Guy Rothblum, and the anonymous FOCS reviewers for their helpful comments and suggestions.

M.M.H. is partially supported by National Institute of Health (NIH) R01HG010959 (to B.B.).
M.M.H., R.G. and Y.T.K. are supported by the Defense Advanced Research Projects Agency (DARPA) under Contract No. HR0011-25-C-0300 (to Y.T.K.). 
Any opinions, findings and conclusions or recommendations expressed in this material are those of the author(s) and do not necessarily reflect the views of the Defense Advanced Research Projects Agency (DARPA).
\pagestyle{plain}

\bibliographystyle{alpha}
\bibliography{abbrev3,crypto,ref}

\newcommand{\etalchar}[1]{$^{#1}$}
\begin{thebibliography}{HMMR04}

\bibitem[ACFY24]{C:ACFY24}
Gal Arnon, Alessandro Chiesa, Giacomo Fenzi, and Eylon Yogev.
\newblock {STIR}: Reed-solomon proximity testing with fewer queries.
\newblock In Leonid Reyzin and Douglas Stebila, editors, {\em CRYPTO~2024,
  Part~X}, volume 14929 of {\em {LNCS}}, pages 380--413. Springer, Cham, August
  2024.

\bibitem[ACFY25]{EC:ACFY25}
Gal Arnon, Alessandro Chiesa, Giacomo Fenzi, and Eylon Yogev.
\newblock {WHIR}: Reed-solomon proximity testing with super-fast verification.
\newblock In Serge Fehr and Pierre-Alain Fouque, editors, {\em EUROCRYPT~2025,
  Part~IV}, volume 15604 of {\em {LNCS}}, pages 214--243. Springer, Cham, May
  2025.

\bibitem[AHIV23]{DCC:AHIV23}
Scott Ames, Carmit Hazay, Yuval Ishai, and Muthuramakrishnan
  Venkitasubramaniam.
\newblock Ligero: lightweight sublinear arguments without a trusted setup.
\newblock {\em {DCC}}, 91(11):3379--3424, 2023.

\bibitem[ALM{\etalchar{+}}92]{FOCS:ALMSS92}
Sanjeev Arora, Carsten Lund, Rajeev Motwani, Madhu Sudan, and Mario Szegedy.
\newblock Proof verification and hardness of approximation problems.
\newblock In {\em 33rd FOCS}, pages 14--23. {IEEE} Computer Society Press,
  October 1992.

\bibitem[AS92]{FOCS:AroSaf92}
Sanjeev Arora and Shmuel Safra.
\newblock Probabilistic checking of proofs; {A} new characterization of {NP}.
\newblock In {\em 33rd FOCS}, pages 2--13. {IEEE} Computer Society Press,
  October 1992.

\bibitem[BCS16]{TCC:BenChiSpo16}
Eli {Ben-Sasson}, Alessandro Chiesa, and Nicholas Spooner.
\newblock Interactive oracle proofs.
\newblock In Martin Hirt and Adam~D. Smith, editors, {\em TCC~2016-B, Part~II},
  volume 9986 of {\em {LNCS}}, pages 31--60. Springer, Berlin, Heidelberg,
  October~/~November 2016.

\bibitem[BFL90]{FOCS:BabForLun90}
L{\'a}szl{\'o} Babai, Lance Fortnow, and Carsten Lund.
\newblock Non-deterministic exponential time has two-prover interactive
  protocols.
\newblock In {\em 31st FOCS}, pages 16--25. {IEEE} Computer Society Press,
  October 1990.

\bibitem[BFLS91]{STOC:BFLS91}
L{\'a}szl{\'o} Babai, Lance Fortnow, Leonid~A. Levin, and Mario Szegedy.
\newblock Checking computations in polylogarithmic time.
\newblock In {\em 23rd ACM STOC}, pages 21--31. {ACM} Press, May 1991.

\bibitem[BGKW88]{STOC:BGKW88}
Michael {Ben-Or}, Shafi Goldwasser, Joe Kilian, and Avi Wigderson.
\newblock Multi-prover interactive proofs: How to remove intractability
  assumptions.
\newblock In {\em 20th ACM STOC}, pages 113--131. {ACM} Press, May 1988.

\bibitem[BH08]{BH08}
Alex Brodsky and Shlomo Hoory.
\newblock Simple permutations mix even better.
\newblock {\em Random Structures \& Algorithms}, 32(3):274--289, 2008.

\bibitem[BKP{\etalchar{+}}24]{STOC:BKPRV24}
Nir Bitansky, Chethan Kamath, Omer Paneth, Ron~D. Rothblum, and Prashant~Nalini
  Vasudevan.
\newblock Batch proofs are statistically hiding.
\newblock In Bojan Mohar, Igor Shinkar, and Ryan {O'Donnell}, editors, {\em
  56th ACM STOC}, pages 435--443. {ACM} Press, June 2024.

\bibitem[BKS18]{BKS18}
Eli {Ben-Sasson}, Swastik Kopparty, and Shubhangi Saraf.
\newblock Worst-case to average case reductions for the distance to a code.
\newblock In {\em Proceedings of the 33rd Computational Complexity Conference},
  CCC '18. Schloss Dagstuhl--Leibniz-Zentrum fuer Informatik, 2018.

\bibitem[BM88]{BabaiM88}
L{\'{a}}szl{\'{o}} Babai and Shlomo Moran.
\newblock {Arthur-Merlin} games: {A} randomized proof system, and a hierarchy
  of complexity classes.
\newblock {\em J. Comput. Syst. Sci.}, 36(2):254--276, 1988.

\bibitem[BSCI{\etalchar{+}}23]{BCIKS23}
Eli Ben-Sasson, Dan Carmon, Yuval Ishai, Swastik Kopparty, and Shubhangi Saraf.
\newblock Proximity gaps for {Reed}-{Solomon} codes.
\newblock {\em Journal of the ACM}, 70(5), 2023.

\bibitem[Coo71]{cook1971complexity}
Stephen~A Cook.
\newblock The complexity of theorem-proving procedures.
\newblock In {\em 3rd ACM STOC}, pages 151--158, 1971.

\bibitem[EKR04]{EKR04}
Funda Erg^^c3^^bcn, Ravi Kumar, and Ronitt Rubinfeld.
\newblock Fast approximate probabilistically checkable proofs.
\newblock {\em Information and Computation}, 189(2):135--159, 2004.

\bibitem[FGL{\etalchar{+}}91]{FOCS:FGLSS91}
Uriel Feige, Shafi Goldwasser, L{\'a}szl{\'o} Lov{\'a}sz, Shmuel Safra, and
  Mario Szegedy.
\newblock Approximating clique is almost {NP}-complete (preliminary version).
\newblock In {\em 32nd FOCS}, pages 2--12. {IEEE} Computer Society Press,
  October 1991.

\bibitem[GHKO25]{C:GHKO25}
William Gay, William He, Nicholas Kocurek, and Ryan O'Donnell.
\newblock Pseudorandomness properties of random reversible circuits.
\newblock In Yael~Tauman Kalai and Seny~F. Kamara, editors, {\em CRYPTO~2025,
  Part~I}, volume 16000 of {\em {LNCS}}, pages 651--678. Springer, Cham, August
  2025.

\bibitem[GHP25]{SODA:GreHePel25}
Lucas Gretta, William He, and Angelos Pelecanos.
\newblock More efficient approximate {$k$}-wise independent permutations from
  random reversible circuits via log-sobolev inequalities.
\newblock In Yossi Azar and Debmalya Panigrahi:, editors, {\em 36th SODA},
  pages 5582--5598. {ACM-SIAM}, January 2025.

\bibitem[GKR15]{JACM:GolKalRot15}
Shafi Goldwasser, Yael~Tauman Kalai, and Guy~N. Rothblum.
\newblock Delegating computation: Interactive proofs for muggles.
\newblock {\em Journal of the {ACM}}, 62(4):27:1--27:64, 2015.

\bibitem[GMR89]{GolMicRac89}
Shafi Goldwasser, Silvio Micali, and Charles Rackoff.
\newblock The knowledge complexity of interactive proof systems.
\newblock {\em {SIAM} Journal on Computing}, 18(1):186--208, 1989.

\bibitem[Gow96]{Gowers_1996}
W.~T. Gowers.
\newblock An almost m-wise independent random permutation of the cube.
\newblock {\em Combinatorics, Probability and Computing}, 5(2):119--130, 1996.

\bibitem[GR15]{ITCS:GurRot15}
Tom Gur and Ron~D. Rothblum.
\newblock Non-interactive proofs of proximity.
\newblock In Tim Roughgarden, editor, {\em ITCS 2015}, pages 133--142. {ACM},
  January 2015.

\bibitem[GR17]{ITCS:GurRot17}
Tom Gur and Ron~D. Rothblum.
\newblock A hierarchy theorem for interactive proofs of proximity.
\newblock In Christos~H. Papadimitriou, editor, {\em ITCS 2017}, volume 4266,
  pages 39:1--39:43, 67, January 2017. {LIPIcs}.

\bibitem[GS86]{STOC:GolSip86}
Shafi Goldwasser and Michael Sipser.
\newblock Private coins versus public coins in interactive proof systems.
\newblock In {\em 18th ACM STOC}, pages 59--68. {ACM} Press, May 1986.

\bibitem[HMMR04]{ICALP:HMMR04}
Shlomo Hoory, Avner Magen, Steven Myers, and Charles Rackoff.
\newblock Simple permutations mix well.
\newblock In Josep D{\'i}az, Juhani Karhum{\"a}ki, Arto Lepist{\"o}, and Donald
  Sannella, editors, {\em ICALP 2004}, volume 3142 of {\em {LNCS}}, pages
  770--781. Springer, Berlin, Heidelberg, July 2004.

\bibitem[JKKZ21]{STOC:JKKZ21}
Ruta Jawale, Yael~Tauman Kalai, Dakshita Khurana, and Rachel~Yun Zhang.
\newblock {SNARGs} for bounded depth computations and {PPAD} hardness from
  sub-exponential {LWE}.
\newblock In Samir Khuller and Virginia~Vassilevska Williams, editors, {\em
  53rd ACM STOC}, pages 708--721. {ACM} Press, June 2021.

\bibitem[KNR05]{KNR09}
Eyal Kaplan, Moni Naor, and Omer Reingold.
\newblock Derandomized constructions of k-wise (almost) independent
  permutations.
\newblock In Chandra Chekuri, Klaus Jansen, Jos{\'e} D.~P. Rolim, and Luca
  Trevisan, editors, {\em Approximation, Randomization and Combinatorial
  Optimization. Algorithms and Techniques}, pages 354--365. Springer Berlin
  Heidelberg, 2005.

\bibitem[KR08]{ICALP:KalRaz08}
Yael~Tauman Kalai and Ran Raz.
\newblock Interactive {PCP}.
\newblock In Luca Aceto, Ivan Damg{\aa}rd, Leslie~Ann Goldberg, Magn{\'u}s~M.
  Halld{\'o}rsson, Anna Ing{\'o}lfsd{\'o}ttir, and Igor Walukiewicz, editors,
  {\em ICALP 2008, Part~II}, volume 5126 of {\em {LNCS}}, pages 536--547.
  Springer, Berlin, Heidelberg, July 2008.

\bibitem[Lev73]{levin1973universal}
Leonid~A Levin.
\newblock Universal sequential search problems.
\newblock {\em Problems of information transmission}, 9(3):265--266, 1973.

\bibitem[LFKN92]{JACM:LFKN92}
Carsten Lund, Lance Fortnow, Howard~J. Karloff, and Noam Nisan.
\newblock Algebraic methods for interactive proof systems.
\newblock {\em Journal of the {ACM}}, 39(4):859--868, 1992.

\bibitem[RR20]{TCC:RotRot20}
Guy~N. Rothblum and Ron~D. Rothblum.
\newblock Batch verification and proofs of proximity with polylog overhead.
\newblock In Rafael Pass and Krzysztof Pietrzak, editors, {\em TCC~2020,
  Part~II}, volume 12551 of {\em {LNCS}}, pages 108--138. Springer, Cham,
  November 2020.

\bibitem[RRR16]{STOC:ReiRotRot16}
Omer Reingold, Guy~N. Rothblum, and Ron~D. Rothblum.
\newblock Constant-round interactive proofs for delegating computation.
\newblock In Daniel Wichs and Yishay Mansour, editors, {\em 48th ACM STOC},
  pages 49--62. {ACM} Press, June 2016.

\bibitem[RRR18]{RRR18}
Omer Reingold, Guy~N. Rothblum, and Ron~D. Rothblum.
\newblock Efficient batch verification for {UP}.
\newblock In {\em Proceedings of the 33rd Computational Complexity Conference},
  CCC '18. Schloss Dagstuhl--Leibniz-Zentrum fuer Informatik, 2018.

\bibitem[RVW13]{STOC:RotVadWig13}
Guy~N. Rothblum, Salil~P. Vadhan, and Avi Wigderson.
\newblock Interactive proofs of proximity: delegating computation in sublinear
  time.
\newblock In Dan Boneh, Tim Roughgarden, and Joan Feigenbaum, editors, {\em
  45th ACM STOC}, pages 793--802. {ACM} Press, June 2013.

\bibitem[Sch80]{Schwartz80}
Jacob~T. Schwartz.
\newblock Fast probabilistic algorithms for verification of polynomial
  identities.
\newblock {\em Journal of the ACM}, 27(4):701--717, 1980.

\bibitem[Sha92]{Shamir92}
Adi Shamir.
\newblock {IP} = {PSPACE}.
\newblock {\em Journal of the ACM}, 39(4):869--877, 1992.

\bibitem[Vad00]{STOC:Vadhan00}
Salil~P. Vadhan.
\newblock On transformation of interactive proofs that preserve the prover's
  complexity.
\newblock In {\em 32nd ACM STOC}, pages 200--207. {ACM} Press, May 2000.

\bibitem[vzGG13]{vzGG13}
Joachim von~zur Gathen and Juergen Gerhard.
\newblock {\em Modern Computer Algebra}.
\newblock Cambridge University Press, 3 edition, 2013.

\bibitem[Zei24]{EPRINT:Zeilberger24}
Hadas Zeilberger.
\newblock Khatam: Reducing the communication complexity of code-based {SNARKs}.
\newblock Cryptology ePrint Archive, Report 2024/1843, 2024.

\bibitem[Zip79]{Zippel79}
Richard Zippel.
\newblock Probabilistic algorithms for sparse polynomials.
\newblock In {\em EUROSAM '79: International Symposium on Symbolic and
  Algebraic Computation}, volume~72 of {\em Lecture Notes in Computer Science},
  pages 216--226. Springer, 1979.

\end{thebibliography}

\end{document}